\newtheorem{principle}{Principle}
\newtheorem{principle*}{Principle}
\newtheorem{alg}{Algorithm}
\newtheorem{thm}{Theorem}%[section]
\newtheorem{lem}{Lemma}
\newtheorem{prop}{Proposition}
\newtheorem{defn}{Definition}
\newtheorem{conj}{Conjecture}
\newtheorem{rem}{Remark}
\DeclareMathOperator*{\defeq}{\triangleq}
\newcommand{\rank}{{\textup{Rank}}}
\newcommand{\ind}{{\mathbb{I}}}
\newcommand{\Ex}{{\mathbb{E}}}
\newcommand{\prob}{{\mathbb{P}}}
\newcommand{\eps}{{\epsilon}}
\newcommand{\Geo}{{\rm Geometric}}
\newcommand{\pop}{{\nu}}
\newcommand{\Rmen}{ R_{\textup{\tiny MEN}}}
\newcommand{\Rwomen}{ R_{\textup{\tiny WOMEN}}}
\newcommand{\Rshort}{ R_{\textup{\tiny SHORT}}}
\newcommand{\Rlong}{ R_{\textup{\tiny LONG}}}
\newcommand{\pmen}{ p_{\textup{\tiny MEN}}}
\newcommand{\pwomen}{ p_{\textup{\tiny WOMEN}}}
\newcommand{\cP}{{\mathcal{P}}}
\newcommand{\cC}{{\mathcal{C}}}
\newcommand{\cM}{{\mathcal{M}}}
\newcommand{\cW}{{\mathcal{W}}}
\newcommand{\hcM}{{\hat{\mathcal{M}}}}
\newcommand{\cMf}{{\mathcal{M}_{{\tiny \textup{fake}}}}}
\newcommand{\Mrank}{ R_{\textup{\tiny MEN}}(\textup{MOSM}) }
\newcommand{\Wrank}{ R_{\textup{\tiny WOMEN}}(\textup{MOSM}) }
\newcommand*{\dotleq}{\mathrel{\dot{\leq}}}
\newcommand*{\dotgeq}{\mathrel{\dot{\geq}}}
\newcommand{\Ev}{\mathcal{E}}
\newcommand{\NYC}[2]{\ifthenelse{\boolean{NYC}}{#1}{#2}}		% how to use: \NYC{New York City}{a city in US}
\newlist{compactitem}{itemize}{3}
\setlist[compactitem]{topsep=0pt,partopsep=0pt,itemsep=0pt,parsep=0pt}
\setlist[compactitem,1]{label=\textbullet}
\setlist[compactitem,2]{label=---}
\setlist[compactitem,3]{label=*}
\newlist{compactdesc}{description}{3}
\setlist[compactdesc]{topsep=0pt,partopsep=0pt,itemsep=0pt,parsep=0pt}
\newlist{compactenum}{enumerate}{3}
\setlist[compactenum]{topsep=0pt,partopsep=0pt,itemsep=0pt,parsep=0pt}
\setlist[compactenum,1]{label=\arabic*}
\setlist[compactenum,2]{label=\alph*}
\setlist[compactenum,3]{label=\roman*}
\newcommand{\dashrule}[1][black]{%
  \color{#1}\rule[\dimexpr.5ex-.2pt]{4pt}{.4pt}\xleaders\hbox{\rule{4pt}{0pt}\rule[\dimexpr.5ex-.2pt]{4pt}{.4pt}}\hfill\kern0pt%
}
\renewcommand\paragraph{\@startsection{paragraph}{4}{\z@}%
%	{\parskip}
{1.15ex \@plus.3ex \@minus.2ex}%
	{-0.8em}%
	{\normalfont\normalsize\bfseries}}
\begin{document}

\setlength{\abovedisplayskip}{6pt}
\setlength{\belowdisplayskip}{6pt}
\setlength{\abovedisplayshortskip}{0pt}
\setlength{\belowdisplayshortskip}{6pt}

\title{
The Competition for Partners in Matching Markets
\thanks{Y. Kanoria and P. Qian gratefully acknowledge the support of the National Science Foundation’s Division of Civil, Mechanical, and Manufacturing Innovation (Grant CMMI-1201045).}
}
%Which Random Matching Markets Exhibit a Stark Effect of Competition?
%\thanks{We thank Nick Arnosti, Kostas Bimpikis, Eric Buddish, Hanna Halaburda, Mike Harrison, Fuhito Kojima, Jacob Leshno, Irene Lo, Costis Maglaras, Mike Ostrovsky, Larry Wein, and seminar participants at various conferences and institutions for useful comments.}} %
%
\author{Yash Kanoria\thanks{Graduate School of Business, Columbia University, Email: \texttt{ykanoria@columbia.edu}} \and Seungki Min\thanks{Department of Industrial and Systems Engineering, KAIST, Email: \texttt{skmin@kaist.ac.kr}} \and Pengyu Qian\thanks{Krannert School of Management, Purdue University, Email: \texttt{qianp@purdue.edu}}.}
%\author{}
%
\date{}

\maketitle
%\vspace{-1.5cm}
%
%\begin{center}
%	PRELIMINARY AND INCOMPLETE: PLEASE DO NOT DISTRIBUTE.\\ COMMENTS ARE VERY WELCOME
%\end{center}

%\vspace{-1cm}

\begin{abstract}
We study the competition for partners in two-sided matching markets with heterogeneous agent preferences, with a focus on how the equilibrium outcomes depend on the connectivity in the market.
We model random partially connected markets, with each agent having an average degree $d$ in a random (undirected) graph, and a uniformly random preference ranking over their neighbors in the graph. We formally characterize stable matchings in large markets random with small imbalance and find a threshold in the connectivity $d$ at $\log^2 n$ (where $n$ is the number of agents on one side of the market) which separates a  ``weak competition'' regime, where agents on both sides of the market do equally well, from a ``strong competition'' regime, where agents on the short (long) side of the market enjoy a significant advantage (disadvantage). 
Numerical simulations confirm and sharpen our theoretical predictions, and demonstrate robustness to our assumptions.
We leverage our characterizations in two ways: First, we derive prescriptive insights into how to \emph{design} the connectivity of the market to trade off optimally between the average agent welfare achieved and the number of agents who remain unmatched in the market. For most market primitives, we find that the optimal connectivity should lie in the weak competition regime or at the threshold between the regimes. Second, our analysis uncovers a new conceptual principle governing whether the short side enjoys a significant advantage in a given matching market, which can moreover be applied as a diagnostic tool given only basic summary statistics for the market. Counterfactual analyses using data on centralized high school admissions in a major USA city show the practical value of both our design insights and our diagnostic principle.
\end{abstract}

\noindent{\bf Keywords:} matching markets, stable matching, competition, market design, diagnosis from summary statistics.

%\ds{This is MS format, do not alter!!! Keep main to 32 pages + biblio, (should be 32 with biblio) and appendix to no more than 15. Move rest to electronic appendix}

\section{Introduction} \label{sec:intro}
In recent years, the use of matching platforms for various purposes such as dating, the labor market, and school and college admissions, has experienced rapid growth. Many of platforms feature agents with heterogeneous preferences for potential partners that include a significant \emph{idiosyncratic} ``beauty lies in the eye of the beholder'' component, which varies significantly from one agent to the next. While the equilibrium notion of stable matchings is known to capture well the outcomes which arise in two-sided matching platforms \citep[see, e.g.,][]{hitsch2010matching}, our understanding of the nature of these market equilibria which arise as a function of market characteristics remains partial at best. In turn, improving our understanding of market equilibria is essential to throw light on how a platform can be designed so as to maximize market performance. %, in settings where individual agents' preferences are not known beforehand and may be costly to discover.
%Although it has received significant attention, 

A stable matching in a two-sided matching market with ordinal preferences on both sides is one in which there is no blocking pair, namely, a pair of agents who would prefer to be matched to each other over their outcome in the present matching. 
The structure of stable matchings which arises in a given two-sided matching market is determined by the two-sided competition for partners. To improve our understanding of stable matchings which arise in two-sided markets with idiosyncratic preferences, and to derive operational insights, the field has found it useful to investigate variants of the \emph{random matching markets} model introduced by \cite{knuth1976mariages}, where agents have independent, complete and uniformly random preference lists over the other side of the market.
One key finding from theoretical studies of random matching markets and real-world evidence is that in most two-sided matching markets, the stable matching is nearly unique \citep{AKL17,immorlica2005marriage,RothPeranson99,kojima2009incentives}.\footnote{In matching contexts with specific structure, such as college admissions with financial aid \citep{romm-large-core-ec2022}, and markets with a ``small-world network'' preference structure \citep{rheingans2020large}, a large set of stable matchings has been found to be typical.}

What is the structure of this (nearly unique) stable matching? 
Previous research has characterized stable matchings only for random matching markets at each of two extremes. (i) One line of papers studies markets which are fully connected, meaning that each agent's preference list includes all agents on the other side of the market \citep[e.g.,][]{knuth1976mariages,pittel1992likely,AKL17}. \cite{AKL17} finds that ``\emph{matching markets are extremely competitive, with even the slightest imbalance greatly benefiting the short side}''. %Here the short side of the market is the one with fewer agents, and imbalance refers to the difference in the number of agents on the two sides of the market. 
Specifically, in a matching market with $n$ men and $n+1$ women, and uniformly randomly complete preference lists, independent across agents, there is a nearly unique stable matching, where the average rank of men for their wives is just $\log n (1+o(1))$, %(the same as it would have been under random serial dictatorship, where each man in turn selects their favorite remaining woman), 
whereas the average rank of women for their husbands is $\frac{n}{\log n}(1+o(1))$. For example, with $n=1,000$, men get matched to their seventh most desired woman, whereas women are matched to only their 145th most preferred man.\footnote{Of course the situation is completely reversed if, instead, there are $999$ women, while the number of men is still $1,000$.}(ii) Another line of papers studies random markets which are very sparsely connected, meaning that agents have preference lists of constant length regardless of market size \cite[e.g.,][]{immorlica2005marriage, arnosti2022lottery,ashlagi2020matters}, and finds that such markets exhibit only weak competition, namely, small changes in market composition only result in small changes in the resulting equilibrium outcome. Notably, the nature of competition and the structure of market equilibria remains largely unknown for markets with intermediate connectivity larger than a constant, despite the prevalence of real-world  markets which fall in this regime.\footnote{
%One example of a real-world market exhibiting long preference lists is DA-based centralized admissions to public engineering colleges in India, where candidates submit preference lists of average length about 80 (over 500-600 programs offering a total of 30-35k seats) \citep{baswana2019centralized}. 
One example is centralized college admissions in China, where most provinces allow candidates to list up to 40 institutions and use a hybrid mechanism with resemblance to DA \citep{ChenJiangKesten2020}, and, anecdotally, most candidates rank between 10 and 40 of $\sim 1000$ institutions (with $n \sim 300,000$ candidates per province, $\log n \approx 13$ so list lengths appear to be larger than a ``constant'' but smaller than $\log^2 n$).}
%\yk{Pengyu, please finalize and then paste into the AE response as well, since I made some edits like summarizing the mechanism used in each case.}

The present paper aims to develop a complete understanding of competition in partially connected random matching markets as a function of market connectivity, and to derive consequent guidance on how to \emph{design} the connectivity of a matching market so as to maximize market performance. Our contribution is three-fold: we introduce a model of partially-connected random matching markets and explicitly characterize the equilibrium arising in these markets (by introducing new technical tools), we use our characterizations to derive guidance on the design of simple platform interventions which control the market connectivity, and we discover a new conceptual principle governing whether being on the short side confers a significant advantage in a given matching market, which can moreover be used as a diagnostic tool given only summary statistics for the market.
We next discuss each of these contributions in turn.

\paragraph{Model.} Our model generalizes the random matching market model to allow ``partially connected'' markets with each agent having an average degree $d$ in a random (undirected) connectivity graph. Each agent has a preference ranking over only their neighbors in the connectivity graph. We assume there are $n+k$ men and $n$ women, where the ``imbalance'' $k$ may be positive or negative but we restrict to ``small'' imbalances $|k| = o(n)$ for our theoretical analysis. For technical convenience, the random graph model we work with is one where each man is connected to a uniformly random subset of exactly $d$ women, independent of other men.\footnote{As a result, each woman has $\textup{Binomial}(n+k, d/n) \xrightarrow[n \to \infty]{\textup{d}} \textup{Poisson}(d)$ neighbors where $\xrightarrow{\textup{d}}$ denotes convergence in distribution. Throughout the paper we will restrict attention to $d = \omega(1)$, as a result of which $\textup{Poisson}(d)/d \xrightarrow{\textup{d}} 1$, %\yk{This previous statement doesn't seem to be precise/rigorous. Perhaps you actually mean that $\textup{Poisson}(d) / d \xrightarrow{\textup{p}} 1$?} 
i.e., the degree of each woman is also very close to $d$, and so the asymmetry between the two sides in the model is mainly technical.}

\paragraph{Main theoretical findings.}
We characterize stable matchings as a function of $d$ and the number of women $n$ for $|k| = O(n^{1-\epsilon})$ (small) market imbalance, and find that the short side enjoys a significant advantage only for $d$ exceeding $\log^2 n$: For moderately connected markets, specifically any $d$ such that $d=o(\log^2 n)$ and $d = \omega(1)$ and large $n$, we find that \emph{there is only a weak effect of competition}, namely, the short and long sides of the market are almost equally well off, with agents on both sides getting a $\sqrt{d}(1+o(1))$-ranked partner on average.  %Approximately $ne^{-\sqrt{d}}$ agents on both sides remaining unmatched.
%Notably, this regime extends far beyond the connectivity threshold (above which the connectivity graph is connected with high probability) of $d= \Theta(\log n)$. 
On the other hand, for densely connected markets, specifically for any $d = \omega(\log^2 n)$ and large $n$, we find that there is a strong effect of competition: assuming a small imbalance $|k| = o(n)$, the short side agents get a partner of rank $\log n$ on average, while the long side agents get a partner of (much larger) rank $d/\log n$ on average. A substantial technical challenge we overcome is the complexity in the way that Deferred Acceptance (DA) terminates when there is a positive (but vanishing) fraction of unmatched agents on both sides of the market;
%. We estimate (bound) the latter quantity by constructing a ``fake'' process where a man who is accepted and then later rejected is allowed to make $d$ additional proposals; 
see Section~\ref{sec:proof-sketch} for an overview of our formal analysis. 

Numerical simulations of our model confirm the theoretical predictions, and in fact further refine our findings, including by capturing the dependence on the imbalance $k$: they suggest a \emph{sharp threshold between the two regimes close to  $d \approx 1.0 \times \log^2 (n/|k|)$}, where $k$ is the market imbalance, and that \emph{this holds even for small $n$ down to $n \approx 10$}. We furthermore provide a heuristic detailed calculation which predicts the threshold being located at $1.0 \times \log^2 (n/|k|)$; the calculation also predicts that the ranks of partners follow a truncated Geometric distribution. Figure~\ref{fig:no-stark-effect} provides a schematic depicting our main findings. % (including the connectivity threshold between regimes suggested by numerics).% \yk{Is this numerical conjecture for $k  = O(1)$? If so, we should clarify that here, and in the caption for Figure 2.}).

\begin{comment}
\sm{
   $$ d^* \approx \log\left( \frac{|\mathcal{M}|}{|k|} \right) \times \log\left( \frac{|\mathcal{W}|}{|k|} \right) $$ 
  When $|\mathcal{M}|=75000$ and $|\mathcal{W}|=73000$, $d^* \approx 13.0$.

\begin{figure}[htbp!]
    \centering
    \begin{minipage}{0.48\textwidth}
        \centering
        \includegraphics[width=1.0\textwidth]{}
        \caption{Our original theoretical prediction, with $d^* = \log^2(73000) \approx 125$}
    \end{minipage}\hfill
    \begin{minipage}{0.48\textwidth}
        \centering
        \includegraphics[width=1.0\textwidth]{}
        \caption{Our heuristic prediction, with $d^* \approx 13$}
    \end{minipage}
\end{figure}
}
\end{comment}

Note that the weak competition regime includes a wide range of well connected markets with connectivity $d \in (\Theta(\log n), o(\log^2 (n/|k|)))$; see Figure~\ref{fig:no-stark-effect}, $d \approx \log n$ is the threshold beyond which all agents becomes connected to each other with high probability. This is in sharp contrast to buyer-seller markets, where, roughly, connectedness of the consideration graph in the market implies strong competition where the short side of the market captures all the surplus (see Appendix~\ref{app:buyer-seller-app} for a detailed description of this phenomenon). We return to the underlying driver governing strong versus weak competition in matching markets after summarizing our findings for market design. %In particular, our results imply that the informal claim in \cite{AKL17} of strong \emph{similarity} between the two kinds of markets does not apply to moderately connected markets (parallel to the fact that most real world matching markets are well connected but do not exhibit a stark effect of competition).

\paragraph{Prescriptive insights: guidance for market design.}
\begin{table}[!ht]
    \centering
    \begin{tabular}{cccc}
    \hline
     \multirow{3}{*}{\textbf{Objectives}} & \multicolumn{3}{c}{\textbf{Platform Interventions}}
    \\ 
    \cline{2-4}
        & \emph{Optimal Consideration} 
        & \multicolumn{2}{c}{\emph{Optimal Preference List Length $d^*(n)$}}  
        \\ \cline{3-4}
        & \emph{Set Size $d^*(n)$} 
        & Short side proposes 
        & Long side proposes \\ \hline
        Short Side Welfare  
        & $\Theta(1)$ 
        & $\Theta(1)$ 
        & $\Theta(1)$ \\ 
        Long Side Welfare
        & $\Theta(1)$ 
        & $\Omega(1)$ and $O(\log^2 n)$ & $\Theta(1)$  \\ 
        Unmatched Agents  & \textup{Any} $\Omega(\log^2 n)$ & \textup{Any} $\Omega(\log^2 n)$ & \textup{Any} $\Omega(\log^2 n)$ \\ \hline
        {Efficient Frontier}  
        & \multicolumn{3}{c}{ Between $\Theta(1)$ and $\Theta(\log^2 n)$}  \\ \hline
    \end{tabular}
    \caption{\label{table:market-design-insight} Summary of our market design insights under the assumption $|k|\leq n^{1-\epsilon}$. %The quantities in the table are the optimal sizes of consideration sets/preference lists for different objectives/platform interventions.
    }
\end{table}
%\yk{Is there some way to further highlight and convey our market design insights? For example via a table or some type of picture? Can one have a picture(s) similar to/parallel to Figure 2 to capture our market design insights? Right now the "intuition" section and the "technical contributions" section are each far longer than the market design section. We might want to change that, since we want to change the perception that this is a theoretical paper.}

We analyze the effects of two platform interventions which allow to design the market connectivity: \emph{restricting the size of agents' consideration sets} and \emph{limiting the length of preference lists submitted by agents}. 
The platform is assumed to have two objectives: maximizing the welfare of all agents (which equals to match value minus preference discovery cost), %\yk{one quibble is that match value accrues to matched agents, but preference discovery costs accrue to \emph{all} agents, even those who end up unmatched} 
and minimizing the number of unmatched agents. 
%Both of these objectives are important in many applications, as user satisfaction depends on the welfare of matched agents, while reducing the number of unmatched agents promotes fairness. 
Our analysis is not tied to these specific objectives and can be adapted to optimize alternate performance metrics. We emphasize that we are able to obtain our prescriptive insights only as a consequence of our novel equilibrium characterizations for moderate connectivity levels exceeding a constant. 

The optimal level of connectivity we find in different situations and for different objectives (assuming small imbalance $|k|\leq n^{1-\epsilon}$) is summarized in Table \ref{table:market-design-insight}; see Section \ref{sec:design-implications} for a detailed discussion.
In most situations, the welfare of agents is found to be maximized when the size of their consideration set/preference list (denoted by $d$) is small, specifically $d=O(1)$. However, the number of unmatched agents is minimized when $d$ is moderate or larger, specifically $d=\Omega(\log^2 n)$. This means that the range of $d$ values that achieve the efficient frontier is relatively small, namely $[\Theta(1), \Theta(\log^2 n)]$, and lies in the weak competition regime.
Intuitively, increasing $d$ beyond $\Theta(\log^2 n)$ intensifies competition and reduces welfare, but does not help reduce the number of unmatched agents and is therefore dominated. Within the range  $d \in [\Theta(1), \Theta(\log^2 n)]$, as the platform increases $d$, it achieves fewer unmatched agents while also suffering lower agent welfare.
Notably, our prescription to operate the market in the weak competition regime is the opposite of that resulting from \cite{che2019efficiency}'s analysis of only densely connected markets (and idiosyncratic utilities with finite support, e.g., Uniform$(0,1)$), which suggests to operate the market in the strong competition regime. In Section~\ref{subsec:related-work}, we provide a detailed comparison and argue that one should indeed operate real-world markets in the weak competition regime.

Our analysis also shows that, in typical cases, the platform should adopt the following policy when it has the ability to choose which side initiates contact: When the two sides have different costs associated with preference discovery, it is optimal to have the side with lower cost reach out. On the other hand, if the platform is more concerned with the well-being of one side of the market, it would be optimal to have the other side initiate contact. 
%Overall, the main conclusion of our findings aligns with those of previous research, such as \cite{kanoria2021facilitating}, even though our model is distinct from theirs.
%We characterize the Pareto-optimal solution \yk{I'm confused: there are multiple Pareto optimal solutions, right? Do you mean you characterize the Pareto frontier and how to achieve each point on it? If so, please edit the para accordingly.} for the platform.
%
%We discover that in the platform designer's solution, the same Pareto-optimal solution persists across different regimes.\yk{the previous sentence was not clear to me.} 

%In particular, we find that, except in the unlikely case where the cost of discovering one's preference is negligible, it is always optimal to set the size of consideration sets/length of preference lists (both denoted by $d$) to be small to moderate, i.e. to be $\Omega(1)$ and $O(\log^2 n)$ as $n\to\infty$. }

\paragraph{In which matching markets does being on the short side confer an advantage?} Our analysis uncovers a new principle governing which random matching markets exhibit weak versus strong competition, which appears to generalize well beyond random markets.
\begin{principle}
A market exhibits weak competition, i.e., being on the short (long) side does not confer a significant advantage (disadvantage) in terms of match quality, if and only if the number of unmatched agents on the short side $\gtrsim$ the imbalance in the market.
\label{prin:condition-for-weak-competition}
\end{principle} 
Before discussing the practical usefulness of this simple principle, we provide some informal intuition for why it holds for random matching markets: %We now argue that there is no stark effect of competition in markets where a positive number of men remain unmatched:
Clearly, due to the matching constraint the number of unmatched men must be exactly $k$ plus the number of unmatched women. Hence, if and only if (i.f.f.) more than $|k|$ short side agents remain unmatched, the number of unmatched agents on the two sides must be within a factor two of each other. But, in a random market, the number of unmatched men should grow with $\Rmen$ (the more proposals men need to make in men-proposing DA, the larger the number of men that will reach the end of their preference list), whereas the number of unmatched women should similarly grow with $\Rwomen$ (one can consider women-proposing DA, and assume that, as is typical, the WOSM is close to the MOSM). We then deduce that the average ranks are similar on the two sides of the market if and only if more than $|k|$ short side agents remain unmatched. In particular, using the geometric decay of partner ranks suggested by our analysis (see Section~\ref{subsec:detailed-heuristic-picture}), we deduce, e.g., that under imbalance $k = o(n)$, we have more than $|k|$ agents remain unmatched on the short side i.f.f. \# unmatched short side agents $>$ (\# unmatched long side agents)$/2$ i.f.f. $\Rshort \in [(1-o(1))\Rlong,\Rlong]$. %We deduce that we should have $\Rmen \approx \Rwomen$, i.e., weak competition, if and only if some short side agents remain unmatched.%\footnote{We have glossed over some details here.}

Notably, the aforementioned intuition for Principle~\ref{prin:condition-for-weak-competition} would 
%in the context of {random} matching markets, the aforementioned intuition would 
seem to extend beyond uniformly random preferences to more realistic preference structures, and we indeed find that simulation results for markets with correlated preferences confirm the predictions of the principle (see Section~\ref{sec:numeric} and Appendix~\ref{app:numerical-additional}). %port over to more realistic preference structures, motivating us to test the principle in the context of a real-world market. 
%hold approximately also in real-world matching markets, since the aforementioned intuition would seem to port over?  %We obtain encouraging results in a counterfactual analysis using a real-world market dataset. 
%If the answer is ``yes'', 
%If applicable, 
The attractiveness of Principle~\ref{prin:condition-for-weak-competition} as a diagnostic tool in a real-world context is that (while the number of unmatched agents on the short side is admittedly an endogeneous quantity) the market imbalance and the number of unmatched agents on each side of the market are basic summary statistics which are publicly known for many markets, so the principle allows an ``outsider'' who lacks access to preference data to nevertheless estimate whether being on the short side confers an advantage in that market. 

In Section~\ref{sec:NYC-HS-counterfactuals}, we test the validity of Principle~\ref{prin:condition-for-weak-competition} in a real-world matching market. Using
%conduct a counterfactual analysis using 
data from centralized high school admissions in a major US city, we conduct a counterfactual analysis studying the impact of varying the imbalance in the market on the resulting match quality for applicants. Encouragingly, we find that the prediction from the principle aligns well with our findings, both for the real market, as well as for counterfactual markets with other levels of imbalance. 
Specifically, we find that  Principle~\ref{prin:condition-for-weak-competition} predicts strong competition for a given level of imbalance based on summary statistics alone, if and only if the detailed analysis uncovers a substantial impact of market imbalance on match quality, e.g., the percentage of applicants who get their top choice school changes by {$\gtrsim 4.5\%$}. Notably, whether applicants are on the short side or on the long side, Principle~\ref{prin:condition-for-weak-competition} is found to predict strong competition for markets exceeding nearly the same ``cutoff'' on match quality impact of imbalance (of course the sign of the change depends on the sign of the imbalance).  % a counterfactual analysis for a real-world market for which we have detailed preference data. 
Encouraged, we then leverage Principle~\ref{prin:condition-for-weak-competition} to make a conjecture about a different real-world market for which researchers have typically been denied access to detailed preference data. Namely, Principle~\ref{prin:condition-for-weak-competition} suggests that in the medical residency matching market in USA, being on the long (short) side only moderately degrades (improves) match quality for applicants (programs). % for the residency market, and of refining the Principle~\ref{prin:condition-for-weak-competition}, e.g.,  to account systematically for correlations in preferences. 

\begin{figure}[htbp!]
\begin{center}
        \includegraphics[width=0.6\textwidth]{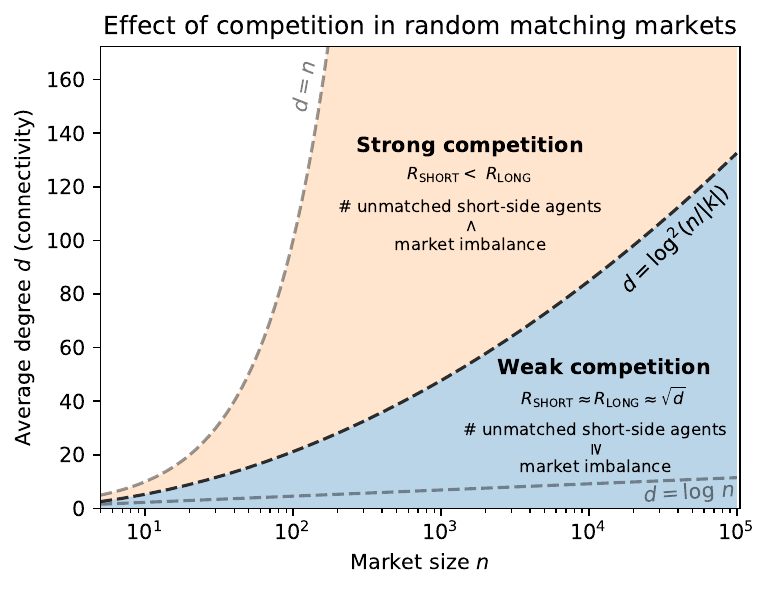}
        \caption{
        Schematic showing the two ``competitiveness'' regimes for partially connected random matching markets with $n$ agents on the long side,  connectivity (average preference list length) $d$, and general imbalance $k$ (numerical values are based on $k=-1$). $\Rshort$ ($\Rlong$) denotes the average rank of the short (long) side agents for their partners.
        }
        \label{fig:no-stark-effect}
\end{center}
\end{figure}

\medskip

%As discussed (and based on the evidence in Figure~\ref{}, we expect that correlation in preferences on one or both sides of the market should further expand the size of the ``no stark effect'' regime we find. We leave this as an interesting direction of future study. %\yk{Seungki, you should do this simulation at some point prior to the journal submission. you can use a model of correlated preferences from AKL, but with partial connectivity.}\sm{Well noted.}

\subsection{Related work}
\label{subsec:related-work}
Our work belongs to a vast theoretical literature on matching markets, which began with the work of \cite{gale1962} introducing stable matching and the deferred acceptance algorithm, and has developed over the last six decades with major contributions by Roth, Sotomayor, and a large number of other prominent researchers \citep[see, e.g.,][]{roth1992,david2013algorithmics}.
Closely related to our work are previous papers studying random matching markets with complete preference lists \citep{knuth1976mariages,Pittel89,knuth1990stable,pittel1992likely,AKL17,pittel2019likely}. %All of these papers assume \emph{complete} preference lists. 
Whereas the early papers focused on balanced random markets and found that the proposing side (in DA) has a substantial advantage, \cite{AKL17} and follow up papers found that in unbalanced markets, the short side has a substantial advantage, and the core (i.e., the set of stable matchings) is small. The main technical difficulty we face relative to these papers is that a positive number of agents remain unmatched on \emph{both} sides of the market in moderately connected markets $d = o(\log^2 n)$, preventing us from directly leveraging the analogy with the coupon collector problem as in previous works.

Notable papers by Immorlica and Mahdian and others \citep{immorlica2005marriage,kojima2009incentives} show a small core  while working with short (constant-sized) preference lists, leading to a linear fraction of unmatched agents. \cite{arnosti2022lottery} and \cite{menzel2015large} characterize the (nearly unique) stable outcome in settings with constant-sized preference lists, and in particular, we expect their characterizations can be used to show that the outcome changes ``smoothly'' as a function of the market imbalance under short lists. In contrast to the aforementioned papers, our work restricts attention to the case $d = \omega(1)$ and indeed identifies the existence of a threshold at $d \sim \log^2 n$, as a result of which the fraction of unmatched agents in our setting is vanishing. Technically, the consequence of this phenomenon is that ``rejection chains'' in the progress of DA are $\omega(1)$ in length in our work, making them harder to analyze, and the (approximate) system ``state'' no longer has bounded dimension as in \cite{arnosti2022lottery}. %For $d = \omega(\log^2 n)$, we find that \emph{all} short side agents are matched, and that there is a stark effect of competition (i.e., the short side chooses), whereas for $d = o(\log^2 n)$, we find that several (but still $o(n)$) short side agents remain unmatched, and relatedly, that there is no stark effect of competition.

There is a robust and growing body of practical work on designing real world matching markets, especially in the contexts of school and college admissions \citep[e.g.,][]{rios2019improving,APR05,dur2018reserve}, and various labor markets \citep[e.g.,][]{RothPeranson99,hassidim2017redesigning}. Stability, namely, that no two agents should prefer to match with each other rather than their current partners, has been found to be crucial in the design of centralized clearinghouses \citep{roth1991natural} and predictive of outcomes in decentralized matching markets \citep{kagel2000dynamics,hitsch2010matching}. In most real world matching datasets, the short side of the market does not benefit from being on the short side if the market imbalance is small, consistent with the weak competition regime we find. %It further appears that most practitioners are aware that a platform operator cannot make one side of the market vastly better off by slightly tilting the market imbalance in favor of that side. 

Variants of the market design interventions we study (limiting preference list lengths and restricting the size of consideration sets) have previously been proposed and studied in a different context, namely, that of reducing congestion in \emph{decentralized} matching markets. We find it convenient to employ labor market terminology in summarizing this line of work. \cite{roth1994jumping} introduce the phenomenon of congestion, i.e., that it is impossible or costly for employers to make
offers to many applicants and that employers compete for the same applicants. In such cases, a reasonable design choice is to enable applicants to signal their interest to employers; see, e.g., \cite{lee2015propose} and \cite{coles2010job} for applications
in online data markets and the academic job market, respectively, and \cite{coles2013preference},
\cite{halaburda2018competing}, and \cite{jagadeesan2018varying}, for examples of how signaling can improve equilibrium outcomes. We emphasize that our market design insights are distinct from the aforementioned works, since we assume that a stable matching on the reported preferences will be implemented (e.g., using a centralized clearinghouse), and focus on gains in match quality from deploying these interventions. 

We now discuss market design interventions previously proposed for improving match quality in centralized matching markets. \cite{ashlagi2019assigning} and \cite{arnosti2022lottery} show that in school choice, single tie-breaking (the same lottery number being using by all programs to break ties between students) can produce higher quality match outcomes for students than multiple tie-breaking (different programs breaking ties independently). \cite{ashlagi2019assigning} also show that under {multiple} tie-breaking, when applicants are on the long side of a fully connected market,  applicants do poorly under unrestricted preference lists, and suggest to impose a restriction on the length of applicant preference lists can improve average rank without significantly reducing the number of assigned students. \cite{che2019efficiency} obtain similar results, and find that %similarly show that in fully connected markets with competition for schools (i.e., one can think of applicants as being on the long side of the market), with sufficiently idiosyncratic (i.e., random and independent) agent preferences with the idiosyncratic part of preferences being uniform $(0,1)$ (or finite-support generalizations), stable matchings can lead to suboptimal welfare for agents, but that 
restricting the length of the preference lists of applicants to a level which is $\omega(\log ^2 n)$ but $o(n)$ can produce asymptotic efficiency while preserving asymptotic stability. The latter paper proposes DA with a circuit breaker (a generalization of DA with restricted preference list length) as a mechanism to achieve a good tradeoff between efficiency and stability in object allocation problems. % than what one achieves by simply interpolating between a stable mechanism (like DA) and an efficient mechanism  like Top Trading cycles \citep{abdulkadirouglu2003school}. 
Notably, these papers provide sharp characterizations only in densely connected markets (i.e., in the ``strong competition'' regime we find) \citep{ashlagi2019assigning,che2019efficiency}, or in markets with constant-length preference lists \citep{arnosti2022lottery}. We now make a more detailed comparison with the aforementioned finding of \cite{che2019efficiency}, %. In the case of \cite{che2019efficiency}, their characterization, together with an assumption of uniform$(0,1)$ idiosyncratic match utilities (or finite support generalizations), leads to a theoretical prescription to restrict preference list lengths to any level $d \in (\omega(\log ^2 n), o(n))$, which causes the market to lie
which notably suggests to operate the market in what we identify as the strong competition regime. This is both quantitatively and qualitatively different from our finding that the Pareto optimal list lengths (when the long side proposes, optimizing for the long-side match utility and the number of unmatched agents) are in $[\Theta(1), \Theta(\log^2 n)]$ and cause the market to lie in the weak competition regime. Why this striking difference in findings, and which guidance is more practically relevant? 
The dual platform objectives  considered in the papers are similar (the stability objective in \cite{che2019efficiency} is closely related to our ``number of unmatched agents'' metric). %, and both papers consider the objective of cardinal match utility). 
The key difference lies in the way match utilities are modelled, which determines whether a non-trivial tradeoff arises between the two objectives. \cite{che2019efficiency} model the idiosyncratic part of agent match utilities as being uniform $(0,1)$ (or finite-support generalizations).
As a result, any $d = o(n)$ leads to asymptotically optimal long side welfare because almost all matched agents get match utility close to the maximum in the support of the idiosyncratic utility distribution. There is no real tradeoff between efficiency and stability (which may be viewed as unrealistic behavior of that model), and any $d \in (\omega(\log ^2 n), o(n))$ asymptotically optimizes both objectives. In contrast, we model heavy-tailed idiosyncratic cardinal preferences (which are arguably more realistic than finite-support preferences).  As a result, the long-side welfare is maximized for a preference list length restriction  $d=\Theta(1)$, and decays polynomially in $d$ in the strong competition regime. The number of unmatched agents is minimized for all $d$ larger than the threshold for strong competition $d \geq \Theta( \log^2 n)$, causing such $d$ to be Pareto dominated, and leading to a non-trivial tradeoff between the two objectives for $d \in (\Theta(1), \Theta(\log^2 n))$ in the weak competition regime. Our new, sharp characterization of stable outcomes in the weak competition regime allows us to quantify this tradeoff and prescribe an optimal list length restriction. Notably the data driven ``field study'' of \cite{che2019efficiency} indeed reveals a non-trivial tradeoff between stability and efficiency and ultimately arrives at a recommendation of relatively short lists (of length between 2 and 6), consistent with our theoretical findings.

%%Our data-driven investigation of limiting preference list lengths in Section~\ref{sec:NYC-HS-counterfactuals} instead considers \emph{single} tie-breaking (as implemented in that real-world market) and nevertheless shows substantial gains from restricting the number of popular schools that applicants can list. 
%Our random market model closely resembles MTB: in our study we go substantially beyond the theoretical findings of \cite{ashlagi2019assigning} for MTB in random markets in three ways (i) we quantify the impact of short lists on the matched applicants' average rank of programs, and hence quantify the welfare impacts (modelling heavy-tailed match values), and (ii) we quantitatively study the tradeoff between between match welfare and number of matches, allowing us to compute quantitative guidance on the optimal length restriction on preference lists. \cite{hoppe2009theory} studies assortative matching of agents based on costly signals, and the tradeoff between the gains from assortative matching and the costs of signalling.}

\paragraph{Organization of the paper.}
In Section~\ref{sec:model}, we introduce our model of partially connected random matching markets.
In Section~\ref{sec:result}, we state our main theorems (Theorem \ref{thm:main-result} and \ref{thm:main-result-dense}) and discuss them.
An overview of our proof of our characterization of moderately connected markets (Theorem~\ref{thm:main-result}) is provided in Section~\ref{sec:proof-sketch}.
In Section~\ref{sec:design-implications}, we obtain market design insights based on our main results.
In Section~\ref{sec:numeric}, we provide the simulation results that confirm and sharpen our theoretical predictions.
In Section~\ref{sec:NYC-HS-counterfactuals} we use real-world data to test our prescriptive and design insights.
%\pq{Need pointer to section 6.}
Formal proofs are relegated to the appendix. 

\paragraph{Asymptotic notations.}
For two sequences of positive real numbers $\{a_n\}_{n=1}^{\infty}$ and $\{b_n\}_{n=1}^{\infty}$: We write $a_n = O(b_n)$ as $n\to\infty$ if $\limsup_{n\to\infty}a_n/b_n < \infty$; We write $a_n = o(b_n)$ as $n\to\infty$ if $\limsup_{n\to\infty}a_n/b_n =0$; We write $a_n = \Omega(b_n)$ as $n\to\infty$ if $b_n = O(a_n)$; We write $a_n = \omega(b_n)$ as $n\to\infty$ if $b_n=o(a_n)$; We write $a_n = \Theta(b_n)$ as $n\to\infty$ if $a_n = O(b_n)$ and $b_n = O(a_n)$.

\section{Model} \label{sec:model}

We consider a two-sided market that consists of a set of men $\mathcal{M} = \{1,\ldots,n+k\}$ and a set of women $\mathcal{W}=\{1,\ldots,n\}$. Here $k$ is a positive or negative integer, which we call the \emph{imbalance}.
%A two-sided matching market is composed of a set of men $\mathcal{M}=\{1,\dots,n\}$ and a set of women $\mathcal{W}=\{1,\dots,n+k\}$.

%We fix a positive integer $d \leq n$ which we call the \emph{connectivity} (or \emph{average degree}) of the market. 
%Each man $i$ has a strict preference list $\succ_i$ over a uniformly random subset $\cW_i \subset \cW$ of $|\cW_i| = d$ women (from among the $\binom{n}{d}$ possibilities), where the subsets $\cW_i$ are drawn independently across men. Each woman $j$ has strict preferences $\succ_j$ over only the men who include her in their preference list
An undirected bipartite random graph $G$ connecting men $\cM$ to women $\cW$. 
%, where each man has degree exactly $d$ and the $d$ neighboring women of each man (denoted by $\cW_i \subset \cW$ for man $i$) are selected uniformly at random and independently across men (from among the $\binom{n}{d}$ possibilities). 
Given $G$, each agent has a strict preference ranking (denoted by $\succ_i$ for agent $i$) over all his/her neighbors in $G$ and does not rank any other agents. 
%(Note that, e.g., if a woman ranks men who did not rank her, such rankings can safely be ignored without changing the set of stable matchings since those pairs cannot form a match, nor can they act as a blocking pair.) 
Woman $j$ (man $i$)'s neighbors in $G$ are denoted by $\cM_j$ (resp., $\cW_i$).
A \textit{matching}
is a mapping $\mu$ from $\mathcal{M}\cup\mathcal{W}$ to itself such
that for every $i\in\mathcal{M}$, $\mu(i)\in\mathcal{W}_i\cup\{i\}$,
and for every $j\in\mathcal{W}$, $\mu(j)\in\mathcal{M}_i\cup\{j\}$,
and for every $i,j\in\mathcal{M}\cup\mathcal{W}$, $\mu(i)=j$ implies
$\mu(j)=i$. We use $\mu(j)=j$ to denote that agent $j$ is unmatched under $\mu$.
A matching $\mu$ is \textit{unstable} if there is a man $i$ and
a woman $j$ such that $j\succ_i\mu(i)$ and $i\succ_j\mu(j)$ (called a \emph{blocking pair}).
A matching is \textit{stable} if there is no blocking pair. %It is well known that the  core of a matching market is the set of stable matchings. % We say that $i$ is a stable partner for $j$ (and vice versa) if there is a stable matching in which $i$ is matched with $j$.

A \emph{random matching market} is generated by drawing:
\begin{itemize}
    \item An undirected bipartite random consideration graph $G$ connecting men $\cM$ with women $\cW$, where each man is connected to $d$ neighboring women (denoted by $\cW_i \subset \cW$ for man $i$) selected uniformly at random and independently across men (from among the $\binom{n}{d}$ possibilities).
    \item For each man $i$, a uniformly random complete preference list over $\cW_i$, and for each woman $j$, a uniformly random complete preference list over $\cM_j$, independently across agents.
\end{itemize} %\yk{The sampling of $G$ needs to be in here. The previous para should be fully deterministic.}  %Thus, for each man $i$, we draw a complete ranking $\succ_i$ from a uniform distribution over the $|\cW|!$ possible rankings.

A stable matching always exists. It can be found using the Deferred Acceptance (DA) algorithm by Gale and Shapley \citep{gale1962}. They show that the men-proposing DA finds the \textit{men-optimal stable matching} (MOSM), in which every man is matched with his most preferred stable woman. The MOSM matches every woman with her least preferred stable man. Likewise, the women-proposing DA produces the women-optimal stable matching (WOSM) with symmetric properties. All of our results will characterize the MOSM. Given the strong evidence from \cite{immorlica2005marriage,kojima2009incentives,AKL17} and other works that the MOSM and WOSM are nearly the same in typical matching markets (with the exception of balanced and densely connected random markets, which we avoid by assuming $k< 0$ in Theorem~\ref{thm:main-result-dense}), we omit to formally show this fact for our setting though we believe it can be done, e.g., using the method developed in \cite{cai2019short} (the property $\textup{MOSM} \approx \textup{WOSM}$ is found to hold consistently in our numerical simulations of our model).
%\yk{are we willing to formalize this if the referees ask? I will craft the wording accordingly.}

We are interested in %the size of the core, as well as
how matched agents rank their assigned partners under stable matching, and in the number of agents who are left unmatched.
Denote the rank of woman $j$ in the preference list $\succ_i$ of man $i$ by $\rank_i(j)\equiv\nobreak|\{j':j'\succeq_i j\}|$. Smaller ranks are preferred,  and $i$'s most preferred woman has a rank of $1$.
Symmetrically, denote the rank of $i$ in the preference list of $j$ by $\rank_j(i)$.

\begin{defn}
\label{def:avg-ranks-and-unmatched-counts}
Given a matching $\mu$, the \emph{men's average rank of wives} is given by
\[
\Rmen(\mu) = \frac{1}{n+k} \left ( |\bar{\cM}(\mu)| (d+1) \  +  \sum_{i\in\cM\backslash \bar{\cM}(\mu)}\rank_i(\mu(i))  \right )\, ,
\]
where $\bar{\cM}(\mu)$ is the set of men who are unmatched under $\mu$, and the \emph{number of unmatched men} is denoted by $\delta^m(\mu)$, i.e., $\delta^m(\mu) = |\bar{\cM}(\mu)|$.

Similarly,  the \emph{women's average rank of husbands} is given by
\[
\Rwomen(\mu) = \frac{{1}}{n}  \left ( \sum_{j \in \bar{\cW}(\mu)} (|\cM_j| +1) \  +  \sum_{j\in\cW\backslash \bar{\cW}(\mu)}\rank_j(\mu(j))  \right )
\]
where $\bar{\cW}(\mu)$ is the set of women who are unmatched under $\mu$, and the \emph{number of unmatched women} is denoted by $\delta^w(\mu)$, i.e., $\delta^w(\mu) = |\bar{\cW}(\mu)|$.
\end{defn}

(Note here that if an agent is unmatched, we take the rank for the agent to be one more than the length of the agent's preference list.) By the rural hospital theorem \citep{Roth86}, the set of unmatched agents ($\bar{\cM}(\mu)$ and $\bar{\cW}(\mu)$) is the same in every stable matching $\mu$, and therefore we simply represent the number of unmatched men and women under stable matching by $\delta^m$ and $\delta^w$ respectively throughout the remainder of paper.

We remark that the only asymmetry in our model is that the lengths of men's preference lists are deterministically $d$,
whereas each woman has $\textup{Binomial}(n+k, d/n) \xrightarrow[n \to \infty]{\textup{d}} \textup{Poisson}(d)$ neighbors\footnote{The approximation is correct since we assume in our main result (Theorem \ref{thm:main-result}) that $|k|=o(n)$.} where $\xrightarrow{\textup{d}}$ denotes convergence in distribution. Since our theoretical analysis will assume $d = \omega(1)$, we have $\textup{Poisson}(d) \xrightarrow{\textup{p}} d$, i.e., the degree of each woman is also very close to $d$, and so the asymmetry between the two sides in the model is a technical one.\footnote{{We numerically tested the behavior of random markets with a bipartite Erdos-Renyi connectivity graph with edge probability $d/n$ (and hence average degree $d$), and found the behavior to be very similar to that under our formal model (if anything, the theoretical predictions were found to be even more accurate for the Erdos-Renyi model); see Appendix~\ref{app:numerical-additional}. Note that the Erdos-Renyi connectivity model is symmetric in the two sides of the market.}} %\yk{Seungki, please mention your robustness check about Erdos-Renyi graphs, if you do it. Perhaps we need not include the results in the paper to save space. Can add to the response if you like.}

\section{Theoretical Results}\label{sec:result}
In this section we state and discuss our main theoretical results, {which characterize the average rank of partners and the number of unmatched agents in markets, as a function of market connectivity. 
In Section \ref{subsec:detailed-heuristic-picture} we provide intuition leading informally to a detailed picture of the market equilibrium.}

\paragraph{\bf Moderately and sparsely connected markets.}
In our first main result, we show that there is an insignificant advantage from being on the short side in partially connected markets with small imbalance whose connectivity parameter $d$ is $o(\log^2 n)$. %In particular, for a range of imbalance, both sides are almost equally well off.
% and further identify the threshold connecitivity beyond which this short-side advantage takes place.
%We omit quantifications for readability and give a stronger version of the theorem in Appendix \ref{append:small-medium-d-proof}.

\begin{thm}[Moderately Connected Markets]
\label{thm:main-result}
	Fix any $\epsilon > 0$. Consider a sequence of random matching markets indexed by $n$, with $n+k$ men and $n$ women ($k=k(n)$ can be positive or negative or zero), and connectivity (average degree) $d=d(n)$, with $d=\omega(1)$ and $d=o(\log^2 n)$, and %\footnote{In particular, for arbitrary fixed $\epsilon>0$, the result holds for any $k=k(n)$ that satisfies $|k(n)|\leq O(n^{1-\epsilon})$.}   
	{$|k|\leq n^{1-\epsilon}$}.
	%Was |k|=O(ne^{-\sqrt{d}})}
	Then with high probability,\footnote{Specifically, our characterization holds with probability at least $1- O(\exp(-d^{1/4})) = 1- o(1)$.} we have
	\begin{align*}
		\left|\Mrank - \sqrt{d}\right|
		\leq \
		d^{0.3}
		\, , \\
		\left|\Wrank - \sqrt{d}\right|
		\leq \
		d^{0.3}
		\, , \\
		\left|\log \delta^m
		-
		\log\left(ne^{-\sqrt{d}}\right)
		\right|
		\leq \
		d^{0.3}
		\, , \\
		\left|\log \delta^w
		-
		\log\left(ne^{-\sqrt{d}}\right)
		\right|
		\leq \
		d^{0.3}
		\, .
	\end{align*}
\end{thm}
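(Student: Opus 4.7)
I would run men-proposing DA using the principle of deferred decisions: each man, when proposing, samples his next target uniformly from the neighbors he has not yet proposed to. The MOSM is pinned down up to the claimed $d^{0.3}$ slack by two statistics: the total number of proposals $T$ at termination and the number $\delta^m$ of men who exhaust their entire list. The rural-hospital identity gives $\delta^m - \delta^w = k$, so $\delta^w$ is determined by $\delta^m$, while $\Mrank$ is close to $T/(n+k)$ up to a small correction from the $\delta^m$ fully-rejected men. The heuristic in the introduction predicts $T \approx (n+k)\sqrt{d}$ and $\delta^m, \delta^w \approx n e^{-\sqrt{d}}$; the goal is to turn this picture into a rigorous characterization with the claimed precision.

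\textbf{Fake process and coupling.} Unlike in the densely connected regime analyzed by \cite{AKL17}, DA does not terminate merely because enough distinct women have been proposed to: a $\Theta(e^{-\sqrt{d}})$ fraction of men exhaust their lists first, and this must be handled delicately. Following the sketch in the introduction, I would introduce the fake matching process $\cMf$ in which every man who is accepted and later displaced is given $d$ bonus proposals. In $\cMf$ the number of proposals a man makes is essentially decoupled from rejection dynamics, so the arrivals at each woman become approximately Poisson and the acceptance pattern of each man becomes a sum of approximately independent Bernoulli trials. The true DA can be coupled to $\cMf$ in such a way that $T$ is dominated by the corresponding count in $\cMf$, and that the original $d$ proposals of any man in the real DA coincide with his first $d$ proposals in $\cMf$; this gives two-sided control of both $T$ and $\delta^m$.

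\textbf{Closing the self-consistent system.} Let $p$ denote the typical probability that a neighbor of a given man would accept his proposal at termination time. The fake-process analysis gives $\delta^m = n(1-p)^d (1+o(1))$, while modelling proposals received per woman as approximately Poisson gives $\delta^w = n e^{-T/n}(1+o(1))$. Each man makes on average $\approx 1/p$ proposals, so $T \approx (n+k)/p$. Combining these with $\delta^m - \delta^w = k$ and the hypothesis $|k| = O(n e^{-\sqrt{d}})$ forces $1/p \approx dp$, hence $p \approx 1/\sqrt{d}$, whence $T \approx n\sqrt{d}$, $\delta^m \approx \delta^w \approx n e^{-\sqrt{d}}$, and $\Mrank \approx \sqrt{d}$, all with $d^{0.3}$ slack coming from Chernoff-type concentration. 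For $\Wrank$ I would use that under MOSM each matched woman is paired with her best proposer; conditional on receiving $m$ approximately uniform proposals from her list of size $\approx d$, her husband's rank is distributed as the minimum of $m$ uniform samples on $\{1,\dots,d\}$, whose mean is $\approx d/(m+1)$, and plugging $m \approx \sqrt{d}$ gives $\Wrank \approx \sqrt{d}$.

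\textbf{Main obstacle.} The most delicate step is making the fake-process bound on $\delta^m$ sharp enough that $\sqrt{d}$ is recovered in the exponent with only $d^{0.3}$ additive slack. The bonus of $d$ proposals per rejection-after-acceptance is a generous budget, and ensuring it inflates the inferred per-woman acceptance rate by at most a factor $1+O(d^{-0.2})$ requires first obtaining a coarse bound on the number of rejection-after-acceptance events and then refining. The hypothesis $d = o(\log^2 n)$ is decisive here because it keeps $ne^{-\sqrt{d}}$ polynomially large in $n$, leaving enough concentration slack for both $T$ and $\delta^m$ to be controlled simultaneously.
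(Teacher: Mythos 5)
Your overall architecture matches the paper's: men-proposing DA analyzed via the principle of deferred decisions, reduction of the theorem to controlling the stopping time $\tau$ (your $T$) and the unmatched count $\delta^m$, the identity $\delta^m = \delta^w + k$, a balls-into-bins treatment of the women's side, and the $d$-bonus-proposals device for tracking men who exhaust their lists. Your women's-rank computation (best of $m$ proposals out of a list of length $\approx d$, mean $\approx d/(m+1)$) is also the paper's route, modulo the negative-association machinery needed to make the averaging over women rigorous.

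The genuine gap is your claim that the coupling with $\cMf$ ``gives two-sided control of both $T$ and $\delta^m$.'' It does not. Counting a man as unmatched only when he suffers $d$ rejections in a row (equivalently, granting $d$ bonus proposals after each acceptance-then-displacement) strictly \emph{undercounts} unmatched men, so this device yields only a \emph{lower} bound on $\delta^m[t]$ --- which is exactly what the paper uses it for, in combination with the balls-into-bins \emph{upper} bound on $\delta^w[t]$, to force the stopping condition $\delta^m[\tau]=\delta^w[\tau]+k$ to trigger by time $(1+\epsilon)n\sqrt d$, i.e., to prove only the upper bound $\tau \le (1+\epsilon)n\sqrt d$. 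The matching \emph{upper} bound on $\delta^m$ --- needed for the lower bound on $\tau$, the lower bound on $\Rmen$, and the upper bounds on $\delta^m,\delta^w$, i.e., half of the theorem's conclusions --- cannot come from the fake process and requires a separate idea that your proposal does not supply. The paper obtains it by fixing a man $i$ before any randomness is revealed, running DA to convergence on all other men, and bounding $\prob(\mu(i)=i)$: conditional on the already-established event $\tau\le n(\sqrt d + d^{1/4})$, each of $i$'s $d$ proposals is accepted with probability at least roughly $1/(\sqrt d + O(d^{1/4}))$ by Jensen's inequality, and the triggered rejection chain returns to displace him with probability only $O(1/\sqrt n)$ because the at least $ne^{-\sqrt d - 2d^{1/4}}$ still-unmatched women absorb the chain first; this gives $\Ex[\delta^m]\le n\exp(-\sqrt d + O(d^{1/4}))$ and then Markov's inequality. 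Note the order of arguments matters: the upper bound on $\tau$ and the lower bound on $\delta^w$ are inputs to this rejection-chain estimate, so the ``self-consistent system'' you describe cannot be closed symmetrically in one pass --- the paper explicitly declines to formalize that fixed-point heuristic and instead performs an asymmetric two-sided squeeze on $\tau$. Your proposed remedy of ``a coarse bound on the number of rejection-after-acceptance events and then refining'' is not the mechanism used and does not obviously suffice to recover the exponent $\sqrt d$ with only $d^{1/4}$-order slack.
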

Informally, in large random matching markets with average degree $d = o (\log^2 n)$ and a small imbalance $k \leq n^{1-\epsilon}$, under stable matching we have $\Rmen \approx \Rwomen \approx \sqrt{d}$ irrespective of which side is the short side, and there are approximately $n e^{-\sqrt{d}} = \omega(1)$ unmatched agents on both sides of the market.
%Our first result highlights that, for medium range (and arguably more realistic) connectivity (i.e., $d = \omega(1)$ and $o(\log^2 n)$),
Thus there is no short-side advantage %in a typical realization of a random matching market given the imbalance is small ($k=O(n^{1-\epsilon})$), and both sides are almost equally well off.
and agents on both sides are matched to their $\sqrt{d}$-th ranked partner on average.
%
%In addition, it predicts that the number of unmatched agents on either side is close to the order of $n e^{-\sqrt{d}}$.
%Due to Rural Hospitals Theorem\footnote{Given a matching market, the set of matched agents is the same in all stable matchings.}, it immediately follows that with high probability there are non-negligible unmatched agents in every stable matching.
%\begin{cor}[Number of Unmatched Agents in Sparsely Connected Markets]
%	Under the assumptions of Theorem \ref{thm:main-result}, with high probability, we have $\delta^m = \omega(1)$ and $\delta^w = \omega(1)$ for every stable matching $\mu$.
%	\edit{SM: Pengyu, could you check this again?}
%\end{cor}
%
A significant number of agents are left unmatched even on the short side, in contrast to a fully connected unbalanced matching market where all agents on the short side are matched. %which triggers an intense competition among the agents on the other side.
%Our result points out that in a realistic range of connectivity, this is not the case despite of the presence of imbalance.
Though we only characterize the MOSM in the present version of the paper, we believe the same characterization extends to the WOSM as well. We give an overview of the proof of Theorem~\ref{thm:main-result} in Section~\ref{sec:proof-sketch} and the formal proof in Appendix~\ref{append:small-medium-d-proof}. Note that $\log (n/|k|) = \Theta(\log n)$ under $|k| \leq n^{1-\epsilon}$ and hence the theorem can equivalently be stated for $d = o (\log^2 (n/|k|))$.

The main intuition for Theorem~\ref{thm:main-result} is that for $d  = o(\log^2 (n/|k|))$, more than $|k|$ men remain unmatched with high probability, because they reach the end of their preference lists in men-proposing DA (cf. \cite{pittel2019likely}, who showed that some men need to go $\log^2 n$ deep in their preference lists in the fully connected market). Clearly, the number of unmatched men must be exactly $k$ plus the number of unmatched women.
Then, assuming a small imbalance $k$, the number of unmatched agents on the two sides must be nearly the same (up to a factor less than 2).
But the number of unmatched men should grow with $\Rmen$ (the more men need to propose, the larger  the number that will reach the end of their preference lists), whereas the number of unmatched women should similarly grow with $\Rwomen$ (e.g., one can consider women proposing
DA, and assume that, as usual, the WOSM is close to the MOSM). We deduce that we should have $\Rmen \approx \Rwomen$ in the $d \ll \log^2 (n/|k|)$ regime. (Informal quantitative intuition leading to the precise estimates of $\Rmen$ and $\delta^m$ will be provided later in Section \ref{subsec:detailed-heuristic-picture}.)%\yk{Has been commented out in the intro. Want to move it here?}\pq{It is moved to Section 3.1, I will modify the sentence.}

%whereas the lengths of women's preference lists are random (with average very close to $d$). However, using the Poisson approximation argument given in the introduction we can deduce that the degree of each woman is very close to $d$, and so the asymmetry between the two sides is mainly technical.

%Although we have characterized the properties of MOSM, we believe that the same result holds for every stable matching including WOSM. Note that we allow the imbalance $k$ to have arbitrary sign.
%A mere difference between men side and women side in our setting is that all men have the same length of preference lists whereas each women has a random length of preference list, which may not be critical to our conclusion.
%Taking the argument made in \ref{sec:intro} for granted, we can easily deduce that $\Rmen(\mu) \approx \Rwomen(\mu) \approx \sqrt{d}$ for every stable matching $\mu$ with high probability from the fact that $\Rmen(\textup{MOSM}) \leq \Rmen(\text{WOSM})$.

We highlight that Theorem~\ref{thm:main-result} encompasses a wide range of connectivity parameters $d = o(\log^2 n)$ (for $k \leq n^{1-\epsilon}$), which extends far beyond the connectedness threshold of the consideration graph $d_{{\rm {\textup{conn}}}}^* \approx \log n $ (this is also the connectedness threshold for Erd\H{o}s-R\'{e}nyi random graphs).
Thus our ``weak competition regime'' result does not require a disconnected or fragmented market.
Rather, the result applies even to very well connected markets.\footnote{For example, with $n = 1,000$, $\log^2 n \approx 48$. Taking $d = 10$ (much less than 48), numerics tell us that 9.6\% of pairs of men are within 1 hop of each other (i.e., there is woman who is ranked by both men), and 99.98\% of pairs of men are within 2 hops of each other.} This is in sharp contrast to buyer-seller markets, where, roughly, connectedness of the consideration graph implies a strong effect of competition. See Appendix~\ref{app:buyer-seller-app} for a precise description of the latter behavior of buyer-seller markets.
%captured in the following remark. \pq{Do we need to keep this remark?}%\pq{Should we modify this para since we are changing our pitch? It says "no stark effect", which we are moving away from.}
%
%\begin{rem}[Connected buyer-seller markets exhibit a stark effect of competition]\label{rem:buyer-seller-markets}
%\end{rem}

Numerical simulations in the Section~\ref{sec:numeric} show that the finding in Theorem~\ref{thm:main-result} holds up extremely well for all $d \lesssim 1.0 \log^2 (n/|k|)$ for realistic values of $n$ (not just asymptotically in $n$ for $d = o(\log ^2 n)$), and indeed extends much beyond the connectedness threshold for small $k$.  %\yk{come back to this and update after reading section 5.}% Now $\log^2 n$ is quite large for realistic market sizes (see Figure~\ref{fig:no-stark-effect} in the introduction), far in excess of preference list lengths in many real markets: we have $\log^2 n \approx 48$ for $n = 1000$, $85$ for $n=10000$ and 132 for $n = 100000$. In contrast, we have $n\approx 80,000$ for \NYC{}{the high school admissions data introduced in the Section~\ref{sec:intro}} and preference lists have length no more than 12 (the average length is only around 6.9), $n \approx 30,000$ for the National Residency Matching Program and preference lists have length only about $11$ on average. Thus, real preference list lengths are typically much smaller than $\log^2 n$. Moreover, correlation in preferences should only reduce the effect of competition (e.g., see the evidence in Figure~\ref{fig:NYC-HS-counterfactuals}), leading us to contend that \emph{the vast majority of real matching markets live in the ``no stark effect of competition'' regime covered by Theorem~\ref{thm:main-result}}. This may explain why, in simulation experiments on real data like the one shown in Figure~\ref{fig:NYC-HS-counterfactuals}, only a relatively weak effect of competition is observed. % in which all agents are connected so that each pair of agents are within a range of competition.

\paragraph{\bf Densely connected markets.}
Our next result shows that for $d \gtrsim \log^2 n$, the finding of \cite{AKL17} holds true, i.e., the short side is markedly better off even in (large) markets with a  small imbalance (we note that the recent paper \cite[][Theorem 4]{che2019efficiency} contains a related result). Moreover, this benefit of being on the short side arises in conjunction with the key property that {\emph{most agents on the short side of the market are matched}. (The theorem establishes a stronger property leveraging its strong assumption $d = \omega(\log^2 n)$, namely, with high probability, \emph{all} short side agents are matched.)} %(an implausible occurrence in real world markets).\yk{there's some tension between Principle~\ref{prin:condition-for-weak-competition} and this description. Yash should update the description and leave the theorem unchanged.}

\begin{thm}[Densely Connected Markets]
	\label{thm:main-result-dense}
	Consider a sequence of random matching markets indexed by $n$, with $n+k$ men and $n$ women, and connectivity (average degree) $d=d(n)$, with $k=k(n)<0$ and $|k|=o(n)$, $d=\omega(\log^2 n)$ and $d=o(n)$. Then, with high probability, all men are matched under stable matching, and we have
	\begin{align*}
	\Mrank
	\leq& \
	(1+o(1))\log n
	\, , \\
\Wrank
	\geq& \
(1+o(1))
	\frac{d}{\log n}
	\, .
	\end{align*}
\end{thm}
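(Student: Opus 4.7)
The plan is to adapt the analysis of fully connected markets in \cite{AKL17} to our partially connected setting, taking advantage of the fact that $d = \omega(\log^2 n)$ is large enough that no man exhausts his preference list during men-proposing DA, so the dynamics essentially coincide with those of the fully connected case. I would run men-proposing DA with the principle of deferred decisions: reveal each man's preference list entry-by-entry, so the $t$-th woman on man $i$'s list is uniformly random among the $n - t + 1$ women not yet on his list. Equivalently, each proposal by $i$ goes to a uniformly random woman among those he has not yet proposed to, and $i$ makes at most $d$ proposals in total.

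The core step is bounding the total number $T$ of proposals. For DA to terminate with all men matched, at least $n + k$ distinct women must receive a proposal. Following \cite{AKL17}, I would introduce an enhanced DA in which proposals continue past the first acceptance, coupling the proposals (which are uniform over yet-un-proposed women) with i.i.d.\ uniform draws from $\cW$ up to a negligible error (the birthday-paradox correction is $o(1)$ per man since each man makes only $O(\log^2 n) = o(\sqrt{n})$ proposals w.h.p.). The expected number of uniform draws needed to hit $n+k$ distinct women is $n(H_n - H_{|k|}) = (1+o(1)) n \log(n/|k|) \le (1+o(1)) n \log n$, and Azuma--Hoeffding concentration along the proposal sequence yields
\begin{align*}
T \le (1+o(1))(n+k) \log n \quad \text{w.h.p.}
\end{align*}
Combined with the fact that all $n+k$ men are matched, this gives $\Rmen \le T/(n+k) \le (1+o(1)) \log n$.

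To certify that all men are matched, I would show that no man reaches the end of his $d$-length list. In the fully connected model, Pittel \citep{Pittel19} shows the maximum number of proposals by any single man is $\Theta(\log^2 n)$ w.h.p.; the same longest-rejection-chain argument transfers to our setting since $d \gg \log^2 n$ means the connectivity graph is locally "fully connected" from any man's perspective (each man's neighborhood collides only negligibly with previously proposed-to women). Hence the maximum is $O(\log^2 n)$ w.h.p., which is $o(d)$ by hypothesis, so no man exhausts his list.

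Finally, for the lower bound on $\Rwomen$, let $p_j$ denote the number of proposals received by woman $j$, so $\sum_j p_j = T \le (1+o(1)) n \log n$. Conditional on the set of $p_j$ proposers, woman $j$'s husband is her most preferred proposer; by deferred decisions on her ranking of $\cM_j$, the rank of her husband is the minimum of $p_j$ positions uniformly distributed in $\{1,\ldots,|\cM_j|\}$, with expected value $(|\cM_j|+1)/(p_j+1)$. Since $|\cM_j| = (1+o(1)) d$ uniformly over $j$ (using $d = \omega(\log n)$ and a standard concentration for $\textup{Binomial}(n+k, d/n)$), a Jensen-type convexity argument gives
\begin{align*}
\Rwomen \ge \frac{1}{n}\sum_{j} \frac{|\cM_j|+1}{p_j+1} \ge (1+o(1)) \frac{d}{\bar p + 1} \ge (1+o(1)) \frac{d}{\log n},
\end{align*}
where $\bar p = T/n \le (1+o(1))\log n$; unmatched women (at most $|k|=o(n)$ of them) only increase $\Rwomen$.

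The main obstacle is Step~3: controlling the maximum number of proposals per man in the partially connected random graph, to ensure no man exhausts his list. The bipartite structure introduces correlations that are absent in the fully connected case, so the $\Theta(\log^2 n)$ bound must be re-derived (or transferred by coupling). Because $d/\log^2 n \to \infty$, any polylogarithmic slack in the bound suffices, which makes this delicate but tractable. A secondary concern is ensuring the ``uniform draw'' approximation for Step~2 holds even though proposals are restricted to each man's random $d$-neighborhood; this follows from the fact that no man makes more than $O(\log^2 n) \ll d$ proposals, so the restriction is essentially never binding.
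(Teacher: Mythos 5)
Your proposal follows essentially the same route as the paper's: an upper bound on the total number of proposals via the coupon-collector analogy (valid as an upper bound on $\tau$ whether or not any man exhausts his list, since $\delta^w[t]-\delta^m[t]$ is non-increasing and the stopping condition is implied once $n+k$ distinct women have been hit), and a lower bound on $\Rwomen$ via the identity that a woman of degree $|\cM_j|$ who receives $p_j$ proposals has expected husband-rank $(|\cM_j|+1)/(p_j+1)$, followed by Jensen. Two places need more care. First, your $\Rwomen$ chain silently replaces the realized average rank by its conditional expectation given $(p_j,|\cM_j|)_j$; you still need a concentration step before the Jensen bound on the conditional mean yields a high-probability statement (the paper conditions on the proposal counts and applies Hoeffding to the $n$ conditionally independent ranks $R_j$, each bounded by the woman's residual degree, in its Lemma on the conditional distribution of $\Wrank$). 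Second, for ``all men are matched'' your plan of transferring Pittel's $\Theta(\log^2 n)$ maximum-proposals bound is workable --- the coupling you need is exactly the paper's device of giving each man a complete list whose first $d$ entries are his real preferences, so that if no man in the fully connected run proposes more than $d$ times the two runs coincide --- but there is a more elementary alternative that avoids re-deriving Pittel's result: given that the average number of proposals received per woman is $(1+o(1))\log n$ w.h.p., the probability that a fixed man is rejected $d$ times is roughly $\bigl(1-\tfrac{1}{(1+o(1))\log n}\bigr)^{d}=e^{-\omega(\log n)}=o(1/n)$, and a union bound over men finishes. (Note that the paper's own appendix proof sidesteps this entirely by bounding $\Rmen \le (\tau+\delta^m)/(n+k) \le (\tau+n)/(n+k)$, and does not separately establish the ``all men matched'' clause; your proposal is more explicit on that point.)
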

%\pq{The technical tools developed cannot prove results for $d=\Omega(n/\log n)$. Need to search and replace the definition of dense markets to $d=\omega(\log^2 n)$ and $d=o(n/\log n)$.}

This result shows that the short-side advantage emerges in densely connected markets even when the imbalance is small (including for an imbalance of one, i.e., $k = -1$).
More specifically, when $d = \omega(\log^2 n)$, it predicts that the agents on the short side are matched to their $\log n$-th ranked partner on average whereas the agents on the long side are matched to their $\big( \frac{d}{\log n} \big)$-th ranked partner on average.
Theorem \ref{thm:main-result-dense} smoothly interpolates between the result in AKL \citep{AKL17} and our Theorem \ref{thm:main-result} (though the extremes $d = \Omega(n)$ and $d = \Theta(\log^2 n)$ are not covered by the formal statement in present form): as connectivity $d$ increases, a phase transition happens at $d=\Theta(\log^2 n)$, and the short side advantage starts to emerge for $d=\omega(\log^2 n)$. The magnitude of the advantage increases as the market becomes denser.
%Note that it partly recovers AKL Theorem as a limiting case, i.e., $\Rmen \approx \log n$ and $\Rwomen \approx \frac{n}{\log n}$ when $d \approx n$.
%But our result further shows that the disadvantage of long side reduces as the connectivity decreases.
%At the other extreme, when $d \approx \log^2 n$, the functional forms derived here suggest that $\Rmen \approx \log n \approx \sqrt{d}$ and $\Rwomen \approx \frac{d}{\log n} \approx \sqrt{d}$, which is exactly consistent with Theorem \ref{thm:main-result}.
Combining Theorems \ref{thm:main-result} and \ref{thm:main-result-dense}, we conclude that, assuming a small imbalance, a short-side advantage exists if and only if a matching market is connected densely enough, and the threshold level of connectivity $d \sim \log^2 n$. % that is asymptotically precise given $k=o(n)$.

The analysis leading to Theorem~\ref{thm:main-result-dense} is similar to that leading to \cite[][Theorem~2]{AKL17}. The number of proposals in men-proposing DA remains unaffected; the only change is that women now have rank lists of approximate length $d$ (instead of length $n+k$), and so, receiving about $\log n$ proposals leads to an average rank of husband of about $d/\log n$. The proof is in Appendix~\ref{append:dense-d}.

\subsection{Detailed heuristic picture of market equilibrium}\label{subsec:detailed-heuristic-picture}

In this section, we provide more detailed quantitative intuition which helps explain our estimates of $\Rmen$ and $\Rwomen$ in Theorem~\ref{thm:main-result}, and also yields Principle~\ref{prin:condition-for-weak-competition} capturing which random markets exhibit strong competition. 
%in our characterization of moderately connected markets. 
This intuition is based on a detailed heuristic picture of the stable outcome in a random matching market.
We do not formally prove this detailed picture in this paper (Theorem~\ref{thm:main-result} is instead proved via a ``shortcut'').
%We state several conjectures motivated by the heuristic analysis and leave their proofs for future work.
%

\paragraph{Intuition for Theorem~\ref{thm:main-result}.} Consider the man-proposing deferred acceptance algorithm in a random market. Intuitively, both $\Rmen$ and the number of unmatched men $\delta^m$ should be governed by the (endogenous) probability $\pmen$ that a neighboring woman $j$ (independently of other women) is ``interested'' in given man $i$ (the woman $j$ is said to be interested if she receives no proposal which she prefers to $i$): in particular, the rank of man $i$ for his wife (his most preferred woman who accepts his proposal) should be distributed as $\Geo(\pmen)$ truncated at $d$, leading to $\Rmen \approx 1/\pmen$ (assuming $1/\pmen \ll d$) and $\delta^m \approx n\prob(\Geo(\pmen) > d) = n(1-\pmen)^{-d} \approx n\exp(-d \pmen)$. 
Analogously for women, letting $\pwomen$ denote the (endogeneous) probability that woman $j$ receives a proposal from each neighboring man $i$,
%Then we expect the rank of woman $j$ for her husband to be distributed as $\Geo(\pwomen)$ truncated at $\textup{Poisson}(d) \approx d$ (since $d = \omega(1)$) since the woman will match to the most preferred man who proposes to her, leading to
we expect $\Rwomen \approx 1/\pwomen$ and $\delta^w \approx %n\prob(\Geo(\pwomen) > d) = n(1-\pwomen)^{-d} \approx
n\exp(-d \pwomen)$, with partner ranks distributed as Geometric$(\pwomen)$ truncated at the woman's degree (which is random but concentrated around $d$).
For $k$ small, in moderately connected markets where many short side agents are unmatched $\delta^m \gg k$, we have that both sides must have nearly the same number of unmatched agents $\delta^w \approx \delta^m$ and hence $\pwomen \approx \pmen$ and $\Rmen \approx \Rwomen$. But we can further get quantitative estimates: the average number of proposals received by women is nearly the same as the average number of proposals made by men $(n+k)\Rmen/n \approx \Rmen \approx 1/\pmen$, and since $\pwomen \approx \textup{average number of proposals received}/\textup{(woman's degree)} \approx 1/(d \pmen)$. We deduce that $\pmen \approx \pwomen \approx  \frac{1}{\sqrt{d}}$ and so $\Rmen \approx \Rwomen \approx \sqrt{d}$ and $\delta^m \approx \delta^w \approx n e^{-\sqrt{d}}$. 

\paragraph{Principle~\ref{prin:condition-for-weak-competition} and the threshold level of connectivity.}
{In Appendix~\ref{app:heuristic-numerical}, we refine the detailed heuristic picture above, and deduce a conjecture which amounts to a detailed version of Principle~\ref{prin:condition-for-weak-competition} for random markets. The conjecture (stated in Appendix~\ref{app:heuristic-numerical}) may be informally summarized as follows:
\emph{Fix any $\nu \in (0, \infty)$ and assume $|k| \leq n^{1-\epsilon}$ with men on the short side. Then there is a threshold connectivity $d^* = \log^2 (n/|k|) (1+o(1))$, and a rank ratio threshold $t = 1- \Theta(1/\log n)$ such that:
%\begin{itemize}
\begin{compactitem}[leftmargin=*]
    \item If $d \leq d^*$, with high probability, $\Rmen /\Rwomen \gtrapprox t$ and $\delta^m \gtrapprox \nu |k|$. 
    \item If $d > d^*$, with high probability, $\Rmen /\Rwomen \lessapprox t$ and $\delta^m \lessapprox \nu |k|$. 
%\end{itemize}
\end{compactitem}
} 
In particular, few agents are unmatched on the short side (relative to the imbalance) if and only if the short side is significantly better off. We find that $\nu=0.5$ produces high quality numerical estimates in finite random markets, and hence use that choice of $\nu$ in deploying our insights in Sections~\ref{sec:numeric} and \ref{sec:NYC-HS-counterfactuals}.
%Match quality is significantly affected by the market imbalance $\Rmen \leq \Rwomen (1- \Theta(1/\log (n)))$, if and only if few short side agents are unmatched $\delta^m \leq \nu |k|$, which occurs if and only if $d \geq \log^2 (n/k) (1+o(1))$.} %(1+o(1))/\sqrt{d}$ and $\Rmen = \Rwomen (1-O(1/\log(n/\delta^m))) = \sqrt{d}(1-o(1))$.}
%\paragraph{Impact of market imbalance $k$ on the boundary between regimes.} 
%One of our key conceptual findings is that the marketplace competition is weak (i.e., the benefit from being on the short side of the market is small) if and only if the number of unmatched agents on the short side of the market rises to a level larger than the imbalance in the marketplace, i.e., $\delta^m\approx |k|$.\yk{This subsection is supposed to *derive* Principle~\ref{prin:condition-for-weak-competition}, not take it as a given! Analysis based on $k$ should start in the ``intuition'' para above.} 
%Based on this intuition, the two regimes are separated at $d$ such that $\delta^m = \Theta(n e^{-\sqrt{d}}) = |k|$, i.e., at $d = \Theta(\log^2 (\frac{n}{|k|}))$. 
 Note that the sharp estimate $d^* \approx 1.0\times \log^2(n/|k|)$ of the boundary between the strong and weak competition regimes is consistent with our (weaker) formal results, Theorems \ref{thm:main-result} and \ref{thm:main-result-dense}. Theorem~\ref{thm:main-result-dense}  establishes strong competition for $d = \omega( \log ^2 n) = \omega(\log^2 (n/|k|))$. Theorem~\ref{thm:main-result}, under the assumption of small imbalance $|k|=O(n^{1-\epsilon})$, shows weak competition under $d = o(\log^2 n) = o(\log^2 (n/|k|))$. }
 
 %since $\log (n/k) = \Theta(\log n)$. %This is actually a special case of the more general finding that the boundary for $d$ is $\Theta(\log^2 (\frac{n}{|k|}))$. In other words, when we plug in the assumption that $|k|=O(n^{1-\epsilon})$ into this general boundary, we obtain the regime boundary of $\Theta(\log^2 n)$ derived in our theorems.

%\yk{I will update this. Move the derivation to Seungki's appendix with the heuristic calculations.} 

%In strong competition regime, the MOSM is found when $n+k$ out of $n$ women has received at least one proposal. The number of proposals needed for this can be calculated using the analogy to coupon collector problem (similar to the argument in Appendix \ref{append:dense-d}), which turns out to be $n \log \frac{n}{|k|} + O(n)$. In weak competition regime, the MOSM is found when the number of women who has received a proposal equals to the number of men who has not exhausted his preference list.

\paragraph{Distribution of agents' ranks of partners.} In order to understand how market connectivity impacts the welfare of agents, one needs to consider the distribution (rather than the average) of agents' ranks of partners. Reassuringly, our market design insights, obtained in Section \ref{sec:design-implications}, are not sensitive to the specifics of these distributions as long as they have a subexponential tail, which we conjecture is indeed the case. For example, the insights hold when men's ranks of wives follow a (truncated) Geometric distribution as conjectured above, and the number of proposals a woman receives follows a Poisson distribution (as one may analogously conjecture).

\section{Implications for the Design of Matching Platforms}
\label{sec:design-implications}
In this section, we demonstrate how to use the theoretical findings made in Section~\ref{sec:result} to tackle market design questions. We discuss the following platform interventions which allow to control the market connectivity, and how to optimally deploy them in random markets:
%
%\begin{compactitem}[leftmargin=*]
\begin{itemize}
    \item \emph{Limiting consideration sets} (Section \ref{subsec:limit-consideration-set}). The platform constructs a random consideration graph and only presents to each agent their neighbors in the consideration graph (consideration sets) as potential partners. We denote the average cardinality of each consideration set by $d(n)$, which is chosen by the platform. Agents provide a preference ranking over their consideration sets and the platform then implements a stable matching.
    \item \emph{Limiting preference list lengths} (Section \ref{subsec:limiting-list-length}). We assume that contacts are always initiated by one side of the market, which we call the proposing side. The platform sets a limit $d(n)$ on the length of preference lists of the agents on the proposing side. The other (``receiving'') side of the market is asked to construct a preference ranking over proposing-side agents who reached out to them. Agents provide their preference lists and the platform then implements a stable matching.
\end{itemize}
%\end{compactitem}
We quantify market performance in terms of two metrics: the utilitarian welfare of agents, and the number of unmatched agents. For both interventions, we reach similar high-level conclusions regarding the impact of $d(n)$ on market performance: 
\begin{compactenum}[(1),wide,labelwidth=!,labelindent=0pt]
%\begin{enumerate}
    \item Typically, utilitarian welfare of the agents is maximized for a consideration set size/allowed preference list length (both denoted by $d$) which is small (i.e., $d=\Theta(1)$). %\yk{Meaning I can choose any $d=O(1)$? Or are you saying that the optimal value of $d$ is of constant order. I edited assuming the latter is what you meant.}
    \item The number of unmatched agents is minimized for moderate-sized or larger $d$; specifically, $d=\Omega(\log^2 n)$. While the number of unmatched agents is (weakly) decreasing in $d$, there is no further reduction in the number of unmatched agents from increasing $d$ above $d=\Theta(\log^2 n)$. %\yk{Do you instead mean $d=\Theta(\log^2 n)$ or $d=O(\log^2 n)$? Please make the next point consistent also.}
    \item As a result, the set of $d$ that are Pareto optimal is small to moderate, in particular, between $\Theta(1)$ and $\Theta(\log^2 n)$.
    %$O(\log^2 n)$. 
    This provides theoretical evidence that supports the case of using small to moderate $d$ in real-world applications. Large $d$ leads to wasteful competition among agents which decreases utilitarian welfare without reducing the number of unmatched agents. 
%\end{enumerate}
\end{compactenum}
%\yk{You may add a summary here of your finding regarding which side should propose.}

When the platform has the flexibility to choose which side initiates contact, we find that in typical cases the following policy is optimal: When the two sides have different costs associated with preference discovery, it is optimal to have the side with lower cost reach out. On the other hand, if the platform is more concerned with the well-being of one side of the market (and preference discovery costs are similar across the two sides), it is optimal to have the other side initiate contact. We discuss this finding in more detail at the end of Section \ref{subsec:limiting-list-length} and compare it with results in previous works, e.g., \cite{kanoria2021facilitating}.

{We conclude our summary by comparing the efficacy of the two interventions studied: For the same $d$, we find that limiting list lengths provides higher match quality for the proposing side, while achieving the same number of unmatched agents and the same utility for the receiving side. The caveat is that the limiting list length option inherently requires more ``preference discovery" effort from proposing agents in constructing their preferences over the receiving side of the market, relative to the intervention of limiting consideration set size. This tradeoff governs the appropriate choice between the interventions for a given market, if either intervention is feasible a priori.\footnote{We do not formally study this tradeoff in this paper.}} %\yk{Does this summary apply to both cases: short side proposing, and long side proposing? We should write a tiny bit more connecting the specifics later with the summary here, by writing a bit more both here and/or when we state the specifics later.}\pq{Yes, so I think we don't need to change this summary.}

\subsection{Metrics for evaluating matchings}\label{subsec:metrics-market-design}
We study how different platform interventions perform according to the following two objectives:
\begin{compactitem}[leftmargin=*]
%\begin{itemize}
    \item \emph{Utilitarian welfare.} Intuitively, the utility obtained by each agent should have two components: the value derived from matching with another agent, and the cost of discovering their own preferences. We will define them formally below.
    \item \emph{Number of unmatched agents.} In many applications, the platform also cares about how many agents are left unmatched. We use the number of unmatched short-side agents $\delta$ as the second performance metric (note that the number of unmatched long-side agents is simply $\delta$ plus the market imbalance $|k|$).
%\end{itemize}
\end{compactitem}

\paragraph{Match value through random utility.} In this section, we think of the {ordinal} preferences in the random matching market model defined in Section \ref{sec:model} can be generated by the following {cardinal} random utility model: the value agent $i$ obtains when matched with agent $j$ is drawn i.i.d. from distribution\footnote{The analysis extends immediately to more general value distributions, e.g., $F$ depends on which side the agent belongs, though at a significant notational burden. We reason that the cost of carrying the reader through this generalization exceeds the benefit of doing so, and hence assume the same value distribution for both sides throughout the paper.} $V_{ij}\sim F$. Note that the ordinal preference ranking resulting from such random utility draws is uniformly distributed among all permutations, and independent across agents.
We assume that unmatched agents receive a match value of zero.

A particular family of value distributions of interest is the power-law family, i.e., the Pareto distributions. These distributions capture the phenomenon that the value differential near the top of an agents' preference list (e.g., the difference in value between the $1$-st and $3$-rd ranked choices) is typically much larger than the value differential lower in the agents' preference list (e.g., the difference in value between the $13$-th and $15$-th ranked choices). Such heavy-tailed valuations have been fruitfully modelled in other contexts such as bundling of products \citep{ibragimov2010optimal}, and more generally, heavy-tailed distributions have been observed and studied in a wide variety of contexts in finance, economics, marketing, and operations, among other fields \citep[see, e.g.,][]{ibragimov2015heavy, resnick2007heavy, shapiro1999information, anderson2006long, taleb2007black}.
In this section, we assume that $F$ follows a \emph{Pareto distribution} with parameters $(1,\alpha)$, i.e., the probability density function of $F$ is $\alpha x^{-(\alpha+1)}$. Here we assume $\alpha > 1$ since otherwise $F$ has unbounded mean. %\yk{We should add some justification for Pareto distributions. We can pick up some references from https://pubsonline.informs.org/doi/epdf/10.1287/mnsc.1100.1234   to the relevant empirical literature. Here's a quote from Appendix A: Extreme events and heavy-tailedness phenomena are especially pronounced in markets characterized by thenobodyknows, thewinner takes all,and thesuccess breeds successprin-ciples with high uncertainty in individual demands and inthe success or failure of new products, such as marketsfor technological innovations and information goods andcreative (e.g., motion picture, music, and book publishing)industries (see, among others, the discussion and reviewsin Shapiro and Varian 1999 \cite{shapiro1999information}, De Vany 2004, Frank and Cook1995, Anderson 2006 \cite{anderson2006long}, Eliashberg et al. 2006, Taleb 2007, andGaffeo et al. 2008).}

\paragraph{Cost of preference discovery.} We model the cost incurred by an agent in determining their preference ranking over a given set of $d$ potential partners as $d^{\gamma}$, where $\gamma \in (0,1]$. This cost structure is based on the idea that the marginal cost of considering an additional potential partner tends to decrease as the number of partners increases. It is worth noting that if an agent needs to report their top $d$ preferences among a larger pool of $n$ options, the cost is modelled as $n^{\gamma}$ (as the agent presumably evaluates all of the available choices).
%e.g., one can always use filters to rule out the irrelevant choices for large $d$ \cite[see also,][]{ashlagi2020clearing}.
%\yk{You had mentioned search filters, but I couldn't fully understand the connection with diminishing marginal cost so I commented that out. You had also cited \cite[see also,][]{ashlagi2020clearing} but does that paper model diminishing marginal cost? Or does it model only the special case $\gamma =1$? Commented that one out too, for now.}

%\paragraph{Analysis overview.}
%In our equilibrium analysis, we will combine the detailed heuristic picture of market equilibrium derived in Section \ref{subsec:detailed-heuristic-picture} and our
%(this approximation is consistent with our formal characterization in Theorem~\ref{thm:main-result} and provides a more detailed picture regarding the \emph{distributions} of the preference rank of matched partners). 
%Then we perform asymptotic analysis on the approximate equilibrium.

%\pq{I am moving the rank distribution paragraph to Section \ref{subsec:detailed-heuristic-picture}}

\paragraph{Preliminaries: random order statistics.}
If an agent gets matched to their $k$-th most preferred partner among $d$ choices, the match value they obtain is distributed as the $k$-th largest \emph{order statistic} out of $d$ independent samples from $F$, denoted by $F^{(k),d}$. As a result, order statistics play a key role in our analysis, and we provide below a technical result that will come in handy later.

To make the presentation easier to understand, it is helpful to introduce additional asymptotic notation. For two positive real-valued sequences $\{a_n\}$ and $\{b_n\}$: We use the notation $a_n \doteq b_n$ if $a_n = \Theta(b_n)$ as $n\to\infty$, and the notation $a_n \dotleq b_n$ if $a_n = O(b_n)$ as $n\to\infty$. %, and the notation $a_n \dotgeq b_n$ if $b_n = O(a_n)$ as $n\to\infty$.

The following lemma (proved in Appendix \ref{append:additional-proof})  specifies the scaling behavior of the relevant \emph{random} order statistics.

\begin{lem}\label{lem:random-order-statistics}
    We have the following results:
   \begin{compactenum}[(1),wide,labelwidth=!,labelindent=0pt]
    %\begin{enumerate}
        \item Let $M(d(n),\alpha,r(n))$ be a sub-exponential random variable which is the $\zeta$-th largest order statistic out of $d(n)$ samples from Pareto distribution with scale parameter $1$ and shape parameter $\alpha>1$. Here $\zeta$ is a sub-exponential random variable with mean $r(n)$. Assume that $d(n),r(n)\to\infty$ as $n\to\infty$. We have
     $
            \mathbb{E}[M(d(n),\alpha,r(n))] \doteq \left(\frac{d(n)}{r(n)}\right)^{1/\alpha}\, .
      $
        %\yk{Does this result require $r_n \dotleq d_n $?}
        %
        \item Let $W(d(n),\alpha,r(n))$ be a random variable which is the largest order statistic out of $\eta$ samples from a Pareto distribution with scale parameter $1$ and shape parameter $\alpha>1$. Here $\eta$ is a sub-exponential random variable with mean $r(n)$. We have
       $
            \mathbb{E}[W(d(n),\alpha,r(n))]\doteq
            (r(n))^{1/\alpha}\, 
    $.
    %\end{enumerate}
\end{compactenum}
\end{lem}

%We prove Lemma \ref{lem:random-order-statistics} . 
Roughly speaking, Lemma \ref{lem:random-order-statistics} states that if an agent's rank distribution has a light tail, the stochasticity in the distribution can be disregarded when determining the order of magnitude of the average match value.
We  make use of  Theorem~\ref{thm:main-result}, in conjunction with Lemma~\ref{lem:random-order-statistics} and our assumption that rank distributions are truncated Geometric (and hence subexponential), to obtain the estimates of welfare and the number of unmatched agents presented in the rest of this section. %\yk{Pengyu, this is important to write this explanation clearly and well. Right now it still feels mysterious how you arrive at the estimates in the rest of this section. The section is written very concisely.. we should do what we can to help the reader understand and make sure the important things get across.}
%Note that the $r(n)$ here will come from Theorem \ref{thm:main-result}. 
%The proof actually extends to any distribution with subexponential tails. 

%\yk{Perhaps we can move this proof to an appendix and only say that this scaling result and its proof actually extends to any distribution with subexponential tails?}
%\yk{Mention $r(n)$ will come from our formal main theorem?}
%\sm{I agree with Yash; we'd better to move this proof to appendix.}

\subsection{Limiting consideration sets}\label{subsec:limit-consideration-set}
We first study the effect of limiting the size of agents' considerations sets. Throughout this subsection, we denote the consideration set size by $d(n)$. Let $U_S^\cC(d(n))$ ($\cC$ is short for $\mathcal{C}$onsideration set) be the utility of a representative agent on the short side of the market, and $U_L^{\cC}(d(n))$ is defined similarly for an agent on the long side. Let $\delta_S(d(n))$ be the number of unmatched short-side agents. %\sm{I replaced the notation $U$ with $\delta$ to be consistent with the other sections.}
%Under the approximation introduced in the last subsection% \yk{can we give it a latex reference like "Conjecture 1" and point it?} 
Note that the match value of short-side (long-side) agents has the same distribution as $M(d(n),\alpha,r(n))$ ($W(d(n),\alpha,r(n))$) defined in Lemma \ref{lem:random-order-statistics}, where $r(n)$ and other estimates come from Theorem \ref{thm:main-result}.

Assuming the rank distributions of agents are sub-exponential and small market imbalance satisfying $|k|\leq n^{1-\epsilon}$ for some fixed $\epsilon>0$, we obtain  the following approximations of $U_S^\cC(d(n))$ and $U_L^\cC(d(n))$ (using the asymptotic notation we introduced in Section \ref{subsec:metrics-market-design}):
%\yk{How large are you assuming the imbalance between the two sides of the market to be? And please mention the main result is going to be plugged in to get your estimates.}

%\begin{compactitem}[leftmargin=*]
\begin{itemize}
    \item Short-side agents' average utility is
$
        \mathbb{E}[U_S^\cC(d(n))] \doteq \left( \max\left\{\sqrt{d(n)},\, \frac{d(n)}{\log n}\right\} \right)^{\frac{1}{\alpha}} - d(n)^{\gamma} %\, ,\quad\textup{as }n\to\infty\, 
 $.
    \item Long-side agents' average utility is
$
        \mathbb{E}[U_L^\cC(d(n))] \doteq \left( \min\left\{\sqrt{d(n)},\, \log n\right\} \right)^{\frac{1}{\alpha}} - d(n)^{\gamma} %\, ,\quad\textup{as }n\to\infty\, 
 $.
    \item The average number of unmatched (short-side) agents is
$
       \mathbb{E}[\delta_S(d(n))]\doteq n\cdot e^{-\sqrt{d(n)}}\, %\quad\textup{as }n\to\infty\, 
 $.
\end{itemize}
%\end{compactitem}

Simulation results (see Figure \ref{fig:theory-vs-simulation}) confirm that the above predictions approximate well the stable matching in random matching markets.
The estimates above establish that two regimes arise in the planner's problem.
\begin{compactitem}[leftmargin=*]
%\begin{itemize}
    \item \emph{If the preference learning cost grows slowly in $d$, in particular if $\gamma < \frac{1}{\alpha}$, it is optimal to set $d(n)=n$}, as there is no trade-off between welfare and number of matches formed, and it is best to have a fully connected market.
    \item \emph{If the preference learning cost grows quickly in $d$, in particular if $\gamma \geq \frac{1}{\alpha}$, the Pareto optimal $d(n)$ lie in the range $1 \dotleq d(n) \dotleq \log^2 n$}. 
    In this case, there is a trade-off between the two objectives:
    Utilitarian welfare is maximized at $d(n)\doteq 1$ since the preference learning cost dominates the match value so smaller $d(n)$ is preferred.
    The number of unmatched agents is minimized for any $d(n) \dotgeq \log^2 n$.
    Therefore, the Pareto optimal $d(n) \in [\Theta(1), \Theta(\log^2 n)]$. 
%\end{itemize}
\end{compactitem}

We expect that the preference learning cost grows quickly in $d$ in most real world markets, i.e., the second regime above is typical. The finding above suggests that a platform should deploy small to moderate-sized consideration sets ($1 \dotleq d(n) \dotleq \log^2 n$) in such markets. This insight is corroborated by our simulation results: Figure~\ref{fig:theory-vs-simulation} illustrates that increasing $d(n)$ within the range $1 \dotleq d(n) \dotleq \log^2 n$ can significantly reduce the number of unmatched agents (this finding applies to both interventions we study), whereas increasing $d(n)$ beyond $\log^2 n$ does not reduce this number by much. Meanwhile, the average welfare of agents is maximized for a smaller ($\Theta(1)$) value of $d$. The Pareto frontier between the two objectives is generated by $d$ between these values.%\yk{Update this text if Pareto frontier is shown in Figure 2.}
%\yk{To complete the story, perhaps we can also plot/say something about the number of unmatched agents, and show the Pareto frontier achievable along with the value of d which enables to achieve some points on the frontier?}\pq{Just added the sentence that points to Figure 3.}

\begin{figure}[h]
    \centering
    \begin{minipage}{0.49\textwidth}
        \centering
        \includegraphics[width=0.95\textwidth]{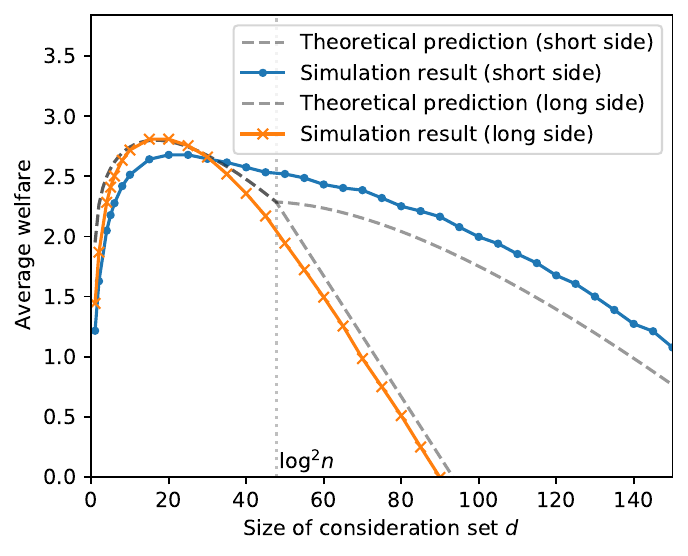}
    \end{minipage}\hfill
    \begin{minipage}{0.49\textwidth}
    \centering
    \includegraphics[width=0.95\textwidth]{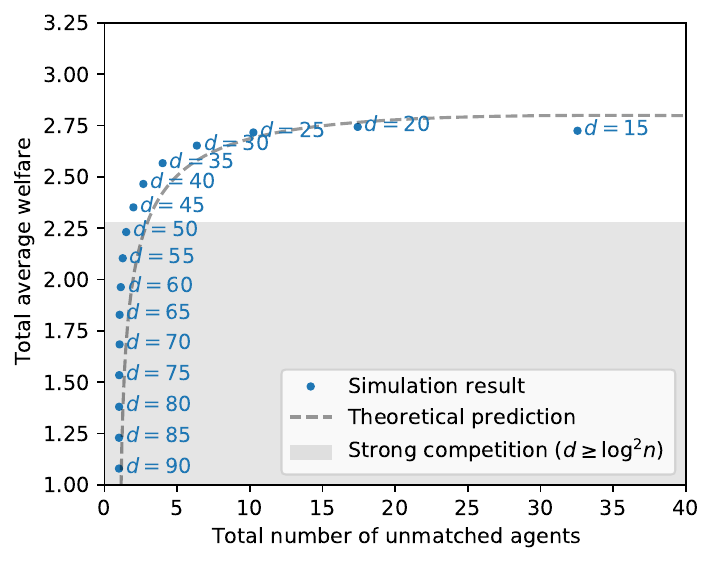}
    \end{minipage}
 \caption{Illustration of effect of limiting consideration set size on agents average welfare and the number of unmatched agents, in a random matching market with 1,000 men and 1,001 women, where Pareto utility model ($\alpha=2$) and linear preference discovery cost model ($\gamma=1$ with $0.05$ unit cost per discovery) are assumed.
 In the left figure, solid lines indicate the average welfare of the agents on each side, obtained through 1,000 runs of simulations, and dashed lines indicate their corresponding values that our analysis predicts.
 The vertical line at $d = \log^2(1000/1) \approx 47.7$ indicates the threshold consideration set size beyond which the market enters the strong competition regime.
 The right figure shows the Pareto frontier: each data point reports the total number of unmatched agents vs. the total average welfare across all agents in the market given the consideration set size limit $d$.} %the long side encounters a much higher level of competition.
 \label{fig:theory-vs-simulation}
\end{figure}

\subsection{Limiting preference list length}
\label{subsec:limiting-list-length}

We now study the optimal preference list length restriction. 
Let $U_S^{\cP}(d(n))$ ($\cP$ is short for $\cP$reference list) be the utility of a representative agent on the short side of the market, and $U_L^{\cP}(d(n))$ is defined similarly for an agent on the long side. Let $\delta_S(d(n))$ be the number of unmatched short-side agents. 
Note that it matters which side proposes: The preference discovery cost is $n^\gamma$ on the proposing side of the market, and only $(d(n))^\gamma$ on the receiving side of the market. On the other hand, proposing allows the proposing side to obtain higher match utility. Therefore we present the findings in two cases.
We assume subexponential rank distributions and small imbalance $|k| \leq n^{1-\epsilon}$ throughout. 

\paragraph{\bf Short side proposes.}
We have the following approximations of $U_S^\cP(d(n))$ and $U_L^\cP(d(n))$:
%\begin{compactitem}[leftmargin=*]
\begin{itemize}
    \item Short-side agents' average utility is
$
        \mathbb{E}[U_S^{\cP}(d(n))] \doteq \left( \max\left\{\left(\frac{n}{\sqrt{d(n)}}\right),\, \frac{n}{\log n}\right\} \right)^{\frac{1}{\alpha}} - n^{\gamma}\, 
 $.
    \item Long-side agents' average utility is
$
        \mathbb{E}[U_L^{\cP}(d(n))] \doteq \left( \min\left\{\sqrt{d(n)},\, \log n\right\} \right)^{\frac{1}{\alpha}} - d(n)^{\gamma}\, $.
    \item The average number of unmatched (short-side) agents is:
$
       \mathbb{E}[\delta^m(d(n))] \doteq n\cdot e^{-\sqrt{d(n)}}\, $.
\end{itemize}
%\end{compactitem}

\emph{We find that it is optimal to set $1 \dotleq d(n) \dotleq \log^2 n$.} The short-side agents' utility is non-increasing in $d(n)$ hence it is maximized at $d(n)\doteq 1$. The long-side agents' utility is non-increasing in $d(n)$ for $d(n) \dotgeq \log^2 n$, hence it is maximized between $\Theta(1)$ and $\Theta(\log^2 n)$. 
The number of unmatched agents is minimized for any $d(n)\dotgeq \log^2 n$. 
Therefore, the Pareto optimal $d(n)$ lie between $\Theta(1)$ and $\Theta(\log^2 n)$.
This finding suggests that a platform should deploy a short to moderate-length preference list length restriction ($1 \dotleq d(n) \dotleq \log^2 n$).

%\pqr{In certain cases, the platform places more importance on one side of the agent than the other side (i.e., in school choice). We point out here that the preference list lengths $1 \dotleq d(n) \dotleq \log^2 n$ are still Pareto-optimal for the tri-objective problem where the platform seeks to simultaneously maximize both side agents' welfare as well as minimize the number of unmatched agents.}

%Depending on which side we care more about, there are two cases:
%\begin{itemize}
 %   \item We care about both sides equally/care more about the short side. Then the utility is maximized at $d(n) = O(1)$, whereas the number of unmatched agents is minimized at $d(n)=\Omega(\log^2 n)$. There is tradeoff, and the optimal length is between $\Omega(1)$ and $O({\log}^2 n)$ depending on the relative weight of the two objectives.
    %
 %   \item We care more about the long side. If $\frac{1}{\alpha} > 2\gamma$, the utility is maximized at $\log^2 n$. If $\frac{1}{\alpha} \leq 2\gamma$, the utility is maximized at $O(1)$.
%\end{itemize}

\paragraph{Long side proposes.} Given consideration set size $d(n)$, we have the following approximations:
%\begin{compactitem}[leftmargin=*]
\begin{itemize}
    \item Short-side agents' average utility is
    %\begin{align*}
     $   \mathbb{E}[U_S^{\cP}(d(n))] \doteq \left( \max\left\{\sqrt{d(n)},\, \frac{d(n)}{\log n}\right\} \right)^{\frac{1}{\alpha}} - d(n)^{\gamma}\, $.
    %\end{align*}
    %
    \item Long-side agents' average utility is $
        \mathbb{E}[U_L^{\cP}(d(n))] \doteq \left( \min\left\{\left(\frac{n}{\sqrt{d(n)}}\right),\, \left(\frac{n\log n}{d(n)}\right)\right\} \right)^{\frac{1}{\alpha}} - n^{\gamma}\,$.
    \item The number of unmatched (short-side) agents is
$
        \mathbb{E}[\delta^m(d(n))] \doteq n\cdot e^{-\sqrt{d(n)}}\, 
 $.
\end{itemize}
%\end{compactitem}

Two regimes arise from the planner's problem, but \emph{we find that it is optimal to set $1 \dotleq d(n) \dotleq \log^2 n$ in both cases.} 
The utility of long-side agents is maximized at $d(n)\doteq 1$ in both cases since it is decreasing in $d(n)$. Regarding the utility of short-side agents:
\begin{compactitem}[leftmargin=*]
%\begin{itemize}
    \item \emph{If the preference learning cost grows slowly in $d$, in particular if $\gamma < \frac{1}{\alpha}$}: The utility of short-side agents is maximized at $d(n)\doteq n$. However, the total welfare of all agents is maximized by $d(n)\doteq 1$.
    \item \emph{If the preference learning cost grows quickly in $d$, in particular if $\gamma \geq \frac{1}{\alpha}$}: The utility of short-side agents is maximized at $d(n)\doteq 1$ since the preference learning cost dominates the match value so smaller $d(n)$ is preferred.
%\end{itemize}
\end{compactitem}
Therefore, the platform achieves maximum welfare by choosing $d(n)\doteq 1$. The number of unmatched agents is minimized for any $d(n)\dotgeq \log^2 n$. As a result, the Pareto optimal $d(n)$ lie between $\Theta(1)$ and $\Theta(\log^2 n)$, and the platform should set a preference length restriction within this range. The Pareto frontier is shown in Figure~\ref{fig:limiting-list-length}.

\begin{figure}[htb!]
    \centering
    \begin{minipage}{0.49\textwidth}
        \centering
        \includegraphics[width=0.95\textwidth]{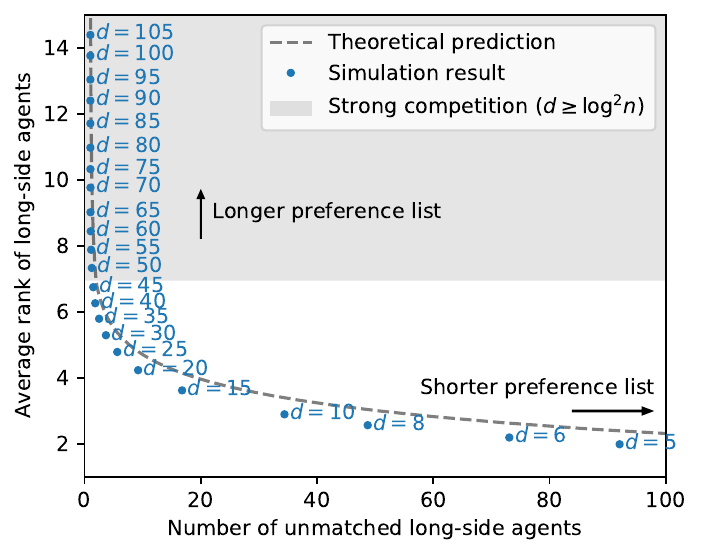}
    \end{minipage}\hfill
    \begin{minipage}{0.49\textwidth}
    \centering
    \includegraphics[width=0.95\textwidth]{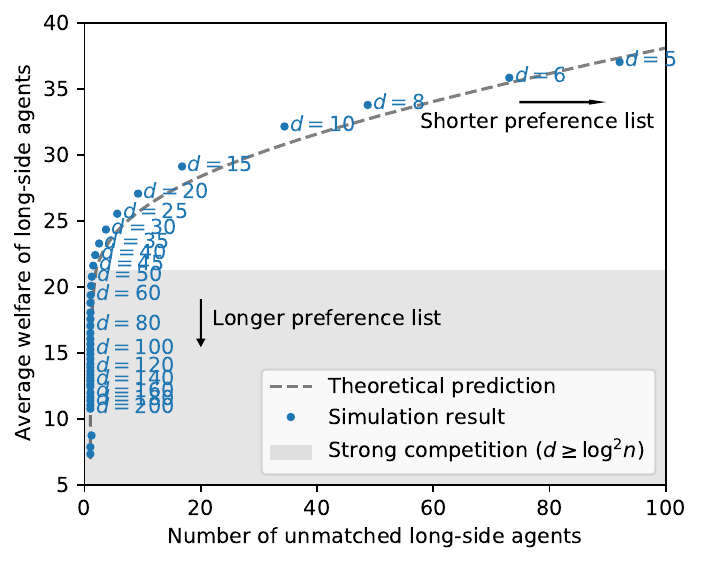}
    \end{minipage}
    \vspace{-0.1cm}
        \caption{
        	Illustration of effect of limiting preference list length on the matching outcomes when the long side proposes in a matching market with 1,000 men and 1,001 women ($|\mathcal{M}|=1000$, $|\mathcal{W}|=1001$).
	    In the left plot, a data point reports the number of unmatched long-side agents vs. the long-side agents' average rank, obtained through 1,000 runs of simulation at each preference list length value.
	    %While there is a tradeoff between these two metrics, allowing longer preference lists beyond the threshold level $d = \log^2(1000/1) \approx 47.7$ no longer improves the matching outcome.
	    In the right plot, a data point report the number of unmatched long-side agents vs. the long-side agents' average welfare, where Pareto utility model ($\alpha=2$) is assumed (no preference discovery cost).
	    In both plots, dashed lines indicate our theoretical predictions, and shaded areas represent the strong competition regime (corresponding to $d \geq \log^2(1000/1) \approx 47.7$), in which increasing $d$ no longer improves the matching outcome.
	}
	\label{fig:limiting-list-length}
\end{figure}

\paragraph{Which side should propose.} In some applications, the platform can choose which side initiates contact. An example of this is the dating app Bumble, where only women are allowed to send the first message. Recent works have studied how to make this design decision \citep[see, e.g.,][]{kanoria2021facilitating}.
Our analysis above can also shed light on which side should propose (in our model, which is distinct from those in the existing literature). Assuming the preference learning cost grows quickly in $d$ (i.e., $\gamma> \frac{1}{\alpha}$, which may be typical in real world applications), we find that:

\begin{compactitem}[leftmargin=*]
%\begin{itemize}
\item \emph{If the cost of discovering preferences differs between the two sides, it is optimal for the side with the lower discovery cost to initiate contact.} This is because the side that initiates contact needs to screen (a lot) more potential partners than the other side. Specifically, if one side's preference learning cost has an exponent of $\lambda$ while the other side's costs have an exponent of $\gamma>\lambda$, it is optimal for the side with the exponent $\lambda$ to initiate contact. %\yk{I don't understand this.}
\item \emph{If a platform's primary concern is the welfare of a particular side of the market, it is optimal for that side to be the receiving side rather than the initiating side.} Similar to the case above, this is because the initiating side typically bears a higher preference discovery cost, as they need to screen more potential partners. By allowing the side that the platform is concerned with to be the receiving side, the platform can reduce the cost burden on that side and potentially improve their experience on the platform.%\yk{I'm confused: doesn't the proposing side benefit from higher match utility (besides the higher preference learning cost)?}
%\end{itemize}
\end{compactitem}

We want to comment that the above messages align qualitatively with the findings obtained in \cite{kanoria2021facilitating} (in a different, decentralized search and matching setting). \cite{kanoria2021facilitating} suggests that the platform should let the short side initiate contact, because the long side agents need to screen more potential partners than the short side agents do if being on the initiating side, which lead to high screening costs and system inefficiency.

\color{black}

\section{Numerical Simulations for Random Matching Markets} \label{sec:numeric}
This section provides simulation results for random markets that confirm and sharpen the theoretical predictions made in Section~\ref{sec:result}.
Our simulations reveal that (i) our theoretical findings are valid for a wide range of market size $n$, average degree $d$, and imbalance $k$, (ii) the predicted sharp threshold between the strong and weak competition regimes at connectivity $d \approx 1.0 \times \log^2 (n/|k|)$ is confirmed to hold, and (iii) Principle~\ref{prin:condition-for-weak-competition} appears to be robust even to strong correlations in preferences (our quantitative are inaccurate under strong correlations, as expected).
%with a weak effect of competition (as well as unmatched agents on the short side of the market) observed for $d$ below this threshold and the opposite for $d$ above this threshold.

\paragraph{Numerical verification of Theorem~\ref{thm:main-result} and \ref{thm:main-result-dense}.}
{
We first verify our main theoretical findings, Theorem~\ref{thm:main-result} and \ref{thm:main-result-dense}, which express four market statistics $R_\text{MEN}$, $R_\text{WOMEN}$, $\delta^m$ and $\delta^w$ in terms of market size $n$ and average degree $d$ when imbalance $k$ is small.}
Specifically, we consider a market with 1,000 men and 1,001 women ($n=1001$, $k=-1$) where the length of each man's preference list $d$ varies from 5 to 150.
For each degree $d$ we generate 1,000  realizations of random matching markets according to the generative model described in Section~\ref{sec:model}, and for each realization we compute the man optimal stable matching (MOSM) by running the men-proposing DA algorithm.

Figure \ref{fig:fixed-size} reports the men's average rank of wives and the women's average rank of husbands (left) and the number of unmatched men and women (right) at each $d$.
While not reported here to avoid cluttering the figures, we observe almost identical results for the WOSM.
Observe that when $d < \log^2n$ both men's average rank and women's average rank are highly concentrated at $\sqrt{d}$ and both the number of unmatched men and the number of unmatched women are close to $n e^{-\sqrt{d}}$, which confirms the estimates in Theorem \ref{thm:main-result}.
As $d$ grows past $\log^2 n$, the average rank of men and women start to deviate from each other, and specifically, the average rank of short side (men) stops increasing whereas the average rank of long side (women) increases linearly: i.e., $\Rmen \approx \log n$ and $\Rwomen \approx \frac{d}{\log n}$ when $d > \log^2n$, confirming Theorem \ref{thm:main-result-dense}.
We also remark that the expected number of unmatched men quickly vanishes as $d$ increases beyond $\log^2 n$ (note that the $y$-axis of the plot has a log-scale).

\begin{figure}[htb!]
    \centering
    \begin{minipage}{0.48\textwidth}
        \centering
        \includegraphics[width=0.95\textwidth]{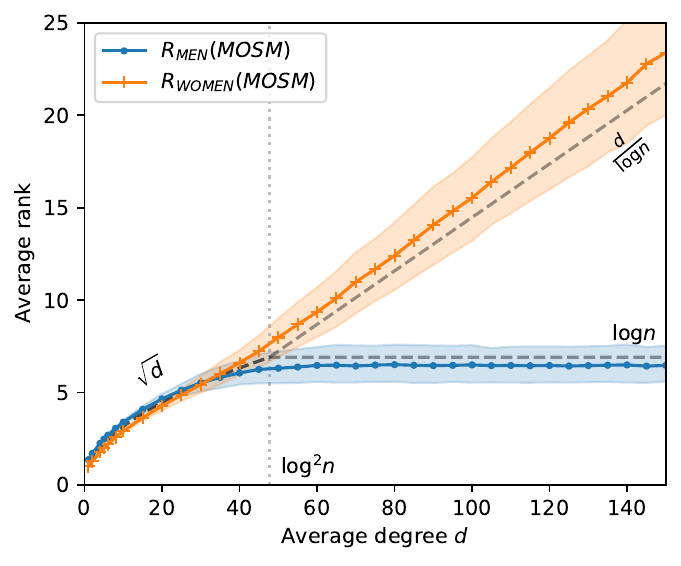}
    \end{minipage}\hfill
    \begin{minipage}{0.48\textwidth}
        \centering
        \includegraphics[width=0.95\textwidth]{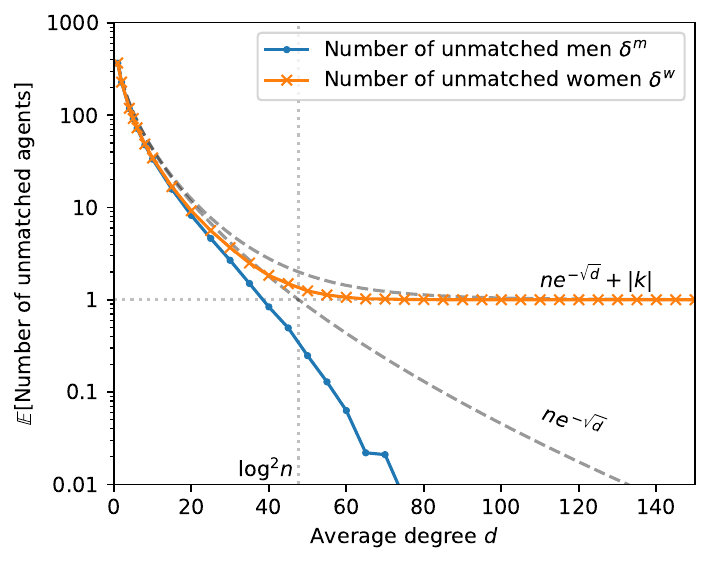}
    \end{minipage}
        \caption{
        	Men's average rank of wives $\Rmen$ and women's average rank of husbands $\Rwomen$ (left) and the number of unmatched men $\delta^m$ and the number of unmatched women $\delta^w$ (right) under the MOSM in random matching markets with 1,000 men and 1,001 women ($n=1001$, $k=-1$), and varying the length of men's preference lists $d$.
	In both figures, solid lines indicate the average value across 1,000 random realizations, and gray dashed lines indicate our theoretical predictions (Theorem \ref{thm:main-result} and \ref{thm:main-result-dense}) annotated with their expressions.
	In the left figure, the shaded areas surrounding solid lines represent the range between the top and bottom 10th percentiles of 1,000 realizations of men's and women's average rank. %\yk{Small suggestion: I would label the y-axis on the right chart as $\mathbb{E}(current    title)$}\sm{Updated}
	}
	\label{fig:fixed-size}
\end{figure}

{In Appendix~\ref{app:numerical-additional}, we provide additional simulations results that confirm our quantitative predictions for unbalanced matching markets (e.g., $n=1050, k=-50$), heuristically developed in Section~\ref{subsec:detailed-heuristic-picture} and computed according to a procedure described in Appendix~\ref{app:heuristic-numerical}.
}

\paragraph{Numerical verification of connectivity threshold.}
The above observation extends to a wide range of market size $n$ (even for small $n \leq 50$).
To better illustrate, we investigate three kinds of threshold degree levels $d^*_\text{rank}(n,k)$, $d^*_\delta(n,k)$, and $d^*_\text{conn}(n,k)$ that sharply characterize the phase transitions that occur when degree $d$ varies in random matching markets of size $n$.
We define these thresholds as follows: given $n \geq 1$ and $k < 0$,
\begin{align}
	d^*_\text{rank}(n,k) &= \min_d\left\{ \mathbb{E}_{n,k,d}[\Wrank] / \mathbb{E}_{n,k,d}[ \Mrank ] \geq 1.15  \right\},
	\label{eq:threshold-rank-gap}
	\\
	d^*_\delta(n,k) &= \min_d\left\{ \mathbb{E}_{n,k,d}[ \delta^m ] \leq 0.5 \cdot |k| \right\},
	\label{eq:threshold-unmatched-men}
	\\
	d^*_\text{conn}(n,k) &= \min_d\left\{ \mathbb{E}_{n,k,d}[ \text{the number of connected components} ] \leq 2 \right\},
	\label{eq:threshold-connectivity}
\end{align}
%\yk{I'm a little bit concerned about the factor 0.5 in \eqref{eq:threshold-unmatched-men}. If we choose to persist with the factor 0.5 here, shall we also switch to using a factor 0.5 for the NYC HS admissions, inspired by what's working well here? And I'll need to redo the NRMP argument because the answer for 2022 changes from using a factor 0.5..}
%Two reasons:  Principle~\ref{prin:condition-for-weak-competition} uses a factor 1 instead.  And we would be predicting weak (instead of strong) competition under this 0.5-based criterion for the NRMP in 2022! Can you please use a factor 1 instead, and modify (reduce?) the ratio 1.15 accordingly? This will also align better with the 3-6\% threshold in match quality change which shows up in our data driven investigation for NYC HS admissions.}
where $\mathbb{E}_{n,k,d}[\cdot]$ represents the expected value of some random variable in a random matching market with $n+k$ men each of whose degree is $d$ and $n$ women.
The rank-gap threshold $d^*_\text{rank}(n,k)$ indicates the degree value beyond which men's average rank and women's average rank start to deviate from each other (in particular, we require a 15\% or larger difference in the average ranks on the two sides of the market); the unmatched-man threshold $d^*_\delta(n,k)$ is the degree value beyond which all men are (typically) matched; and the connected-component threshold $d^*_\text{conn}(n,k)$ is the degree value beyond which the entire market becomes a single giant connected component. Regarding the factor $0.5$ in the definition \eqref{eq:threshold-unmatched-men} of $d^*_\delta(n,k)$, recall that our statement of Principle~\ref{prin:condition-for-weak-competition} for random markets (see Section~\ref{subsec:detailed-heuristic-picture}) is an asymptotic one which holds for any constant factor in place of the $0.5$ here. We employ the factor $0.5$ because it provides the best numerical fit in finite random markets.

We consider two regimes: one with $k(n)=-1$ (fixed, small imbalance) and one with $k(n)=\sqrt{n}$ (varying, large imbalance).
For each regime, we quantify these threshold values based on numerical simulations: given $n$ and $k$, we use bisection method with a varying $d$ to find the threshold degrees, where the expected values are approximated with sample averages across 500 random realizations.
%We find that bisection method is adequate since each of the measures on which the above thresholds are defined is observed to monotonically increase or decrease in $d$, and further to change rapidly near the threshold value $d^*$ that we want to estimate.

\begin{figure}[htb!]
    \centering
    \begin{minipage}{0.48\textwidth}
        \centering
        \includegraphics[width=0.98\textwidth]{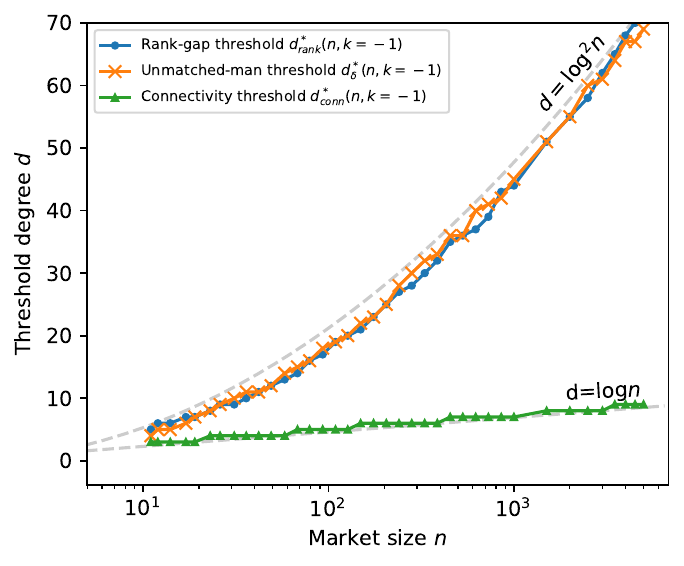}
    \end{minipage}\hfill
    \begin{minipage}{0.48\textwidth}
        \centering
        \includegraphics[width=0.98\textwidth]{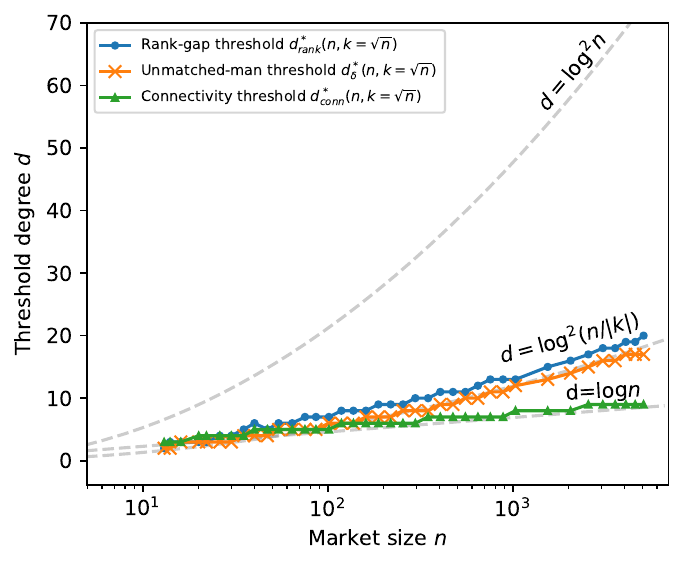}
    \end{minipage}
        \caption{
        Threshold degrees $d^*_\text{rank}(n,k(n))$, $d^*_\delta(n,k(n))$, and $d^*_\text{conn}(n,k(n))$, defined in \eqref{eq:threshold-rank-gap}--\eqref{eq:threshold-connectivity}, in random matching markets with $n+k$ men and $n$ women where $n$ ranges from 10 to 5,000, and $k(n)=-1$ (left) or $k(n)=\sqrt{n}$ (right). 
        For each $n$ and $k(n)$, the threshold values are found using bisection method in which we simulate 500 realizations at each attempted $d$.
        The gray dashed lines indicate the theoretical predictions annotated with their expressions.
        }
        \label{fig:threshold}
\end{figure}

Figure \ref{fig:threshold} plots the measured threshold degrees.
The thresholds $d^*_\text{rank}(n,k(n))$ and $d^*_\delta(n,k(n))$ are found to be close to $\log^2 (n/|k|)$ for all tested values of $n$. %\yk{It might be ok if the fit with $\log^2 (n/k)$ is not so perfect. Random markets are of limited practical interest anyway. }
This suggests that our predicted threshold is fairly sharp and consistent with Principle~\ref{prin:condition-for-weak-competition}: the strong competition takes place i.f.f. the number of unmatched agents on the short side $\gtrsim$ the imbalance in the market  i.f.f. $d \gtrsim 1.0 \times \log^2 (n/|k|)$.
Also note that this threshold is much larger than the connected-component threshold $d^*_\text{conn}(n,k) \approx \log n$.
%In order for an intense competition to take place, having all agents connected is not enough but it requires a much higher connectivity of matching markets.

\paragraph{Behavior under correlated preferences.} 
Our theoretical analysis heavily relies on the assumption that the preferences are independent across agents, which may not be true in real-world matching markets.
We test robustness of our findings to correlated preferences by introducing a modified generative model, partly adopted from \cite{hitsch2010matching}.
After generating a connectivity graph as done in the previous experiments, we construct the preference lists according to the following random utility model: for each edge $(i,j)$ in the graph, the value agent $i$ obtains when matched with agent $j$ is given by (the value of $j$ for $i$ is symmetrically defined with independent $\epsilon_{ji}$)
\begin{equation}
    U_i(j) = \beta x_j + \epsilon_{ij},
    \label{eq:correlated-preferences}
\end{equation} 
where $\epsilon_{ij}$ is an idiosyncratic term independently drawn from a standard logistic distribution, $x_j \in [0,1]$ is a vertical quality of agent $j$ independently drawn from $U(0,1)$, and $\beta \geq 0$ is the sensitivity to quality component, which determines the level of correlation.
When $\beta = 0$, the model reduces to our main model in Section~\ref{sec:model}. When $\beta = \infty$, all agents share the same preference ranking over the other side of the market.

Figure~\ref{fig:robust-check-correlated-unbalanced} summarizes the simulation results for random matching markets with $\beta = 5$, $n=1050$ and $k=-50$. Note that $\beta = 5$ represents strong correlations where the quality term is 5 times larger than the idiosyncratic term.
The red vertical line represents the threshold degree $d_\delta^*(n,k)$, defined in \eqref{eq:threshold-unmatched-men}, which turns out to be $\approx 14$.
 Remarkably, we observe that $R_\text{MEN}$ and $R_\text{WOMEN}$ deviate from each other starting from this threshold $d_\delta^*(n,k)$, confirming Principle~\ref{prin:condition-for-weak-competition} (Principle~\ref{prin:condition-for-weak-competition} predicts whether a particular side of the market does better/worse than they would have in the corresponding balanced market. Since the setting here is symmetric across sides except for the market imbalance, we can instead compare average ranks across the two sides of the market.) As expected, our detailed quantitative predictions (shown via gray dashed lines) are inaccurate under such strong correlations.
See Appendix~\ref{app:numerical-additional} for additional numerical studies of {moderately correlated preferences ($\beta=1$) and} very strongly correlated preferences ($\beta=10$), and different market imbalances, which similarly confirm Principle~\ref{prin:condition-for-weak-competition} and moreover show that our quantitative predictions are surprisingly accurate under moderate correlations $\beta  = 1$.  Appendix~\ref{app:numerical-additional} also demonstrates robustness of our findings to a bipartite Erdos-Renyi connectivity graph (which exhibits heterogeneity in men's degree).

\begin{figure}[htb!]
    \centering
    \begin{minipage}{0.49\textwidth}
        \centering
        \includegraphics[width=0.95\textwidth]{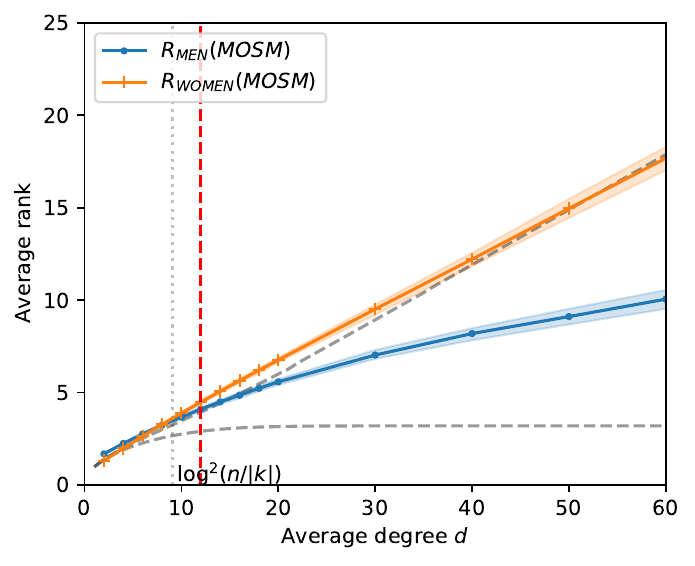}
    \end{minipage}\hfill
    \begin{minipage}{0.49\textwidth}
    \centering
    \includegraphics[width=0.95\textwidth]{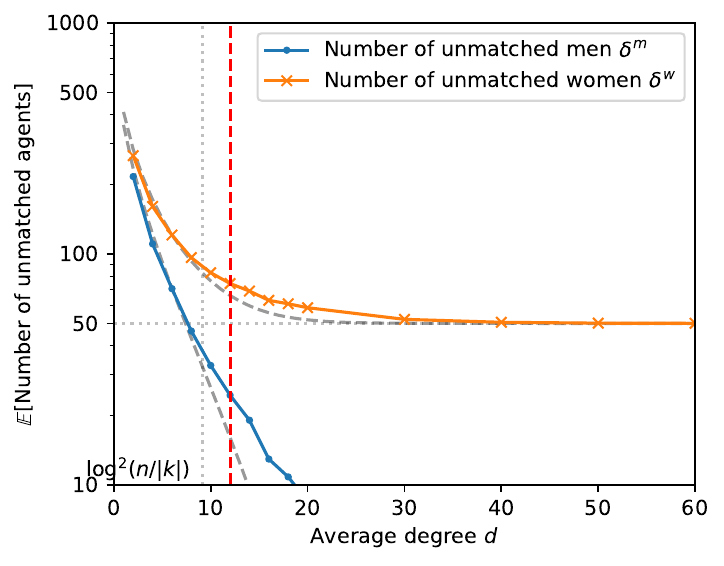}
    \end{minipage}
    \caption{
    The MOSM in random matching markets ($n=1050, k=-50$) with correlated preferences, %where the preference rankings are determined 
    generated by a random utility model \eqref{eq:correlated-preferences} with $\beta=5$.
    Solid lines represent simulation results averaged across 100 runs of simulation, and gray dashed lines represent uncorrelated-market-based estimates obtained through the procedure described in Appendix~\ref{app:heuristic-numerical}. %, whose predictions were found to be for random matching markets with uncorrelated preferences (see Figure~\ref{fig:fixed-size-heuristic} in Appendix~\ref{app:numerical-additional}).
 %   The red vertical line represents the unmatched-man threshold $d_\delta^*(n,k)$ defined in \eqref{eq:threshold-unmatched-men}.
        }
    \label{fig:robust-check-correlated-unbalanced}
\end{figure}

\section{Counterfactual Analyses on High School Admissions Data}
\label{sec:NYC-HS-counterfactuals}

In this section, we perform counterfactual analyses on public high school admissions data from a major city in the U.S., and test our descriptive/prescriptive insights obtained from our theoretical investigation of random matching markets.
Our data-driven investigation shows that our theoretical findings are useful both to estimate the level of competition in a real-world matching market, and to provide design guidelines for improving market performance.

\paragraph{Data description.}
The admissions ``market'' in this city is cleared via a centralized deferred acceptance (DA)-based clearinghouse which collects preference data from applicants and priority rankings from schools. The data contains the preference lists over programs provided by nearly 75,000 applicants, 700 programs with a total capacity of 73,000, and the priorities of schools over applicants. The applicant preference lists have average length 6.92, median length 7, and maximum allowed length 12. Since DA-based mechanism used here is incentive compatible,\footnote{Incentive compatibility of DA for the proposing side was established by \cite{dubins1981machiavelli}. The only caveat for this particular market is that preference lists are not permitted to exceed a length of 12, however, only 16.6\% of students are found to reach this maximum list length, and 94.5\% of students are allocated to one of their top 5 programs.} the preference rankings collected may be assumed to reflect the true underlying preferences. We assume DA employs the widely used single-tie breaking approach to break ties between applicants within the same priority class, namely, the same uniformly random permutation over applicants is used by all programs. The algorithm is applicant-proposing deferred acceptance (DA).\footnote{We observe that program-proposing DA algorithm yields almost identical results: in our series of simulations, typically less than five students (and at most 20 students) are assigned to different programs under program-proposing DA.}
We find that about 6,000 seats remain unfilled in the resulting allocation, which is substantially larger than the market imbalance of about 2,000.

\paragraph{Descriptive insights: Testing the competition regime prediction coming from  Principle~\ref{prin:condition-for-weak-competition}.}
Applying Principle~\ref{prin:condition-for-weak-competition} to these summary statistics yields the prediction that the market is in the weak competition regime, i.e., that being on the long side should not be significantly hurting the rank of the allocation obtained by applicants.
%; in particular they are shorter than the connectivity threshold of nearly 13 predicted by our theory.\footnote{Programs have capacity exceeding 1, with an average capacity close to 100, in contrast with the one-to-one matching setting we study. However, as described in Section \edit{??}, based on our analysis we expect the connectivity threshold separating the weak and strong competition regimes to again be $\log^2 (n/k)$ in a many-to-one random market setting, where $k$ is the difference between the number of seats and the number of applicants.} 

To check this prediction and to study the effect of competition in the real market, we vary the market imbalance across a wide range by dropping up to 20,000 applicants from the data (uniformly at random) at one extreme, and duplicating up to 20,000 applicants (uniformly at random) at the other extreme, while holding the set of programs and their capacities fixed, and study the resulting change in the outcomes for applicants. %We numerically evaluate the effect of thus varying the number of applicants on the resulting allocation of programs to applicants under the applicant-proposing DA algorithm. 
As per the usual practice, we summarize the allocation in terms of the fraction of students who are allotted to one of their top-$k$ most preferred programs (for $k=1, 3$) and the fraction who are unassigned; see the solid lines in Figure~\ref{fig:NYC-HS-counterfactuals}. Observe that the quality of the allocation from the perspective of applicants is little changed (the aforementioned fractions each increase by less than 2\%) from the original under the counterfactual of a balanced market (which corresponds to randomly dropping 2000 applicants from the dataset). This confirms the correctness of the prediction from Principle~\ref{prin:condition-for-weak-competition} for this market. 

We now subject our principle to a broader test in this environment. For which counterfactual markets would Principle~\ref{prin:condition-for-weak-competition} have predicted that they exhibit strong competition, based on just the summary statistics? These turn out to be the markets where we add 3500 or more students, or subtract 8700 or more students (see the dashed gray vertical lines of Figure~\ref{fig:NYC-HS-counterfactuals}; here, the factor $0.5$ is chosen to be consistent with the definition of $d_\delta^*(n,k)$ \eqref{eq:threshold-unmatched-men} in Section~\ref{sec:numeric}). We see from the figure that for the markets predicted to exhibit strong competition, the fraction assigned to top choice or one of top-3 choices is $\gtrsim 4.5\%$ different from that under the balanced market, whereas for the markets predicted to exhibit weak competition, this difference is $\lesssim 4.5\%$. It is notable and very encouraging that nearly the same $4.5\%$ ``threshold'' for predicting strong competition emerges both when applicants are on the short side, and when applicants are on the long side, despite multiple asymmetries across the two sides of this market including its many-to-one nature. (For the uncorrelated version of the admissions market, described and tracked in the same figure, a similar phenomenon is observed, although for the markets predicted to exhibit strong competition the allocated fractions are now $\gtrsim 8\%$ different from those under the balanced market.) Overall, we find the quality and consistency of predictions coming from Principle~\ref{prin:condition-for-weak-competition} for this real-world market to be very encouraging, with the obvious caveats that the principle does not {quantify} the threshold beyond which it deems a short-side advantage to be significant, and that it remains to systematically incorporate the impact of correlations in preferences.

We now briefly discuss the National Residency Matching Program (NRMP) which matches medical residents to residency programs once a year, again via a centralized DA-based clearinghouse. Like many real-world clearinghouses, the NRMP has typically not allowed market-design researchers to access preference ranking data. However, the NRMP does publish an annual report each year, which includes the summary statistics needed to apply Principle~\ref{prin:condition-for-weak-competition}. For the 2022 match year, there were 42,549 active applicants vying for 39,205 residency positions, and 93.9\% positions were filled in the main match (i.e., about 2,390 positions went unfilled). %With an imbalance of 3,344 it is clear that 3,344 applicants had to go unmatched. But perhaps equally important is the question ``Did the residents'  rank of assigned programs (programs rank of residents) suffer (improve) significantly by being on the long (short) side of the market?'' 
Since the imbalance of 3,344 exceeded the number $2,390$ of unfilled positions but by a factor less than 2 (recall that we have been using a factor 2 in our application of Principle~\ref{prin:condition-for-weak-competition}), Principle~\ref{prin:condition-for-weak-competition} predicts that the market imbalance had only a moderate impact on match quality, somewhat benefiting programs and hurting applicants. % (the imbalance was larger in 2020 and before that, due to more overseas applicants pre-covid, and hence the impact of imbalance was likely larger in that period).  % predicts that the answer is a vehement ``yes'' given that the imbalance is far larger than the number of unfilled positions, increasingly so over the last decade. 
%We hope there will soon be an opportunity to confirm or refute this conjecture. % using the actual preference data, and a careful analysis which accounts for the fact that the interviews conducted in that marketplace are endogenous. 

\begin{figure}[htb!]
\begin{center}
	\NYC{
	}{
		\includegraphics[width=0.62\textwidth]{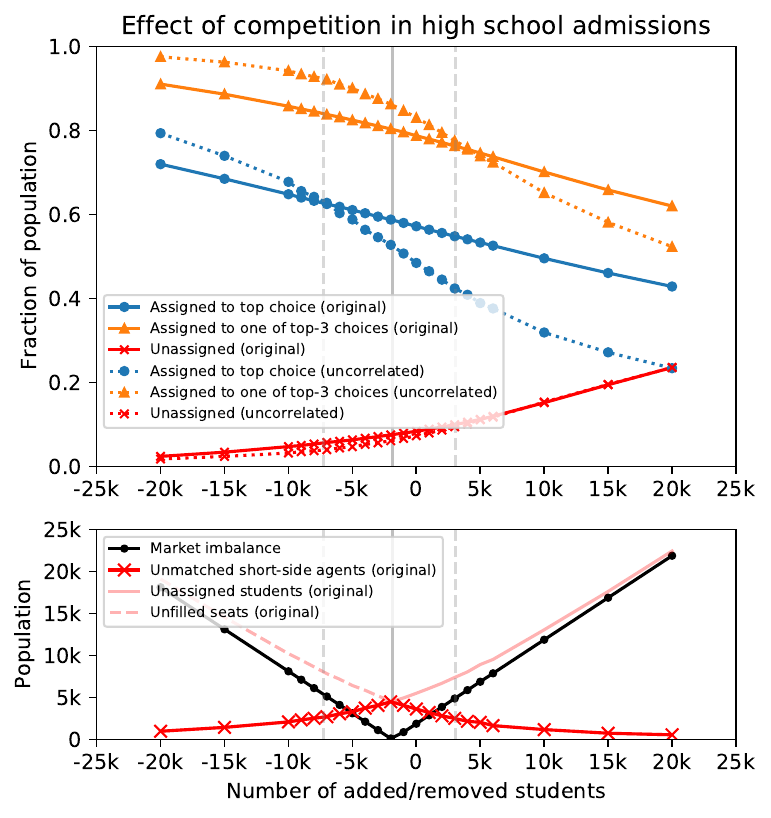}
	}
        \caption[Caption for HS figure]{
        Counterfactual analysis studying the effect of imbalance, based on \NYC{}{high school admissions data containing 75k applicants and 73k seats across 700 programs}.
        The top figure plots the quality of the assignment from the perspective of students, %the fraction of students who are assigned to one of their top-$k$ most preferred programs (for $k=1,3$) and the fraction of students who are unassigned, 
        as a function of the number of students removed or duplicated uniformly at random (averaged across 100 realizations).
        %Simulations are based on \NYC{}{the actual high school admissions data containing 75k applicants and 73k seats across 700 programs} (averaged across 100 realizations).
        The solid colored curves use the student preference rankings and program priorities in the original data, and implement a single tie-breaking rule.
        %\footnote{Under single tie-breaking, each student receives a random lottery number at the beginning of matching process, which is used for breaking ties between applicants with the same priority by all programs.}.
        The dotted colored curve are based on randomizing preferences and priorities: %the students' preference lists and their priorities at each program: 
        Each student's preference list has unchanged length but its entries are drawn without replacement with the sampling probability of each program being proportional to the number of students who have applied to it in the original dataset, and each program uses a uniformly random and independent priority ordering over students.
        The bottom figure plots the market imbalance and the number of unmatched short-side agents (i.e., the minimum of the number of unassigned students and the number of unfilled seats).
        In both figures, the solid gray vertical line indicates the level at which the market becomes balanced, and the dashed gray vertical lines indicate the levels at which the number of unmatched short-side agent is half the market imbalance.
        }
        \label{fig:NYC-HS-counterfactuals}
\end{center}
\end{figure}

%\subsection{Data-driven investigation: Limiting preference list length in high school admissions market}

%\sm{Now this section became the first part that NYC HS example shows up. A general description about the data needs to be moved from Section 5.2 to here.}

\paragraph{Design insights: How can the platform control competition over popular programs?}
Let us now restrict our attention to a submarket consisting of ``popular'' programs and the students who applied to those popular programs, in which students compete strongly for scarce seats.
Our analysis in Section~\ref{subsec:limiting-list-length} suggest that we can alleviate such strong competition among applicants by reducing the connectivity of the market, i.e., by limiting the number of popular programs that a student can report in their preference list.
We run simulations to estimate the effect of this intervention within the submarket, and observe that we can indeed increase the students' welfare substantially while keeping almost all popular seats filled.
%The detailed experiment procedure and the results are described in what follows.

Following the work of \cite{ashlagi2020matters}, we define \emph{popularity} $\alpha_c$ of program $c$ as the ratio between the number of students who had listed program $c$ on top of their preference lists and the capacity of program $c$.
In the actual data set, the popularity value ranges from $0.10$ to $12.0$ across 700 programs, e.g., the most popular program ($\alpha_c=12.0$) has one available seat while it is the most preferred one for 12 students out of 75,000 students.

We conduct a stylized study on a submarket that only includes ``popular'' programs satisfying $\alpha_c \geq 0.8$, which consists of nearly 300 programs with a total capacity of 33,000 seats. 
The preference lists of the students are refined accordingly, i.e., all the non-popular programs are removed from the preference lists so that they only contain these popular programs. %\footnote{\edit{In detail, for each student, we find the most preferred non-popular program within his preference list and truncate the list at its position, i.e., remove all programs behind it including itself. This truncation procedure implicitly assumes that every student is guaranteed to get allotted to his most preferred non-popular program under the student-proposing DA algorithm.}}
The students with empty preference lists are removed from the submarket.
As a result, the submarket includes around 71,000 students with average preference list length 4.43, median length 4, and maximum length 12.
%The students are on the long side, and the programs' available seats are on the short side.

When running the DA algorithm on this submarket \emph{without any intervention}, we observe that only 3.8 seats ($\approx 0.01\%$ of total capacity) are left unfilled on average, 38,000 students are left unassigned accordingly (in the actual matching process over the entire market, most of them will be assigned to one of non-popular programs), and the assigned students' average rank of their matched (popular) program is 1.62. We find that the number of unfilled seats (i.e., the number of unmatched agents on the short side) is much smaller than the market imbalance, and applying Principle~\ref{prin:condition-for-weak-competition}, we can expect that the students are suffering from the extremely strong competition in this submarket.
%Our predictive analytics ($\log^2(n/k) \approx 0.8$) suggests that this submarket is in the extremely strong competition regime\yk{The competition regime should be predicted using Principle~\ref{prin:condition-for-weak-competition}, and not $\log^2 (n/k)$!}, and therefore almost all short-side agents will be matched (i.e., almost all popular seats will be filled).
Adopting a Pareto utility model with {$\alpha=1.2$}%\yk{any justification for this choice of $\alpha$, e.g., using the references in \cite{ibragimov2010optimal} Appendix A? More importantly, are the results robust to the $\alpha$ assumed here?}\sm{ The results are mostly robust but the improvement in welfare gets larger as $\alpha \rightarrow 1$.} 
 (see Section~\ref{sec:design-implications}), the average welfare of the assigned students is found to be  70.3.\footnote{In detail, the (expected) welfare of a student who get assigned to $r$-ranked program is computed using the $r$-th largest order statistics out of $m$ i.i.d. Pareto random samples where {$m$ is the number of popular programs that are accessible for the student. We set $m=30$ assuming that 10\% of popular programs are accessible for each student due to geographic considerations. The choice of $\alpha=1.2$ was loosely guided by the discussion in \cite[][Appendix A]{ibragimov2010optimal}; welfare gains were similar for $\alpha \in [1.1, 1.6]$, with larger improvement in welfare for $\alpha$ closer to 1.}
In the current setting, the students assigned to their most preferred program get average utility 95.0, and those who are assigned to their second most preferred program get average utility 15.8.} %\yk{Let's not assume that they each have access to all popular programs, because as we were discussing there are geographic limitations. How about assuming that (1/10)th of popular programs are truly accessible to each student (e.g., very roughly, half of the programs in the same borough, and there are 5 boroughs) due to geographic considerations, and just mentioning that our main findings are robust to this assumption?}

We now consider the platform intervention of imposing a limit $d_\text{max}$ on the maximum number of popular programs that each student can list on his/her preference list.
We investigate the consequence of this intervention by simulating the matching process after truncating all preference lists by the given length limit $d_\text{max}$, where we vary $d_\text{max}$ from 12 to 1 ($d_\text{max}=12$ corresponds to no intervention).
Our counterfactual analysis assumes that the students will report their preference lists truthfully under the intervention, which may be violated in the case of students who respond strategically to the restriction on the number of popular programs they are allowed to list. Strategic agent responses should only increase the welfare gains resulting from our intervention \citep{abdulkadirouglu2017welfare,che2019efficiency}, so our improvement estimates can be viewed as lower bounds.

Figure \ref{fig:NYC-HS-intervention} summarizes the simulation results, visualized analogously to Figure \ref{fig:limiting-list-length}.
When $d_\text{max}=1$, since every assigned student gets allocated to his/her most preferred program, their average rank becomes 1.0, and their average welfare attains its maximal value 95.0 ($\approx 35\%$ improvement compared to no-intervention).
This intervention incurs 59.0 (out of 33k) unfilled (popular) seats on average as a side effect.
When $d_\text{max}=2$, we obtain an average rank of 1.18 and the average welfare 80.3 ($\approx 14\%$ improvement) while keeping almost all seats filled (7.70 seats are left unfilled on average). This improvement is driven is large part by an increase in the fraction of seats assigned to students who listed the program as their top choice, from $70.4\%$ to $81.5\%$. %\yk{I'm confused. The average rank of assigned students improves from 1.4x to 1.16 but the welfare improves only by 8\%? Over what set of students are you averaging the welfare gains?}
%\sm{Fraction of students matched to top-1: 70.4\% $\rightarrow$ 81.5\%. top-2: 14.8\% $\rightarrow$ 18.5\%. top-$\geq3$: 14.8\% $\rightarrow$ 0\%.}
We observe that there is no benefit from deploying $d_\text{max} \geq 3$, since it only degrades the match value while the reduction in the number of unfilled seats relative to $d_\text{max}=2$ is very marginal.

\begin{figure}[htb!]
    \centering
    \begin{minipage}{0.49\textwidth}
        \centering
        \includegraphics[width=0.95\textwidth]{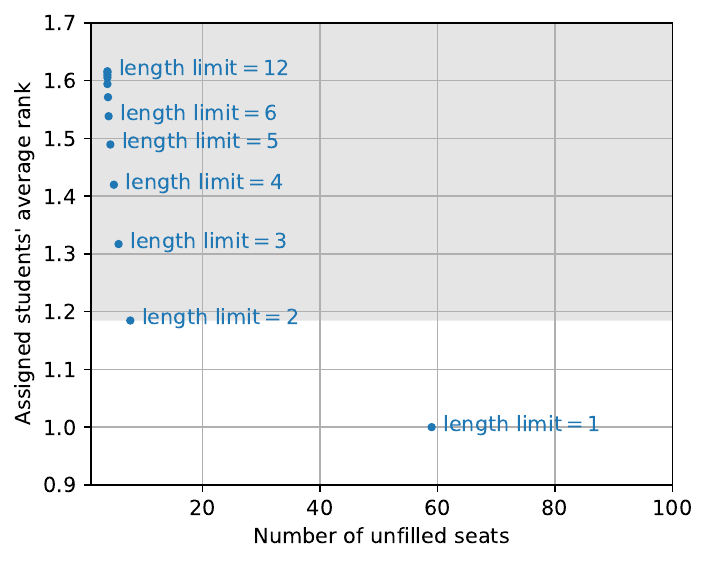}
    \end{minipage}\hfill
    \begin{minipage}{0.49\textwidth}
        \centering
        \includegraphics[width=0.95\textwidth]{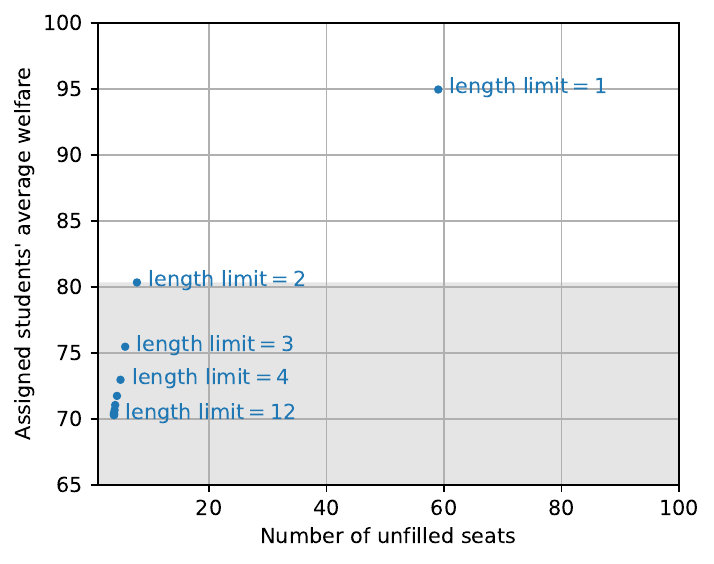}
    \end{minipage}
    \caption{
        Counterfactual analysis studying the effect of limiting applicant preference list lengths using high school admissions data, in the submarket of 300 ``popular'' programs with 33k seats and 71k students who applied to those programs.
        The figure shows %on the maximum number of popular programs that a student can list on his/her preference list, 
        the expected number of unfilled seats and the students' average rank and match welfare (for assigned students only, under a Pareto utility model with $\alpha=1.2$) for preference length limits $d_\text{max} \in \{1,\ldots,12\}$
        %are estimated from 10 simulation trials. %Here, $d_\text{max}=12$ corresponds to no intervention, under which 3.8 unfilled seats and 38k unassigned students are expected.
        %In the computation of average welfare, Pareto utility model ($\alpha=1.2$) is assumed.
    }
    \label{fig:NYC-HS-intervention}
\end{figure}

This experiment demonstrates that we can make a Pareto improvement by allowing the students to include at most two popular programs in their preference lists (or we may let the students to include only one popular program if we are willing to let a few popular program seats remain unfilled).
This is consistent with our design insights obtained in Section~\ref{sec:design-implications}: the level of competition among long-side agents can be controlled by adjusting the connectivity of market, and the optimal level connectivity lies in the weak competition regime.

\section{Overview of the Proof of Theorem~\ref{thm:main-result}}\label{sec:proof-sketch}
This section provides an overview of the proof of Theorem \ref{thm:main-result}, which is our characterization of moderately connected random matching markets. % and Theorem \ref{thm:main-result-dense} \yk{Given the last para of the previous section, I don't think there is any need to sketch the proof of Theorem 2.}. We first explain the intuition of \edit{of -> for? behind?} the sharp difference between the cases of $d=o(\log^2 n)$ and $d=\omega(\log^2 n)$.
Our proof uses the well-known analogy between DA and the coupon collector problem to bound women's average rank of their husbands, but also encounters and tackles the challenge of tracking the (strictly positive) number of men who have reached the bottom of their preference lists by constructing a novel bound using a tractable stochastic process. The latter challenge did not arise in the setting of \cite{AKL17} where \emph{all} short side agents are matched under stable matching, and similarly doesn't arise in our ``densely connected markets'' setting (Theorem~\ref{thm:main-result-dense}). Following \cite{AKL17} and the majority of other theoretical papers on matching markets, we prove our characterizations for large $n$ (and then use numerics to demonstrate that they extend to small $n$; see Section~\ref{sec:numeric}). Alongside an overview of the proof this section provides parenthetical pointers to the relevant formal lemmas; their statements and proofs can be found in Appendix~\ref{append:small-medium-d-proof}.
%The main ideas of the proof are presented in Section \ref{subsec:sketch-of-the-proof}.

Our analysis tracks the progress of the following McVitie-Wilson \citep{mcvitie1971stable} (sequential proposals) version of the men-proposing Deferred Acceptance algorithm that outputs MOSM (the final outcome is known to be the MOSM, independent of the sequence in which proposals are made). Under this algorithm, only one man proposes at a time, and ``rejection chains'' are run to completion before the next man is allowed to make his first proposal. The algorithm takes the preference rankings of the agents as its input.
\begin{alg}[Man-proposing Deferred Acceptance]
	\label{alg:2MPDA}
	Initialize ``men who have entered'' $\hcM \leftarrow \phi$, unmatched women $\bar{\cW} \leftarrow \cW$, the number of proposals $t \leftarrow 0$, the number of unmatched men $\delta^m \leftarrow 0$. %, $\pop (j) \leftarrow 0 ~ \forall j \in \cW$, and $R(i) \leftarrow \phi ~ \forall i \in \cM$.
 \begin{compactenum}[1.,wide,labelwidth=!,labelindent=0pt]
%	\begin{enumerate}
		\item \label{Step:2New-Man}  If $\cM\backslash \hcM$ is empty then terminate. Else, let $i$ be the man with the smallest index in $\cM\backslash \hcM$. Add $i$ to $\hcM$.
		\item \label{Step:2Man-proposes} If man $i$ has not reached the end of his preference list, do $t \leftarrow t+1$ and man $i$ proposes to his most preferred woman $j$ whom he has not yet proposed. % (increment $\pop (j)$).
If he is at the end of his list, do $\delta^m \leftarrow \delta^m + 1$ go to Step \ref{Step:2New-Man}.
		\item Decision of $j$:
\begin{compactenum}[(a),wide,labelwidth=!,labelindent=0pt]		
  %\begin{enumerate}
			\item \label{Step:2Proposal_to_unmatched} If $j \in \bar{\cW}$, i.e., $j$ is currently unmatched, then she accepts $i$. Remove $j$ from $\bar{\cW}$. Go to Step \ref{Step:2New-Man}.
			\item \label{Step:2Proposal_to_matched} If $j$ is currently matched, she accepts the better of her current partner and $i$, and rejects the other. Set $i$ to be the rejected man %, add $j$ to $R(i)$
and continue at Step \ref{Step:2Man-proposes}.
		%\end{enumerate}
  \end{compactenum}
%	\end{enumerate}
\end{compactenum}
\end{alg}

\paragraph{Principle of deferred decisions.} As we are interested in the behavior of Algorithm~\ref{alg:2MPDA} on a random matching market, we
think of the deterministic algorithm on a random input as a randomized algorithm, which is easier to analyze. The randomized, or coin flipping, version of the algorithm does not receive preferences as input, but draws them through the process of the algorithm. This is
often called the \emph{principle of deferred decisions}. The algorithm reads the next woman in the preference of a man in step \ref{Step:2Man-proposes} and whether a woman prefers a man over her current proposal in step \ref{Step:2Proposal_to_matched}. No man applies twice to the same
woman during the algorithm, and therefore the algorithm never reads previously revealed
preferences. In step \ref{Step:2Man-proposes} the randomized algorithm selects the woman $j$ uniformly at random
from those to whom man $i$ has not yet proposed. In step \ref{Step:2Proposal_to_matched}, the probability that $j$ prefers $i$ over her current match is $1/(\pop(j)+1)$ where $\pop(j)$ is the number of proposals previously received by woman $j$.

%\subsection{Intuition for the Threshold Phenomenon}
%For simplicity, consider the men proposing DA where men are on the short side, i.e., $k<0$, \edit{and let
\paragraph{Stopping time.} Algorithm~\ref{alg:2MPDA}  defines that ``time'' $t$ ticks whenever a man makes a proposal.
First observe that the {current} number of unmatched men $\delta^m[t] = \delta^m$ at time $t$, i.e., men who have reached the bottom of their lists and are still unmatched, is \emph{non-decreasing} over time, whereas the {current} number of unmatched women $\delta^w[t] = |\bar{\cW}|$ at time $t$, i.e., women who have yet to receive their first proposal, is \emph{non-increasing} over time. The MOSM is found when the number of unmatched men exactly equals the number of unmatched women plus $k$.
We view this total number of proposals $\tau$ when DA terminates as a stopping time:
\begin{align}
	\tau = \min\{t\geq 1: \delta^m[t] = \delta^w[t] + k \}\, .
\label{eq:tau}
\end{align}
This total number of proposals $\tau$ serves as a key quantity enabling our formal characterization of the MOSM {(see Figure \ref{fig:sample-path} for an illustration)}. % is \emph{the total number of proposals} during DA.
\begin{figure}[htbp!]
	\begin{center}
		\includegraphics[width=0.6\textwidth]{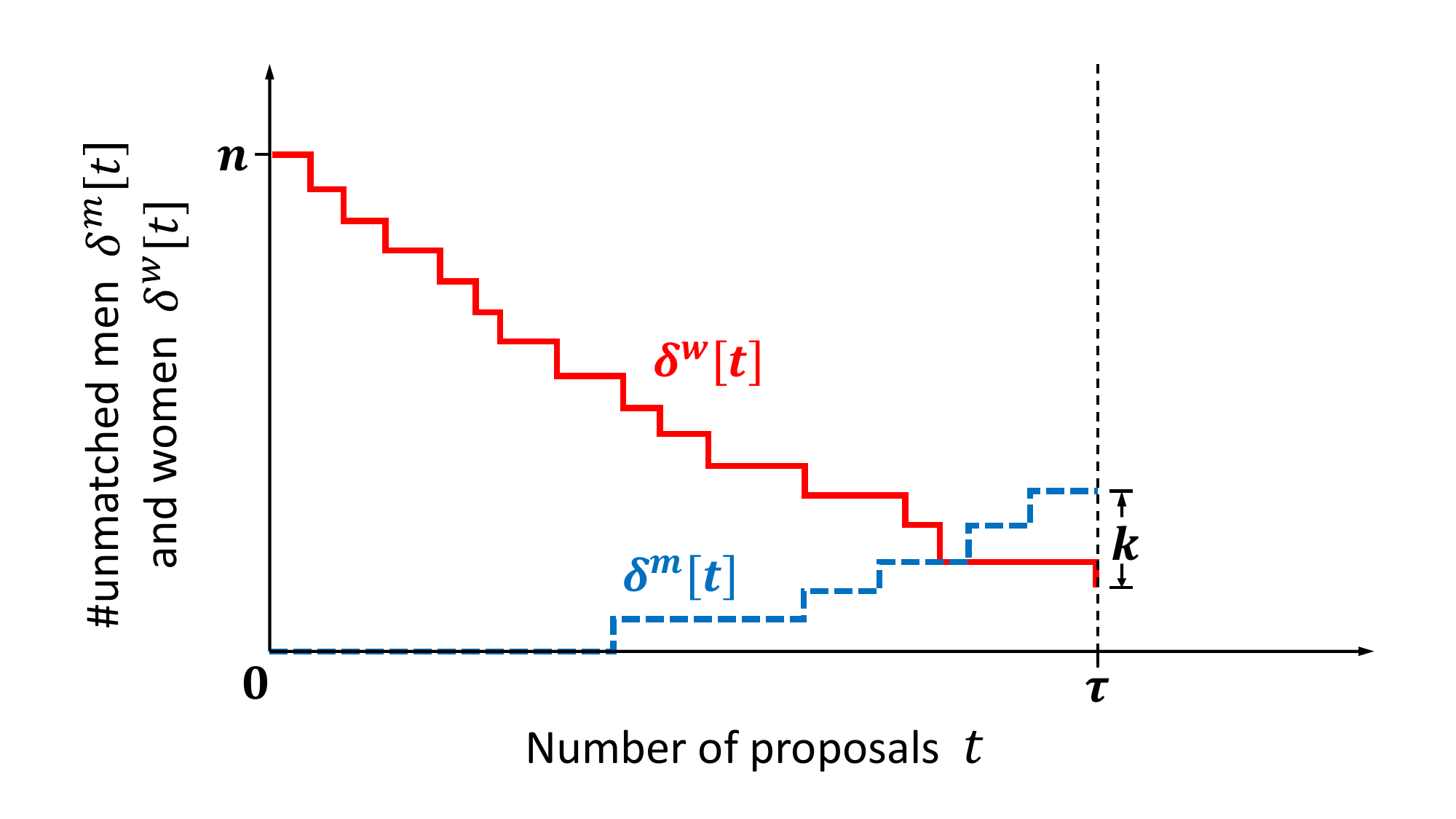}
		\caption{
			Illustration of a sample path of the current number of unmatched men $\delta^m[t]$ and unmatched women $\delta^w[t]$ under Man-proposing Deferred Acceptance (Algorithm~\ref{alg:2MPDA}).
			The algorithm terminates at $t=\tau$, the first time $\delta^m[t]=\delta^w[t]+k$. (In this illustration $k>0$).
		}
		\label{fig:sample-path}
	\end{center}
\end{figure}
On the men's side, the sum of men's rank of wives is approximately the total number of proposals $\tau$ (more precisely, this sum is $\tau + \delta^m[\tau]$ given that the rank for an unmatched agent is defined as one more than the length of the agent's preference list, but $\tau \gg \delta^m[\tau]$ is the dominant term).
%(Recall that we have defined the rank for an unmatched agent to be one more than the length of the agent's preference list, and a vanishing but positive fraction of men do remain unmatched for $d = o(\log ^2 n)$.)
On women's side, since each proposal goes approximately to a uniformly random woman, as a function of the total number of proposals we can tightly control the distribution of the number of proposals received by individual women (this distribution is close to Poisson and tightly concentrates around its average) and therefore their average rank of husbands (Propositions~\ref{prop:women-ranking-lower-bound} and \ref{prop:women-ranking-upper-bound}), as well as the number of unmatched women (Propositions~\ref{prop:unmatched-women-lower-bound} and\footnote{In Proposition \ref{prop:total-proposal-lower-bound}, we first upper bound the number of unmatched women, and then use the aforementioned observation to lower bound the number of proposals.} \ref{prop:total-proposal-lower-bound}). %\yk{Lemma 7 }
%The number of unmatched agents can also be bounded in a similar way. \yk{I don't understand the previous two sentences. What is the meaning of ``revealed''? Maybe you need to talk about the principle of deferred decisions and how you make use of it.}

Therefore, the bulk of the proof of Theorem \ref{thm:main-result} is dedicated to bounding the total number of proposals $\tau$. Because of the aforementioned technical challenge that a positive number of agents remain unmatched on both sides, a direct application of the coupon collector analogy is not enough. Instead, we control the  two stochastic processes that track the current number of unmatched men $\delta^m[t]$ and unmatched women $\delta^w[t]$ at each time $t$ and make use of the identity \eqref{eq:tau} that $\delta^m[\tau] = \delta^w[\tau] + k$.
%, respectively, and view the total number of proposals $\tau$ when DA terminates as a stopping time:
%\begin{align*}
%	\tau = \min\{t\geq 1: \delta^m[t] = \delta^w[t] + k \}\, .
%\end{align*}
(Upon termination, the number of unmatched men must be $k$ plus the number of unmatched women.)
For technical purposes, we extend the definition of $\delta^{m}[t]$ and $\delta^{w}[t]$ to $t > \tau$ as follows: if there are no men waiting to propose (i.e., a stable matching has been found), we introduce a fake man who is connected to $d$  women (uniformly and independently drawn) with a uniformly random preference ranking over them, and keep running Algorithm \ref{alg:2MPDA}.
%(Relating the aforementioned notation back to Definition~\ref{def:avg-ranks-and-unmatched-counts} we have $\delta^m = \delta^m[\tau]$, $\delta^w = \delta^w[\tau]$, and $\Rmen = \frac{\tau + \delta^m}{n+k} \approx \frac{\tau}{n}$.)

\begin{comment}
To show that $\tau \leq (1+\epsilon)n\sqrt{d}$ with high probability for some $\epsilon=o(1)$, it suffices to calculate an upper bound on $\delta^w[(1+\epsilon)\sqrt{d}]$ and a lower bound on $\delta^m[(1+\epsilon)\sqrt{d}]$, and show that even the lower bound on $\delta^m[(1+\epsilon)\sqrt{d}]$ is larger than the upper bound on $\delta^w[(1+\epsilon)\sqrt{d}]$ plus $k$ with probability close to $1$.
The upper bound on $\delta^w[t]$ is derived following a standard approach that utilizes the analogy to the coupon collector problem.
The lower bound on $\delta^m[t]$ is obtained via a novel construction as follows with the purpose of providing tractability: in Algorithm \ref{alg:2MPDA}, we count a man as unmatched only if he receives $d$ rejections in a row (if a man is accepted, and later rejected, we again allow the man to make up to $d$ proposals in the contructed ``fake'' process).%\yk{What? This way you will ``cancel'' the match if a man ever matches and make the man restart?}
In this way, we obtain an underestimate of $\delta^m[t]$ which is given by a tractable stochastic process.
\end{comment}

\paragraph{Upper bound on the total number of proposals.}
We show (in Proposition~\ref{prop:total-proposal-upper-bound}) that the total number of proposals cannot be too large, i.e., $\tau \leq (1+\epsilon)n\sqrt{d}$  with high probability for $\epsilon=d^{-1/4}=o(1)$.
We establish this bound by showing that after a large enough number of proposals have been made, i.e., at time $t = (1+\epsilon)n\sqrt{d}$, the current number of unmatched women $\delta^w[t]$ has (with high probability) dropped  below $n e^{-\sqrt{d}}$ %(proved using an upper bound on $\delta^w[t]$ in Lemma~\ref{lem:unmatched-woman-upper-bound})
whereas the current number of unmatched men $\delta^m[t]$ has (with high probability) increased above some level which is $\omega(n e^{-\sqrt{d}})
%\times \omega(1)
$ %(proved using a lower bound on $\delta^m[t]$ in Lemma~\ref{lem:unmatched-man-lower-bound}),
and hence, since $k = O(n e^{-\sqrt{d}})$, the stopping event ($\delta^m[\tau] = \delta^w[\tau]+k$) must have happened earlier, i.e., $\tau \leq (1+\epsilon)n\sqrt{d}$.
The upper bound on $\delta^w[(1+\epsilon)n\sqrt{d}]$ (see Lemma~\ref{lem:unmatched-woman-upper-bound}) is derived using a standard approach that utilizes the analogy to the coupon collector problem.
The lower bound on $\delta^m[(1+\epsilon)n\sqrt{d}]$ (see Lemma~\ref{lem:unmatched-man-lower-bound}) is obtained by counting the number of occurrences of  $d$-rejections-in-a-row during the men-proposing DA procedure (whenever rejections take place $d$ times in a row, at least one man becomes unmatched). Thus, our lower bound on $\delta^m[(1+\epsilon)n\sqrt{d}]$ ignores that some men are first accepted, and then later rejected causing them to reach the end of their preference lists via less than $d$ consecutive rejections. Our conservative approach provides tractability and saves us from needing to track how far down their preference lists the currently matched men are. Nevertheless, the slack in this step necessitates our stronger assumption $d = o(\log^2 n)$, despite our conjecture that the characterization extends for all $d < 0.99 \log^2 n$.

%via a novel construction which provides tractability: in Algorithm \ref{alg:2MPDA}, we count a man as unmatched only if he receives $d$ rejections in a row (if a man is accepted, and later rejected, we again allow the man to make up to $d$ additional proposals in the contructed ``fake'' process).
%\yk{Which lemma has the bounds you are talking about?}

\begin{comment}
We then prove that $\tau \geq (1-\epsilon)\sqrt{d}$ with high probability for some $\epsilon=o(1)$.
We start with upper bounding the expectation of $\delta^m$, i.e. the number of men who end up unmatched. With this upper bound, we can derive a high probability upper bound on $\delta^m$ using Markov inequality, and use the fact that $\delta^m = \delta^w + k$ to upper bound $\delta^w$. Again using the coupon collector analogy, we arrive at a high probability lower bound on the total number of proposals. To bound the expectation of $\delta^m$, it suffices to consider the probability that the \emph{last} proposing man end up unmatched because of the linearity of expectation. We bound this probability using a careful analysis of the rejections resulting from the last man's proposal.
\end{comment}

\paragraph{Lower bound on the total number of proposals.}
We prove (in Proposition~\ref{prop:total-proposal-lower-bound}) that the total number of proposals cannot be too small, i.e., $\tau \geq (1-\epsilon)n\sqrt{d}$ with high probability for some $\epsilon=o(1)$.
We start with upper bounding (in Lemma~\ref{lem:unmatched-women-expected-ub}) the expected number of unmatched men in the stable matching, $\mathbb{E}[ \delta^m ]$, by showing that the probability of the last proposing man being rejected cannot be too large given that each woman has received at most $(1+\epsilon) \sqrt{d}$ proposals on average (recall that $\tau \leq (1+\epsilon)n\sqrt{d}$ w.h.p.).
We then use Markov's inequality to derive an upper bound on $\delta^m$ which holds with high probability, %\yk{i can't understand. Also, which lemma are you talking about?}
and deduce (in Proposition \ref{prop:unmatched-women-upper-bound}) an upper bound on $\delta^w$ using the identity $\delta^m = \delta^w + k$.
Then we again use the coupon collector analogy to bound $\tau$ from below: the process cannot stop too early since the current number of unmatched women $\delta^w[t]$ does not decay fast enough to satisfy the upper bound on $\delta^w[\tau]$ ($=\delta^w$) if $\tau$ is too small.

\section{Conclusion} \label{sec:discussion}
Our investigation into stable matchings in two-sided markets with heterogeneous agent preferences has provided insights into the impact of market connectivity on equilibrium outcomes. By introducing a model of partially-connected random matching markets and developing new technical tools, we are able to explicitly characterize these outcomes. Based on the theoretical results, we have  derived guidance on the design of platform interventions which control market connectivity. Our theoretical analysis has moreover led us to a more general principle governing whether being on the short side confers a significant advantage in a given matching market, which can be applied based on market summary statistics alone. While we recognize that our model is highly stylized and may not capture all features of real-world markets, our numerical experiments, which included features such as many-to-one matching, correlation in preferences, and degree heterogeneity, have supported our findings and conjectures. We believe this study represents a crucial step towards a more complete understanding of competition in general matching markets, and we leave as interesting and challenging directions for future work to further investigate more general settings. In particular, it would be of interest to  develop and refine the conceptual principle we introduce along the lines of
%loosely related to our Principle~\ref{prin:condition-for-weak-competition}, 
methods developed in public and policy economics for facilitating counterfactual analysis using summary statistics alone \citep{chetty2009sufficient}.

\setstretch{1.0}
%\ds{fix the references}

\bibliographystyle{abbrvnat}
\bibliography{matchingmarket,nostark}

%\renewtheorem{theorem}{Theorem}[subsection]
%\renewtheorem{lemma}{Lemma}[section]
%\newtheorem{condition}{Condition}
%\renewtheorem{proposition}[theorem]{Proposition}
%\newtheorem{corollary}[theorem]{Corollary}
%

\newpage

%\makeatletter
%\renewcommand\paragraph{\@startsection{paragraph}{4}{\z@}%
%	{\parskip}%{3.25ex \@plus1ex \@minus.2ex}%
%	{-0.8em}%
%	{\normalfont\normalsize\bfseries}}
%\makeatother

\setcounter{page}{1}
\appendix

\begin{center}
	{\LARGE {\textbf{Appendix to ``The Competition for Partners in Matching Markets''}}}
\end{center}
\smallskip

\textbf{Organization of the appendix.} The technical appendix is organized as follows.
\begin{itemize}
\item Appendix~\ref{app:buyer-seller-app} elaborates on our assertion that almost all buyer-seller markets exhibit strong competition, in contrast with our findings for random matching markets.
	\item Appendix \ref{append:preliminary} describes several concentration inequalities and auxiliary stochastic processes that will be heavily used in the following theoretical analysis.
	\item Appendix \ref{append:small-medium-d-proof} establishes Theorem \ref{thm:main-result}, the main result for moderately connected markets. The proof is lengthy and will be further divided into several steps, with an overview provided at the beginning of each step.
	\item Appendix \ref{append:dense-d} establishes Theorem \ref{thm:main-result-dense}, the main result for densely connected markets.
	
	\item Appendix \ref{append:additional-proof} provides the proofs that complement Section~\ref{sec:design-implications}.
	
	\item Appendix \ref{app:heuristic-numerical} refines the heuristic characterization of random matching markets made in Section~\ref{subsec:detailed-heuristic-picture}, describes Principle~\ref{prin:condition-for-weak-competition} for random markets, and introduces a numerical procedure to obtain refined estimates.
	
	\item Appendix \ref{app:numerical-additional} provides additional numerical simulation results that complement Section~\ref{sec:numeric}.
\end{itemize}

\section{Almost all buyer-seller markets exhibit strong competition}
\label{app:buyer-seller-app}

Consider a buyer-seller market where each of $n+k$ sellers is selling one unit of the same commodity, and each of $n$ buyers wants to buy one unit and has value $1$ for a unit. A bipartite graph $G$ with sellers on one side and buyers on the other captures which trades are feasible. (This is a special case of the Shapley-Shubik assignment model \citep{shapley1971assignment}.) We say that an unbalanced market with $k > 0$ (or $k < 0$) exhibits a stark effect of competition if, in any equilibrium, all trades occur at price $0$ (or $1$), i.e., the agents on the short side, namely buyers (sellers), capture all the surplus.
Then we know \citep{shapley1971assignment} that for $k \neq 0$ the market exhibits a stark effect of competition if the following requirement is satisfied: \begin{align*}
 {\Ev} \equiv \{ & \, \textup{For each agent $j$ on the long side, there exists a matching in $G$}\\
 &\textup{where all short side agents are matched but agent $j$ is unmatched}\, \} \, .
 \end{align*} %Call this event $\Ev$.
Requirement $\Ev$ is only slightly stronger than connectivity of $G$:
Suppose, as in our model in Section~\ref{sec:model}, that each seller is connected to a uniformly random subset of $d$ buyers.
%-buyer pair is connected with probability $d/(n+k)$, independently. (In particular, sellers have an average degree $d$.)
 %In the latter market, roughly, as long as it is connected, the equilibrium price collapses to zero marketwide if there is an excess of sellers, whereas the price equals the buyer value of $1$ marketwide if there is an excess of buyers. (see Remark ??? in Section~\ref{sec:result}) \footnote{\edit{More specifically, the described ``price collapse to $0$ (or $1$)'' phenomenon occurs if and only if there is a maximum matching where all short side agents are matched but where long side agent $j$ is unmatched, and that this holds for all $j$ on the long side. Call this event $\Ev$. Fixing $|k|=1$ for simplicity,
Under this stochastic model for $G$, for any sequence of $k$ such that $1 \leq |k|= O(1)$, event $\Ev$ occurs (i.e., there is a stark effect of competition) for all $d$ exceeding the connectivity threshold at $d = \log n$:
\begin{enumerate}
\item[(i)] For any $\eps > 0$ and $d \geq (1+\epsilon) \log n$, with high probability, $G$ is connected  and moreover, event $\Ev$ occurs, i.e., there is a stark effect of competition.
\item[(ii)] For any $\eps > 0$ and $d \leq (1-\epsilon) \log n$, with high probability, the connectivity graph $G$ is disconnected (in fact a positive number of buyers have degree zero).
\end{enumerate}% See Appendix ??? for formal statements and proofs of all claims in this footnote.} \yk{Actually, this footnote and the para above may be better placed somewhere in the main paper, with only a summary in the introduction.}
%In other words, the short side captures all the surplus in any connected buyer-seller market, which is a markedly different outcome from the equilibrium (stable matching) in moderately connected random matching markets, where there is almost no advantage from being on the short side of the market.
%We consider a two-sided market that consists of a set of men $\mathcal{M} = \{1,\ldots,n\}$ and a set of women $\mathcal{W}=\{1,\ldots,n+k\}$.

% -C="WinEdt 10.3" "[Open(|%f|);SelPar(%l,8);]"

\section{Preliminaries}\label{append:preliminary}
\subsection{Basic Inequalities}
\begin{lem}\label{lem:basic-inequalities}
The following inequalities hold:
\begin{itemize}
\item For any $|x|\leq \frac{1}{2}$, we have $e^{-x-x^2}\leq 1 - x \leq e^{-x}$.
\item For any $k>0$ and $\epsilon\in\left(0,\frac{1}{k}\right)$, we have
$1 + k\epsilon \leq \frac{1}{1 - k\epsilon}$.
\end{itemize}
\end{lem}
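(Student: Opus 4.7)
The plan is to handle the two inequalities separately, both by elementary arguments, since these are standard facts about the exponential and reciprocal functions.

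For the first inequality $e^{-x-x^2} \leq 1-x \leq e^{-x}$ on $|x| \leq 1/2$, I would obtain the upper bound $1-x \leq e^{-x}$ immediately from the classical tangent-line inequality $e^y \geq 1 + y$ for all real $y$ (convexity of $e^y$, or equivalently the first-order Taylor expansion at $0$ with nonnegative remainder), specialized to $y = -x$. For the lower bound, I would take logarithms and reduce to showing $\log(1-x) \geq -x - x^2$ on $[-1/2, 1/2]$. The natural approach is to define $f(x) \defeq \log(1-x) + x + x^2$, observe that $f(0)=0$, and compute $f'(x) = -1/(1-x) + 1 + 2x = x(1-2x)/(1-x)$. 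On $[-1/2, 1/2]$, both $1-x$ and $1-2x$ are nonnegative, so $f'(x)$ has the same sign as $x$; hence $f$ attains its minimum on this interval at $x=0$, giving $f(x) \geq 0$ and the claim.

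For the second inequality, the manipulation is essentially a one-liner: with $u \defeq k\epsilon \in (0, 1)$, the claim $1 + u \leq 1/(1-u)$ is equivalent, after multiplying through by the positive quantity $1-u$, to $(1+u)(1-u) = 1 - u^2 \leq 1$, which is immediate.

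The only minor ``obstacle'' is to make sure the monotonicity argument for the lower bound $1-x \geq e^{-x-x^2}$ is carried out correctly across the full symmetric interval $[-1/2, 1/2]$ rather than only on $[0, 1/2]$, since a naive Mercator-series termwise comparison $-\log(1-x) = x + x^2/2 + x^3/3 + \cdots \leq x + x^2$ has to contend with sign flips when $x < 0$. The derivative-based approach sketched above sidesteps this issue cleanly. No other steps require any substantive calculation, and I do not anticipate any real difficulty.
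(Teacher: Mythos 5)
Your proof is correct and complete. The paper states Lemma~\ref{lem:basic-inequalities} without any proof, treating both inequalities as standard facts, so there is no argument of the authors' to compare against; your elementary verification fills that gap soundly. The upper bound via $e^{y}\geq 1+y$ at $y=-x$ is the standard route, and your derivative computation for the lower bound is right: with $f(x)=\log(1-x)+x+x^2$ one gets $f'(x)=x(1-2x)/(1-x)$, which on $[-\tfrac12,\tfrac12]$ has the sign of $x$ because both $1-x$ and $1-2x$ are nonnegative there, so $f$ is minimized at $0$ with $f(0)=0$. You are also right to flag that a naive termwise comparison of the Mercator series only works for $x\geq 0$; the monotonicity argument handles the negative half of the interval cleanly. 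The second inequality reduces, as you say, to $1-(k\epsilon)^2\leq 1$ after multiplying by the positive quantity $1-k\epsilon$. No issues.
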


\subsection{Negative Association of Random Variables}

The concept of negative association provides a stronger notion of negative correlation, which is useful to analyze the concentration of the sum of dependent random variables.

\begin{defn}[Negatively Associated Random Variables \citep{dubhashi1998balls}]
	A set of random variables $X_1, X_2, \ldots, X_n$ are \emph{negatively associated} (NA) if for any two disjoint index sets $I, J \subseteq \{1,\ldots,n\}$,
	\begin{equation*}
		\mathbb{E}\left[ f( X_i : i \in I ) \cdot g( X_j : j \in J ) \right] \leq \mathbb{E}\left[ f( X_i : i \in I ) \right] \cdot \mathbb{E}\left[ g( X_j : j \in J ) \right]
	\end{equation*}
	for any two functions $f:\mathbb{R}^{|I|} \mapsto \mathbb{R}$ and $g:\mathbb{R}^{|J|} \mapsto \mathbb{R}$ that are both non-decreasing or both non-increasing (in each argument).
\end{defn}

The following lemma formalizes that the sum of negatively associated (NA) random variables is as concentrated as the sum of independent random variables:

\begin{lem}[Chernoff-Hoeffding Bound for Negatively Associated Random Variables \citep{dubhashi1998balls}] \label{lem:Hoeffding-NA}
	Let $X_1, X_2, \ldots, X_n$ be NA random variables with $X_i \in [a_i,b_i]$ always.
	Then, $S \defeq \sum_{i=1}^n X_i$ satisfies the following tail bound:
	\begin{equation}
		\mathbb{P}\left( | S - \mathbb{E}[S] | \geq t \right) \leq 2 \exp\left( - \frac{2t^2}{ \sum_{i=1}^n (b_i - a_i)^2 } \right).
	\end{equation}
\end{lem}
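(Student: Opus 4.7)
The plan is to mimic the classical Cramér--Chernoff argument for Hoeffding's inequality, replacing the use of independence by the factorization property that follows from negative association. First I would apply Markov's inequality to the exponential moment: for any $\lambda > 0$,
\[
\mathbb{P}(S - \mathbb{E}[S] \geq t) \leq e^{-\lambda t}\, \mathbb{E}\!\left[\prod_{i=1}^n e^{\lambda(X_i - \mathbb{E}[X_i])}\right].
\]

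The crux of the argument is showing that the expectation of this product is at most the product of the expectations. I would obtain this by applying the NA definition inductively: for fixed $\lambda > 0$ the map $x \mapsto e^{\lambda(x - \mathbb{E}[X_i])}$ is non-negative and non-decreasing, and any product of non-negative non-decreasing functions is itself non-decreasing in each argument. Taking $I = \{1\}$ and $J = \{2,\ldots,n\}$ in the NA definition (with both $f$ and $g$ non-decreasing) yields
\[
\mathbb{E}\!\left[\prod_{i=1}^n e^{\lambda(X_i-\mathbb{E}[X_i])}\right] \leq \mathbb{E}\!\left[e^{\lambda(X_1-\mathbb{E}[X_1])}\right] \cdot \mathbb{E}\!\left[\prod_{i=2}^n e^{\lambda(X_i-\mathbb{E}[X_i])}\right],
\]
and iterating peels off one factor at a time. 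This factorization is essentially the only place the NA hypothesis is used, and it is what distinguishes the argument from the standard independent case.

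Once the moment generating function has been factorized, I would invoke Hoeffding's lemma to bound each single-variable factor by $\exp\!\bigl(\lambda^2 (b_i-a_i)^2/8\bigr)$, combine them, and optimize over $\lambda$ at $\lambda^\star = 4t / \sum_i (b_i-a_i)^2$ to obtain
\[
\mathbb{P}(S - \mathbb{E}[S] \geq t) \leq \exp\!\left(-\frac{2t^2}{\sum_{i=1}^n (b_i-a_i)^2}\right).
\]
For the lower tail I would apply the same reasoning to the variables $-X_1,\ldots,-X_n$, which remain negatively associated because the NA definition is invariant under coordinate-wise monotone transformations (equivalently, one can redo the factorization using the ``both non-increasing'' case of the NA definition with $\lambda<0$). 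A union bound over the two tails contributes the factor $2$ in the statement.

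I do not expect any substantive obstacle: the only step beyond a textbook Hoeffding proof is the monotonicity check that licenses the inductive application of the NA inequality, and this is routine once one observes that partial products of non-negative non-decreasing functions remain non-decreasing.
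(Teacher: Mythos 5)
Your proof is correct. The paper does not actually prove this lemma---it simply refers the reader to \cite{dubhashi1998balls}---and your argument is precisely the standard one from that reference: Markov's inequality on the exponential moment, the inductive factorization of the MGF licensed by the ``both non-decreasing'' (or, for the lower tail, ``both non-increasing'') case of the NA definition applied to the non-negative monotone functions $x \mapsto e^{\lambda(x-\mathbb{E}[X_i])}$, Hoeffding's lemma on each factor, and optimization over $\lambda$. The monotonicity check you flag (partial products of non-negative non-decreasing functions remain coordinate-wise non-decreasing) is exactly the point where NA substitutes for independence, and it holds as you state.
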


We refer to \cite{dubhashi1998balls} for the proof.

The following lemma provides sufficient conditions for a set of random variables to be NA. For each sufficient condition, we provide a pointer to a paper where it has been established.

\begin{lem}[Sufficient Conditions for Negative Association] \label{lem:NA}
	The followings hold:
	\begin{enumerate}[(i)]
		\item\label{lem:NA-permutation} \emph{(Permutation distribution \cite[Theorem 2.11]{Joan83})}
		Let $x_1, x_2, \ldots, x_n$ be $n$ real numbers and let $X_1, X_2, \ldots, X_n$ be random variables such that $(X_1, X_2, \ldots, X_n)$ {is a uniformly random permutation of $(x_1, x_2, \ldots, x_n)$.} % always, with all possible assignments equally likely.
		Then $X_1, X_2, \ldots, X_n$ are NA.
		
		\item\label{lem:NA-independent} \emph{(Union of independent sets of NA random variables \cite[Property 7]{Joan83})}
		If $X_1, X_2, \ldots, X_n$ are NA, $Y_1, Y_2, \ldots, Y_m$ are NA, and $\{X_i\}_i$ are independent of $\{Y_j\}_j$, then $X_1, \ldots, X_n, Y_1, \ldots, Y_m$ are NA.
		
		\item\label{lem:NA-monotone} \emph{(Concordant monotone functions \cite[Property 6]{Joan83})}
		Increasing functions defined on disjoint subsets of a set of NA random variables are NA.
		More precisely, suppose $f_1, f_2,\ldots, f_k$ are all non-decreasing {in each coordinate}, or all non-increasing {in each coordinate}, with each $f_j : \mathbb{R}^{|I_j|} \mapsto \mathbb{R}$  defined on $(X_i)_{i \in I_j}$ for some disjoint index subsets $I_1, \ldots, I_k \subseteq \{1,\ldots,n\}$.
		If $X_1, X_2, \ldots, X_n$ are NA, then the set of random variables $Y_1 \triangleq f_1(X_i:i \in I_1), Y_2 \triangleq f_2(X_i : i \in I_2), \ldots, Y_k \triangleq f_k(X_i : i \in I_k)$ are NA.
	\end{enumerate}
\end{lem}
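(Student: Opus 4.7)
The plan is to address the three parts separately. Parts (ii) and (iii) are essentially bookkeeping from the definition of NA, while part (i) is the main technical obstacle.

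For part (iii), I would take arbitrary disjoint $A, B \subseteq \{1, \ldots, k\}$ together with concordant monotone $F : \reals^{|A|} \to \reals$ and $G : \reals^{|B|} \to \reals$. Set $I_A = \bigcup_{j \in A} I_j$ and $I_B = \bigcup_{j \in B} I_j$; these are disjoint subsets of $\{1, \ldots, n\}$ by hypothesis. The compositions $\tilde F(X_{I_A}) := F((f_j(X_{I_j}))_{j \in A})$ and $\tilde G(X_{I_B}) := G((f_j(X_{I_j}))_{j \in B})$ remain concordant monotone in their arguments, since coordinatewise composition of non-decreasing (resp.\ non-increasing) maps preserves monotonicity. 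Applying NA of $(X_i)_{i=1}^n$ to $\tilde F, \tilde G$ on the disjoint index sets $I_A, I_B$ yields $\Ex[\tilde F \tilde G] \leq \Ex[\tilde F]\,\Ex[\tilde G]$, which is exactly the NA inequality for $(Y_j)_{j=1}^k$.

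For part (ii), I would take disjoint $I, J \subseteq \{1, \ldots, n+m\}$ and decompose each as $I = I_X \sqcup I_Y$, $J = J_X \sqcup J_Y$, where the subscript indicates whether the coordinate belongs to the $X$-block or the $Y$-block; disjointness of $I, J$ gives disjointness of $I_X, J_X$ and of $I_Y, J_Y$. Writing $Z = (X, Y)$ and conditioning on the $Y$-coordinates, the factors $f(Z_I)$ and $g(Z_J)$ become concordant monotone functions of the disjoint $X$-subsets $I_X, J_X$; NA of $(X_i)$ applied inside the conditional expectation (using independence of $X$ and $Y$ so that the $Y$-coordinates act as fixed parameters) gives $\Ex[fg \mid Y] \leq h_f(Y_{I_Y})\, h_g(Y_{J_Y})$, where $h_f, h_g$ inherit the monotonicity from $f, g$ since taking expectation over $X$ preserves monotonicity in the remaining arguments. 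Taking unconditional expectation and applying NA of $(Y_j)$ on the disjoint subsets $I_Y, J_Y$, together with independence of $X$ and $Y$ to identify $\Ex[h_f(Y_{I_Y})] = \Ex[f(Z_I)]$ and similarly for $g$, completes the argument.

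Part (i) is the main obstacle, which I would handle by a two-step reduction. First, I would establish NA in the case where $x_1, \ldots, x_n$ takes only two distinct values, WLOG $\{0, 1\}$, so that $(X_1, \ldots, X_n)$ is a uniformly random 0/1 vector with a fixed number of ones. Here NA can be verified by induction on $n$ via an exchange argument: conditioning on $X_1$, the remaining coordinates form a uniform random 0/1 vector with a reduced sum, and a direct covariance computation (using that any pair $(X_i, X_j)$ has strictly negative covariance in the hypergeometric case) handles the base case; the inductive step combines this with part (iii) applied to monotone functions on the remaining coordinates. Second, I would extend to arbitrary $(x_1, \ldots, x_n)$ by expressing any bounded non-decreasing function as a positive (Stieltjes) mixture of indicators of up-sets; each such indicator, via thresholding, pulls back to the 0/1 case already handled, and bilinearity of covariance lifts the inequality to arbitrary concordant monotone $f, g$. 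The subtle point I anticipate is verifying that the threshold reduction respects the joint law of the permutation (in particular that ties among the $x_i$ are harmless), but since the state space is finite and the test functions bounded, this is a routine approximation argument.
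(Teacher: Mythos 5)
The paper does not prove this lemma at all: each part is imported verbatim from Joag-Dev and Proschan (1983) (Theorem~2.11, Property~7, Property~6 respectively), so the only meaningful comparison is between your from-scratch attempt and the standard proofs in that reference. Your arguments for (ii) and (iii) are correct and are essentially those standard proofs: (iii) is the observation that compositions of concordant monotone maps on disjoint index blocks are again concordant monotone, and (ii) is the condition-on-$Y$ argument, where the only point needing care --- that $h_f(y)=\Ex[f(X_{I_X},y_{I_Y})]$ inherits monotonicity in $y$ and that independence lets you treat $y$ as a parameter inside the conditional expectation --- you do flag. Those two parts are fine.

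Part (i) is where the real content lives, and your sketch has two genuine gaps. First, the $0/1$ base case: conditioning on $X_1$ and invoking the inductive hypothesis gives $\Ex[fg\mid X_1]\le \Ex[f\mid X_1]\,\Ex[g\mid X_1]$, but to conclude you then need $\Ex\bigl[F(X_1)G(X_1)\bigr]\le \Ex[F(X_1)]\,\Ex[G(X_1)]$ for $F=\Ex[f\mid X_1]$, $G=\Ex[g\mid X_1]$, which for a single binary variable requires $F$ and $G$ to be \emph{anti}-monotone in $x_1$. For $G$ this is clear (using up a one depletes the pool available to $J$), but for $F$ the direct effect of $X_1$ on $f$ and the depletion effect on the remaining coordinates of $I$ pull in opposite directions; "combines this with part (iii)" does not resolve this, and this tension is precisely the content of the theorem. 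Second, the lift from $0/1$ to general reals does not work as described: a non-decreasing function of several permutation coordinates is a monotone functional of the full threshold array $\bigl(\ind(X_i>c)\bigr)_{i,c}$, not a positive mixture of single-threshold indicators, and that array is \emph{not} NA (for fixed $i$ the entries across thresholds are perfectly comonotone), so neither part (iii) nor "bilinearity of covariance" applies to it. The reduction that actually works (and is what Joag-Dev--Proschan do) is an induction on the number of distinct values among $x_1,\dots,x_n$, conditioning on the random set of positions receiving the top value and gluing with a conditional-NA lemma; that is a materially different and more delicate argument than the one you propose. Given that the paper itself only cites this result, the pragmatic fix is to do the same, or to supply the full Joag-Dev--Proschan induction for part (i).
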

%\edit{See property P6 in \cite{Joan83} for the proof.}

{
\subsection{Balls-into-bins}\label{subsec:balls-into-bins}
A balls-into-bins process with $T$ balls and $n$ bins is defined as follows: at each time $t=1,\ldots,T$, a ball is placed into one of $n$ bins uniformly at random, independently of the past.
Index the bins by $j \in \{1, \ldots, n\}$, and let $I_{j,t}$ be an indicator variable that equals one if the $t^\text{th}$ ball is placed in the $j^\text{th}$ bin and equals zero otherwise.
Further define $W_j \defeq \sum_{t=1}^T I_{j,t}$ representing the total number of balls placed into the $j^\text{th}$ bin.

A particular random variable of interest is the number of empty bins at the end of the process, for which we have the following concentration inequality.
\begin{lem}[Number of empty bins] \label{lem:balls-into-bins-uniform}
	Let $X$ be the number of empty bins at the end of a balls-into-bins process with $T$ balls and $n$ bins.
	For any $\epsilon > 0$, we have
	\begin{align*}
		\mathbb{P}\left(
		\frac{1}{n}X - \left(1 - \frac{1}{n}\right)^T
		\geq
		\epsilon
		\right)
		\leq
		\exp\left(
		- 2  n \epsilon^2
		\right)\, ,
	\\
		\mathbb{P}\left(
		\frac{1}{n}X - \left(1 - \frac{1}{n}\right)^T
		\leq
		\epsilon
		\right)
		\leq
		\exp\left(
		- 2  n \epsilon^2
		\right)\, .
	\end{align*}
\end{lem}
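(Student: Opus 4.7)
The plan is to reduce this tail bound to a routine application of the negative association (NA) machinery set up in Lemma~\ref{lem:Hoeffding-NA} and Lemma~\ref{lem:NA}. First I would observe that the mean is easy: writing $X = \sum_{j=1}^n X_j$ where $X_j \defeq \mathbb{1}[W_j = 0]$, linearity of expectation gives $\mathbb{E}[X] = n(1-1/n)^T$, so the centering in the statement is exactly the mean of $X/n$. The only real issue is that the $X_j$ are dependent (knowing one bin is empty makes others more likely to be nonempty), so we cannot directly apply Hoeffding. The strategy is to show the $X_j$ are negatively associated and then invoke Lemma~\ref{lem:Hoeffding-NA}.

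To establish NA, I would work at the level of the indicators $I_{j,t}$ defined in the preliminary. Fix any time $t$: the vector $(I_{1,t}, \ldots, I_{n,t})$ is a uniformly random permutation of $(1,0,\ldots,0)$, so by Lemma~\ref{lem:NA}\eqref{lem:NA-permutation} these $n$ indicators are NA. Across distinct $t$'s, the blocks $(I_{1,t},\ldots,I_{n,t})$ are mutually independent by definition of the balls-into-bins process, so Lemma~\ref{lem:NA}\eqref{lem:NA-independent} implies the full collection $\{I_{j,t} : j \in [n], t \in [T]\}$ is NA. Now observe that $X_j = \prod_{t=1}^T (1 - I_{j,t})$, which is a non-increasing function of the variables $(I_{j,t})_{t=1}^T$. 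The different $X_j$'s are defined on the disjoint index sets $\{(j,t) : t \in [T]\}$ for $j = 1,\ldots,n$, and all are non-increasing in each coordinate, so Lemma~\ref{lem:NA}\eqref{lem:NA-monotone} yields that $X_1, \ldots, X_n$ are NA.

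Finally I would apply Lemma~\ref{lem:Hoeffding-NA} to $X = \sum_j X_j$ with $a_j = 0$, $b_j = 1$. This gives
\begin{equation*}
    \mathbb{P}\bigl( |X - \mathbb{E}[X]| \geq s \bigr) \leq 2\exp\bigl(-2 s^2/n\bigr)
\end{equation*}
for any $s > 0$. Setting $s = n\epsilon$ and dividing through by $n$ inside the probability yields the two-sided bound $2\exp(-2n\epsilon^2)$, and each one-sided tail contributes at most $\exp(-2n\epsilon^2)$, which is exactly what the lemma claims.

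The ``hard part'' is mostly bookkeeping: there is no analytic obstacle here, only the need to verify that the functional form $X_j = \prod_t(1 - I_{j,t})$ is indeed monotone in each of its (disjoint) NA inputs so that Lemma~\ref{lem:NA}\eqref{lem:NA-monotone} applies. Once that is checked, the conclusion follows immediately from the NA Hoeffding bound with bounded $[0,1]$ summands, and no delicate estimates on $(1-1/n)^T$ are required.
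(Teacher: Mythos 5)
Your proof is correct and follows essentially the same route as the paper's: establish negative association of the indicators $I_{j,t}$ via the permutation and independence properties, push NA through to the empty-bin indicators via the concordant-monotone-functions property, and apply the NA Chernoff--Hoeffding bound with $[0,1]$-valued summands. The only cosmetic difference is that you write $\ind(W_j=0)=\prod_{t}(1-I_{j,t})$ and apply the monotonicity lemma in one step, whereas the paper first deduces NA of the $W_j$'s and then of $\ind(W_j=0)$ in two steps; both are valid instances of the same lemma.
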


\begin{proof}
	Observe that $\{ I_{j,t} \}_{j \in \{1,\ldots,n\}, t \in \{1,\ldots,T\}}$ are negatively associated (NA) since $\{ I_{j,t} \}_{j \in \{1,\ldots,n\}}$ are NA for each $t$ (by Lemma \ref{lem:NA}--(\ref{lem:NA-permutation}), since $\{ I_{j,t} \}_{j \in \{1,\ldots,n\}}$ is  a uniformly random permutation of $n-1$ zeros and a single one) and they are independent across $t$ (Lemma \ref{lem:NA}--(\ref{lem:NA-independent})).
	Consequently, $W_1, \ldots, W_n$ are NA due to Lemma \ref{lem:NA}--(\ref{lem:NA-monotone}), since $f_j(I_{j,1}, \ldots, I_{j,T}) \defeq \sum_{t=1}^T I_{j,t}$ is non-decreasing in each coordinate.
	
	Define $Y_j \defeq \ind(W_j = 0)$ indicating whether the $j^\text{th}$ bin is empty at the end.
	Although $Y_j$'s are not independent, they are NA (again, by Lemma \ref{lem:NA}--(\ref{lem:NA-monotone})).
	Because $Y_j \sim \text{Bernoulli}\left( \left( 1 - \frac{1}{n} \right)^T \right)$ and $X = \sum_{j=1}^{n} Y_j$, by applying Hoeffding's bound (Lemma \ref{lem:Hoeffding-NA}), we obtain the desired result.
  \end{proof}

\begin{lem}\label{lem:balls-into-bins-reciprocal}
	Let $W_j$ denotes the number of balls in the $j^\text{th}$ bin at the end of a balls-into-bins process with $T$ balls and $n$ bins.
	For any $\Delta > 0$, we have
	\begin{equation*}
		\mathbb{P}\left( \frac{1}{n} \sum_{j=1}^n \frac{1}{W_j+1} \geq \frac{n}{T} + \Delta \right) \leq \exp\left( - 2 n \Delta^2 \right).
	\end{equation*}
\end{lem}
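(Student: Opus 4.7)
The plan is to reduce this to a Chernoff--Hoeffding bound for negatively associated random variables (Lemma~\ref{lem:Hoeffding-NA}) applied to $Z_j \defeq 1/(W_j+1)$, after showing that $\mathbb{E}\bigl[\tfrac{1}{n}\sum_j Z_j\bigr] \leq n/T$.

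First I would establish negative association of $(Z_1,\ldots,Z_n)$. The proof of Lemma~\ref{lem:balls-into-bins-uniform} already shows that $W_1,\ldots,W_n$ are NA (via representing each $W_j = \sum_t I_{j,t}$ as a sum over disjoint indices of the NA family $\{I_{j,t}\}$, and applying the monotone functions property of Lemma~\ref{lem:NA}(\ref{lem:NA-monotone})). Since $Z_j = 1/(W_j+1)$ is a non-increasing function of $W_j$ alone, and each $Z_j$ is defined on the disjoint singleton index set $\{j\}$, a second application of Lemma~\ref{lem:NA}(\ref{lem:NA-monotone}) gives that $Z_1,\ldots,Z_n$ are NA. Moreover $Z_j \in [0,1]$, so Hoeffding's bound for NA variables (one-sided version of Lemma~\ref{lem:Hoeffding-NA}, which follows from the same moment-generating-function argument) yields, for $S \defeq \tfrac{1}{n}\sum_j Z_j$,
\begin{equation*}
\mathbb{P}\bigl(S \geq \mathbb{E}[S] + \Delta\bigr) \;\leq\; \exp\!\bigl(-2n\Delta^2\bigr).
\end{equation*}

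The remaining task is to show $\mathbb{E}[S] \leq n/T$. Since $W_j \sim \mathrm{Binomial}(T, 1/n)$, a classical identity (obtained by writing $\frac{1}{k+1}\binom{T}{k} = \frac{1}{T+1}\binom{T+1}{k+1}$ and reindexing) gives
\begin{equation*}
\mathbb{E}\!\left[\frac{1}{W_j+1}\right] \;=\; \frac{n\bigl(1-(1-1/n)^{T+1}\bigr)}{T+1} \;\leq\; \frac{n}{T+1} \;\leq\; \frac{n}{T}.
\end{equation*}
Hence $\mathbb{E}[S] = \mathbb{E}[Z_1] \leq n/T$, and combining with the Hoeffding bound above yields the claim.

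The main (and essentially only) obstacle is verifying the mean bound $\mathbb{E}[1/(W_j+1)] \leq n/T$; the NA step is a direct reuse of the argument already given for Lemma~\ref{lem:balls-into-bins-uniform} together with Lemma~\ref{lem:NA}(\ref{lem:NA-monotone}) applied to the decreasing map $x \mapsto 1/(x+1)$. The mean bound itself is elementary once one recognizes the standard Binomial-reciprocal identity, which trivially gives the desired inequality since $1-(1-1/n)^{T+1} \leq 1$.
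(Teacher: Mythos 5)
Your proof is correct and follows essentially the same route as the paper's: establish $\mathbb{E}[1/(W_j+1)] \leq n/T$ via the standard Binomial-reciprocal identity, note that the $1/(W_j+1)$ are negatively associated by applying the monotone-functions property to the NA family $W_1,\ldots,W_n$ already established for Lemma~\ref{lem:balls-into-bins-uniform}, and conclude with the Chernoff--Hoeffding bound for NA variables. Your explicit remark that the one-sided version of Lemma~\ref{lem:Hoeffding-NA} (without the factor of $2$) is what is needed is a small point the paper leaves implicit, but otherwise the two arguments coincide.
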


\begin{proof}
	Since $W_j \sim \text{Binomial}\left( T, \frac{1}{n} \right)$, we have
	\begin{align*}
		\mathbb{E}\left[ \frac{1}{W_j+1} \right]
			=\ &\sum_{k=0}^T \frac{1}{k+1} \times \binom{T}{k} \left( \frac{1}{n} \right)^k \left( 1 - \frac{1}{n} \right)^{T-k} \\
			=\ &\frac{n}{T+1} \sum_{k=0}^T \binom{T+1}{k+1} \left( \frac{1}{n} \right)^{k+1} \left( 1 - \frac{1}{n} \right)^{(T+1)-(k+1)}
			\\
			=\ &\frac{n}{T+1} \times \left( 1 - \left( 1 - \frac{1}{n} \right)^{T+1} \right)
			\leq \frac{n}{T}\, .
	\end{align*}
	In the proof of Lemma \ref{lem:balls-into-bins-uniform}, we have shown that $W_1, \ldots, W_n$ are NA.
	By Lemma \ref{lem:NA}--(\ref{lem:NA-monotone}), $\frac{1}{W_1+1}, \ldots, \frac{1}{W_n+1}$ are also NA.
	Therefore, by applying Hoeffding's bound (Lemma \ref{lem:Hoeffding-NA}), we obtain the desired result.
  \end{proof}
}

\iffalse

\begin{proof}
Note that $w_{j}$ has distribution Binomial$\left(m,\frac{1}{n}\right)$.
Using the result in \cite{chao1972negative}, we have
\begin{align*}
	\mathbb{E}[W]
= \mathbb{E}\left[
	\frac{1}{w_{1} + 1}
	\right]
	=
	\frac{1 - (1-\frac{1}{n})^{m+1}}{(m+1)\frac{1}{n}}
\leq
\frac{n}{m}
\, .
\end{align*}
On the other hand, by Lemma \ref{lem:basic-inequalities}, we have
\begin{align*}
	\mathbb{E}[W]
	=
	\frac{1 - (1-\frac{1}{n})^{m+1}}{(m+1)\frac{1}{n}}
\geq
\left(1 - e^{-\frac{m+1}{n} - \frac{m+1}{n^2}}\right)
\frac{n}{m+1}
\, .
\end{align*}

	Using the same argument as in \cite{dubhashi1998balls}, it can be shown that the terms $\frac{1}{w_{j} + 1}$ are negatively associated.
	As a result, the Hoeffding's bound applies:
	\begin{align*}
	\mathbb{P}\left(
	W \geq (1+\epsilon)\frac{n}{m}
	\right)
%
\leq
\mathbb{P}\left(
	W \geq (1+\epsilon)\mathbb{E}[W]
	\right)
	%
	\leq
	\exp\left(
	-2n
	\epsilon^2
	\left(
	\frac{n}{m+1}
	\right)^2
	\left(1 - e^{-\frac{m+1}{n} - \frac{m+1}{n^2}}\right)^2
	\right)\, .
	\end{align*}
\end{proof}
\fi

\subsection{Chernoff's Bound on Random Sum}
\begin{lem}\label{lem:random-sum-chernoff-bound}
Fix any $p \in (0,1)$ and any $p' \in (0,1)$.	Define the random sum
	\begin{align*}
		S \triangleq \sum_{i=1}^{N} X_i\, ,
	\end{align*}
	where $X_i$'s are i.i.d. random variables and have distribution\footnote{Here, by $\textup{Geometric}(p)$ we mean the distribution $\prob(X_i = k) = p (1-p)^{k-1}$ for $k\geq 1$, i.e., the support of the distribution is $\{1, 2, \dots \}$, and its expectation is $1/p > 1$.} $\textup{Geometric}(p)$,
	and $N\sim\textup{Geometric}(p')$ and is independent of $X_i$'s.
	Let $S_i$'s be i.i.d. random variables and have the same distribution as $S$, for $\lambda > \mathbb{E}[S]=1/(pp')$ we have
\begin{align*}
\mathbb{P}\left(\frac{1}{n}\sum_{i=1}^{n}S_i \geq \lambda\right)
\leq
\exp\left(-\frac{n}{2\lambda^2}\left(\lambda - \mathbb{E}[S]\right)^2\right)\, .
\end{align*}
\end{lem}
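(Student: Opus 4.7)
The plan is to first establish the key observation that the compound geometric sum $S$ is itself $\textup{Geometric}(pp')$-distributed. This can be verified by direct MGF computation: conditioning on $N$, $M_S(t) = \mathbb{E}[M_X(t)^N] = G_N(M_X(t))$ where $G_N(z) = p'z/(1-(1-p')z)$ and $M_X(t) = pe^t/(1-(1-p)e^t)$. A short algebraic simplification yields $M_S(t) = pp' e^t/(1 - (1-pp') e^t)$, matching the MGF of $\textup{Geometric}(pp')$. (Equivalently, couple the $X_i$'s and $N$ with a single i.i.d.\ sequence in which each trial is independently a ``$p$-success'' with probability $p$ and, if so, a ``final success'' with probability $p'$; then $S$ is the index of the first final success, which is directly $\textup{Geometric}(pp')$.)

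Let $q \defeq pp'$ and $\mu \defeq \mathbb{E}[S] = 1/q$. In the second step I would exploit the standard negative-binomial representation: viewing an i.i.d.\ $\textup{Bernoulli}(q)$ sequence, $\sum_{i=1}^n S_i$ is exactly the index of the $n$-th success, and so
\[
\left\{ \sum_{i=1}^n S_i \geq n\lambda \right\} = \left\{ \textup{Bin}(N_0, q) \leq n-1 \right\}, \qquad N_0 \defeq \lceil n\lambda \rceil - 1.
\]
I would then apply the standard multiplicative Chernoff bound for the lower tail of a binomial: for any $\delta \in [0,1]$,
\[
\mathbb{P}\!\left( \textup{Bin}(N_0, q) \leq (1-\delta) N_0 q \right) \leq \exp\!\left(-\delta^2 N_0 q / 2\right).
\]
Setting $(1-\delta) N_0 q = n - 1$ (one checks $\delta \in [0,1]$ since $\mu > 1$ and $\lambda > \mu$ force $N_0 q \geq (n\lambda - 1)/\mu > n-1$), the exponent becomes $(N_0 q - (n-1))^2/(2 N_0 q)$.

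Finally I would verify that this exponent is at least the target $n(\lambda - \mu)^2/(2\lambda^2)$. Using $n\lambda - 1 \leq N_0 < n\lambda$, I obtain
\[
N_0 q - (n-1) \geq \frac{n\lambda - 1}{\mu} - (n-1) = \frac{n(\lambda - \mu)}{\mu} + \frac{\mu - 1}{\mu} \geq \frac{n(\lambda-\mu)}{\mu},
\]
the final step using $\mu \geq 1$. Combined with $N_0 q \leq n\lambda/\mu$:
\[
\frac{(N_0 q - (n-1))^2}{2 N_0 q} \geq \frac{(n(\lambda-\mu)/\mu)^2}{2 n\lambda/\mu} = \frac{n(\lambda-\mu)^2}{2\lambda\mu} \geq \frac{n(\lambda-\mu)^2}{2\lambda^2},
\]
with the last inequality being the hypothesis $\mu \leq \lambda$. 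Plugging back into the Chernoff bound gives the claim. There is no real obstacle once $S \sim \textup{Geometric}(pp')$ is identified; the rest is standard multiplicative Chernoff machinery. The only care needed is the $\lceil \cdot \rceil$ bookkeeping and noting that the slightly loose $\lambda^2$ in the denominator (rather than the tighter $\lambda\mu$) is precisely the slack absorbed in the final step $\mu \leq \lambda$.
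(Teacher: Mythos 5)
Your proof is correct, but it takes a genuinely different route from the paper's. The paper computes the moment generating function of $S$, derives the Legendre transform $\Lambda^*(\lambda)$ explicitly, and then lower-bounds $\Lambda^*(\lambda)$ by $(\lambda-\mathbb{E}[S])^2/(2\lambda^2)$ via Taylor's theorem using the explicit second derivative $\frac{d^2\Lambda^*}{d\lambda^2}(\lambda') = \frac{1}{\lambda'(\lambda'-1)} \geq \frac{1}{\lambda^2}$; the generic Chernoff bound $\mathbb{P}(\frac{1}{n}\sum_i S_i \geq \lambda) \leq e^{-n\Lambda^*(\lambda)}$ then finishes. You instead observe at the outset that $S \sim \textup{Geometric}(pp')$ --- a fact that is actually latent in the paper's own MGF computation (its denominator term $q + q'p$ equals $1 - pp'$) but never remarked upon --- and then reduce the tail of $\sum_i S_i$ to a binomial lower tail via the negative-binomial duality, finishing with the standard multiplicative Chernoff bound. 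I checked the details: the coupling argument for $S \sim \textup{Geometric}(pp')$ is valid, the event identity $\{\sum_{i=1}^n S_i \geq n\lambda\} = \{\textup{Bin}(N_0,q) \leq n-1\}$ with $N_0 = \lceil n\lambda\rceil - 1$ is exact, the verification $\delta \in [0,1]$ goes through because $\mu > 1$ and $\lambda > \mu$, and the final chain of inequalities (lower-bounding the numerator $N_0 q - (n-1)$ and upper-bounding the denominator $N_0 q$ separately, then using $\mu \leq \lambda$) is sound. Your approach is more elementary --- it avoids computing and differentiating the rate function --- at the cost of relying on the distributional identification, which is special to the geometric-compound-geometric structure; the paper's Legendre-transform route would generalize to other compound sums where no such identity is available. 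Both yield exactly the stated bound.
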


\begin{proof}
	Denote $q \triangleq 1-p$, $q'\triangleq 1-p'$. In the first step, we derive the moment generating function of $S$, which we denote by $M(t)$. Note that
	\begin{align*}
		M(t) =\ \mathbb{E}[e^{tS}]
		=\ \mathbb{E}\left[e^{t\sum_{i=1}^{N}X_i}\right]
		=\ \mathbb{E}\left[\left[\mathbb{E}e^{tX}\right]^N\right]
		=\ \mathbb{E}\left[ \gamma^N\right]\, .
	\end{align*}
	where
\begin{align}
\gamma \triangleq
\left \{
\begin{array}{ll}
\frac{p e^t}{1 - q e^t}  & \textup{if }  qe^t < 1 \, ,\\
\infty & \textup{otherwise}\, .
\end{array}
\right .
\end{align}
we have
\begin{align*}
\mathbb{E}[\gamma^N]
=
\sum_{k=1}^{\infty} \gamma^k (1-p')^{k-1} p'
=
p' \gamma \sum_{k=1}^{\infty} \gamma^{k-1} (q')^{k-1}
=
\left \{
\begin{array}{ll}
 \frac{p'\gamma}{1 - \gamma q'} & \textup{if } \gamma q' < 1 \, ,\\
\infty & \textup{otherwise}\, .
\end{array}
\right .
%=
%\frac{\gamma p'}{1 - q' \gamma}\, .
\end{align*}
By plugging in $\gamma$, we obtain
\begin{align*}
M(t)
=
\left \{
\begin{array}{ll}
\frac{p'\frac{p e^t}{1 - q e^t}}
{1 - q'\frac{p e^t}{1 - q e^t}}
= \frac{p' p e^t}
{1 - e^t(q+ q'p)} & \textup{if } t < \bar{t} \triangleq \log(1/(q+q'p)) \, ,\\
\infty & \textup{otherwise} \, .
\end{array}
\right .
\end{align*}
Here we used that $q+ q'p >q$ to simplify the condition for $M(t)$ to be finite to $e^t (q+ q'p) < 1 \Leftrightarrow t < \bar{t}$.

Now we derive the convex conjugate of $\log M(t)$, a.k.a. the large deviation rate function.
Note that $\Ex[S] = 1/(pp')$. Define $\Lambda^*: [1/(pp'), \infty) \rightarrow \mathbb{R}$ as
\begin{align*}
\Lambda^*(\lambda)
\triangleq
\sup_{t\geq 0}
\left(
\lambda t
-
\log M(t)
\right) = \sup_{t \in [0, \bar{t})}
\left(
\lambda t
-
\log M(t)
\right)
\end{align*}
Fix $\lambda \geq 1/(pp')$ and let $t^*$ be the maximizer of the supremum above. The derivative of $\lambda t - \log M(t)$ with respect to $t$ for $t \in [0, \bar{t})$ is
\begin{align*}
\lambda -1 -
\frac{e^{t}(q+ q'p)}
{1 - e^{t}(q+ q'p)}
= \lambda -
\frac{1}
{1 - e^{t}(q+ q'p)}\, ,
\end{align*}
and in particular it is decreasing in $t$, corresponding to the fact that $\lambda t - \log M(t)$ is concave in $t$ (we already knew concavity holds because the log moment generating function is always convex). Note further that the derivative at $t=0$ is non-negative since
$$
\lambda -
\frac{1}
{1 - (q+ q'p)} = \lambda - 1/(pp') \geq 0 \, ,
$$
and that the derivative eventually becomes negative since it tends to $- \infty$ as $t \rightarrow \bar{t}^-$. Hence the first order condition will give us the maximizer $t^* \in [0, \bar{t})$ of $\lambda t - \log M(t)$ as follows:
\begin{align*}
\lambda
=
\frac{1}
{1 - e^{t^*}(q+ q'p)}
 \qquad \Rightarrow \qquad
e^{t^*} = \frac{1 - \frac{1}{\lambda}}{q  + q' p }\, .
\end{align*}
Therefore, we have
\begin{align*}
\Lambda^*(\lambda) =&\ \lambda \log\left(1 - \frac{1}{\lambda}\right)
-
\lambda \log\left(q  + q' p  \right)
-
\log\left(
\frac{p' p \frac{1 - \frac{1}{\lambda}}{q  + q' p }}
{1/\lambda}
\right) \\
=&\ \lambda \log\left(1 - \frac{1}{\lambda}\right)
-
\lambda \log\left(q  + q' p  \right)
-
\log\left(
\lambda - 1
\right)
+
C,
\end{align*}
where $C$ is a constant.
A short calculation tells us that
\begin{align}
\frac{d\Lambda^*}{d\lambda}(\lambda)
=\
\log\left(1 - \frac{1}{\lambda}\right)
-
\log\left(q  + q' p  \right)\, ,
\qquad
\frac{d^2\Lambda^*}{d\lambda^2}(\lambda) = \frac{1}{\lambda(\lambda-1)}\, .
\label{eq:Lambda-dlambdas}
\end{align}

Let $S_1,\cdots,S_n$ be i.i.d. random variables with the same distribution as $S$.
Using Chernoff's bound, for $\lambda \geq \mathbb{E}[S]$ we have
\begin{align}
  \mathbb{P}\left(\frac{1}{n}\sum_{i=1}^{n}S_i \geq \lambda\right)
  \leq
  \exp\left(-n\Lambda^*(\lambda)\right).
\label{eq:Chernoff-geomx2}
\end{align}
Since $\Lambda^*(\cdot)$ is a  large deviation rate function, we have that $\Lambda^*(\mathbb{E}[S])=0$ and $\frac{d\Lambda^*}{d\lambda}(\mathbb{E}[S])=0$. We will now use  Taylor's theorem taking terms up to second order for $\Lambda^*(\lambda)$ around $\mathbb{E}[S]$ to obtain the desired bound.  Note that at any $\lambda' \in (\Ex[S], \lambda)\,$, using the explicit form of $\frac{d^2 \Lambda^*}{d\lambda^2}$ in \eqref{eq:Lambda-dlambdas}  we have
\begin{align*}
  \frac{d^2\Lambda^*}{d\lambda^2}(\lambda') \geq \frac{1}{(\lambda')^2} \geq \frac{1}{\lambda^2} \, ,%\left(\lambda - \mathbb{E}[S]\right)^2\, .
\end{align*}
where we used $\Ex[S]>1$. Now, using Taylor's theorem, we know that for some $\lambda' \in (0, \lambda)$ we have
$$\Lambda^*(\lambda)= \frac{1}{2}  \frac{d^2\Lambda^*}{d\lambda^2} (\lambda')  \left(\lambda - \mathbb{E}[S]\right)^2 \geq \frac{1}{2\lambda^2} \left(\lambda - \mathbb{E}[S]\right)^2\, .$$
Plugging into \eqref{eq:Chernoff-geomx2}, we obtain
\begin{align*}
  \mathbb{P}\left(\frac{1}{n}\sum_{i=1}^{n}S_i \geq \lambda\right)
  \leq
  \exp\left(-\frac{n}{2\lambda^2}\left(\lambda - \mathbb{E}[S]\right)^2\right)
\end{align*}
as required.
  \end{proof}

\iffalse
\subsection{Binomial Tail Bound}
\begin{lem}[Binomial tail bound]\label{lem:binomial-tail-bound}
	Suppose $X\sim$Binomial$(n,p)$. We have for $a>p$
	\begin{align*}
	\mathbb{P}(X \geq na)
	\leq
	\exp\left(-n\left(
	a \log\left(\frac{a}{p}\right)
	+
	(1-a) \log\left(\frac{1-a}{1-p}\right)
	\right)\right) \, .
	\end{align*}
	If $p\ll 1$, the following bound is useful.
	\begin{align*}
	\mathbb{P}(X \geq na)
	\leq
	\exp\left(-n\left(
	a \log\left(\frac{a}{p}\right)
	- a + p
	\right)\right)\, .
	\end{align*}
	For $a<p$, we have
	\begin{align*}
	\mathbb{P}(X \leq na)
	\leq
	\exp\left(-n\left(
	a \log\left(\frac{a}{p}\right)
	+
	(1-a) \log\left(\frac{1-a}{1-p}\right)
	\right)\right) \, .
	\end{align*}
	If $p\ll 1$, the following bound is useful.
	\begin{align*}
	\mathbb{P}(X \leq na)
	\leq
	\exp\left(-n\left(
	a \log\left(\frac{a}{p}\right)
	- a + p
	\right)\right)\, .
	\end{align*}
\end{lem}

\begin{proof}
	The first and third inequalities are Chernoff's bound.
	The second and last inequalities hold because
	\begin{align*}
	\log \left(\frac{1-a}{1-p}\right)
	=
	\log(1-a) - \log(1-p)
	\geq
	\frac{p-a}{1-a}
	\end{align*}
\end{proof}
\fi

\subsection{Notations and Preliminary Observations} \label{append:notation}

We here introduce the variables that formally describe the state of a random matching market over the course of the men-proposing deferred-acceptance (MPDA) procedure (Algorithm~\ref{alg:2MPDA}).

The \emph{time} $t$ ticks whenever a man makes a proposal.
Let $I_t \in \mathcal{M}$ be the man who proposes at time $t$, and $J_t \in \mathcal{W}$ be the woman who receives that proposal.
We define $M_{i,t} \triangleq \sum_{s=1}^t \ind( I_s = i )$ that counts the number of proposals that a man $i$ has made up to time $t$, and define $W_{j,t} \triangleq \sum_{s=1}^t \ind( J_s = j )$ that counts the number of proposals that a woman $j$ has received up to time $t$.
We will often use $\vec{M}_t \triangleq (M_{i,t})_{i \in \mathcal{M}}$ and $\vec{W}_t \triangleq ( W_{j,t} )_{j \in \mathcal{W}}$ as vectorized notations.
By definition, we have
\begin{equation*}
	\sum_{i \in \mathcal{M}} M_{i,t} = \sum_{j \in \mathcal{W}} W_{j,t} = t
	\, ,
\end{equation*}
for any $0 \leq t \leq \tau$ where $\tau$ is the total number of proposals under MPDA.

Let $\mathcal{H}_t \subseteq \mathcal{W}$ be the set of women that the man $I_t$ had proposed to before time $t$: i.e., $\mathcal{H}_t \triangleq \{ J_s : I_s = i \text{ for some } s \leq t-1 \}$ and we have $|\mathcal{H}_t| < d$.% \pq{should be $\leq$} \sm{if he already made $d$ proposals, i.e., $|\mathcal{H}_t| = d$, he cannot be $I_t$.}.
According to the principle of deferred decisions, the $t^\text{th}$ proposal goes to one of women that the man $I_t$ had not proposed to yet: i.e., $J_t$ is sampled from $\mathcal{W} \setminus \mathcal{H}_t$ uniformly at random.
And then, the proposal gets accepted by the woman $J_t$ with probability $1/(W_{J_t,t-1}+1)$.

We denote \emph{the current number of unmatched men} and \emph{women} at time $t$ by $\delta^m[t]$ and $\delta^w[t]$, respectively.
More precisely, $\delta^m[t]$ represents the number of men who have exhausted all his preference list but left unmatched\footnote{
	It is important that the definition of $\delta^m[t]$ does not count the men who have not entered the market until time $t$.
	In other words, it counts the number of men who are ``confirmed'' to be unmatched under MOSM, and correspond to the variable $\delta^m$ described in Algorithm \ref{alg:2MPDA}.
	This quantity is different from the number of unmatched men under the current matching $\mu_t$, which may decrease when a man proposes to a woman who has never received any proposal.
} at time $t$: i.e., $\delta^m[t] \triangleq \sum_{i \in \mathcal{M}} \ind( M_{i,t} = d, \mu_t(i) = i )$ where $\mu_t$ is the current matching at time $t$.
Also note that once a woman receives a proposal, she remains matched until the end of MPDA procedure: i.e., $\delta^w[t] \triangleq \sum_{j \in \mathcal{W}} \ind(\mu_t(j) = j) = \sum_{j \in \mathcal{W}} \ind( W_{j,t} = 0 )$.
We observe that $\delta^m[t]$ starts from zero (at $t=0$) and is \emph{non-decreasing} over time, and $\delta^w[t]$ starts from $n$ and is \emph{non-increasing} over time.

Recall that $\tau$  is the \emph{the total number of proposals} that is made until the end of MPDA, i.e., the time at which the men-optimal stable matching (MOSM) is found.
MPDA ends when there is no more man to make a proposal, i.e., when every unmatched man had already exhausted his preference list.
In \eqref{eq:tau}, we expressed $\tau$ as a stopping time, namely,
\begin{align*}
	\tau = \min\{t\geq 1: \delta^m[t] = \delta^w[t] + k \}\, .
\end{align*}
In particular, we have
\begin{equation*}
	\delta^m[\tau] = \delta^w[\tau]+k
	\, ,
\end{equation*}
since the number of matched men equals to the number of matched women under any feasible matching.
Furthermore, we have
\begin{equation*}
	\Rmen(\textup{MOSM}) = \frac{\tau+\delta^m[\tau]}{n+k}
	\, ,
\end{equation*}
by the definition of men's rank.
\\

\noindent
\textbf{An extended process.}
We introduce an \emph{extended process} as a natural continuation of the MPDA procedure that continues to evolve even after the MOSM is found (i.e., the extended process continues for $t > \tau$).
Recall that the MPDA procedure under the principle of deferred decisions works as follows: As described in Algorithm \ref{alg:2MPDA}, $n+k$ men in $\mathcal{M}$ sequentially enter the market one by one, and whenever a new man enters, he makes a proposal and the acceptance/rejection process continues until all men who have entered are either matched or have reached the bottom of their preference lists (i.e., until it finds a new MOSM among the men who have entered including the newly entered man).

To define the extended process, we start by defining an extended market, which has the same $n$ women but an infinite supply of men: $n+k$ ``real'' men $\cM$ who are present in the original market, and an infinity of ``fake'' men $\cMf$ in addition. The distribution of preferences in the extended market is again as described in Section~\ref{sec:model} (in particular, the preference distribution does not distinguish real and fake men).
We then define the \emph{extended process} as tracking the progress of Algorithm~\ref{alg:2MPDA} on the extended market: the $n+k$ real men enter first in Algorithm~\ref{alg:2MPDA}, as before, and we then continue Algorithm~\ref{alg:2MPDA} after time $\tau$ for all $t> \tau$ by continuing to introduce additional (fake) men sequentially after time $\tau$.
In particular, the extended process is identical to the original MPDA process until the MOSM is found (i.e., for $t \leq \tau$).
%After time $\tau$, whenever there is no man waiting to make a proposal (i.e., all men are either matched or have exhausted all their preference lists), we introduce another (fake) man whose preference over women is drawn uniformly at random, and the proposal process continues until it finds a new MOSM including this newly introduced man.

Observe that in this extended process, the MOSM among $\mathcal{M} \cup \mathcal{W}$ can be understood as a stable outcome found after $n+k$ men have entered the market.
Therefore, all the aforementioned notations ($I_t$, $J_t$, $M_{i,t}$, $W_{j,t}$, $\mathcal{H}_t$, $\mu_t$, $\delta^m[t]$, $\delta^w[t]$) are well-defined for any time $t \geq 0$ while preserving all their properties characterized above, %, except that the set of men $\mathcal{M}$ is now expanding over time as we keep introducing additional men into the market: to be precise, we denote it by $\mathcal{M}_t$.
and we similarly denote by $\hcM[t] \subset \cM \cup \cMf$ the set of men who have entered so far (consistent with the notation in Algorithm~\ref{alg:2MPDA}).
In the later proofs, we utilize these notations and their properties (e.g., $\delta^m[\tau] \leq \delta^m[t]$ implies that $\tau \leq t$ since $\delta^m[t]$ is non-decreasing over time for $t=0,1,\ldots$).
\\

\noindent
\textbf{Balls-into-bins process analogy.}
When we analyze the women side, we heavily utilize the balls-into-bins process as done in \cite{knuth1976mariages}.
We make an analogy between the number of proposals that each of $n$ women has received (denoted by $W_{j,t}$) and the number of balls that had been placed into each of $n$ bins.
For example, the number of unmatched women at time $t$ corresponds to the number of empty bins after $t$ balls had been placed.

Recall that, according to the principle of deferred decisions, the $t^\text{th}$ proposal goes to one of women uniformly at random among whom he had not yet proposed to (i.e., $\mathcal{W} \setminus \mathcal{H}_t$), and thus the recipients of proposals, $J_1, J_2, \ldots$, are not independent.
In the balls-into-bins process, in contrast, the $t^\text{th}$ ball is placed into one of $n$ bins uniformly at random, independently of the other balls' placement.
Despite this difference (sampling without replacement v.s. sampling with replacement), the balls-into-bins process provides a good enough approximation as the number of proposals made by an individual man (i.e., $|\mathcal{H}_t|$) is much smaller than the total number of men and women.
{
We will show that (e.g., in Lemma~\ref{lem:acceptance-probability-upper-bound} in the next section) that the corresponding error term can be effectively bounded.
}
%We will show (e.g., in Lemma~\ref{lem:W-NA} in the next section) that the corresponding error term can be effectively bounded through careful analysis, in which we also prove and utilize the fact that $\{ W_{j,t} \}_{j \in \mathcal{W}}$ are (conditionally) negative associated. \sm{edit the last sentence}

%\newpage
\section{Proof for Small to Medium-Sized $d$: the case of $d=o(\log^2 n)$, $d=\omega(1)$} \label{append:small-medium-d-proof}
In this section, we consider the case such that $d=o(\log^2 n)$ and $d=\omega(1)$.
We will prove the following quantitative version of Theorem \ref{thm:main-result}.
\begin{thm}[Quantitative version of Theorem \ref{thm:main-result}]\label{thm:main-result-append}
		Consider a sequence of random matching markets indexed by $n$, with $n+k$ men and $n$ women ($k=k(n)$ can be positive or negative), and the men's degrees are $d=d(n)$.
	If $|k|=O(ne^{-\sqrt{d}})$, $d=\omega(1)$ and $d=o(\log^2 n)$, then with probability $1 - O(\exp(-d^{\frac{1}{4}}))$ we have
	\begin{enumerate}
		\item \emph{(Men's average rank of wives)}
		\begin{align*}
		\left| \Mrank - \sqrt{d} \right|
		\leq\ 6d^{\frac{1}{4}}
		\, .
		\end{align*}
		\item \emph{(Women's average rank of husbands)}
		\begin{align*}
		\left| \Wrank - \sqrt{d} \right|
		\leq\ 8d^{\frac{1}{4}}
		\, .
		\end{align*}
		\item \emph{(The number of unmatched men)}
		\begin{align*}
		\left|\log \delta^m - \log ne^{-\sqrt{d}}\right| \leq\ 3 d^{\frac{1}{4}}
		\, .
		\end{align*}
		\item \emph{(The number of unmatched women)}
		\begin{align*}
		\left|\log \delta^w - \log ne^{-\sqrt{d}}\right| \leq\ 2.5 d^{\frac{1}{4}}
		\, .
		\end{align*}
	\end{enumerate}
\end{thm}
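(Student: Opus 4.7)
The plan is to follow the McVitie--Wilson sequential version of men-proposing DA (Algorithm~\ref{alg:2MPDA}) and reduce the four estimates in Theorem~\ref{thm:main-result-append} to sharp two-sided control of the stopping time $\tau = \min\{t : \delta^m[t] = \delta^w[t] + k\}$, namely $\tau = (1 \pm d^{-1/4}) n \sqrt{d}$ with probability $1 - O(\exp(-d^{1/4}))$. Once $\tau \approx n\sqrt{d}$ is established, $\Rmen \approx \sqrt{d}$ follows immediately from $\Rmen(\textup{MOSM}) = (\tau + \delta^m[\tau])/(n+k)$ since $\delta^m[\tau] \ll \tau$. For the women's side I invoke the balls-into-bins analogy: the $\tau$ proposals land approximately uniformly over the $n$ women (up to an $O(d/n)$ correction due to each man proposing to distinct women), so the distribution of the number of proposals received by a woman is close to $\textup{Poisson}(\tau/n) \approx \textup{Poisson}(\sqrt{d})$. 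Lemma~\ref{lem:balls-into-bins-uniform} then gives $\delta^w \approx ne^{-\sqrt{d}}$, Lemma~\ref{lem:balls-into-bins-reciprocal} gives $\Rwomen \approx n/\tau \approx \sqrt{d}$ (since for each matched woman, the rank of her husband is $\Geo(1/(W_j+1))$-distributed), and the identity $\delta^m[\tau] = \delta^w[\tau] + k$ together with $|k| = O(ne^{-\sqrt{d}})$ transfers the $\delta^w$ estimate to $\delta^m$.

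For the upper bound $\tau \leq (1 + d^{-1/4}) n \sqrt{d}$, I work in the extended process of Section~\ref{append:notation} and evaluate both counters at $t_+ = (1 + d^{-1/4}) n \sqrt{d}$. The upper bound on $\delta^w[t_+]$ follows from balls-into-bins concentration: $\delta^w[t_+] \lesssim n (1-1/n)^{t_+} \approx n e^{-\sqrt{d}(1+d^{-1/4})}$. The lower bound on $\delta^m[t_+]$ is the heart of the argument and uses the ``fake'' process described in Section~\ref{sec:proof-sketch}: I lower bound $\delta^m[t_+]$ by the number $N_d$ of disjoint runs of $d$ consecutive rejections occurring within the first $t_+$ proposals, since any such run certifies that at least one man has exhausted his list. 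Because each individual proposal is rejected with probability close to $1 - 1/(\sqrt{d}+1) \approx 1 - d^{-1/2}$ when the typical woman's proposal count is near $\sqrt{d}$, a geometric burst analysis (formalized via Lemma~\ref{lem:random-sum-chernoff-bound} applied to inter-burst gap lengths, which are sums of geometric numbers of geometric random variables) yields $N_d \gtrsim n e^{-\sqrt{d}(1 + d^{-1/4}/2)}$, which strictly exceeds $\delta^w[t_+] + k$. Hence $\tau \leq t_+$ with the claimed probability.

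For the lower bound $\tau \geq (1 - d^{-1/4}) n \sqrt{d}$, I first use the just-established upper bound on $\tau$ to control the distribution of $\vec{W}_\tau$ via balls-into-bins, and then upper bound $\Ex[\delta^m]$ by a last-man-in argument: by symmetry under relabeling the entry order, $\Ex[\delta^m] = (n+k) \cdot \prob(\textup{last entering man is unmatched})$. Conditioned on the proposal counts of $\vec{W}_\tau$ arising from the first $n+k-1$ men (which typically concentrate around $\sqrt{d}$ per woman with few ``overloaded'' women), each of the last man's $d$ proposals is accepted independently with probability at least roughly $1/(\sqrt{d}+1)$, so the probability he exhausts his list is at most $(1 - 1/(\sqrt{d}+1))^d \lesssim e^{-\sqrt{d}(1-o(1))}$. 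This yields $\Ex[\delta^m] \lesssim n e^{-\sqrt{d}(1-o(1))}$ and, via Markov, $\delta^m \leq n e^{-\sqrt{d} + d^{1/4}}$ with probability $1 - \exp(-\Omega(d^{1/4}))$. The identity then gives the same upper bound on $\delta^w$. But if $\tau$ were as small as $(1 - d^{-1/4}) n \sqrt{d}$, balls-into-bins would force $\delta^w[\tau] \gtrsim n e^{-\sqrt{d}(1 - d^{-1/4})} \gg n e^{-\sqrt{d} + d^{1/4}}$, a contradiction.

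The principal obstacle is the upper bound step: the ``count $d$-rejections-in-a-row'' proxy for $\delta^m[t]$ is conservative, since a man can be eliminated via fewer than $d$ consecutive rejections when interspersed with temporary acceptances, and bounding the resulting slack is precisely what forces the assumption $d = o(\log^2 n)$ (rather than any $d \leq c \log^2 n$ for $c < 1$ as simulations suggest). Showing that this slack costs only a $d^{O(1)}$ factor, and not a constant factor, is delicate and requires careful handling of the geometric burst lengths and their dependencies. A secondary but manageable obstacle is quantifying the error from the principle-of-deferred-decisions sampling without replacement, which I control using the bound $|\mathcal{H}_t| < d \ll n$ so that the induced acceptance probability deviates from the naive balls-into-bins value by only $O(d/n)$, which is negligible against the $e^{-\sqrt{d}}$ scale of the quantities we track.
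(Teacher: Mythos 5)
Your overall plan coincides with the paper's: two-sided control of $\tau$ via (i) an upper bound obtained by showing $\delta^w[(1+d^{-1/4})n\sqrt d]$ falls below a lower bound on $\delta^m$ at the same time, where $\delta^m$ is undercounted by disjoint $d$-rejections-in-a-row bursts analyzed with the geometric-of-geometric Chernoff bound (Lemma~\ref{lem:random-sum-chernoff-bound}); and (ii) a lower bound obtained from a last-man-in estimate of $\Ex[\delta^m]$, Markov's inequality, the identity $\delta^m=\delta^w+k$, and the fact that $\delta^w[t]$ cannot decay fast enough if $\tau$ were small. This is exactly the architecture of Propositions~\ref{prop:total-proposal-upper-bound}--\ref{prop:women-ranking-upper-bound}, and you correctly identify both the $d$-rejections-in-a-row slack (which forces $d=o(\log^2 n)$) and the sampling-without-replacement error as the places where care is needed.

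There is, however, one step in your sketch that would fail as written. In the last-man-in argument you assert that the probability the last man exhausts his list is at most $\bigl(1-1/(\sqrt d+1)\bigr)^d$ because each proposal is ``accepted independently with probability at least roughly $1/(\sqrt d+1)$.'' But an accepted proposal does not certify that the man ends up matched: his acceptance by woman $j$ triggers a rejection chain, and if that chain ever returns to $j$ with a better suitor he is displaced and resumes proposing, so he can exhaust his list without ever receiving $d$ outright rejections. Bounding $\prob(\mu(i)=i)$ by the product of per-proposal rejection probabilities therefore requires showing that displacement is rare. The paper's Lemma~\ref{lem:unmatched-women-expected-ub} does precisely this: conditioning on the high-probability events $\tau\leq n(\sqrt d+d^{1/4})$ and $\delta^w[t]\geq ne^{-\sqrt d-2d^{1/4}}$, it shows each step of the chain is at least $\delta^w[t]\geq\sqrt n$ times more likely to hit an unmatched woman (terminating the chain) than to return to $j$, so the displacement probability per accepted proposal is $O(1/\sqrt n)$ and the ``successful proposal'' probability is still $\geq 1/(\sqrt d+1.3d^{1/4})$. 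Your sketch needs this rejection-chain analysis to be a proof. A lesser imprecision: for the women's side, the husband's rank is not $\Geo(1/(W_j+1))$-distributed (it is $1$ plus a sum of $w_j'$ indicators with conditional mean $w_j'/(w_j+1)$, supported on her finite list), and $\Rwomen\approx n/\tau$ should read $\Rwomen\approx d\cdot n/\tau\approx\sqrt d$; moreover the residual degrees $W_{j,(n+k)d}-W_{j,\tau}$ are correlated with $W_{j,\tau}$, which is why the paper introduces the continue-proposing process and the coupled resampling argument of Lemma~\ref{lem:women-ranking-preliminary} rather than citing Lemma~\ref{lem:balls-into-bins-reciprocal} directly.
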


The proofs are organized as follows:
\begin{itemize}
	\item (Section~\ref{append:small-medium-d-proof-step1}) We first show that with high probability, the stopping time of MPDA (Algorithm~\ref{alg:2MPDA}), namely, $\tau$, is bounded above as $\tau\leq n\left(\sqrt{d} + d^{\frac{1}{4}}\right)$, by utilizing the coupled extended process defined in Section \ref{append:notation}.
	This yields a high probability upper bound on $\Mrank$ and a lower bound on the number of unmatched men $\delta^m$ and unmatched women $\delta^w$.

	\item (Section~\ref{append:small-medium-d-proof-step2}) We prove the complementary bounds on $\Mrank$, $\delta^m$, and $\delta^w$: a lower bound on $\Mrank$ and an upper bound on the number of unmatched men $\delta^m$ and unmatched women $\delta^w$.
	To this end, we start by analyzing the rejection chains triggered by the last man to enter in MPDA, and deduce upper bounds on $\mathbb{E}[\delta^m]$ and $\mathbb{E}[\delta^w]$, using the fact that the order in which men enter does not matter.
	Using Markov's inequality, we then obtain high probability upper bounds on $\delta^m$ and $\delta^w$, which lead to lower bounds on $\tau$ and $\Mrank$.
	
	\item (Section~\ref{append:small-medium-d-proof-step3}) We construct the concentration bounds on $\Wrank$ based upon the concentration results on $\tau$.
	{In this step, we utilizes the balls-into-bins process to analyze the women's side while carefully controlling the difference between the MPDA procedure and the balls-into-bins process.}
	%\yk{isn't this the balls in bins stuff? a hint to the reader about the nature of the ``careful analysis'' might be helpful.}
		This completes the proof of Theorem \ref{thm:main-result-append}.
\end{itemize}

%\newpage
\subsection{Step 1: Upper Bound on the Total Number of Proposals $\tau$} \label{append:small-medium-d-proof-step1}
We prove the following two propositions.

\begin{prop}[Upper bound on men's average rank]\label{prop:total-proposal-upper-bound}
Consider the setting of Theorem \ref{thm:main-result}. With probability $1 - O(\exp(-\sqrt{n}))$, we have the following upper bounds on the total number of proposals and men's average rank:
	\begin{align*}
	\tau \leq\ n\left(\sqrt{d} + d^{\frac{1}{4}}\right)\, ,\qquad
		\Mrank
		\leq\
		\sqrt{d}
		+
		2d^{\frac{1}{4}}
		\, .
	\end{align*}
\end{prop}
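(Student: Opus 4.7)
The plan is to work in the extended process from Section~\ref{append:notation} with $t^* \triangleq n(\sqrt{d}+d^{1/4})$, and show
\begin{align*}
\Pr\bigl(\delta^m[t^*] - \delta^w[t^*] \geq k\bigr) \;\geq\; 1 - O(e^{-\sqrt{n}}).
\end{align*}
Because $\delta^m[t]$ is non-decreasing and $\delta^w[t]$ is non-increasing in the extended process, and the extended process agrees with the true process up to time $\tau$, the event above is equivalent to $\{\tau \leq t^*\}$. The bound $\Mrank \leq \sqrt{d}+2d^{1/4}$ then follows from the identity $(n+k)\Mrank = \tau + \delta^m[\tau]$, the trivial estimate $\delta^m[\tau] \leq n+k$, and the assumptions $|k| = o(n)$, $d^{1/4} = \omega(1)$.

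I would upper bound $\delta^w[t^*]$ via a balls-into-bins coupling: since each man's $d$ proposals go to distinct women chosen uniformly at random, the per-woman proposal counts are strictly more uniform than for $t^*$ i.i.d.\ uniform balls into $n$ bins, so the number of unproposed women is stochastically dominated by the empty-bin count in the i.i.d.\ model. Applying Lemma~\ref{lem:balls-into-bins-uniform} with $\epsilon = n^{-1/4}$ gives
\begin{align*}
\delta^w[t^*] \;\leq\; n\bigl(1-1/n\bigr)^{t^*} + n^{3/4} \;\leq\; n e^{-\sqrt{d}-d^{1/4}} + n^{3/4}
\end{align*}
with probability at least $1-e^{-2\sqrt{n}}$, which under $d=o(\log^2 n)$ is $o(n e^{-\sqrt{d}})$.

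To lower bound $\delta^m[t^*]$, I would decompose MPDA into \emph{chains}: each new man entering initiates a chain of successive proposals (possibly involving several proposers due to displacements) which ends either (a) by acceptance at a previously-unmatched woman, or (b) by some proposer exhausting his preference list and contributing one unit to $\delta^m$. The per-proposal rejection probability at time $t$, namely $1 - \frac{1}{n-|\mathcal{H}_t|}\sum_{j \notin \mathcal{H}_t}(W_{j,t}+1)^{-1}$, is at least $1-(1+o(1))/\sqrt{d}$ throughout the window $t \in (t^*/2, t^*]$ on a $1-O(e^{-\sqrt{n}})$-probability event (by Lemma~\ref{lem:balls-into-bins-reciprocal} applied to $\vec{W}_t$ together with $|\mathcal{H}_t| \leq d \ll n$). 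Since the average chain length near $t^*$ is $O(\sqrt{d})$, roughly $\Theta(n)$ chains complete within this window, and via a stochastic-dominance comparison to a tractable ``fake'' process (as foreshadowed in the introduction) in which a man who is accepted then later rejected is permitted $d$ fresh proposals, I would show that the number of type-(b) chains stochastically dominates a Binomial with mean $\Theta(n e^{-\sqrt{d}})$. A negative-association Chernoff bound (Lemmas~\ref{lem:NA} and \ref{lem:Hoeffding-NA}) then yields $\delta^m[t^*] \geq c\, n e^{-\sqrt{d}}$ for a universal $c > 0$ with the required probability. Combining the two steps, $\delta^m[t^*] - \delta^w[t^*] \geq c\, n e^{-\sqrt{d}}(1-o(1)) > |k| = O(n e^{-\sqrt{d}})$ after absorbing the implicit constants by enlarging the $d^{1/4}$ slack in $t^*$ if necessary.

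The main obstacle will be rigorously comparing the true MPDA dynamics to the tractable fake process, since the chains in MPDA are coupled through $\vec{W}_t$ and the within-chain rejection probability evolves as proposals accumulate. I expect to handle this by a careful worst-case conditioning that pins the rejection probability at its conservative high-probability lower bound of $1-(1+o(1))/\sqrt{d}$ throughout the relevant window, making chain outcomes stochastically dominated by i.i.d.\ Bernoullis amenable to the NA-Chernoff machinery of Lemmas~\ref{lem:NA} and \ref{lem:Hoeffding-NA}; a secondary subtlety is ensuring the $\Theta(n)$ lower bound on chain count in the window holds on the same event, which follows from the upper bound on $\delta^w[t^*]$ in the first step combined with the identity between chain count, $n-\delta^w[t^*]$, and $\delta^m[t^*]$.
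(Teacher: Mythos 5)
Your overall architecture is the same as the paper's: fix $t^* = n(\sqrt{d}+d^{1/4})$, upper bound $\delta^w[t^*]$ by a balls-into-bins domination (this step is essentially the paper's Lemma~\ref{lem:unmatched-woman-upper-bound} and is fine), lower bound $\delta^m[t^*]$ by counting runs of $d$ consecutive rejections in a conservative ``fake'' process, and invoke monotonicity of $\delta^m[t]-\delta^w[t]$ together with $\Rmen=(\tau+\delta^m[\tau])/(n+k)$. The gap is in the quantitative execution of the $\delta^m[t^*]$ lower bound, and it is fatal as written.

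First, your claimed rejection-probability bound is false on most of your window. At $t = t^*/2 \approx n\sqrt{d}/2$ the average number of proposals per woman is about $\sqrt{d}/2$, so Lemma~\ref{lem:balls-into-bins-reciprocal} gives an acceptance probability of about $n/t \approx 2/\sqrt{d}$, not $(1+o(1))/\sqrt{d}$. Since this quantity sits in an exponent multiplied by $d$, the error is not cosmetic: with acceptance probability $2/\sqrt{d}$ the chance of a $d$-rejection run is $\approx e^{-2\sqrt{d}}$, so counting over $(t^*/2,t^*]$ yields only $\Theta(n e^{-2\sqrt{d}})$ unmatched men, which is $o(ne^{-\sqrt{d}})$ and cannot dominate $\delta^w[t^*]+k$. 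The window must begin strictly \emph{after} $n\sqrt{d}$ --- the paper uses $[(1+\tfrac{\epsilon}{2})n\sqrt{d},(1+\epsilon)n\sqrt{d}]$ with $\epsilon=d^{-1/4}$, precisely so that the acceptance probability has dropped to at most $\frac{1+\gamma}{(1+\epsilon/2)\sqrt{d}}$ with $\gamma\ll\epsilon$ throughout. Second, even on the correct window, your target of $\delta^m[t^*]\geq c\,ne^{-\sqrt{d}}$ for a universal constant $c$ is too weak: the hypothesis allows $|k|=Cne^{-\sqrt{d}}$ for an arbitrary constant $C$, so you need $\delta^m[t^*]=\omega(ne^{-\sqrt{d}})$. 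The paper gets exactly this because the $d$-run probability in its window is $e^{-\frac{1+\gamma}{1+\epsilon/2}\sqrt{d}}\geq e^{-(1-\epsilon/3)\sqrt{d}} = e^{-\sqrt{d}}\,e^{\frac{1}{3}d^{1/4}}$; the factor $e^{\Theta(d^{1/4})}$ swamps every constant, and writing the acceptance probability as ``$(1+o(1))/\sqrt{d}$'' discards exactly this sign-sensitive correction (an acceptance probability of $\frac{1+\epsilon}{\sqrt{d}}$, equally consistent with your statement, would give $e^{-\sqrt{d}}e^{-d^{1/4}}$ and kill the argument). Your proposed fix of ``enlarging the $d^{1/4}$ slack in $t^*$'' would also prove a weaker stopping-time bound than the proposition asserts. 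Finally, a smaller point: the count of $d$-rejection runs in a window of fixed length is a renewal count, not a Binomial; the paper handles it by stochastically dominating cycle lengths with a geometric random sum and applying the Chernoff bound of Lemma~\ref{lem:random-sum-chernoff-bound}, rather than the NA machinery you invoke.
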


\begin{prop}[Lower bound on the number of unmatched women]\label{prop:unmatched-women-lower-bound}
	Consider the setting of Theorem \ref{thm:main-result}.
	With probability $1 - O(\exp(-\sqrt{n}))$, we have the following lower bounds on the number of unmatched men $\delta^m$ and unmatched women $\delta^w$:
	\begin{align*}
	\delta^m \geq\ n \exp\left( -\sqrt{d} - 3 d^{\frac{1}{4}} \right)
	\, ,
	\qquad
	\delta^w \geq\ n\exp\left( -\sqrt{d} - 2 d^{\frac{1}{4}} \right)
	\, .
	\end{align*}
	%These inequalities hold for every stable matching due to the rural hospital theorem.
\end{prop}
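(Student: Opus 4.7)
The plan is to deduce both lower bounds from the upper bound $\tau \leq t^* := n(\sqrt{d} + d^{1/4})$ of Proposition~\ref{prop:total-proposal-upper-bound}, together with balls-into-bins concentration on the women's side. Throughout, I would work in the extended process of Section~\ref{append:notation}, where $\delta^w[t]$ is non-increasing and $\delta^m[t]$ is non-decreasing in $t$.

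For the lower bound on $\delta^w$, I would condition on the high-probability event $\{\tau \leq t^*\}$; monotonicity then gives $\delta^w = \delta^w[\tau] \geq \delta^w[t^*]$, so it suffices to lower bound $\delta^w[t^*]$. I would exploit that each man proposes to a uniformly random un-proposed woman and makes at most $d \ll n$ proposals, so the multiset of recipients is well-approximated (up to a correction of order $d/n$ per proposer, negligible in aggregate) by a balls-into-bins process with $t^*$ balls and $n$ bins. Applying Lemma~\ref{lem:balls-into-bins-uniform}, the number of empty bins concentrates around $n(1-1/n)^{t^*} \approx n e^{-\sqrt{d} - d^{1/4}}$ with Hoeffding-type deviations of order $\sqrt{n}$. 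Under the hypothesis $d = o(\log^2 n)$, one checks $\sqrt{n} \ll n e^{-\sqrt{d} - d^{1/4}}(1 - e^{-d^{1/4}})$ for large $n$, so the deviation can be absorbed by weakening the exponent, yielding $\delta^w[t^*] \geq n e^{-\sqrt{d} - 2 d^{1/4}}$ with probability $1 - O(\exp(-\sqrt{n}))$.

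For the lower bound on $\delta^m$, I would use the matching identity $\delta^m = \delta^w + k$. The case $k \geq 0$ is immediate since $\delta^m \geq \delta^w \geq n e^{-\sqrt{d} - 3 d^{1/4}}$. When $k < 0$ with $|k| \leq \tfrac{1}{2} n e^{-\sqrt{d} - 2 d^{1/4}}$, the subtraction still leaves $\delta^m \geq \tfrac{1}{2} n e^{-\sqrt{d} - 2 d^{1/4}} \geq n e^{-\sqrt{d} - 3 d^{1/4}}$ for $d$ large enough. The delicate sub-case is $|k|$ of order $n e^{-\sqrt{d}}$, where $\delta^w$ and $|k|$ are comparable and the naive subtraction may collapse. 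To treat it, I would re-run the argument of Proposition~\ref{prop:total-proposal-upper-bound} at a refined time $t^{**}(k) < t^*$, chosen as small as possible so that the $d$-rejections-in-a-row lower bound on $\delta^m[t^{**}]$ supplied by Lemma~\ref{lem:unmatched-man-lower-bound} still exceeds $\delta^w[t^{**}] + k$. This gives the sharper upper bound $\tau \leq t^{**}$, which via monotonicity pushes $\delta^w[\tau] \geq \delta^w[t^{**}] \approx n e^{-t^{**}/n}$ above $|k|$ by the required $n e^{-\sqrt{d} - 3 d^{1/4}}$ slack.

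The main obstacle lies in this last sub-case: $\delta^w$ and $|k|$ both have magnitude $\Theta(n e^{-\sqrt{d}})$, so extracting a positive multiplicative gap of $e^{-O(d^{1/4})}$ requires carefully balancing the balls-into-bins estimate for $\delta^w$ against the growth rate of $\delta^m$ driven by consecutive rejections. This is exactly why the bound on $\delta^m$ in the statement carries the looser exponent $-3 d^{1/4}$ compared with the $-2 d^{1/4}$ for $\delta^w$, leaving the extra $e^{-d^{1/4}}$ factor of room needed to absorb the near-cancellation.
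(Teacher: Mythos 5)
Your treatment of the $\delta^w$ bound is essentially the paper's: combine the event $\{\tau \le n(\sqrt{d}+d^{1/4})\}$ from Proposition~\ref{prop:total-proposal-upper-bound} with the monotonicity of $\delta^w[\cdot]$ and a balls-into-bins lower bound at $t=n(\sqrt{d}+d^{1/4})$. Two cautions. First, for a \emph{lower} bound on the number of never-proposed-to women the standard coupling points the wrong way: sampling without replacement depletes unmatched women \emph{faster} than uniform placement, so "a correction of order $d/n$ per proposer, negligible in aggregate" is precisely where the work lies. The paper handles it by dominating $\delta^w[t]$ from below by the number of empty bins in a modified process with $n-d$ bins and $t-d$ balls (Lemma~\ref{lem:unmatched-women-lower-bound-preliminary}), which is the step your sketch would need to make precise. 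Second, to extract failure probability $O(\exp(-\sqrt{n}))$ from Lemma~\ref{lem:balls-into-bins-uniform} you must absorb an additive deviation of order $n^{3/4}$, not $\sqrt{n}$; this is still harmless since $n e^{-\sqrt{d}} = n^{1-o(1)}$ when $d=o(\log^2 n)$, but the exponent matters for the stated probability guarantee.

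For $\delta^m$ you have correctly located the soft spot, but your repair does not go through as sketched. The paper's own proof is exactly the naive subtraction you distrust: it writes $|\delta^m-\delta^w|=|k|=O(ne^{-\sqrt{d}})$ and concludes, which, as you note, does not close when $k<0$ and $|k|$ is within a factor $e^{2d^{1/4}}$ of $ne^{-\sqrt{d}}$, since $|k|$ then dwarfs the lower bound $ne^{-\sqrt{d}-2d^{1/4}}$ on $\delta^w$. (For the sub-case $|k|=O(n^{1-\epsilon})$ the subtraction is fine, because $n^{1-\epsilon}\ll ne^{-\sqrt{d}-2d^{1/4}}$.) Your proposed fix—shrink the certification time to some $t^{**}<t^*$ so that $\delta^w[t^{**}]$ exceeds $|k|$ by the required slack—faces a tension you do not address: decreasing $t^{**}$ raises the target $\delta^w[t^{**}]\approx ne^{-t^{**}/n}$, but it simultaneously weakens any $d$-rejections-in-a-row lower bound on $\delta^m[t^{**}]$, since fewer rejection runs have had time to occur and each is less likely because the acceptance probability $\approx n/t^{**}$ is larger at earlier times. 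Hence the inequality $\delta^m[t^{**}]\ge \delta^w[t^{**}]+k$ needed to certify $\tau\le t^{**}$ becomes harder, not easier, to establish, and a rough estimate shows it fails for $|k|=Cne^{-\sqrt{d}}$ with $C$ large. So in the one regime where the claim is genuinely delicate, your sketch does not supply a complete argument—and, to be fair, neither does the paper's one-line deduction; some additional ingredient (e.g., a direct high-probability lower bound on the number of men rejected $d$ times, rather than inference through the identity $\delta^m=\delta^w+k$) appears to be required there.
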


Throughout the proofs we utilize the extended process defined in Section \ref{append:notation}, which enables us to analyze the state dynamics even after the termination of original DA procedure.
Most of the work is in proving Proposition~\ref{prop:total-proposal-upper-bound}, which is done in Sections~\ref{append:small-medium-d-proof-step1-1}--\ref{append:small-medium-d-proof-step1-4}. We then deduce Proposition~\ref{prop:unmatched-women-lower-bound} from Proposition~\ref{prop:total-proposal-upper-bound} in Section~\ref{append:small-medium-d-proof-step1-5}.
The overall proof structure is as follows:
\begin{itemize}
	\item (Sections~\ref{append:small-medium-d-proof-step1-1} and \ref{append:small-medium-d-proof-step1-2})
		We first analyze the women side using balls-into-bins process analogy:
		Given that a sufficient number of proposals have been made (in particular, for $t = (1+\epsilon) n\sqrt{d}$), we construct a high probability upper bound on the current number of unmatched women $\delta^w[t]$ and the probability $p_t$ of a proposal being accepted.
	
	\item (Sections~\ref{append:small-medium-d-proof-step1-3} and \ref{append:small-medium-d-proof-step1-4})
		We then analyze the men side and obtain a lower bound on the current number of unmatched men $\delta^m[t]$ at $t = (1+\epsilon) n \sqrt{d}$ by utilizing the upper bound on acceptance probability $p_t$.
		Since this lower bound exceeds the upper bound on $\delta^w[t]$ (plus $k$) which holds at the same $t$, we deduce that, whp, the algorithm has already terminated, $\tau \leq t= (1+\epsilon) n \sqrt{d}$, since we know that $\delta^m[\tau] = \delta^w[\tau]+k$.
{See Figure \ref{fig:sample-path} in Section~\ref{sec:proof-sketch} for illustration.}
		%\yk{A little picture depicting this proof idea like what Pengyu drew on the board would be nice. Possibly in the proof sketch section in the main paper and a pointer here.}  % an upper bound on the termination time $\tau$: If $\tau$ is too large (i.e., $\tau \geq (1+\epsilon) n \sqrt{d}$), it violates the identity $\delta^m[\tau] = \delta^w[\tau]+k$ with high probability.
		Consequently, an upper bound on $\Rmen$ follows from the identity $\Rmen = \frac{\tau + \delta^m}{n+k}$, thus completing the proof of Proposition \ref{prop:total-proposal-upper-bound}.

	\item (Section~\ref{append:small-medium-d-proof-step1-5}) Given the upper bound on $\tau$, we obtain a lower bound on $\delta^w$ using the balls-into-bins analogy again.
		This leads to a lower bound on $\delta^m$ due to the identity $\delta^m = \delta^w + k$, which completes the proof of Proposition \ref{prop:unmatched-women-lower-bound}.
\end{itemize}

\subsubsection{Upper bound on number of unmatched women after a sufficient number of proposals} \label{append:small-medium-d-proof-step1-1}
The following result formalizes the fact that there cannot be too many unmatched women after a sufficient number of proposals have been made.
%We have the following result on the number of unmatched women in the auxiliary process.

\begin{lem}\label{lem:unmatched-woman-upper-bound}
 Consider the setting of Theorem \ref{thm:main-result} and the extended process defined in Section \ref{append:notation}.
 For any $\epsilon \in (0, \frac{1}{2})$ and $n\in\mathbb{Z}_+$, we have
 %\edit{there exists $n_0 < \infty$ such that for all $n > n_0$, we have}\pq{is this result strong enough? Don't we need dependence of $n_0$ on $\epsilon$?}
	\begin{align}
		\mathbb{P}\left( \delta^w[(1+\epsilon)n\sqrt{d}] > ne^{-(1+\frac{\epsilon}{2})\sqrt{d}} \right)
		\leq\
		\exp\left(
		- \frac{1}{2} n d \epsilon^2 e^{-3\sqrt{d}}
		\right)\, .
		\label{eq:unmatched-woman-upper-bound}
	\end{align}
In words, after $t=(1+\epsilon)n\sqrt{d}$ proposals have been made, at most $ne^{-(1+\frac{\epsilon}{2})\sqrt{d}}$ women remain unmatched with high probability.
\end{lem}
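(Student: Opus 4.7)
The plan is to compare the extended DA process against a standard balls-into-bins (BIB) process with $n$ bins and $t = (1+\epsilon)n\sqrt{d}$ balls, then invoke Lemma~\ref{lem:balls-into-bins-uniform}. The key structural observation is that at every step $s$, any $j\in\mathcal{H}_s$ has already been proposed to by $I_s$ at some earlier time, and thus satisfies $W_{j,s-1}\geq 1$. In particular, the set of currently unmatched women $E_{s-1}:=\{j:W_{j,s-1}=0\}$ is disjoint from $\mathcal{H}_s$ and lies inside the support $\mathcal{W}\setminus\mathcal{H}_s$ of the conditional law of $J_s$. Consequently, for every $j\in E_{s-1}$ we have $\mathbb{P}(J_s=j\mid\mathcal{F}_{s-1})=1/(n-|\mathcal{H}_s|)\geq 1/n$, so the DA process hits empty bins at least as fast as BIB does.

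From this observation, a short induction on $s$ yields the per-woman bound $\mathbb{P}(W_{j,t}=0)\leq (1-1/n)^t \leq e^{-(1+\epsilon)\sqrt{d}}$, so in particular $\mathbb{E}[\delta^w[t]]\leq ne^{-(1+\epsilon)\sqrt{d}}$. To upgrade this to a high-probability bound, I would construct a coupling of the extended DA process with an i.i.d.\ BIB process $(J_s^{\textup{BIB}})_s$ that preserves the pathwise invariant $E_s^{\textup{DA}} \subseteq E_s^{\textup{BIB}}$ for every $s$. The coupling is built inductively: at step $s$, first draw $J_s^{\textup{BIB}}$ uniformly on $\mathcal{W}$; if $J_s^{\textup{BIB}}\in E_{s-1}^{\textup{DA}}$ (a legal DA proposal, since $E_{s-1}^{\textup{DA}}\subseteq \mathcal{W}\setminus\mathcal{H}_s$), set $J_s^{\textup{DA}}=J_s^{\textup{BIB}}$; otherwise, draw $J_s^{\textup{DA}}$ from $\mathcal{W}\setminus\mathcal{H}_s$ according to the unique conditional law that makes the overall marginal of $J_s^{\textup{DA}}$ uniform on $\mathcal{W}\setminus\mathcal{H}_s$ (a direct calculation verifies this conditional law exists and has total mass one). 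A case analysis on whether $J_s^{\textup{BIB}}$ lies in $E_{s-1}^{\textup{DA}}$, in $E_{s-1}^{\textup{BIB}}\setminus E_{s-1}^{\textup{DA}}$, or outside $E_{s-1}^{\textup{BIB}}$ confirms that the invariant is preserved, so $\delta^w[t]^{\textup{DA}} \leq \delta^w[t]^{\textup{BIB}}$ almost surely.

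Given this stochastic dominance, Lemma~\ref{lem:balls-into-bins-uniform} applied with deviation $\epsilon':= e^{-(1+\epsilon/2)\sqrt{d}}-(1-1/n)^t \geq e^{-(1+\epsilon/2)\sqrt{d}}\bigl(1-e^{-\epsilon\sqrt{d}/2}\bigr)$ yields $\mathbb{P}(\delta^w[t]\geq ne^{-(1+\epsilon/2)\sqrt{d}})\leq \exp(-2n(\epsilon')^2)$. Using $1-e^{-x}\geq x/2$ for $x\in[0,1]$ (and $1-e^{-x}\geq 1/2$ for $x\geq 1$, handled separately and giving an even stronger bound), one obtains $\epsilon' \gtrsim \epsilon\sqrt{d}\,e^{-(1+\epsilon/2)\sqrt{d}}$ in the main regime and hence $2n(\epsilon')^2 \gtrsim nd\epsilon^2 e^{-(2+\epsilon)\sqrt{d}}\geq nd\epsilon^2 e^{-3\sqrt{d}}$ for $\epsilon\leq 1$, matching the stated tail bound up to an absolute constant. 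The main technical hurdle is constructing the coupling while respecting the marginal law of $J_s^{\textup{DA}}$ on $\mathcal{W}\setminus\mathcal{H}_s$; this is made possible precisely by the structural fact $E_{s-1}^{\textup{DA}}\cap\mathcal{H}_s=\emptyset$, which ensures that forcing $J_s^{\textup{DA}}=J_s^{\textup{BIB}}$ whenever the BIB ball lands in $E_{s-1}^{\textup{DA}}$ is always consistent with the extended-DA dynamics.
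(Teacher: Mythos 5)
Your proposal is correct and follows essentially the same route as the paper: stochastic dominance of $\delta^w[t]$ by the number of empty bins in a balls-into-bins process, followed by the Hoeffding bound of Lemma~\ref{lem:balls-into-bins-uniform} with the deviation $\epsilon' = e^{-(1+\epsilon/2)\sqrt{d}} - (1-1/n)^t$. The only substantive difference is that you construct the coupling explicitly (and correctly — the key point that $E_{s-1}\cap\mathcal{H}_s=\emptyset$ because every woman in $\mathcal{H}_s$ has already received a proposal is exactly what makes it work), whereas the paper simply cites this dominance as well known. One small fix: to recover the stated constant $\tfrac12$ rather than a weaker absolute constant, replace $1-e^{-x}\geq x/2$ by the convexity bound $e^{-a}-e^{-b}\geq e^{-b}(b-a)$, which gives $\epsilon'\geq \tfrac{\epsilon}{2}\sqrt{d}\,e^{-(1+\epsilon)\sqrt{d}}$ and hence $2n(\epsilon')^2\geq \tfrac12 nd\epsilon^2 e^{-3\sqrt{d}}$ for $\epsilon\in(0,\tfrac12)$.
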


\begin{proof}
	It is well known that for any $t>0$, $\delta^w[t]$ is stochastically dominated by the number of empty bins at the end of a balls-into-bins process (defined in Section \ref{subsec:balls-into-bins}) with $t$ balls and $n$ bins, which we denote by $X_{t,n}$. (See, e.g., \cite{knuth1976mariages}; the idea is that since men's preference lists sample women without replacement, the actual process has a weakly larger probability of proposing to an unmatched woman at each step relative to picking a uniformly random woman, and hence a stochastically smaller number of unmatched women at any given $t$.)
	Therefore we have
	\begin{align*}
		&\mathbb{P}\left( \delta^w[(1+\epsilon)n\sqrt{d}] > n e^{-(1+\frac{\epsilon}{2})\sqrt{d}} \right) \\ %\label{eq:woman-unmatched-lb-aux-1}
	\leq\ &
		\mathbb{P}\left( X_{(1+\epsilon)n\sqrt{d},n} > n e^{-(1+\frac{\epsilon}{2})\sqrt{d}} \right) \nonumber\\
	=\ &
		\mathbb{P}\left(
		\frac{1}{n}X_{(1+\epsilon)n\sqrt{d},n}
		-
		\left(1 - \frac{1}{n}\right)^{(1+\epsilon)n\sqrt{d}}
		>
		e^{-(1+\frac{\epsilon}{2})\sqrt{d}}
		-
		\left(1 - \frac{1}{n}\right)^{(1+\epsilon)n\sqrt{d}}
		\right)\, . \nonumber
	\end{align*}
	By Lemma \ref{lem:basic-inequalities} and Lemma \ref{lem:balls-into-bins-uniform}, we further have
	\begin{align*}
		 &\mathbb{P}\left( \delta^w[(1+\epsilon)n\sqrt{d}] > n e^{-(1+\frac{\epsilon}{2})\sqrt{d}} \right)\\ %\eqref{eq:woman-unmatched-lb-aux-1}%
		\leq\ &
			\mathbb{P}\left(
		\frac{1}{n}X_{(1+\epsilon)n\sqrt{d},n}
		-
		\left(1 - \frac{1}{n}\right)^{(1+\epsilon)n\sqrt{d}}
		>
		e^{-(1+\frac{\epsilon}{2})\sqrt{d}}
		-
		e^{-(1+\epsilon)\sqrt{d}}
		\right) \\
		\leq\ &
		\exp\left(
		- 2 n
		\left(e^{-(1+\frac{\epsilon}{2})\sqrt{d}} - e^{-(1+\epsilon)\sqrt{d}}\right)^2
		\right)\, .
	\end{align*}
	For $0<a<b$, using the convexity of function $f(x)=e^{-x}$ we have $e^{-a}-e^{-b} \geq e^{-b}(b-a)$, and therefore for $\epsilon \in (0,\frac{1}{2})$ and any $n\in\mathbb{Z}_+$,
	\begin{align*}
		\mathbb{P}\left( \delta^w[(1+\epsilon)n\sqrt{d}] > n e^{-(1+\frac{\epsilon}{2})\sqrt{d}} \right) %\eqref{eq:woman-unmatched-lb-aux-1}
		\leq\
		\exp\left(
		- 2 n d \left(\epsilon-\frac{\epsilon}{2}\right)^2 e^{-2(1+\epsilon)\sqrt{d}}
		\right)
		\leq \
		\exp\left(
		- \frac{1}{2} n d \epsilon^2 e^{-3\sqrt{d}}
		\right)\, .
	\end{align*}
	This concludes the proof.
  \end{proof}

%\newpage
\subsubsection{Upper bound on ex-ante acceptance probability} \label{append:small-medium-d-proof-step1-2}
We define the \emph{ex-ante acceptance probability} as
\begin{equation}
	p_t \triangleq \frac{1}{| \mathcal{W} \setminus \mathcal{H}_t |} \sum_{j \in \mathcal{W} \setminus \mathcal{H}_t} \frac{1}{W_{j,t-1}+1} \, .
\label{eq:pt-def}
\end{equation}
This is the probability that the $t^\text{th}$ proposal is accepted after the proposer $I_t$ is declared but the recipient $J_t$ is not yet revealed (recall that $I_t$ is the identity of the man who makes the $t^\text{th}$ proposal, $J_t$ is the identity of the woman who receives it, and $\mathcal{H}_t$ is the set of women whom $I_t$ has previously proposed to).
In the following lemma, we construct a high probability upper bound on the summation in \eqref{eq:pt-def}, and the subsequent lemma will use it to obtain an upper bound on $p_{(1+\frac{\epsilon}{2})n \sqrt{d}}$ for small $\epsilon$. %$p_{t}$ that is also valid for the extended process as well (i.e., for any $t \geq \tau$).

\begin{lem}\label{lem:acceptance-probability-upper-bound}
	For any $\Delta > 0$ and $t$ such that $1 \leq t \leq nd$, we have
	\begin{equation*}
		\mathbb{P}\left( \frac{1}{n} \sum_{j \in \cW} \frac{1}{W_{j,t}+1} \geq \frac{n}{t} + \frac{d^2}{n} + \Delta \right)
			\leq 2 \exp\left( - \frac{n \Delta^2}{8d} \right).
	\end{equation*}
	This is also valid for the extended process (i.e., when $t \geq \tau$).
\end{lem}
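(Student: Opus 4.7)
My approach is to combine a sharp expectation bound on $X_t\defeq\sum_{j\in\cW}\frac{1}{W_{j,t}+1}$ (derived via the principle of deferred decisions) with a martingale concentration argument via Azuma--Hoeffding. The key bounded-difference estimate is the deterministic per-step change $X_s-X_{s-1} = -\frac{1}{(W_{J_s,s-1}+1)(W_{J_s,s-1}+2)} \in [-\tfrac12,0]$.

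For the expectation bound, I would use the identity $\frac{1}{W+1}=\int_0^1 y^W\,dy$ together with the one-step recursion
$$\Ex\!\left[\sum_j y^{W_{j,s}}\,\Big|\,\mathcal F_{s-1}\right] \;=\; \sum_j y^{W_{j,s-1}} \;-\;\frac{1-y}{n-|\mathcal H_s|}\sum_{j\notin\mathcal H_s} y^{W_{j,s-1}}\, ,$$
which follows directly from the principle of deferred decisions (at step $s$, $J_s$ is uniform on $\cW\setminus\mathcal H_s$). Bounding $|\mathcal H_s|\leq d-1$ and using that the total ``missing mass'' over all steps satisfies $\sum_{s\leq t}|\mathcal H_s|\leq (n+k)\binom{d}{2}=O(nd^2)$, iterating this recursion and comparing with the balls-into-bins recursion $\Ex[\sum_j y^{\tilde W_{j,s}}\mid\mathcal F_{s-1}] = \sum_j y^{\tilde W_{j,s-1}}(1-(1-y)/n)$ gives $\Ex[X_t/n]\leq n/t + O(d^2/n)$ after integrating over $y\in[0,1]$ via $\int_0^1(1-(1-y)/n)^t\,dy\leq n/(t+1)$. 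The $d^2/n$ slack in the lemma precisely absorbs the ``no-double-proposal'' correction inherent to DA (as opposed to pure balls-into-bins).

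For the concentration, I would apply Azuma--Hoeffding to the Doob martingale $Y_s\defeq\Ex[X_t\mid\mathcal F_s]$, so that $Y_0=\Ex[X_t]$ and $Y_t=X_t$. The crucial estimate is $|Y_s-Y_{s-1}|\leq c$ for some absolute constant $c\leq 2$. I would establish this via a synchronous coupling of two continuations of the process both conditioned on $\mathcal F_{s-1}$ but with $J_s=j$ versus $J_s=j'$, sharing fresh uniform seeds at subsequent steps. Under this coupling, the symmetric difference between the two $W$-vectors stays bounded over time because each future step modifies at most one coordinate in the difference; hence $|X_t(\text{path 1})-X_t(\text{path 2})|=O(1)$. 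Azuma--Hoeffding then yields, for $t\leq nd$,
$$\prob\bigl(X_t - \Ex[X_t]\geq n\Delta\bigr)\;\leq\; 2\exp\!\left(-\tfrac{n^2\Delta^2}{2\,t\,c^2}\right)\;\leq\; 2\exp\!\left(-\tfrac{n\Delta^2}{8d}\right)\, ,$$
and combining with the expectation bound gives the stated inequality.

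The main obstacle will be rigorously bounding the Doob-martingale increments $|Y_s-Y_{s-1}|$ via the synchronous coupling, since flipping the single recipient $J_s$ can cascade through the adaptive evolution of DA---altering which man proposes next, the $\mathcal H_{s'}$ filtration, and so on. The essential observation that makes this tractable is that every intervening step either updates both coupled paths identically or swaps a single coordinate of $W$ between them, so the discrepancy cannot snowball. Finally, the same argument extends verbatim to the extended process for $t>\tau$, since the principle of deferred decisions continues to apply to the introduced fake men by construction.
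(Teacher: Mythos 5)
Your expectation bound is a legitimate alternative route: the identity $\frac{1}{W+1}=\int_0^1 y^{W}\,dy$ plus the one-step recursion, with the sampling-without-replacement error controlled by $\sum_{s\le t}|\mathcal{H}_s|=\sum_i\binom{M_{i,t}}{2}\le t(d-1)/2\le nd^2/2$ (note: use this bound rather than $(n+k)\binom{d}{2}$, which is not valid for the extended process), does give $\Ex\bigl[\tfrac1n\sum_{j\in\cW}\tfrac{1}{W_{j,t}+1}\bigr]\le \tfrac{n}{t}+O(d^2/n)$; that is precisely what the $d^2/n$ slack in the statement is there to absorb. The paper reaches the same expectation control differently, via a coupling to balls-into-bins and the exact Binomial computation in Lemma~\ref{lem:balls-into-bins-reciprocal}.

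The concentration step, however, has a genuine gap, and it is exactly the step you flag as the main obstacle. Your Azuma bound needs the Doob-martingale increments $|Y_s-Y_{s-1}|$ to be at most an absolute constant $c\le 2$, and the sketched coupling does not deliver this. Conditioning on $J_s=j$ versus $J_s=j'$ does not merely shift one coordinate of $\vec{W}$: it changes whether the proposal is accepted, hence which man is rejected and proposes at step $s+1$, hence $\mathcal{H}_{s+1}$, and so on --- the two conditioned processes run \emph{different rejection chains} from step $s+1$ onward (and these chains have length $\omega(1)$ in this regime). Under any coupling of the two futures, each subsequent step at which the two recipients disagree can add two fresh coordinates to the symmetric difference of the $W$-vectors; your assertion that a disagreeing step only ``swaps a single coordinate in the difference'' (i.e., cancels existing discrepancies rather than creating new ones) is unsubstantiated and, I believe, false. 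A maximal coupling of the per-step recipient laws disagrees with probability $O(d/n)$ per step, giving $O(d^2)$ expected disagreements over the $\le nd$ remaining steps, i.e., increments of order $d^2$ rather than $O(1)$, which does not yield the claimed exponent $n\Delta^2/(8d)$. The paper sidesteps this by making the coupling one-directional: an exogenous i.i.d.\ balls-into-bins sequence $\tilde J_s$ drives the DA recipients, the mismatch indicator at each step has conditional probability at most $d/n$ \emph{regardless of how the past has diverged}, Azuma is applied to the mismatch count $D_t$ (a supermartingale plus linear drift), and the crude bound $\sum_j \frac{1}{W_{j,t}+1}-\sum_j\frac{1}{\tilde W_{j,t}+1}\le D_t$ transfers concentration from the balls-into-bins functional (handled by negative association) to the DA functional. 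To rescue your route you would need either an actual proof of the $O(1)$ increment bound or a weaker exponent together with a check that it still suffices in Lemma~\ref{cor:acceptance-probability-upper-bound}.
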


\begin{proof}
	Consider a balls-into-bins process with $t$ balls and $n$ bins, and let $\tilde{J}_s \in \{1,\ldots,n\}$ be the index of bin into which the $s^\text{th}$ ball is placed, and let $\tilde{W}_{j,t} \defeq \sum_{s=1}^t \ind( \tilde{J}_s = j)$ be the total number of balls placed in the $j^\text{th}$ bin.
	Recall that $\tilde{J}_s$ is being sampled from $\{1,\ldots,n\}$ ($=\cW$) uniformly at random.

	We make a coupling between the MPDA procedure and the balls-into-bins process as follows: when determining the $s^\text{th}$ recipient $J_s$, we take $J_s \gets \tilde{J}_s$ if $\tilde{J}_s \notin \mathcal{H}_s$, or otherwise, sample $J_s$ among $\cW \setminus \mathcal{H}_s$ uniformly at random.
	In other words, the man $I_s$ first picks a woman $\tilde{J}_s$ among the entire $\cW$ uniformly at random, and then proposes to her only if he had not proposed to her yet; if he already had proposed before, he proposes to another woman randomly sampled among $\cW \setminus \mathcal{H}_s$.
	It is straightforward that the evolution of the recipient process $J_s$ under this coupling is identical to that under the usual MPDA procedure.

	Define $D_t \defeq \sum_{s=1}^t \ind( J_s \ne \tilde{J}_s )$ representing the total discrepancy between the MPDA procedure and its coupled balls-into-bins process.
	Observe that $\ind(\tilde{J}_s \ne J_s) = \ind\left( \tilde{J}_s \in \mathcal{H}_s \right)$ and thus $\mathbb{P}\left( \left. \tilde{J}_s \ne J_s \right| \mathcal{F}_{s-1} \right) = \mathbb{P}\left( \left. \tilde{J}_s \in \mathcal{H}_s \right| \mathcal{F}_{s-1} \right) \leq \frac{d}{n}$ where $\mathcal{F}_{s-1}$ represents all information revealed up to time $s-1$.
	Let $Z_s \defeq D_s - \frac{d}{n} s$ and observe that $(M_s)_{s \geq 0}$ is a supermartingale with $Z_0 = 0$ and $|Z_{s+1} - Z_s| \leq 1$.
	By Azuma's inequality, we have for any $\Delta_0 > 0$,
	\begin{equation*}
		\mathbb{P}\left( D_t \geq \frac{dt}{n} + \Delta_0 \right) \leq \mathbb{P}( Z_t - Z_0 \geq \Delta_0 ) \leq \exp\left( - \frac{\Delta_0^2}{2t} \right).
	\end{equation*}
	
{
	On the other hand, since $0 \leq \frac{1}{w+1} \leq 1$ for any $w \geq 0$, we deduce that %\yk{What is $\tilde{W}_{j,t}$?}\sm{It was defined in the 2nd line of this proof.}
	\begin{align*}
		\sum_{j \in \cW} \frac{1}{W_{j,t}+1} - \sum_{j \in \cW} \frac{1}{\tilde{W}_{j,t}+1}
			\leq\ &\sum_{j \in \cW:W_{j,t} < \tilde{W}_{j,t}} \left( \frac{1}{W_{j,t}+1} - \frac{1}{\tilde{W}_{j,t}+1} \right)\\
			\leq\ &\left| \{ j \in \cW: W_{j,t} < \tilde{W}_{j,t} \} \right|
			\leq\ D_t\, ,
	\end{align*}
	where the last inequality follows from the fact that in order to observe $W_{j,t} < \tilde{W}_{j,t}$ for some $j$, at least one mismatch $\{ \tilde{J}_s \ne J_s \}$ should take place.
	Based on the high probability upper bound on $D_t$ obtained above, we have for any $\Delta_1 > 0$,
	\begin{equation} \label{eq:acceptance-probability-gap}
		\mathbb{P}\left( \frac{1}{n} \sum_{j \in \cW} \frac{1}{W_{j,t}+1} - \frac{1}{n}  \sum_{j \in \cW} \frac{1}{\tilde{W}_{j,t}+1} \geq \frac{dt}{n^2} + \Delta_1 \right)
			\leq \mathbb{P}\left( D_t \geq \frac{dt}{n} + n \Delta_1 \right)
			\leq \exp\left( - \frac{n^2 \Delta_1^2}{2t} \right).
	\end{equation}
	
	We now utilize the result derived for the balls-into-bins process.
	From Lemma \ref{lem:balls-into-bins-reciprocal}, we have for any $\Delta_2 > 0$,
	\begin{equation*}
		\mathbb{P}\left( \frac{1}{n} \sum_{j \in \cW} \frac{1}{\tilde{W}_{j,t}+1}
		\geq \frac{n}{t} + \Delta_2 \right)
		\leq \exp\left( - 2 n \Delta_2^2 \right).
	\end{equation*}
	Combined with \eqref{eq:acceptance-probability-gap},
	\begin{align*}
		&\mathbb{P}\left( \frac{1}{n} \sum_{j \in \cW} \frac{1}{W_{j,t}+1} \geq \frac{n}{t} + \Delta_2 + \frac{dt}{n^2} + \Delta_1 \right)
			\\
			\leq\ &\mathbb{P}\left( \frac{1}{n} \sum_{j \in \cW} \frac{1}{W_{j,t}+1} \geq \frac{n}{t} + \Delta_2 + \frac{dt}{n^2} + \Delta_1, \frac{1}{n} \sum_{j \in \cW} \frac{1}{\tilde{W}_{j,t} + 1} < \frac{n}{t} + \Delta_2 \right)\\
			& + \mathbb{P}\left( \frac{1}{n} \sum_{j \in \cW} \frac{1}{\tilde{W}_{j,t}+1} \geq \frac{n}{t} + \Delta_2 \right)
			\\
			\leq\ &\mathbb{P}\left( \frac{1}{n} \sum_{j \in \cW} \frac{1}{W_{j,t}+1} - \frac{1}{n} \sum_{j \in \cW} \frac{1}{\tilde{W}_{j,t}+1} \geq \frac{dt}{n^2} + \Delta_1 \right) + \exp\left( - 2n \Delta_2^2 \right)
			\\
			\leq\ &\exp\left( - \frac{n^2 \Delta_1^2}{2t} \right) + \exp\left( - 2n \Delta_2^2 \right)\, ,
	\end{align*}
	for any $\Delta_1 > 0$ and $\Delta_2 > 0$.

	We are ready to prove the claim.
	Given any $\Delta > 0$, take $\Delta_1 = \Delta_2 = \Delta/2$.
	Then,
	\begin{align*}
		&\mathbb{P}\left( \frac{1}{n} \sum_{j \in \cW} \frac{1}{W_{j,t}+1} \geq \frac{n}{t} + \frac{d^2}{n} + \Delta \right)\\
			\leq\ &\mathbb{P}\left( \frac{1}{n} \sum_{j \in \cW} \frac{1}{W_{j,t}+1} \geq \frac{n}{t} + \frac{dt}{n^2} + \Delta_1 + \Delta_2 \right)
			\\
			\leq\ &\exp\left( - \frac{n^2 \Delta_1^2}{2t} \right) + \exp\left( - 2n \Delta_2^2 \right)\\
			=\ &\exp\left( - \frac{n^2 \Delta^2}{8t} \right) + \exp\left( - \frac{1}{2} n \Delta^2 \right)
			\\
			\leq\ &\exp\left( - \frac{n \Delta^2 }{8d} \right) + \exp\left( - \frac{1}{2} n \Delta^2 \right)\\
			\leq\ &2 \exp\left( - \frac{n \Delta^2 }{8d} \right)\, ,
	\end{align*}
	where we utilized the fact that $\frac{dt}{n^2} \leq \frac{d^2}{n}$ and $\frac{n}{d} \leq \frac{n^2}{t}$ under the given condition $t \leq nd$.}
  \end{proof}
%\yk{Looks good!}

%\newpage
%\yk{Same lemma statement towards being more formal/clear.
\begin{lem} \label{cor:acceptance-probability-upper-bound}
Fix any $\alpha \in (0,1)$, $\epsilon<0.2$ and sequences $(d(n))_{n \in \mathbb{N}}$, and $(\gamma(n))_{n \in \mathbb{N}}$ such that $d= d(n) = \omega(1)$ and $d = o(\log^2n)$, and $\gamma = \gamma(n) = \Theta\left( n^{-\alpha} \right)$. Define the maximal ex-ante acceptance probability (for any $t \leq nd$) as
	\begin{equation} \label{eq:acceptance-probability-worst-case}
		\overline{p}_t \defeq \max_{\mathcal{H} \subset \mathcal{W}: |\mathcal{H}| \leq d}\left\{ \frac{1}{| \mathcal{W} \setminus \mathcal{H} |} \sum_{j \in \mathcal{W} \setminus \mathcal{H}} \frac{1}{W_{j,t-1}+1} \right\}\, .
	\end{equation}
	Then there exists $n_0 < \infty$ such that for all $n > n_0$, we have
	$$
		\mathbb{P}\left( \overline{p}_{(1+\frac{\epsilon}{2}) n \sqrt{d}} \geq \frac{1+\gamma}{(1+\frac{\epsilon}{2}) \sqrt{d}} \right) \leq 2 \exp\left( - \frac{\gamma^2}{32} \frac{n}{d^2} \right).
	$$
	%for large enough $n$.
	This is also valid for the extended process (i.e., when $(1+\frac{\epsilon}{2}) n\sqrt{d} \geq \tau$).
\end{lem}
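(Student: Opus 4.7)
The plan is to reduce the statement to a direct application of Lemma~\ref{lem:acceptance-probability-upper-bound}, the hard work having already been done there. The key observation is that the maximum over $\mathcal{H}$ can be absorbed into a crude prefactor, because removing at most $d$ non-negative terms from a sum of $n$ terms and shrinking the denominator from $n$ to $n-d$ changes the average by a factor of at most $n/(n-d)$.

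More concretely, I would first note that for any subset $\mathcal{H} \subseteq \cW$ with $|\mathcal{H}| \leq d$, since every term $\frac{1}{W_{j,t-1}+1}$ is non-negative,
\begin{align*}
\frac{1}{|\cW \setminus \mathcal{H}|} \sum_{j \in \cW \setminus \mathcal{H}} \frac{1}{W_{j,t-1}+1}
\;\leq\; \frac{1}{n-d} \sum_{j \in \cW} \frac{1}{W_{j,t-1}+1}
\;=\; \frac{n}{n-d}\cdot \frac{1}{n}\sum_{j \in \cW} \frac{1}{W_{j,t-1}+1}.
\end{align*}
Taking the maximum over admissible $\mathcal{H}$ yields $\overline{p}_t \leq \frac{n}{n-d}\cdot \frac{1}{n}\sum_{j \in \cW} \frac{1}{W_{j,t-1}+1}$, which also holds in the extended process because the bound is deterministic in the $W_{j,\cdot}$'s.

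Next, I would apply Lemma~\ref{lem:acceptance-probability-upper-bound} at time $t-1$ with $t = (1+\tfrac{\epsilon}{2})n\sqrt{d}$, and with $\Delta$ of order $\gamma/\sqrt{d}$ (specifically $\Delta = \gamma/(2\sqrt{d})$ should suffice). This gives, with probability at least $1 - 2\exp\!\bigl(-n\Delta^2/(8d)\bigr) = 1 - 2\exp\!\bigl(-\gamma^2 n/(32 d^2)\bigr)$,
\begin{align*}
\overline{p}_{t} \;\leq\; \frac{n}{n-d}\left( \frac{n}{t-1} + \frac{d^2}{n} + \frac{\gamma}{2\sqrt{d}} \right).
\end{align*}
Since $t-1 = (1+\tfrac{\epsilon}{2})n\sqrt{d} - 1$, the leading term $\frac{n}{t-1}$ equals $\frac{1}{(1+\epsilon/2)\sqrt{d}}(1+o(1))$.

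The final step is an elementary verification that, for all $n$ larger than some $n_0$, the slack produced by the prefactor $\frac{n}{n-d} = 1 + O(d/n)$ and the additive errors $\frac{d^2}{n}$ and $\frac{1}{(1+\epsilon/2)\sqrt{d}} \cdot O(1/(n\sqrt{d}))$ are each $o(\gamma/\sqrt{d})$, so the right-hand side is bounded above by $\frac{1+\gamma}{(1+\epsilon/2)\sqrt{d}}$. This is where the hypotheses $d = o(\log^2 n)$ and $\gamma = \Theta(n^{-\alpha})$ with $\alpha \in (0,1)$ are used: we need $d/n \ll \gamma$ and $d^{2.5}/n \ll \gamma$, which both follow because $d^{2.5} = o(\log^5 n)$ is much smaller than $n^{1-\alpha}$. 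No step here is expected to be a genuine obstacle; this corollary is essentially a bookkeeping exercise that converts the ``full-average'' bound of Lemma~\ref{lem:acceptance-probability-upper-bound} into a ``worst-case-over-history'' bound, at the cost of an asymptotically negligible prefactor.
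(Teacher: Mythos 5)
Your proposal is correct and follows essentially the same route as the paper: bound $\overline{p}_t$ by the full average over $\cW$ inflated by the factor $\frac{n}{n-d}$, invoke Lemma~\ref{lem:acceptance-probability-upper-bound} with $\Delta=\gamma/(2\sqrt{d})$ to get exactly the stated probability $2\exp(-\gamma^2 n/(32d^2))$, and absorb the residual terms using $d=o(\log^2 n)$ and $\gamma=\Theta(n^{-\alpha})$. The only cosmetic difference is that you apply the lemma at time $t-1$ directly, whereas the paper applies it at time $t$ and bridges $W_{j,t-1}$ to $W_{j,t}$ by adding $1$ to the sum (at most one term changes between consecutive times, by less than $1$); both handle this negligible shift validly.
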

%}
%\yk{Doesn't the state for $\alpha = 0.99$ dominate the statement for $\alpha< 0.99$? If }

%\begin{lem}
%	Define the maximal ex-ante acceptance probability:
%	\begin{equation} \label{eq:acceptance-probability-worst-case}
%		\overline{p}_t \defeq \max_{\mathcal{H} \subset \mathcal{W}: |\mathcal{H}| \leq d}\left\{ \frac{1}{| \mathcal{W} \setminus \mathcal{H} |} \sum_{j \in \mathcal{W} \setminus \mathcal{H}} \frac{1}{W_{j,t-1}+1} \right\}
%	\end{equation}
%	Given that $d=\omega(1)$ and $d = o(\log^2n)$, $\epsilon = o(1)$, and $\gamma = \Theta\left( n^{-\alpha} \right)$ for some $\alpha \in (0,1)$, we have
%	$$
%		\mathbb{P}\left( \overline{p}_{(1+\frac{\epsilon}{2}) n \sqrt{d}} \geq \frac{1+\gamma}{(1+\frac{\epsilon}{2}) \sqrt{d}} \right) \leq 2 \exp\left( - \frac{\gamma^2}{32} \frac{n}{d^2} \right),
%	$$
%	for large enough $n$.
%	This is also valid for the extended process (i.e., when $(1+\frac{\epsilon}{2}) n\sqrt{d} \geq \tau$).
%\end{lem}
\begin{proof}
	Let $t \defeq (1+\frac{\epsilon}{2}) n \sqrt{d}$ and $\mathcal{H}^*$ be the maximizer of \eqref{eq:acceptance-probability-worst-case}.
	Observe that $|\mathcal{W} \setminus \mathcal{H}^*| \geq n-d$ and $\sum_{j \in \mathcal{W} \setminus \mathcal{H}^*} \frac{1}{W_{j,t-1} +1} \leq \sum_{j \in \mathcal{W}} \frac{1}{W_{j,t-1}+1}$, and hence
	\begin{equation*}
		\overline{p}_t = \frac{1}{|\mathcal{W} \setminus \mathcal{H}^*|} \sum_{j \in \mathcal{W} \setminus \mathcal{H}^*} \frac{1}{W_{j,t-1}+1}
			\leq \frac{1}{n-d} \sum_{j \in \mathcal{W}}  \frac{1}{W_{j,t-1}+1}
			\leq \frac{1}{n-d} \left( 1 + \sum_{j \in \mathcal{W}}  \frac{1}{W_{j,t}+1} \right).
	\end{equation*}
The last inequality uses that at most one of the terms in the summation decreases from $t-1$ to $t$, and the decrease in that term is less than $1$.
	
Let $r \triangleq \frac{t}{n} = (1+\frac{\epsilon}{2}) \sqrt{d}$, and $\Delta \triangleq \frac{\gamma}{2\sqrt{d}}$.
	Under the specified asymptotic conditions, for $n$ large enough we have
	\begin{equation*}
		r\Delta = (1+\tfrac{\epsilon}{2})\sqrt{d} \, \cdot \, \frac{\gamma}{2\sqrt{d}} \leq 0.6 \gamma
		\, , \ \
		\frac{r}{n} = \frac{(1+\frac{\epsilon}{2})\sqrt{d}}{n} \leq 0.1 \gamma
		\, , \ \
		\frac{rd^2}{n} \leq \frac{(1+\frac{\epsilon}{2}) d^{5/2}}{n} \leq 0.1 \gamma \, , \ \ \frac{d}{n} \leq 0.1 \gamma \, .
	\end{equation*}
	Consequently, since $\gamma = o(1)$, for large enough $n$ we have
	\begin{align*}
		\frac{n}{n-d} \left( \frac{1}{r} + \frac{d^2}{n} + \Delta + \frac{1}{n} \right)
			 &= \frac{1}{1-d/n} \cdot \frac{1}{r} \cdot \left( 1 + \frac{rd^2}{n} + r \Delta + \frac{r}{n} \right)
			 \\&\leq \frac{1}{r} \cdot \frac{1}{1-0.1\gamma} \cdot \left( 1 + 0.1 \gamma + 0.6 \gamma + 0.1 \gamma \right)
			 \\&\leq \frac{1}{r} \cdot (1+\gamma)
			 = \frac{1+\gamma}{ (1+\frac{\epsilon}{2})\sqrt{d}}\, .
	\end{align*}
	As a result,
	\begin{align*}
		\mathbb{P}\left( \overline{p}_{(1+\frac{\epsilon}{2})\tau^*} \geq \frac{1+\gamma}{(1+\frac{\epsilon}{2})\sqrt{d}} \right)
			&\leq \mathbb{P}\left( \frac{1}{n-d} \left( 1 + \sum_{j \in \mathcal{W}} \frac{1}{W_{j,t}+1} \right) \geq \frac{1+\gamma}{(1+\frac{\epsilon}{2})\sqrt{d}} \right)
			\\&\leq \mathbb{P}\left( \frac{1}{n-d} \left( 1 + \sum_{j \in \mathcal{W}} \frac{1}{W_{j,t}+1} \right)  \geq \frac{n}{n-d} \times \left( \frac{1}{r} + \frac{d^2}{n} + \Delta + \frac{1}{n} \right) \right)
			\\&= \mathbb{P}\left( \frac{1}{n} \sum_{j \in \mathcal{W}} \frac{1}{W_{j,t}+1}  \geq \frac{n}{t} + \frac{d^2}{n} + \Delta  \right)
			\\&\leq 2 \exp\left( - \frac{n \Delta^2}{8d} \right)
			= 2 \exp\left( - \frac{\gamma^2}{32} \frac{n}{d^2} \right)\, ,
	\end{align*}
	where the last inequality follows from Lemma \ref{lem:acceptance-probability-upper-bound}.
  \end{proof}

%\newpage
\subsubsection{Lower bound on the number of unmatched men after a sufficient number of proposals} \label{append:small-medium-d-proof-step1-3}
%We have the following result on the number of unmatched men in the extended process.
The following result formalizes the fact that there cannot be too few unmatched men after an enough number of proposals have been made.

\begin{lem}\label{lem:unmatched-man-lower-bound}
	Consider the setting of Theorem \ref{thm:main-result} and the extended process defined in Section \ref{append:notation}.
	For any sequence $(\epsilon(n))_{n \in \mathbb{N}}$ such that $\epsilon = \epsilon(n) <0.2$ and $\epsilon(n) = \omega\left( \frac{1}{n^{0.49}} \right)$,
	%with $0 < \beta < \frac{1}{2}$,
	there exists $n_0 < \infty$ such that for all $n > n_0$, we have
	\begin{align} \label{eq:unmatched-man-lower-bound}
	\mathbb{P}\left(
	\delta^m[(1+\epsilon)n\sqrt{d}]\leq \frac{\epsilon}{16}n e^{-(1 - \frac{\epsilon}{3})\sqrt{d}}
	\right)
	\leq
	\exp\left(
	- \sqrt{n}
	\right)
	.
	\end{align}
In words, after $(1+\epsilon)n\sqrt{d}$ proposals have been made, at least $\frac{\epsilon}{16}n e^{-(1 - \frac{\epsilon}{3})\sqrt{d}}$ men become unmatched with high probability.
\end{lem}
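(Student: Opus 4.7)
In Algorithm~\ref{alg:2MPDA}, after a rejection the same man proposes next (case~(c)), so between any two acceptances the proposer does not change. Consequently, any $d$ consecutive rejections in the proposal sequence lie inside a single ``segment'' of one man; since each man makes at most $d$ proposals in total, such a segment must be his only one and leaves him unmatched. Hence every rejection run of length $\geq d$ yields at least one newly unmatched man, and sufficiently separated runs yield distinct men.

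\textbf{Controlling rejection probability and coupling.} Set $t_0 := \lceil (1+\epsilon/2) n\sqrt d\rceil$ and $T := \lfloor \epsilon n \sqrt d/2\rfloor$. Pick $\gamma=\gamma(n)$ both small enough that $\tfrac{1+\gamma}{1+\epsilon/2} \leq 1-\epsilon/3 + O(1/\sqrt d)$ (so that $\gamma \lesssim \epsilon + 1/\sqrt d$) and large enough that $2\exp(-\gamma^2 n/(32 d^2)) \leq \tfrac{1}{2} \exp(-\sqrt n)$; this joint feasibility is ensured by the hypotheses $\epsilon=\omega(n^{-0.49})$ and $d=o(\log^2 n)$. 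Set $p^* := \tfrac{1+\gamma}{(1+\epsilon/2)\sqrt d}$. By Lemma~\ref{cor:acceptance-probability-upper-bound}, $\mathcal{E} := \{\bar p_{t_0} \leq p^*\}$ satisfies $\prob(\mathcal{E}^c) \leq \tfrac{1}{2}\exp(-\sqrt n)$; and since each $W_{j,t}$ is non-decreasing in $t$, $\bar p_t$ is non-increasing in $t$, so $\bar p_t \leq p^*$ for every $t \geq t_0$ on $\mathcal{E}$. Using a shared uniform source we couple $R_{t_0+s}$ ($s=1,\ldots,T$) with i.i.d.\ $Y_s \sim \mathrm{Bernoulli}(1-p^*)$ so that $Y_s \leq R_{t_0+s}$ on $\mathcal{E}$.

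\textbf{Independent mutually-exclusive witnesses.} For $s \in [2,T-d+1]$ define the ``fresh window'' indicator $V_s := \ind\{Y_{s-1}=0,\, Y_s = \cdots = Y_{s+d-1} = 1\}$. The critical property is that $V_s = V_{s'} = 1$ is impossible for $0 < |s-s'| \leq d$, because a shared $Y$-index would be forced to be both $0$ and $1$. Partition $\{1,\ldots,T\}$ into disjoint super-blocks of length $2d+1$; in the $k$-th super-block $[a_k, a_k+2d]$ consider the candidate positions $s\in[a_k+1, a_k+d+1]$ and set $W_k := \sum_{s=a_k+1}^{a_k+d+1} V_s \in \{0,1\}$ (by mutual exclusion). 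Then $\prob(W_k = 1) = (d+1)p^*(1-p^*)^d$, and since each $V_s$ contributing to $W_k$ depends only on $Y$-indices in $[a_k, a_k+2d]$, the $W_k$'s are \emph{independent} across super-blocks. Each $W_k = 1$ forces $d$ consecutive rejections in $R$ inside super-block $k$ (because $Y\leq R$) and hence certifies a newly unmatched man; distinct super-blocks certify distinct men, giving
\begin{align*}
	\delta^m\bigl[(1+\epsilon)n\sqrt d\bigr] \;\geq\; \sum_k W_k \, .
\end{align*}

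\textbf{Concentration and main obstacle.} Using $\log(1-p^*) \geq -p^*-(p^*)^2$ together with the choice of $\gamma$, a calculation gives $(d+1)p^*(1-p^*)^d = \Omega\bigl(\sqrt d\, e^{-(1-\epsilon/3)\sqrt d}\bigr)$; multiplying by the $\Theta(\epsilon n/\sqrt d)$ super-blocks yields $\Ex[\sum_k W_k] \geq c\,\epsilon n\, e^{-(1-\epsilon/3)\sqrt d}$ for an absolute constant $c>0$, which can be pushed above $1/8$ by tuning $\gamma$ and the super-block length to exploit the $O(1/\sqrt d)$ slack. Under the hypotheses this mean is $\omega(\sqrt n)$, so a standard multiplicative Chernoff bound on the sum of independent $\{0,1\}$ variables gives $\prob\bigl(\sum_k W_k < \tfrac{1}{2}\Ex[\sum_k W_k]\bigr) \leq \tfrac{1}{2}\exp(-\sqrt n)$. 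A union bound with $\prob(\mathcal{E}^c)$ together with monotonicity of $\delta^m[\cdot]$ then delivers the claim with probability $\geq 1-\exp(-\sqrt n)$. The subtlest step is the scaling: a naive disjoint length-$d$ block count has mean only $\Theta(\epsilon n e^{-\sqrt d}/\sqrt d)$, a factor $\sqrt d$ short of the target; the super-block construction recovers this factor by exploiting that the $d+1$ candidate $V_s$'s inside each super-block are pairwise mutually exclusive, so their union has probability $(d+1)p^*q$ rather than $q$, while independence across super-blocks is preserved by the spatial separation of $Y$-indices.
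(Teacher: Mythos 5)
Your proposal follows the same overall strategy as the paper --- invoke Lemma~\ref{cor:acceptance-probability-upper-bound} to get a uniform bound $p^*=\frac{1+\gamma}{(1+\epsilon/2)\sqrt d}$ on the acceptance probability after time $(1+\epsilon/2)n\sqrt d$, couple the rejections to i.i.d.\ coins, and count disjoint occurrences of $d$ rejections in a row --- but the concentration step is genuinely different. The paper treats the waiting time between successive $d$-rejection runs as a geometric random sum $S=\sum_{i=1}^{N'}X_i$ and applies a bespoke Chernoff bound (Lemma~\ref{lem:random-sum-chernoff-bound}); you instead partition the window into length-$(2d+1)$ super-blocks, use the mutually exclusive ``fresh window'' indicators to make each block a Bernoulli with success probability $(d+1)p^*(1-p^*)^d$, and apply a standard Chernoff bound to independent indicators. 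Your route is more elementary (it avoids computing the large-deviation rate function of $S$), at the price of a roughly factor-$2$ loss in the expected count, since only about $(d+1)/(2d+1)$ of the positions serve as candidates; the paper's renewal argument wastes essentially nothing between runs, which is why it can afford the crude bound $1-x\geq e^{-x-x^2}$ and still land on the constant $\tfrac{1}{16}$.

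Two points need repair. First, your combinatorial premise that ``between any two acceptances the proposer does not change'' is false: when a man exhausts his list, a new man enters and proposes without any intervening acceptance, so a run of $d$ consecutive rejections can straddle two men. The conclusion survives --- if the run spans a segment boundary, the earlier segment ends with a rejection, which can only happen because that man exhausted his list, so each disjoint $d$-rejection window still forces a distinct increment of $\delta^m$ --- but you should argue it this way rather than via the false claim. Second, the constants: with the bound $(1-p^*)^d\geq e^{-dp^*-d(p^*)^2}$ and $d(p^*)^2\approx 1$, your mean works out to roughly $0.08\,\epsilon n e^{-(1-\epsilon/3)\sqrt d}$, so half the mean is about $0.04\,\epsilon n e^{-(1-\epsilon/3)\sqrt d}$, which falls short of the target $\tfrac{1}{16}=0.0625$. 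Tuning $\gamma$ does not help here (the losses come from the $e^{-d(p^*)^2}$ term and the block efficiency, not from $\gamma$), and lengthening the super-blocks destroys the $W_k\in\{0,1\}$ structure. The clean fix is either to sharpen the second-order term (using $\log(1-p)\geq -p-0.51p^2$ for small $p$ pushes the constant just past $1/8$) or, more robustly, to run the multiplicative Chernoff bound at threshold $(1-\delta)\mu$ with $\delta=0.2$ rather than $\mu/2$; since $\mu=\omega(\sqrt n)$, the tail is still $\exp(-\Omega(\mu))\leq \tfrac{1}{2}\exp(-\sqrt n)$ and the lemma follows.
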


\begin{proof}
	Let $\tau^* \triangleq n \sqrt{d}$.
	To obtain a lower bound on the number of unmatched men at time $(1+\epsilon)\tau^*$, we count the number of \emph{$d$-rejection-in-a-row} events that occur during $[(1+\frac{\epsilon}{2})\tau^*,(1+\epsilon)\tau^*]$.
	This will provide a lower bound since whenever the rejection happens $d$ times in a row the number of unmatched men increases at least by one.

	For this purpose, we first utilize the upper bound on the ex-ante acceptance probability.
	By Lemma \ref{cor:acceptance-probability-upper-bound} we have: given that $\gamma =\gamma(n)= \Theta\left( \frac{1}{n^{\alpha}} \right)$ for some $\alpha \in (0,1)$, $\epsilon=\epsilon(n)<0.2$, and that $d=d(n)=\omega(1)$ and $d=o(\log^2 n)$, there exists $n_0>0$ such that for all $n>n_0$,
	\begin{equation}  \label{eq:man-unmatched-lb-aux-1}
		\mathbb{P}\left( \overline{p}_{(1+\frac{\epsilon}{2})\tau^*} \geq \frac{1+\gamma}{(1+\frac{\epsilon}{2}) \sqrt{d}} \right)
			\leq
			 2 \exp\left( - \frac{\gamma^2}{32} \frac{n}{d^2} \right)\, .
	\end{equation}
	
	Let $\hat{p} \triangleq \frac{1+\gamma}{(1+\frac{\epsilon}{2})\sqrt{d}}$ and consider the events where $\overline{p}_{(1+\frac{\epsilon}{2})\tau^*} \leq \hat{p}$ is satisifed.
	Since $\overline{p}_t$ is non-increasing over time on each sample path, we have $p_t \leq \hat{p}$ for all $t \geq (1+\frac{\epsilon}{2})\tau^*$ on this sample path: i.e., a proposal after time $(1+\frac{\epsilon}{2})\tau^*$ is accepted with probability at most $\hat{p}$.
	
	%We now count the number of $d$-rejection-in-a-row that occur during $[ (1+\frac{\epsilon}{2})\tau^*, (1+\epsilon)\tau^* ]$ where the acceptance probability of each proposal is given $\hat{p}$ (which is an exaggeration of the actual acceptance probability so that makes it underestimate the occurrence of rejections and provides a valid lower bound on the actual number of $d$-rejection-in-a-row).
	
	As an analogy, we imagine a coin tossing process with head probability $\hat{p}$ (which is an exaggeration of the actual acceptance probability, making it underestimate the occurrence of rejections and provides a valid lower bound on the actual number of $d$-rejection-in-a-row events), and count how many times $d$-tail-in-a-row takes place during $\frac{\epsilon}{2} \tau^*$ coin tosses.
	With $X_i \stackrel{ \text{i.i.d.} }{\sim} \text{Geometric}(\hat{p})$ representing the number of coin tosses required to observe one head (acceptance), the total number of coin tosses required to observe one $d$-tail-in-a-row is given by $\sum_{i=1}^N \min\{ X_i, d \}$	where $N$ is the smallest $i$ such that $X_i > d$.
	Note that $N \sim \text{Geometric}\left(\left(1 - \hat{p}\right)^{d}\right)$.
	However, $N$ is correlated with $X_i$'s.
	To upper bound the random sum, observe that conditioned on $N$, $\{ X_1,\cdots,X_{N-1} \}$ are independent \emph{truncated} Geomtric$(\hat{p})$ variables that only take value on $\{1,\cdots,d\}$, which are stochastically dominated by Geomtric$(\hat{p})$ random variables.
	Since $\min\{X_{N},d\}\leq d$, the random sum of interest is stochastically dominated by $d+S$, where
	$
		S = \sum_{i=1}^{N'}X_i\, ,
	$
	and $N' \sim \text{Geometric}\left(\left(1 - \hat{p}\right)^{d}\right)$ independent of $X_i$'s.
	(Note that by Wald's identity we have $\mathbb{E}[S] = \hat{p}^{-1} \left( 1 - \hat{p} \right)^{-d}$.)
	Consequently, the total number of coin tosses required to observe $\frac{\epsilon}{8}ne^{-d \hat{p}}$ $d$-tail-in-a-row's is stochastically dominated by
	\begin{equation*}
		\sum_{j=1}^{\frac{\epsilon}{8}ne^{-d \hat{p}}}(d + S_j)\, ,
	\end{equation*}
	where $S_1, S_2, \ldots$ are i.i.d. random variables with the same distribution as $S$ defined above.

	Let $R$ denote the total number of $d$-tail-in-a-row events that occur during $[(1+\frac{\epsilon}{2})\tau^*,(1+\epsilon)\tau^*]$.
	From the above argument, we deduce that
	\begin{align}\label{eq:man-unmatched-lb-aux-2}
		\mathbb{P}\left(
			R \leq \frac{\epsilon}{8}n e^{-d\hat{p}}
		\right)
		\leq\
		\mathbb{P}\left(
			\sum_{j=1}^{\frac{\epsilon}{8}n e^{-d\hat{p}}} (d + S_j) \geq \frac{\epsilon}{2} \tau^*
		\right)
		=\ \mathbb{P}\left(
		\sum_{j=1}^{\frac{\epsilon}{8}n e^{-d\hat{p}}}  S_j \geq \frac{\epsilon}{2} \tau^* - \frac{\epsilon}{8}nd e^{-d\hat{p}}
		\right) \, .
	\end{align}
	
	We now proceed to bound the RHS of \eqref{eq:man-unmatched-lb-aux-2}.
	Note that
	\begin{align*}
		\frac{ \frac{\epsilon}{2} \tau^*  - \frac{\epsilon}{8}nd e^{-d\hat{p}} }{ \frac{\epsilon}{8}n e^{-d\hat{p}} }
		=
		\frac{ 4 n \sqrt{d} }{ n e^{-\frac{1+\gamma}{1+\frac{\epsilon}{2}} \sqrt{d} } }-d
		=
		4 \sqrt{d} e^{\frac{1+\gamma}{1+\frac{\epsilon}{2}} \sqrt{d} } -d\, .
	\end{align*}
	Recall that $\gamma = \Theta\left( \frac{1}{n^\alpha} \right)$, $\epsilon < 0.2$, and $d = \omega(1)$, we have for large enough $n$,
	$4 \sqrt{d} e^{\frac{1+\gamma}{1+\frac{\epsilon}{2}} \sqrt{d} } -d > 3.9 \sqrt{d} e^{\frac{1+\gamma}{1+\frac{\epsilon}{2}} \sqrt{d} }$.
	Plugging $
	\lambda \defeq 3.9 \sqrt{d} e^{\frac{1+\gamma}{1+\frac{\epsilon}{2}} \sqrt{d} }$ into Lemma \ref{lem:random-sum-chernoff-bound}, we obtain
	\begin{align}
		\mathbb{P}\left(
		\sum_{j=1}^{\frac{\epsilon}{8}n e^{-d\hat{p}}}  S_j \geq \frac{\epsilon}{2} \tau^* - \frac{\epsilon}{8}nd e^{-d\hat{p}}
		\right)
		\leq\ &
		\exp\left(
			- \frac{ \frac{\epsilon}{8}n e^{-d\hat{p}} }{2\lambda^2}  \left( \lambda - \mathbb{E}[S] \right)^2
		\right) \nonumber\\
		\leq\ &
		\exp\left(
		- \frac{ \epsilon n  }{16}
		e^{-\frac{1+\gamma}{1+\frac{\epsilon}{2}} \sqrt{d}}
		  \left( 1 - \frac{\mathbb{E}[S]}{\lambda} \right)^2
		\right) \, . \label{eq:man-unmatched-lb-aux-3}
	\end{align}
	We also have $\hat{p} = \frac{1+\gamma}{(1+\frac{\epsilon}{2})\sqrt{d}} = o(1)$ and thus for large enough $n$,
	\begin{equation*}
		\left( 1 - \hat{p} \right)^{-d}
			\leq
			\left( e^{-\hat{p}-\hat{p}^2} \right)^{-d}
			= e^{\frac{1+\gamma}{1+\frac{\epsilon}{2}}\sqrt{d} + \left( \frac{1+\gamma}{1+\frac{\epsilon}{2}} \right)^2},
	\end{equation*}
	 where we use the fact that $1-x \geq e^{-x-x^2}$ for any $|x| \leq 0.5$.
	Further observe that for large enough $n$,
	\begin{equation*}
		\frac{1+\frac{\epsilon}{2}}{1+\gamma} e^{\left(\frac{1+\gamma}{1+\frac{\epsilon}{2}}\right)^2} \leq 1.2 e
		<3.3\, ,
	\end{equation*}
	and therefore,
	\begin{equation*}
		\mathbb{E}[S]
			=
			\hat{p}^{-1} \left( 1 - \hat{p} \right)^{-d}
			\leq
			\frac{1+\frac{\epsilon}{2}}{1+\gamma}\sqrt{d}
			e^{\frac{1+\gamma}{1+\frac{\epsilon}{2}}\sqrt{d} + \left( \frac{1+\gamma}{1+\frac{\epsilon}{2}} \right)^2}
			\leq
			3.3 \sqrt{d}  e^{\frac{1+\gamma}{1+\frac{\epsilon}{2}}\sqrt{d}}\, .
	\end{equation*}
	For RHS of \eqref{eq:man-unmatched-lb-aux-3}, we deduce that for large enough $n$,
	\begin{align*}
		\exp\left(
			- \frac{\epsilon n}{16}  e^{-\frac{1+\gamma}{1+\frac{\epsilon}{2}} \sqrt{d}} \left( 1 - \frac{\mathbb{E}[S]}{\lambda} \right)^2
		\right)
		&\leq
		\exp\left(
			- \frac{\epsilon n}{16}  e^{-\frac{1+\gamma}{1+\frac{\epsilon}{2}} \sqrt{d}}
			\left(1-\frac{3.3}{3.9}\right)^2
		\right)
		\\&\leq
		\exp\left(
			- \frac{\epsilon n}{800} e^{-\frac{1+\gamma}{1+\frac{\epsilon}{2}} \sqrt{d}}
		\right)\, .
	\end{align*}
	Combining all these results, for large enough $n$, we obtain
	\begin{equation*}
		\mathbb{P}\left(
			R \leq \frac{\epsilon}{8}n e^{-d\hat{p}}
		\right)
		\leq
		\exp\left(
			- \frac{\epsilon n}{800} e^{-\frac{1+\gamma}{1+\frac{\epsilon}{2}} \sqrt{d}}
		\right).
	\end{equation*}

	As a result, we obtain a high probability lower bound on the number of unmatched men for the sample paths satisfying $\overline{p}_{(1+\frac{\epsilon}{2})\tau^*} \leq \hat{p}$:
	\begin{align*}
		&\mathbb{P}\left( \left.
			\delta^m[(1+\epsilon)\tau^*]\leq \frac{\epsilon}{8}n e^{-\frac{1+\gamma}{1+\frac{\epsilon}{2}}\sqrt{d}}
			\right| \overline{p}_{(1+\frac{\epsilon}{2})\tau^*} \leq \frac{1+\gamma}{(1+\frac{\epsilon}{2})\sqrt{d}}
		 \right)\\
		\leq\ &\mathbb{P}\left(
			R\leq \frac{\epsilon}{8}n e^{-\frac{1+\gamma}{1+\frac{\epsilon}{2}}\sqrt{d}}
			%,\ \overline{p}_{(1+\frac{\epsilon}{2})\tau^*}\leq \frac{1+\gamma}{(1+\frac{\epsilon}{2})\sqrt{d}}
		\right)\\
		\leq\ &
		\exp\left(
		- \frac{\epsilon n}{800} e^{-\frac{1+\gamma}{1+\frac{\epsilon}{2}}\sqrt{d}}
		\right)\, .
	\end{align*}
	Combining with \eqref{eq:man-unmatched-lb-aux-1}, we obtain
	\begin{align*}
		&\mathbb{P}\left(
		\delta^m[(1+\epsilon)\tau^*]\leq \frac{\epsilon}{8}n e^{-\frac{1+\gamma}{1+\frac{\epsilon}{2}}\sqrt{d}}
		\right) \\
		\leq\ & \mathbb{P}\left( \left.
			\delta^m[(1+\epsilon)\tau^*]\leq \frac{\epsilon}{8}n e^{-\frac{1+\gamma}{1+\frac{\epsilon}{2}}\sqrt{d}}
			\right| \overline{p}_{(1+\frac{\epsilon}{2})\tau^*}\leq \frac{1+\gamma}{(1+\frac{\epsilon}{2})\sqrt{d}}
		\right)
		+
		\mathbb{P}\left( \overline{p}_{(1+\frac{\epsilon}{2})\tau^*} \geq \frac{1+\gamma}{(1+\frac{\epsilon}{2})\sqrt{d}}\right) \\
		\leq\ & \exp\left(
		- \frac{\epsilon n}{800} e^{-\frac{1+\gamma}{1+\frac{\epsilon}{2}}\sqrt{d}}
		\right)
		+
		2\exp\left(
		-\frac{\gamma^2}{32}\frac{n}{d^2}
		\right)\, .
	\end{align*}

	Now we take $\gamma = n^{-1/5}$.
	First observe that, for large enough $n$, since $d = o( \log^2 n )$, we have
	\begin{equation*}
		2 \exp\left( - \frac{\gamma^2}{32} \frac{n}{d^2} \right)
		= 2 \exp\left( - \frac{1}{32} \frac{n^{3/5}}{d^2} \right)
		\leq \frac{1}{2} \exp\left( - \sqrt{n} \right)
		\, ,
	\end{equation*}
	and furthermore, since $\epsilon <0.2$, %$\frac{1}{1+\frac{\epsilon}{2}} \leq 1 - \frac{\epsilon}{3}$ for small enough $\epsilon$,
	\begin{equation*}
		e^{-\frac{1+\gamma}{1+\frac{\epsilon}{2}}\sqrt{d}}
			\geq e^{ - (1+\gamma) (1 - \frac{\epsilon}{3} ) \sqrt{d} }
			= e^{ - (1 - \frac{\epsilon}{3} ) \gamma \sqrt{d} } \cdot e^{ - (1-\frac{\epsilon}{3}) \sqrt{d} }
			\geq \frac{1}{2} e^{ - (1-\frac{\epsilon}{3}) \sqrt{d} }
		\, .
	\end{equation*}
	Therefore, we obtain
	\begin{align*}
		\mathbb{P}\left(
			\delta^m[(1+\epsilon)\tau^*]\leq \frac{\epsilon}{16}n e^{-(1-\frac{\epsilon}{3})\sqrt{d}}
		\right)
		\leq\ &
		\mathbb{P}\left(
			\delta^m[(1+\epsilon)\tau^*]\leq \frac{\epsilon}{8}n e^{-\frac{1+\gamma}{1+\frac{\epsilon}{2}}\sqrt{d}}
		\right)
		\\
		\leq\ &
		\exp\left( - \frac{\epsilon n}{800} e^{-\frac{1+\gamma}{1+\frac{\epsilon}{2}}\sqrt{d}} \right)
		+
		2\exp\left( -\frac{\gamma^2}{32}\frac{n}{d^2} \right)
		\\
		\leq\ &
		\exp\left( - \frac{\epsilon n}{1600}  e^{-( 1 - \frac{\epsilon}{3} ) \sqrt{d}} \right)
		+
		\frac{1}{2} \exp\left( - \sqrt{n} \right)
		\\
		\leq\ &
		\exp\left( - \frac{\epsilon n}{1600} e^{- \sqrt{d}} \right)
		+
		\frac{1}{2} \exp\left( - \sqrt{n} \right)		
		\, .
	\end{align*}
	Given that $\epsilon = \omega( \frac{1}{n^{0.49}} )$, we further have for large enough $n$,
	\begin{equation*}
		\frac{\epsilon n}{1600} e^{- \sqrt{d}}  \geq \frac{1}{1600} n^{0.51} e^{-\sqrt{d}} \geq \sqrt{n} + \log 2 \, ,
	\end{equation*}
	thus,
	\begin{equation*}
		\exp\left( - \frac{\epsilon}{1600} n e^{- 2\sqrt{d}} \right) \leq \frac{1}{2} \exp\left( - \sqrt{n} \right) \, ,
	\end{equation*}
	which concludes the proof.
  \end{proof}

%\newpage
\subsubsection{Upper bound on the total number of proposals $\tau$ and men's average rank $\Rmen$ (Proposition \ref{prop:total-proposal-upper-bound})} \label{append:small-medium-d-proof-step1-4}
With the help of the coupling between the extended process and the men-proposing DA, we are now able to prove Proposition \ref{prop:total-proposal-upper-bound}.

\begin{proof}[Proof of Proposition \ref{prop:total-proposal-upper-bound}]
We make use of Lemma~\ref{lem:unmatched-woman-upper-bound}.
Denote $n\sqrt{d}$ by $\tau^*$.
Plug $\epsilon = d^{-\frac{1}{4}}$ in \eqref{eq:unmatched-woman-upper-bound}. For the RHS of \eqref{eq:unmatched-woman-upper-bound} we have
	\begin{align*}
		\exp\left(
		- \frac{1}{2} n d \epsilon^2 e^{-3\sqrt{d}}
		\right)
		=
		\exp\left(
		- \frac{1}{2} n \sqrt{d} e^{-3\sqrt{d}}
		\right)
		\leq
		\exp\left(
		- \frac{1}{2} n e^{-3\sqrt{d}}
		\right)
		\leq
		\exp\left(
		- \sqrt{n}
		\right)\, .
	\end{align*}
	Here the last inequality holds because $d=o(\log^2 n)$, and it follows that for any $\alpha>0$, $e^{-3\sqrt{d}} = \omega\left(\frac{1}{n^{\alpha}}\right)$. Therefore,
	\begin{align}\label{eq:unmatched-woman-upper-bound-refined}
	\mathbb{P}\left(\delta^w[(1+d^{-\frac{1}{4}})\tau^*] > ne^{-\sqrt{d}}\right)
	\leq\
	\exp\left(
	- \sqrt{n}
	\right)\, .
	\end{align}
	
	We further utilize Lemma \ref{lem:unmatched-man-lower-bound}.
	Plug $\epsilon = d^{-\frac{1}{4}}$ in \eqref{eq:unmatched-man-lower-bound}.
	For the LHS of \eqref{eq:unmatched-man-lower-bound}, because $\frac{1}{16}\frac{1}{x}e^{\frac{1}{3}x} \geq e^{\frac{1}{4}x}$ for large enough $x$, we have for large enough $n$,
	\begin{align*}
	\frac{\epsilon}{16}n e^{-(1 - \frac{\epsilon}{3})\sqrt{d}}
	=
	\frac{1}{16}n e^{-\sqrt{d}}
	d^{-\frac{1}{4}} e^{\frac{1}{3}d^{\frac{1}{4}}}
	\geq	n e^{-\sqrt{d}}
	e^{\frac{1}{4}d^{\frac{1}{4}}}\, ,
	\end{align*}
	and hence
	\begin{align}\label{eq:unmatched-man-lower-bound-refined}
		\mathbb{P}\left(
		\delta^m[(1+\epsilon)\tau^*]\leq n e^{-\sqrt{d}}
		e^{\frac{1}{4}d^{\frac{1}{4}}}
		\right)
		\leq\
		\mathbb{P}\left(
		\delta^m[(1+\epsilon)\tau^*]\leq \frac{\epsilon}{16}n e^{-(1 - \frac{\epsilon}{3})\sqrt{d}}
		\right)
		\leq
		\exp\left( - \sqrt{n} \right)
		\, .
	\end{align}
	
	Note that by assumption on the imbalance $k$, i.e., $|k| = O(n e^{-\sqrt{d}})$, there exists some constant $C$ such that $|k| \leq C n e^{-\sqrt{d}}$ for large enough $n$.
	Consequently, since $C+1 \leq e^{\frac{1}{4}d^{\frac{1}{4}}}$ for large enough $d$ (and hence for large enough $n$ as $d = \omega(1)$), we have for large enough $n$,
	\begin{align*}
		|k| \leq C n e^{-\sqrt{d}} \leq n e^{-\sqrt{d}} \left( e^{\frac{1}{4}d^{\frac{1}{4}}} - 1 \right)\, .
	\end{align*}
	Recall that $\tau$ is the smallest $t$ such that
	\begin{align*}
		\delta^m[t] - \delta^w[t] =  k\, ,
	\end{align*}
	where the process $\delta^m[t] - \delta^w[t]$ is non-decreasing over time.
	Therefore, we have
	\begin{align*}
		\mathbb{P}\left(\tau \geq (1+d^{-\frac{1}{4}})\tau^* \right)
			&\leq \mathbb{P}\left(
				\delta^m[(1+d^{-\frac{1}{4}})\tau^*] - \delta^w[(1+d^{-\frac{1}{4}})\tau^*] \leq k
			\right)
			\\&= \mathbb{P}\left(
				\delta^m[(1+d^{-\frac{1}{4}})\tau^*] - \delta^w[(1+d^{-\frac{1}{4}})\tau^*] \leq k, ~
				\delta^w[(1+d^{-\frac{1}{4}})\tau^*] \leq n e^{-\sqrt{d}}
			\right)
			\\& \qquad + \mathbb{P}\left(
				\delta^m[(1+d^{-\frac{1}{4}})\tau^*] - \delta^w[(1+d^{-\frac{1}{4}})\tau^*] \leq k, ~
				\delta^w[(1+d^{-\frac{1}{4}})\tau^*] > n e^{-\sqrt{d}}
			\right)
			\\&\leq \mathbb{P}\left( \delta^m[(1+d^{-\frac{1}{4}})\tau^*] \leq n e^{-\sqrt{d}} + k \right)
				+ \mathbb{P}\left( \delta^w[(1+d^{-\frac{1}{4}})\tau^*] > n e^{-\sqrt{d}} \right)
			\\&\leq \mathbb{P}\left( \delta^m[(1+d^{-\frac{1}{4}})\tau^*] \leq n e^{-\sqrt{d}} e^{\frac{1}{4}d^{\frac{1}{4}}} \right)
				+ \mathbb{P}\left( \delta^w[(1+d^{-\frac{1}{4}})\tau^*] > n e^{-\sqrt{d}} \right)
			\\&\leq 2\exp\left( - \sqrt{n} \right)\, ,
	\end{align*}
where we made use of \eqref{eq:unmatched-woman-upper-bound-refined} and \eqref{eq:unmatched-man-lower-bound-refined} in the last step.
	\iffalse
	\begin{align*}
		\mathbb{P}\left(\tau \geq (1+d^{-\frac{1}{4}})\tau^* \right)
		%
		=\ &\mathbb{P}\left(\tau \geq (1+d^{-\frac{1}{4}})\tau^*\, ,
		\delta^m[(1+d^{-\frac{1}{4}})\tau^*]
		-
		\delta^w[(1+d^{-\frac{1}{4}})\tau^*]
		\leq
		k
		 \right)\, \\
		 %
		\leq\ &\mathbb{P}\left(		\delta^m[(1+d^{-\frac{1}{4}})\tau^*]
		-
		\delta^w[(1+d^{-\frac{1}{4}})\tau^*]
		\leq
		|k|
		\right) \\
		%
		\leq\ &\mathbb{P}\left(		\delta^m[(1+d^{-\frac{1}{4}})\tau^*]
		\leq
		ne^{-\sqrt{d}} + k
		\right)
		+
		\mathbb{P}\left(		\delta^w[(1+d^{-\frac{1}{4}})\tau^*]
		\geq
		ne^{-\sqrt{d}}
		\right) \\
		%
		\leq\ &\mathbb{P}\left(		\delta^m[(1+d^{-\frac{1}{4}})\tau^*]
		\leq
		ne^{-\sqrt{d}}e^{\frac{1}{4}d^{\frac{1}{4}}}
		\right)
		+
		\mathbb{P}\left(		\delta^w[(1+d^{-\frac{1}{4}})\tau^*]
		\geq
		ne^{-\sqrt{d}}
		\right) \\
		%
		\leq\ & \exp\left(-\sqrt{n}\right)\, .
	\end{align*}
	\fi
	
	As a result, when the imbalance satisfies $|k|=O(ne^{-\sqrt{d}})$, with probability $1 - O(\exp(-\sqrt{n}))$, we have
	\begin{align*}
		\tau \leq\ n\left(\sqrt{d} + d^{\frac{1}{4}}\right)\, .
	\end{align*}
	By definition of $\Mrank$, we have
	\begin{align*}
		\Mrank = \frac{\tau + \delta^m}{n+k} \leq \frac{\tau + n}{n+k}\, .
	\end{align*}
	Hence for $
	\tau \leq\ n\left(\sqrt{d} + d^{\frac{1}{4}}\right)\, $, we have for large enough $n$,
	\begin{align*}
		\Mrank \leq \frac{n}{n+k}\left(\sqrt{d} + d^{\frac{1}{4}} + 1\right)
			\leq \left( 1 + 0.5 d^{-\frac{1}{4}} \right) \left(\sqrt{d} + d^{\frac{1}{4}} + 1\right)
			\leq \sqrt{d} + 2 d^\frac{1}{4},
	\end{align*}
	where we utilized the fact that $\frac{n}{n+k} \leq \frac{n}{n-|k|} \leq \frac{1}{1-Ce^{-\sqrt{d}}} \leq 1 + 2C e^{-\sqrt{d}} \leq 1 + 0.5d^{-\frac{1}{4}}$ for large enough $d$.
	
\iffalse
	\begin{align*}
	\Mrank \leq \frac{n}{n+k}\left(\sqrt{d} + d^{\frac{1}{4}} + 1\right)
	\leq
	\left(
	1 + 2e^{-\sqrt{d}}
	\right)
	\left(\sqrt{d} + d^{\frac{1}{4}} + 1\right)
	\leq
	\sqrt{d}
	+
	2 d^{\frac{1}{4}}
	\, .
	\end{align*}
	The last inequality follows from the fact that $d=\omega(1)$ and that $2e^{-x}\left(x+\sqrt{x}+1\right)\leq \frac{1}{2}\sqrt{x}$ for large enough $x$.
	Plug in $x=\sqrt{d}$ and we obtain the desired result.
\fi
  \end{proof}

%\newpage
\subsubsection{Lower bounds on the number of unmatched women $\delta^w$ and unmatched men $\delta^m$ (Proposition \ref{prop:unmatched-women-lower-bound})}
\label{append:small-medium-d-proof-step1-5}
We now derive a lower on the number of unmatched women $\delta^w$.
Similar to the proof of Lemma \ref{lem:unmatched-woman-upper-bound}, we again make an analogy between balls-into-bins process and DA procedure, but we now consider a variation of balls-into-bins process that exaggerates the effect of ``sampling without replacement'' as opposed to the original balls-into-bins process that assumes sampling with replacement.
The lower bound on the number of empty bins in this process provides a lower bound on the number of unmatched women $\delta^w$, which immediately leads to a lower bound on the number of unmatched men $\delta^m$ by the identity $\delta^m = \delta^w + k$.

\begin{lem} \label{lem:unmatched-women-lower-bound-preliminary}
	For any $t \geq d$ and $\Delta > 0$, we have
	\begin{equation*}
		\mathbb{P}\left( \frac{\delta^w[t]}{n-d} - \left( 1 - \frac{1}{n-d} \right)^{t-d} \leq - \Delta \right) \leq \exp\left( - 2(n-d) \Delta^2 \right).
	\end{equation*}
	This is also valid for the extended process defined in Section \ref{append:notation}.
\end{lem}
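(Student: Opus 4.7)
The plan is to show that $\delta^w[t]$ stochastically dominates the empty-bin count $X$ in a standard balls-into-bins process with $n-d$ bins and $t-d$ balls, via a suitable coupling of the extended DA process with that auxiliary process; the claim then follows from the lower-tail inequality of Lemma \ref{lem:balls-into-bins-uniform} applied to $X$, which gives $\mathbb{P}(X/(n-d) - (1-1/(n-d))^{t-d} \leq -\Delta) \leq \exp(-2(n-d)\Delta^2)$.

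The coupling idea is to exclude a set $D \subseteq \mathcal{W}$ of $d$ ``reserved'' women from the analysis (so the bins are $\mathcal{W}^* := \mathcal{W} \setminus D$, of size $n-d$) and to exclude up to $d$ ``reserved'' proposals that cannot hit $\mathcal{W}^*$. A natural choice is to take $D \supseteq \mathcal{H}_L$, where $L$ is the man proposing at (or just before) time $t$: $L$'s $\leq d$ proposals automatically miss $\mathcal{W}^*$, accounting for the $d$ excluded proposals. For each of the remaining $\geq t - d$ non-$L$ proposal steps $s$, one would reveal the DA recipient $J_s$ (marginally uniform over $\mathcal{W} \setminus \mathcal{H}_s$) via a two-stage construction with an auxiliary Bernoulli $B_s$ of parameter $(n-d)/(n-|\mathcal{H}_s|)$, so that on $B_s = 1$ the recipient $J_s$ is uniform over $\mathcal{W}^*$ and on $B_s = 0$ it is uniform over $(\mathcal{W}\setminus\mathcal{H}_s)\setminus\mathcal{W}^*$. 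A direct marginal computation confirms this reproduces the DA distribution, and the $\{B_s = 1\}$ events produce i.i.d.\ uniform balls in the $n-d$ bins; any bin $j \in \mathcal{W}^*$ left empty by these balls corresponds to a woman with $W_{j,t} = 0$, contributing to $\delta^w[t]$.

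The main obstacle is twofold: (i) the two-stage revelation requires $\mathcal{W}^* \subseteq \mathcal{W} \setminus \mathcal{H}_s$, equivalently $\mathcal{H}_s \subseteq D$, which is not automatic for proposers other than $L$ whose histories may extend outside $D$; and (ii) the number of successful $\{B_s = 1\}$ events is random rather than exactly $t-d$. I would address (i) either by enlarging $D$ adaptively along the proposal sequence, or by introducing a random bijection at each step that identifies the varying eligible bin sets with a common abstract bin set of size $n-d$, thereby delivering genuine i.i.d.\ uniform balls in a fixed bin set. I would address (ii) by padding any shortfall with fresh independent uniform balls in $\mathcal{W}^*$ (which only decreases $X$, preserving the inequality $\delta^w[t] \geq X$) and capping any surplus via a rejection-sampling step that maintains the correct DA marginals for $J_s$. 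Once the coupling is calibrated to produce exactly $t - d$ balls in $n - d$ bins, Lemma \ref{lem:balls-into-bins-uniform} applied to the auxiliary process $X$ yields the claim immediately.
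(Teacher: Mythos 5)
Your overall target is the right one --- show that $\delta^w[t]$ stochastically dominates the number of empty bins in a balls-into-bins process with $n-d$ bins and $t-d$ balls, then invoke Lemma~\ref{lem:balls-into-bins-uniform} --- but the coupling you describe does not go through, and the devices you propose to rescue it are not worked out; at least one is self-defeating. The fatal point is your own obstacle (i). Taking $D \supseteq \mathcal{H}_L$ for a single man $L$ controls only that man's proposals; over the $t$ steps essentially every man proposes, each step $s$ carrying its own history $\mathcal{H}_s \subseteq \mathcal{W}_{I_s}$, and the union of these histories across proposers can be nearly all of $\mathcal{W}$. So there is no fixed $d$-element set $D$ with $\mathcal{H}_s \subseteq D$ for all $s$, and your two-stage Bernoulli revelation is unavailable at most steps. ``Enlarging $D$ adaptively'' destroys the $(n-d)$-bins/$(t-d)$-balls bookkeeping on which the final exponent depends, and the ``random bijection'' between the varying sets $\mathcal{W}\setminus\mathcal{H}_s$ (of varying sizes $n-|\mathcal{H}_s|\ge n-d$) and a fixed abstract set of size $n-d$ is not constructed --- a uniform draw from a larger set does not push forward to a uniform draw on a smaller fixed set under a deterministic identification, so this step needs an argument that is missing. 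As written, the proposal is a plan for a coupling rather than a proof.

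The paper's route avoids all of this by never coupling the identities of the recipients; it compares only the one-step decrement probabilities of the \emph{counts}. Every currently unmatched woman lies in $\mathcal{W}\setminus\mathcal{H}_s$ (she has received no proposal, so in particular none from $I_s$), hence the $s^\text{th}$ proposal lands on an unmatched woman with probability exactly $\delta^w[s-1]/(n-|\mathcal{H}_s|) \le \delta^w[s-1]/(n-d)$. One then defines an auxiliary counting process that decrements with the exaggerated probability $\min\{\cdot/(n-d),1\}$; this process is stochastically dominated by $\delta^w[t]$ and is precisely the empty-bin count of a balls-into-bins process on $n-d$ bins once the first $d$ proposals (which can unmatch at most $d$ women) are set aside. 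Your obstacle (ii) then evaporates, since having at most $t-d$ effective balls only increases the number of empty bins, so the lower-tail bound with exactly $t-d$ balls still applies. I would replace the recipient-level coupling with this direct comparison of decrement probabilities.
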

\begin{proof}
	Note that the $t^\text{th}$ proposal goes to a woman chosen uniformly at random after excluding the set of women $\mathcal{H}_t$ that the man has previously proposed to.
	Therefore,
	\begin{equation*}
		\mathbb{P}\left( \left. \text{$t^\text{th}$ proposal goes to one of unmatched women} \right|  \delta^w[t-1], \mathcal{H}_t \right) = \frac{ \delta^w[t-1] }{ n - |\mathcal{H}_t| } \leq \frac{\delta^w[t-1]}{n-d},
	\end{equation*}
	since $|\mathcal{H}_t| \leq d$.
	Consider a process $\underline{\delta}^w[t]$ defined as
	\begin{equation*}
		\underline{\delta}^w[t] = \underline{\delta}^w[t-1] - X_t
		\quad \text{where} \quad
		X_t \sim \text{Bernoulli}\left( \min\left\{  \frac{\delta^w[t-1]}{n-d}, 1 \right\}  \right).
	\end{equation*}
	Since the process $\underline{\delta}^w[t]$ exaggerates the likelihood of an unmatched woman receiving a proposal and hence exaggerates the likelihood of decrementing by $1$ at each level,  $\delta^w[t]$ stochastically dominates $\underline{\delta}^w[t]$: i.e., $\mathbb{P}\left( \delta^w[t] \leq x \right) \leq \mathbb{P}\left( \underline{\delta}^w[t] \leq x \right)$ for all $x \in \mathbb{N}$.
	We also observe that $\underline{\delta}^w[t]$ counts the number of empty bins in a process (we refer to it below as the original process) similar to balls-into-bins process where $d$ bins are occupied during the first $d$ periods, and then the regular balls-into-bins process begins with $n-d$ empty bins.
	%Instead of analyzing this process directly,
	Consider Lemma~\ref{lem:balls-into-bins-uniform} applied to the ``modified'' balls-into-bins process of putting $t'$ balls into $n - d$ bins,
	where the bins correspond to those which are not occupied by the first $d$ balls in the original process, and $t'$ is the total number of balls which go into these bins in the original process up to $t$.   %(if some of the balls after the first $d$ go into a bin occupied by one of the first $d$ balls, this only caue .
Clearly, $t' \leq t - d$, since the first $d$ balls do not go into these bins. We hence deduce from Lemma~\ref{lem:balls-into-bins-uniform} that
	\begin{align*}
\mathbb{P}\left( \frac{\underline{\delta}^w[t]}{n-d} - \left( 1 - \frac{1}{n-d} \right)^{t-d} \leq - \Delta \right)
&\leq\
\mathbb{P}\left( \frac{\underline{\delta}^w[t]}{n-d} - \left( 1 - \frac{1}{n-d} \right)^{t'} \leq - \Delta \right) \
\\
&\leq\
\exp\left( - 2(n-d) \Delta^2 \right)\, .
	\end{align*}
  \end{proof}

%\newpage
\begin{lem} \label{lem:unmatched-women-lower-bound}
	Consider the setting of Theorem \ref{thm:main-result} and the extended process defined in Section \ref{append:notation}.
	Then there exists $n_0 < \infty$ such that for all $n > n_0$, we have the following lower bounds on the number of unmatched women:
	\begin{align}
		\mathbb{P}\left( \delta^w[(1+d^{-\frac{1}{4}}) n \sqrt{d} ] \leq n e^{-(1 + 2d^{-\frac{1}{4}})\sqrt{d}} \right) &\leq \exp\left( - \sqrt{n} \right),
		\label{eq:unmatched-women-lower-bound-1}
		\\
		\mathbb{P}\left( \delta^w[(1-5d^{-\frac{1}{4}})  n \sqrt{d} ] \leq n e^{-(1-2.5d^{-\frac{1}{4}})\sqrt{d}} \right) &\leq \exp\left( - \sqrt{n} \right).
		\label{eq:unmatched-women-lower-bound-2}
	\end{align}
\end{lem}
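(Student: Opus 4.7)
The plan is to obtain both bounds as direct corollaries of Lemma \ref{lem:unmatched-women-lower-bound-preliminary} by choosing the slack parameter $\Delta$ in that lemma to equal (roughly) the gap between its mean approximation $(1-1/(n-d))^{t-d}$ and the threshold in the target inequality, and then checking that the resulting Hoeffding-type bound $\exp(-2(n-d)\Delta^2)$ is below $\exp(-\sqrt{n})$ under our assumption $d = \omega(1),\ d = o(\log^2 n)$.

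First I would lower bound the mean. Using the inequality $1-x \geq e^{-x-x^2}$ from Lemma \ref{lem:basic-inequalities} applied with $x = 1/(n-d)$, for any $t$ in the relevant range one gets
\[
  \left(1-\tfrac{1}{n-d}\right)^{t-d} \ \geq \ \exp\!\left(-\tfrac{t-d}{n-d}\left(1+\tfrac{1}{n-d}\right)\right),
\]
and since $t = \Theta(n\sqrt{d})$ and $d=o(\log^2 n)$, the correction factor $1+\frac{1}{n-d}$ as well as the shift from $t-d$ to $t$ and from $n-d$ to $n$ contribute only a multiplicative $(1+o(1))$ error to the exponent. Plugging in $t=(1+d^{-1/4})n\sqrt{d}$ yields a lower bound of the form $(1+o(1))e^{-(1+d^{-1/4})\sqrt{d}}$ on $(1-1/(n-d))^{t-d}$, and plugging in $t=(1-5d^{-1/4})n\sqrt{d}$ yields $(1+o(1))e^{-(1-5d^{-1/4})\sqrt{d}}$.

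Second I would choose $\Delta$. For \eqref{eq:unmatched-women-lower-bound-1}, set $\Delta$ equal to the difference between the mean approximation above and $\frac{n}{n-d}e^{-(1+2d^{-1/4})\sqrt{d}}$; by convexity of $x \mapsto e^{-x}$ (the inequality $e^{-a}-e^{-b}\geq (b-a)e^{-b}$ for $a<b$, already used in the proof of Lemma \ref{lem:unmatched-woman-upper-bound}) this gap is at least, up to constants,
\[
  \Delta \ \gtrsim \ d^{-1/4}\sqrt{d}\cdot e^{-(1+2d^{-1/4})\sqrt{d}} \ = \ d^{1/4}e^{-(1+2d^{-1/4})\sqrt{d}}.
\]
An analogous computation for \eqref{eq:unmatched-women-lower-bound-2} gives $\Delta \gtrsim d^{1/4} e^{-(1-2.5d^{-1/4})\sqrt{d}}$.

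Finally I would verify the tail bound. In both cases $(n-d)\Delta^2 \gtrsim n\sqrt{d}\, e^{-O(\sqrt{d})}$, and since $d=o(\log^2 n)$ we have $\sqrt{d}=o(\log n)$, so $e^{-O(\sqrt{d})} = n^{-o(1)}$ and hence $(n-d)\Delta^2 = n^{1-o(1)} \gg \sqrt{n}$ for large $n$. Plugging $\Delta$ into Lemma \ref{lem:unmatched-women-lower-bound-preliminary} then gives $\exp(-2(n-d)\Delta^2) \leq \exp(-\sqrt{n})$, completing the argument for both parts. The only mildly delicate step is the bookkeeping in the exponent — i.e., ensuring that the $(1+o(1))$ slack coming from the $1-x \geq e^{-x-x^2}$ approximation and from replacing $n-d$ by $n$ and $t-d$ by $t$ is genuinely absorbed into the extra factor $d^{-1/4}\sqrt{d}$ that separates the target threshold from the mean; this works because $d=\omega(1)$, so $d^{1/4}\to\infty$, dominating all such $O(1/n)$ and $O(1/\sqrt{d})$ corrections.
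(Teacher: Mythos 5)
Your proposal is correct and follows essentially the same route as the paper's proof: both apply Lemma~\ref{lem:unmatched-women-lower-bound-preliminary} after lower-bounding $(1-\tfrac{1}{n-d})^{t-d}$ via $1-x\geq e^{-x-x^2}$, use the convexity inequality $e^{-a}-e^{-b}\geq(b-a)e^{-b}$ to control the gap to the target threshold, and verify $2(n-d)\Delta^2\geq\sqrt{n}$ using $d=o(\log^2 n)$. The only (immaterial) difference is that the paper sets $\Delta$ equal to the threshold $e^{-(1+2d^{-1/4})\sqrt{d}}$ itself rather than to the full mean-minus-threshold gap $\asymp d^{1/4}e^{-(1+2d^{-1/4})\sqrt{d}}$ as you do; both choices satisfy the required inequality and yield the stated tail bound.
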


\begin{proof}
	Let $\tau^* \triangleq n \sqrt{d}$.
	
	\noindent
	{\bf Proof of \eqref{eq:unmatched-women-lower-bound-1}.} Fix $t = (1+d^{-\frac{1}{4}}) \tau^* = (1+d^{-\frac{1}{4}})n\sqrt{d}$.
	For large enough $n$, we have
	\begin{equation*}
		\frac{d}{n} \leq \frac{d}{e^{\sqrt{d}}} \leq 0.1 d^{-\frac{1}{4}}
		, \quad
		\frac{t-d}{n-d} \leq \frac{t}{n} \cdot \frac{1}{1-d/n} \leq \sqrt{d} \cdot \frac{ 1 + d^{-\frac{1}{4}} }{ 1 - 0.1 d^{-\frac{1}{4}} }
		\leq \sqrt{d} (1 + 1.2 d^{-\frac{1}{4}})
		.
	\end{equation*}
	Consequently, with $\Delta \defeq e^{-(1+2d^{-\frac{1}{4}})\sqrt{d}}$, for large enough $n$ we have
	\begin{align*}
		\frac{n-d}{n} \left[ \left( 1 - \frac{1}{n-d} \right)^{t-d} - \Delta \right]
			&\geq \left( 1 - \frac{d}{n} \right)\cdot \left[ \exp\left( - \frac{t-d}{n-d} - \frac{t-d}{(n-d)^2} \right) - e^{-(1+2d^{-\frac{1}{4}})\sqrt{d}} \right]
			\\&\geq \frac{1}{2} \cdot \left[ \exp\left( - \sqrt{d} ( 1 + 1.2d^{-\frac{1}{4}}) \cdot ( 1+ 1/(n-d) ) \right) - e^{-(1+2d^{-\frac{1}{4}})\sqrt{d}} \right]
			\\&\geq \frac{1}{2} \cdot \left[ \exp\left( - \sqrt{d} ( 1 + 1.4d^{-\frac{1}{4}})  \right) - \exp\left( - \sqrt{d}(1+2d^{-\frac{1}{4}}) \right) \right]
			\\&= e^{-\sqrt{d}} \times \frac{1}{2} \cdot \left( \exp\left( -1.4d^{\frac{1}{4}} \right) - \exp\left( -2 d^{\frac{1}{4}} \right) \right)
			\\&\geq e^{-\sqrt{d}} \times \frac{1}{2} \cdot e^{-2d^{\frac{1}{4}}} \left( 2.0 d^{\frac{1}{4}} - 1.4 d^{\frac{1}{4}} \right)
			\\&= e^{-\sqrt{d}} \times e^{-2d^{\frac{1}{4}}} \times 0.3 d^{\frac{1}{4}}
			\geq e^{-(1+2d^{-\frac{1}{4}})\sqrt{d}}\, .
	\end{align*}
	In the second last inequality, we utilize the fact that $e^{-a} - e^{-b} \geq e^{-b}(b-a)$ for any $0 < a < b$.
	Therefore, by Lemma \ref{lem:unmatched-women-lower-bound-preliminary},
	\begin{align*}
		\mathbb{P}\left( \delta^w[t] \leq n e^{-(1+2d^{-\frac{1}{4}})\sqrt{d}} \right)
			&= \mathbb{P}\left( \frac{\delta^w[t]}{n} \leq  e^{-(1+2d^{-\frac{1}{4}})\sqrt{d}} \right)
			\\&\leq \mathbb{P}\left( \frac{\delta^w[t]}{n} \leq  \frac{n-d}{n} \left( \left( 1 - \frac{1}{n-d} \right)^{t-d} - \Delta \right) \right)
			\\&\leq \mathbb{P}\left( \frac{\delta^w[t]}{n-d} - \left( 1 - \frac{1}{n-d} \right)^{t-d} \leq - \Delta \right)
			\\&\leq \exp\left( - 2(n-d) \Delta^2 \right) = \exp\left( - 2 (n-d) e^{-2(1+d^{-\frac{1}{4}})\sqrt{d}} \right).
	\end{align*}
	The claim follows from the fact that $2 (n-d)e^{-2(1+d^{-\frac{1}{4}})\sqrt{d}} \geq \sqrt{n}$ for large enough $n$.
	\\
	
	\noindent
	{\bf Proof of \eqref{eq:unmatched-women-lower-bound-2}.} Fix $t = (1-5d^{-\frac{1}{4}}) \tau^* = (1 - 5d^{-\frac{1}{4}})n\sqrt{d}$.
	For large enough $n$, we have
	\begin{equation*}
		\frac{d}{n} \leq \frac{d}{e^{\sqrt{d}}} \leq 0.1 d^{-\frac{1}{4}}
		, \quad
		\frac{t-d}{n-d} \leq \frac{t}{n} \cdot \frac{1}{1-d/n} \leq \sqrt{d} \cdot \frac{1 - 5d^{-\frac{1}{4}} }{ 1 - 0.1 d^{-\frac{1}{4}} }
		\leq \sqrt{d} (1 - 4.8 d^{-\frac{1}{4}})
		,
	\end{equation*}
	Consequently, with $\Delta \defeq e^{-(1-2.5d^{-\frac{1}{4}})\sqrt{d}}$,
	\begin{align*}
		\frac{n-d}{n} \left[ \left( 1 - \frac{1}{n-d} \right)^{t-d} - \Delta \right]
			&\geq \left( 1 - \frac{d}{n} \right)\cdot \left[ \exp\left( - \frac{t-d}{n-d} - \frac{t-d}{(n-d)^2} \right) - e^{-(1-2.5d^{-\frac{1}{4}})\sqrt{d}} \right]
			\\&\geq \frac{1}{2} \cdot \left[ \exp\left( - \sqrt{d} ( 1 - 4.8d^{-\frac{1}{4}}) \cdot ( 1+ 1/(n-d) ) \right) - e^{-(1-2.5d^{-\frac{1}{4}})\sqrt{d}} \right]
			\\&\geq \frac{1}{2} \cdot \left[ \exp\left( - \sqrt{d} ( 1 - 4.6d^{-\frac{1}{4}})  \right) - \exp\left( - \sqrt{d}(1-2.5d^{-\frac{1}{4}}) \right) \right]
			\\&= e^{-\sqrt{d}} \times \frac{1}{2} \cdot \left( \exp\left( 4.6d^{\frac{1}{4}} \right) - \exp\left( 2.5d^{\frac{1}{4}} \right) \right)
			\\&\stackrel{\textup(a)}{\geq} e^{-\sqrt{d}} \times \frac{1}{2} \cdot e^{2.5d^{\frac{1}{4}}} \left( 4.6 d^{\frac{1}{4}} - 2.5 d^{\frac{1}{4}} \right) 
			\\&\geq e^{-\sqrt{d}} \times e^{2.5d^{\frac{1}{4}}} \times d^{\frac{1}{4}}
			\geq e^{-(1-2.5d^{-\frac{1}{4}})\sqrt{d}}\, ,
	\end{align*}
	for large enough $n$. Here (a) follows from the fact that $f(x)=e^x$ is convex hence $f(x_2)-f(x_1)\geq f'(x_1)(x_2-x_1)$ for $x_2>x_1$.
	Therefore, by Lemma \ref{lem:unmatched-women-lower-bound-preliminary},
	\begin{align*}
		\mathbb{P}\left( \delta^w[t] \leq n e^{-(1-2.5d^{-\frac{1}{4}})\sqrt{d}} \right)
			&= \mathbb{P}\left( \frac{\delta^w[t]}{n} \leq  e^{-(1-2.5d^{-\frac{1}{4}})\sqrt{d}}\right)
			\\&\leq \mathbb{P}\left( \frac{\delta^w[t]}{n} \leq  \frac{n-d}{n} \left( \left( 1 - \frac{1}{n-d} \right)^{t-d} - \Delta \right) \right)
			\\&\leq \mathbb{P}\left( \frac{\delta^w[t]}{n-d} - \left( 1 - \frac{1}{n-d} \right)^{t-d} \leq - \Delta \right)
			\\&\leq \exp\left( - 2(n-d) \Delta^2 \right) = \exp\left( - 2 (n-d) e^{-2(1-2.5d^{-\frac{1}{4}})\sqrt{d}} \right).
	\end{align*}
	The claim follows from the fact that $2 (n-d) e^{-2(1-2.5d^{-\frac{1}{4}})\sqrt{d}} \geq \sqrt{n}$ for large enough $n$.
  \end{proof}

%\newpage
We are now able to prove Proposition \ref{prop:unmatched-women-lower-bound}.
\begin{proof}[Proof of Proposition \ref{prop:unmatched-women-lower-bound}]	By Proposition \ref{prop:total-proposal-upper-bound} and the monotonicity of $\delta^w[t]$, we have for large enough $n$,
	\begin{align}
		\mathbb{P}\left(
		\delta^{w}[\tau] \leq e^{-2d^{\frac{1}{4}}}ne^{-\sqrt{d}}
		\right)&\
		\leq
		\mathbb{P}\left(
		\delta^{w}[\tau] \leq e^{-2d^{\frac{1}{4}}} ne^{-\sqrt{d}}\, ,
		\tau \leq (1+d^{-\frac{1}{4}})\tau^*
		\right)
		+
		\mathbb{P}\left(
		\tau \geq (1+d^{-\frac{1}{4}})\tau^*
		\right) \nonumber \\
		&\ \leq
		\mathbb{P}\left(
		\delta^{w}[(1+d^{-\frac{1}{4}})\tau^*] \leq e^{-2d^{\frac{1}{4}}} ne^{-\sqrt{d}}
		\right)
		+
		\exp\left(
		-\sqrt{n}
		\right)\, .\label{eq:unmatched-woman-lower-bound-aux-1}
	\end{align}
	Moreover, by Lemma \ref{lem:unmatched-women-lower-bound}, we have for large enough $n$,
	\begin{align*}
		\mathbb{P}\left(
		\delta^{w}[(1+d^{-\frac{1}{4}})\tau^*] \leq e^{-2d^{\frac{1}{4}}} ne^{-\sqrt{d}}
		\right)
		\leq\
		\exp\left(-\sqrt{n}\right)\, .
	\end{align*}
	From \eqref{eq:unmatched-woman-lower-bound-aux-1}, we conclude that with probability $1 - 2 \exp(-\sqrt{n})$,
	\begin{align*}
	\delta^w \geq\ ne^{-\sqrt{d} - 2 d^{\frac{1}{4}}}
	\, .
	\end{align*}
	Since $|\delta^m - \delta^w| = |k|\leq O(ne^{-\sqrt{d}}) $, it follows that with probability $1 - 2\exp(-\sqrt{n})$,
	\begin{align*}
	\delta^m \geq\ ne^{-\sqrt{d} - 3 d^{\frac{1}{4}}}
	\, .
	\end{align*}
  \end{proof}

%\newpage
\subsection{Step 2: Lower Bound on the Total Number of Proposals $\tau$} \label{append:small-medium-d-proof-step2}
In this section, we prove the following two propositions.

\begin{prop}\label{prop:unmatched-women-upper-bound}
	Consider the setting in Theorem \ref{thm:main-result}. With probability $1 - O\left(\exp\left(-d^\frac{1}{4}\right)\right)$ ,we have the following upper bounds on the number of unmatched men $\delta^m$ and unmatched women $\delta^w$:
	\begin{align*}
	\delta^m \leq n \exp\left( -\sqrt{d} + 2.5 d^\frac{1}{4} \right)
	, \qquad
	\delta^w \leq n \exp\left( -\sqrt{d} + 2.5d^\frac{1}{4} \right)\, .
	\end{align*}
\end{prop}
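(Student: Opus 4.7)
I would follow the roadmap laid out in Section~\ref{sec:proof-sketch}: first establish the first-moment bound $\mathbb{E}[\delta^m] \leq n \exp(-\sqrt{d} + O(d^{1/4}))$ (this is the content of the forthcoming Lemma~\ref{lem:unmatched-women-expected-ub}), then promote it to the desired high-probability bound via Markov's inequality, and finally deduce the analogous bound on $\delta^w$ from the identity $\delta^m - \delta^w = k$ together with the assumption $|k| = O(n e^{-\sqrt{d}})$, which is much smaller than the tolerance $n \exp(-\sqrt{d} + 2.5 d^{1/4})$ in the target.

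For the first-moment bound, the key idea is a symmetry reduction. Because the men's preference lists and random neighborhoods $\cW_i$ are i.i.d., we have $\mathbb{E}[\delta^m] = (n+k)\,\mathbb{P}(i^* \text{ unmatched})$ for any fixed man $i^*$. Since the MOSM is invariant under the order in which men enter Algorithm~\ref{alg:2MPDA}, I would choose $i^*$ to be the last to enter. Applying Proposition~\ref{prop:total-proposal-upper-bound} to the sub-market $\cM \setminus \{i^*\}$ versus $\cW$ gives that with probability $1 - O(\exp(-\sqrt{n}))$ the number of proposals $\tau'$ made before $i^*$ enters is at most $n(\sqrt{d} + d^{1/4})$. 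Lemma~\ref{cor:acceptance-probability-upper-bound} then gives that the worst-case ex-ante acceptance probability satisfies $\bar p_{\tau'+1} \leq (1+o(1))/\sqrt{d}$, and since $\bar p_t$ is monotone non-increasing along any sample path, this bound persists for every proposal made during the rejection chain triggered by $i^*$. Conditional on these events I would bound $\mathbb{P}(i^* \text{ unmatched})$ by $(1 - \bar p)^d(1 + o(1)) \leq \exp(-\sqrt{d} + d^{1/4})$, using that $i^*$ is unmatched iff he is effectively rejected at all $d$ of his potential partners. Combining with the negligible failure probability yields $\mathbb{E}[\delta^m] \leq n \exp(-\sqrt{d} + 2 d^{1/4})$, and Markov's inequality then gives $\mathbb{P}(\delta^m \geq n \exp(-\sqrt{d} + 2.5 d^{1/4})) \leq \exp(-0.5 d^{1/4}) = O(\exp(-d^{1/4}))$.

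The main obstacle is the chain analysis leading to the $(1-\bar p)^d$ estimate, since the rejection chain triggered by $i^*$ is not a sequence of independent Bernoulli trials. Two complications arise: (i) when $i^*$'s proposal is tentatively accepted, the displaced man initiates a sub-chain which could eventually return and displace $i^*$; and (ii) the ex-ante acceptance probability $\bar p_t$ drifts as the chain proceeds. For (i), the point is that each sub-chain contains $O(1/\bar p) = O(\sqrt{d})$ proposals in expectation, and each such proposal hits $i^*$'s tentative partner with probability $O(1/n)$, so the probability that any sub-chain displaces $i^*$ is $O(\sqrt{d}/n) = o(1)$; hence each of $i^*$'s proposals is ``effectively rejected'' with probability at most $(1-\bar p)(1+o(1))$. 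For (ii), the monotonicity of $\bar p_t$ together with the bound from Lemma~\ref{cor:acceptance-probability-upper-bound} ensures the uniform bound $\bar p_t \leq (1+o(1))/\sqrt{d}$ along the entire chain. Stochastically dominating the sequence of effective-rejection events by i.i.d.\ Bernoullis with parameter $1-\bar p$ then yields $\mathbb{P}(i^* \text{ unmatched}) \leq (1-\bar p)^d(1+o(1))^d \leq \exp(-\sqrt{d}+d^{1/4})$, which is exactly the input needed for the Markov step above.
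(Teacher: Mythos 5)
Your overall architecture matches the paper's exactly: symmetry reduces $\Ex[\delta^m]$ to $(n+k)\,\prob(i^* \text{ unmatched})$ for a last-entering man $i^*$, the upper bound on $\tau$ from Proposition~\ref{prop:total-proposal-upper-bound} controls the state of the market when $i^*$ proposes, Markov's inequality promotes the first-moment bound, and $\delta^w = \delta^m - k$ finishes. However, there is a genuine gap in the central step. To show that $i^*$ is unlikely to be rejected $d$ times you need a \emph{lower} bound on the per-proposal acceptance probability: if $p_t \geq \underline{p}$ then $\prob(\text{all } d \text{ proposals rejected}) \leq (1-\underline{p})^d$. You instead invoke Lemma~\ref{cor:acceptance-probability-upper-bound}, which bounds the maximal ex-ante acceptance probability from \emph{above} by $(1+o(1))/\sqrt{d}$; an upper bound on $p_t$ only yields $\prod_t (1-p_t) \geq (1-\overline{p})^d$, the reverse of what you need, so your proposed stochastic domination by i.i.d.\ Bernoulli$(1-\overline{p})$ rejections points the wrong way. (That lemma is used in the paper for the opposite task, namely lower-bounding $\delta^m[t]$ in Lemma~\ref{lem:unmatched-man-lower-bound}.) The paper's Lemma~\ref{lem:unmatched-women-expected-ub} fills this hole differently: from $t \leq n(\sqrt{d}+d^{1/4})$ throughout the run, the average number of proposals received by the at least $n-d+1$ candidate women is at most $\sqrt{d}+1.1d^{1/4}$, and Jensen's inequality applied to $x \mapsto 1/(x+1)$ gives the required lower bound $p_t \geq 1/(\sqrt{d}+1.2d^{1/4})$ uniformly over how the proposals are distributed across women. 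You have the ingredients to run this argument, but as written the acceptance-probability step does not go through.

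A secondary issue: your estimate that each rejection sub-chain contains $O(1/\overline{p}) = O(\sqrt{d})$ proposals in expectation is not correct --- a chain terminates only when a proposal reaches an unmatched woman (or a list is exhausted), which happens per proposal with probability about $\delta^w[t]/n \approx e^{-\sqrt{d}}$, so the expected chain length is of order $e^{\sqrt{d}}$ rather than $\sqrt{d}$. Your conclusion that the displacement probability is $o(1)$ survives because $e^{\sqrt{d}} = n^{o(1)}$ when $d = o(\log^2 n)$, but the paper sidesteps the chain-length estimate entirely: conditional on no failure, each proposal in the chain is at least $\delta^w[t] \geq ne^{-\sqrt{d}-2d^{1/4}} \geq \sqrt{n}$ times more likely to hit an unmatched woman (terminating the chain) than to hit $i^*$'s tentative partner, so the expected number of returns to her is at most $1/\sqrt{n}$. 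Note that this step relies on the high-probability \emph{lower} bound on $\delta^w[t]$ from Proposition~\ref{prop:unmatched-women-lower-bound}, which your sketch never invokes.
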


\begin{prop}\label{prop:total-proposal-lower-bound}
	Consider the setting of Theorem \ref{thm:main-result}. With probability $1 -
O\left(\exp\left(-d^\frac{1}{4}\right)\right)$, we have the following lower bound on the total number of proposals and men's average rank under the men-optimal stable matching:
	\begin{align*}
	\tau \geq\ n\left(\sqrt{d} - 5d^{\frac{1}{4}}\right)\, ,\qquad
	\Mrank
	\geq\
	\sqrt{d}
	-
	6d^{\frac{1}{4}}
	\, .
	\end{align*}
\end{prop}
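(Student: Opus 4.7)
My plan is to derive the lower bound $\tau \geq n(\sqrt{d} - 5 d^{1/4})$ by a short monotonicity argument that combines the just-stated Proposition~\ref{prop:unmatched-women-upper-bound} (which upper-bounds the final number of unmatched women $\delta^w$) with the early-time lower bound on $\delta^w[t]$ from Lemma~\ref{lem:unmatched-women-lower-bound}. The bound on $\Rmen$ will then follow from the identity $\Rmen = (\tau + \delta^m)/(n+k)$ together with the imbalance assumption $|k| = O(n e^{-\sqrt{d}})$.

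Set $T \defeq (1 - 5 d^{-1/4}) n \sqrt{d}$. By Proposition~\ref{prop:unmatched-women-upper-bound}, with probability $1 - O(\exp(-d^{1/4}))$ we have $\delta^w \leq n \exp(-\sqrt{d} + 2.5\, d^{1/4})$. By the bound~\eqref{eq:unmatched-women-lower-bound-2} of Lemma~\ref{lem:unmatched-women-lower-bound} applied to the extended process, with probability at least $1 - \exp(-\sqrt{n})$ we have $\delta^w[T] > n \exp(-\sqrt{d} + 2.5\, d^{1/4})$. Since $d = o(\log^2 n)$ gives $\exp(-\sqrt{n}) = o(\exp(-d^{1/4}))$, a union bound shows both events hold simultaneously with probability $1 - O(\exp(-d^{1/4}))$.

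On this joint event, $\delta^w[T] > \delta^w = \delta^w[\tau]$. Since $\delta^w[\cdot]$ is non-increasing along every sample path of the extended process (see Section~\ref{append:notation}), this strict inequality forces $\tau \geq T = n(\sqrt{d} - 5\, d^{1/4})$ (up to integer rounding of $T$, which contributes only $O(1)$ and is absorbed into the slack below). For the rank bound, from $\Rmen \geq \tau/(n+k)$ and $n/(n+k) \geq 1 - |k|/n \geq 1 - O(e^{-\sqrt{d}})$, I get
\[
\Rmen \;\geq\; (\sqrt{d} - 5\, d^{1/4})\bigl(1 - O(e^{-\sqrt{d}})\bigr) \;\geq\; \sqrt{d} - 5\, d^{1/4} - O(\sqrt{d}\, e^{-\sqrt{d}}) \;\geq\; \sqrt{d} - 6\, d^{1/4}
\]
for $d$ large enough, using $\sqrt{d}\, e^{-\sqrt{d}} = o(d^{1/4})$.

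Granted Proposition~\ref{prop:unmatched-women-upper-bound}, the proof of Proposition~\ref{prop:total-proposal-lower-bound} is therefore essentially a one-line monotonicity argument. The genuine technical obstacle in this step sits inside Proposition~\ref{prop:unmatched-women-upper-bound} itself: bounding $\mathbb{E}[\delta^m]$ by exchangeability over which man enters last in MPDA, and then controlling the probability that this last man exhausts his list of $d$ neighbors via a careful analysis of the rejection chain he triggers, conditional on the empirical load implied by the bound $\tau \leq n(\sqrt{d} + d^{1/4})$ already established in Proposition~\ref{prop:total-proposal-upper-bound}.
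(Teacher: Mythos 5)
Your proposal is correct and follows essentially the same route as the paper: the paper likewise combines the high-probability upper bound on $\delta^w=\delta^w[\tau]$ from Proposition~\ref{prop:unmatched-women-upper-bound} with the lower bound \eqref{eq:unmatched-women-lower-bound-2} on $\delta^w[(1-5d^{-1/4})n\sqrt{d}]$ in the extended process, uses monotonicity of $\delta^w[\cdot]$ to force $\tau \geq n(\sqrt{d}-5d^{1/4})$, and then passes to $\Rmen$ via $\Rmen \geq \tau/(n+k)$ and $|k|=O(ne^{-\sqrt{d}})$. The only cosmetic difference is that the paper phrases the monotonicity step as a decomposition of $\mathbb{P}(\tau < (1-5d^{-1/4})n\sqrt{d})$ rather than as a union bound on the complementary events, and your observation that the genuine work lives in Proposition~\ref{prop:unmatched-women-upper-bound} matches the paper's structure.
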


The proofs of Proposition \ref{prop:unmatched-women-upper-bound} and \ref{prop:total-proposal-lower-bound} have the following structure:
\begin{itemize}
\item (Sections~\ref{append:small-medium-d-proof-step2-1} and \ref{append:small-medium-d-proof-step2-2})
	Proof of Proposition \ref{prop:unmatched-women-upper-bound}:
	We first derive an upper bound on the expected number of unmatched men $\mathbb{E}[\delta^m]$ in Lemma \ref{lem:unmatched-women-expected-ub}, utilizing the fact that the probability of the last proposing man being rejected cannot be too large given that the total number of proposals $\tau$ is limited by its upper bound (Proposition \ref{prop:total-proposal-upper-bound}).
	We immediately deduce an upper bound $\mathbb{E}[\delta^w]$ by using the identity $\delta^m = \delta^w + k$.
	The high probability upper bounds on $\delta^m$ and $\delta^w$ follow by applying Markov's inequality.

\item (Section~\ref{append:small-medium-d-proof-step2-3})
	Proof of Proposition \ref{prop:total-proposal-lower-bound}: We obtain a lower bound on the total number of proposals $\tau$ by showing that the current number of unmatched women $\delta^w[t]$ does not decay fast enough (again argued with a balls-into-bins analogy) and hence it will violate the upper bound on $\delta^w[\tau]$ ($=\delta^w$) derived in Proposition \ref{prop:unmatched-women-upper-bound} if $\tau$ is too small.
The lower bound on $\tau$ immediately translates into the lower bound on
%the number of unmatched men $\delta^m$ due to the identity $\delta^m = \frac{\tau + \delta^m}{n+k}$.
$\Mrank$ due to the identity $\Mrank = \frac{\tau + \delta^m}{n+k}$.
%\yk{Isn't it true that $\Mrank = \frac{\tau + \delta^m}{n+k}$? Do you mean $\Mrank$ instead of $\delta^m$, both places in this sentence?}

\end{itemize}

%\newpage
\subsubsection{Upper bound on the expected number of unmatched women $\mathbb{E}[\delta^w]$} \label{append:small-medium-d-proof-step2-1}
Using a careful analysis of the rejection chains triggered by the last proposing man's proposal, we are able to derive an upper bound on the expected number of unmatched women.
%This step will enable us to derive inequalities of the opposite direction to the results of Proposition \ref{prop:total-proposal-upper-bound} and \ref{prop:women-ranking-lower-bound}.
\begin{lem}\label{lem:unmatched-women-expected-ub}
	Consider the setting of Theorem~\ref{thm:main-result}. There exists $n_0< \infty$ such that for all $n > n_0$, we have the following upper bounds on the expected number of unmatched men and women under stable matching
	\begin{align}
	\Ex [\delta^m] \leq n \exp(-\sqrt{d} + 1.4d^{1/4}) \, , \quad
	\Ex [\delta^w] \leq n \exp(-\sqrt{d} + 1.5d^{1/4}) \, .
	\end{align}
\end{lem}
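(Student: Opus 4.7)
The plan is to exploit the order-independence of the MOSM and the exchangeability of men to reduce the computation of $\mathbb{E}[\delta^m]$ to the analysis of a single rejection chain---the one initiated when a distinguished man enters MPDA last. Fix any $i_0 \in \cM$; since the MOSM does not depend on the order in which men enter MPDA (Algorithm~\ref{alg:2MPDA}) and the men are i.i.d., we have
\begin{equation*}
\mathbb{E}[\delta^m] \;=\; (n+k)\, \mathbb{P}[i_0 \text{ is unmatched in MOSM}],
\end{equation*}
and we may compute the right-hand side under the execution where $i_0$ enters last.

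Set $\tau^* := n(\sqrt{d}+d^{1/4})$ and let $G := \{\tau \leq \tau^*\}$. Proposition~\ref{prop:total-proposal-upper-bound} yields $\mathbb{P}[G^c] = O(\exp(-\sqrt{n}))$; since $d = o(\log^2 n)$ gives $\sqrt{n} \gg \sqrt{d}$, the contribution of $G^c$ to $(n+k)\mathbb{P}[i_0 \text{ unmatched}]$ is negligibly smaller than the target $n \exp(-\sqrt{d}+1.4 d^{1/4})$, so it suffices to bound $\mathbb{P}[i_0 \text{ unmatched} \mid G]$. On $G$, $i_0$ ends up unmatched only if he exhausts his list of $d$ women---equivalently, for each $w \in \cW_{i_0}$ his MOSM match is not $w$, which means $w$ is matched in MOSM to a man she prefers to $i_0$. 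Conditional on the final state $(W_{j,\tau})_{j \in \cW}$ and on $\cW_{i_0}$, the probability that a particular $w \in \cW_{i_0}$ does not end up matched to $i_0$ is upper bounded by $W_{w,\tau}/(W_{w,\tau}+1)$ (using the monotonicity $W_{w,t} \leq W_{w,\tau}$ and the principle of deferred decisions applied to $w$'s preference over her proposers). Taking a product over $w \in \cW_{i_0}$, using $x/(x+1) \leq \exp\!\bigl(-1/(x+1)\bigr)$, and invoking the constraint $\sum_j W_{j,\tau} = \tau \leq \tau^*$ together with the convexity of $x \mapsto 1/(x+1)$ (averaged over the uniformly random $d$-subset $\cW_{i_0} \subset \cW$), one obtains
\begin{equation*}
\mathbb{P}[i_0 \text{ unmatched} \mid G] \;\leq\; \exp\!\left(-\frac{d}{\sqrt{d}+d^{1/4}+1}\right) \;\leq\; \exp\!\bigl(-\sqrt{d} + 1.3\, d^{1/4}\bigr)
\end{equation*}
for $n$ large. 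Multiplying by $n+k \leq n\bigl(1+O(e^{-\sqrt{d}})\bigr)$ delivers the claimed bound on $\mathbb{E}[\delta^m]$.

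The bound on $\mathbb{E}[\delta^w]$ then follows immediately from the deterministic identity $\delta^m - \delta^w = k$ (a consequence of the matching structure): $\mathbb{E}[\delta^w] \leq \mathbb{E}[\delta^m] + |k|$, and since $|k| = O(n e^{-\sqrt{d}})$ the extra term is absorbed into the slightly weaker exponent $1.5\, d^{1/4}$.

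The main technical obstacle sits in the central inequality $\mathbb{P}[i_0 \text{ unmatched}\mid G] \leq \mathbb{E}\bigl[\prod_{w \in \cW_{i_0}} W_{w,\tau}/(W_{w,\tau}+1)\mid G\bigr]$. To make this rigorous one must carefully disentangle $i_0$'s preference list from the state $W_{j,\tau}$, which itself depends on the (at most $d$) proposals $i_0$ makes during his chain; this is handled by a two-step conditioning, first on the sub-MPDA state produced by the other $n+k-1$ men, then on $i_0$'s i.i.d.\ preferences, using that $i_0$'s chain contributes only $O(d)$ extra proposals system-wide---negligible compared with the $\Theta(n\sqrt{d})$ proposals already made. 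A related subtlety is that Jensen's inequality gives only a mean lower bound on $\sum_{w \in \cW_{i_0}} 1/(W_{w,\tau}+1)$, whereas we need a bound that survives the outer $\exp(-\cdot)$; this requires a Hoeffding/Bernstein-type concentration bound over the uniformly random $d$-subset $\cW_{i_0}$, exploiting that individual summands are typically $\Theta(1/\sqrt{d})$ rather than $\Theta(1)$.
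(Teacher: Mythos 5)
Your high-level reduction matches the paper's: by symmetry and order-independence of the MOSM, bound $\Ex[\delta^m]$ via the probability that a single distinguished man who enters last ends up unmatched, work on the good event $\{\tau \leq n(\sqrt{d}+d^{1/4})\}$ supplied by Proposition~\ref{prop:total-proposal-upper-bound}, and pass to $\Ex[\delta^w]$ via $\delta^w = \delta^m - k$. Where you diverge is the core estimate: you replace the paper's sequential analysis of $i_0$'s proposals and their rejection chains by a static product bound $\prob[i_0 \text{ unmatched} \mid G] \leq \Ex\bigl[\prod_{w \in \cW_{i_0}} W_{w,\tau}/(W_{w,\tau}+1) \mid G\bigr]$, and this step is where the proof has a genuine gap. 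The events $\{w \text{ prefers } \mu(w) \text{ to } i_0\}$ for different $w \in \cW_{i_0}$ are coupled through the DA dynamics: whether $i_0$ even proposes to $w$, and how many proposals $w$ ultimately receives, both depend on the outcomes at the women earlier in $i_0$'s list and on the rejection chains his acceptances trigger. You flag this as "the main technical obstacle" and propose a two-step conditioning, but your stated reason for why it is harmless --- that $i_0$'s chain adds only $O(d)$ proposals system-wide --- addresses the wrong quantity. The danger is not the volume of extra proposals but their location: a rejection chain started when some woman $w$ accepts $i_0$ can return to $w$ and displace him, and nothing in your argument controls the probability of that recurrence.

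The ingredient you are missing is the \emph{lower} bound on the number of unmatched women throughout the run (Proposition~\ref{prop:unmatched-women-lower-bound}), which the paper adds to the failure event alongside the $\tau$ upper bound. With $\delta^w[t] \geq n e^{-\sqrt{d}-2d^{1/4}} \geq \sqrt{n}$ at all times, each step of a rejection chain is at least $\sqrt{n}$ times more likely to hit an unmatched woman (terminating the chain) than to return to the specific woman $w$ holding $i_0$, so the displacement probability per accepted proposal is $O(1/\sqrt{n})$ and each of $i_0$'s $d$ proposals is "successful" with probability at least roughly $1/(\sqrt{d}+1.3d^{1/4})$. That sequential, per-proposal argument also sidesteps the second difficulty you correctly identify in your own plan: because the paper averages the acceptance probability over all $\geq n-d+1$ candidate recipients of each proposal (Jensen applied to $1/(x+1)$ against the global constraint $\sum_j W_{j,t} \leq \tau^*$), no concentration over the random $d$-subset $\cW_{i_0}$ is ever needed, whereas your product formulation forces you to prove a Bernstein-type bound for $\sum_{w \in \cW_{i_0}} 1/(W_{w,\tau}+1)$ with error $o(\sqrt{d})$, which is additional unproved work. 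As written, the proposal asserts the central inequality rather than establishing it, and the repair requires importing exactly the $\delta^w[t]$ lower bound and chain-termination argument that constitute the substance of the paper's proof.
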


\begin{proof}
	We will track the progress of the man proposing DA algorithm making use of the principle of deferred decisions, and further make use of a particular sequence of proposals: we will specify beforehand an arbitrary man $i$ (before any information whatsoever is revealed), and then run DA to convergence on the other men, before man $i$ makes a single proposal.  We will show that the probability that the man $i$ remains unmatched is bounded as
	\begin{align}
	\prob(\mu(i) = i) \leq \exp(-\sqrt{d} + 1.4d^{1/4})
	\label{eq:prob-i-unmatched-ub}
	\end{align}
	for large enough $n$.
	This will imply that, by symmetry across men, the expected number of unmatched men under stable matching will be bounded above as
	$$\Ex[\delta^m] \leq  (n+k)\exp(-\sqrt{d} + 1.4 d^{1/4})\, .$$
	Finally the number of unmatched women at the end is exactly $\delta^w = \delta^m -k$, and so
	$$\Ex[\delta^w] = \Ex[\delta^m] -k \leq (n+k)\exp(-\sqrt{d} + 1.4d^{1/4}) - k \leq n \exp(-\sqrt{d} + 1.5d^{1/4})$$ for large enough $n$
	as required, using $k = O(ne^{-\sqrt{d}})$. The rest of proof is devoted to establishing \eqref{eq:prob-i-unmatched-ub}.
	
	%Let the number of proposals made up to the point where DA converges on the men besides $i$ (whereas $i$ has yet to make a proposal) be $\tau'$ and let the number of unmatched women at that point be $\delta^w(\tau')$.
	%As a preliminary observation, note that at the end of DA, we must have $\tau \geq \tau'$ (the number of proposals will increase from here) and $\delta^w \leq \delta^w(\tau')$ (the number of unmatched women will either remain the same or decrease (by one, in fact)). Hence, u
	Using Proposition~\ref{prop:total-proposal-upper-bound}, we have that with probability $1 - O(\exp(-\sqrt{n}))$, at the end of DA, $\tau$ is bounded above as
	\begin{align}
	\tau
	\leq n \big (
	\sqrt{d}
	+
	d^{\frac{1}{4}} \big )
	\, ,
	\label{eq:tauprime-ub}
	\end{align}
	and using Proposition~\ref{prop:unmatched-women-lower-bound}, we have that with probability $1 - O(\exp(-\sqrt{n}))$,
	\begin{align}
	\delta^w \geq ne^{-\sqrt{d} - 2 d^{\frac{1}{4}}}
	\, ,
	\label{eq:deltaprime-lb}
	\end{align}
	at the end of DA. Note that if \eqref{eq:tauprime-ub} holds at the end of DA, then the RHS of \eqref{eq:tauprime-ub} is an upper bound on $t$ throughout the run of DA. Similarly, since the number of unmatched woman $\delta^w[t]$ is monotone non-increasing in $t$, if \eqref{eq:deltaprime-lb} holds at the end of DA, then the RHS of \eqref{eq:deltaprime-lb} is a lower bound on $\delta^w[t]$ throughout the run of DA.
	%\edit{Furthermore, from Lemma ????, with probability $1 - O(\exp(-\sqrt{n}))$, no individual woman receives more than $\sqrt{n}$ proposals during man proposing DA,
	%\begin{align}
	%w_{i, \tau} \leq \sqrt{n} \qquad \forall i \in \cW\, .
	%\label{eq:womens-rank-ub}
	%\end{align}}
	If, at any stage during the run of DA either \eqref{eq:tauprime-ub} (with $t$ instead of $\tau$) or \eqref{eq:deltaprime-lb} (with $\delta^w[t]$ instead of $\delta^w$) % or \eqref{eq:womens-rank-ub}
	is violated, declare a ``failure'' event $\Ev \equiv \Ev_\tau$. By union bound, we know that $\prob(\Ev) = O(\exp(-\sqrt{n}))$. For $t \leq \tau$, let $\Ev_t$ denote the event that no failure has occurred during the first $t$ proposals of DA. We will prove \eqref{eq:prob-i-unmatched-ub} by showing an upper bound on the likelihood that man $i$ remains unmatched for sample paths where no failure occurs, and assuming the worst (i.e., that $i$ certainly remains unmatched) in the rare cases where there is a failure.
	
	Run DA to convergence on men besides $i$. Now consider proposals by $i$. At each such proposal, the recipient woman is drawn uniformly at random from among at least $n - d+ 1$ ``candidate''  women (the ones to whom $i$ has not yet proposed). Assuming $\Ev_t^c$, we know that
	\begin{align}
	t\leq n \big (\sqrt{d} + d^{\frac{1}{4}} \big )\, ,
	\label{eq:t-ub}
	\end{align}
	and hence the total number of proposals received by candidate women is at most $ n \big (\sqrt{d} + d^{\frac{1}{4}} \big )$, and hence the average number of proposals received by candidate women is at most $n(\sqrt{d}	+ d^{\frac{1}{4}})/(n-d+1) \leq \sqrt{d}(1	+ d^{-1/4} + \log^2 n/n) \leq \sqrt{d}(1	+ 1.1d^{-\frac{1}{4}}) \leq \sqrt{d} + 1.1d^{\frac{1}{4}}$ for large enough $n$, using $d = o(\log^2 n)$. If the proposal goes to woman $j$, the probability of it being accepted is
	$\frac{1}{w_{j,t}+1}$. Averaging over the candidate women and using Jensen's inequality for the function $f(x) = \frac{1}{x+1}$, the probability of the proposal being accepted is at least $\frac{1}{\sqrt{d} + 1.1d^{1/4}+1} \geq \frac{1}{\sqrt{d} + 1.2d^{1/4}}$. If the proposal is accepted, say by woman $j$, this triggers a rejection chain. We show that it is very unlikely that this rejection chain will cause an additional proposal to woman $j$ (which will imply that it is very unlikely that the rejection chain will cause $i$ himself to be rejected): For every additional proposal in the rejection chain, the likelihood that it goes to  an unmatched woman far exceeds the likelihood that it goes to woman $j$: if the current time is $t'$ and $\Ev_{t'}^c$ holds, then, since all $\delta^w[t']$ unmatched women are certainly candidate recipients of the next proposal, the likelihood of the proposal being to an unmatched woman is at least $\delta^w[t'] \geq ne^{-\sqrt{d} - 2 d^{\frac{1}{4}}} \geq \sqrt{n}$ times the likelihood of it being to woman $j$ for $n$ large enough, using $d = o(\log^2 n)$. Now if the proposal is to an unmatched woman, this causes the rejection chain to terminate, hence the expected number of proposals to an unmatched woman in the rejection chain is at most $1$. We immediately deduce that if a failure does not occur prior to termination of the chain, the expected number of proposals to woman $j$ in the rejection chain is at most $\frac{1}{\sqrt{n}}$. It follows that
	\begin{align*}
	&  \prob(i \textup{ is displaced from $j$ by the rejection chain triggered when $j$ accepts his proposal}) \\
	\leq \; & \prob(j \textup{ receives a proposal in the rejection chain triggered}) \\
	\leq \; & \Ex [\textup{Number of proposals received by $j$ in the rejection chain triggered}]\\
	\leq \; &  \frac{1}{\sqrt{n}} \, ,
	\end{align*}
	for $n$ large enough.
	Overall, the probability of the proposal by $i$ being ``successful'' in that it is both (a) accepted, and then (b) man $i$ is not pushed out by the rejection chain, is at least
	\begin{align*}
	\frac{1}{\sqrt{d} + 1.2d^{1/4}} \left ( 1- \frac{1}{\sqrt{n}} \right )
	\leq \frac{1}{\sqrt{d} + 1.3d^{1/4}}\, ,
	\end{align*}
	for large enough $n$. Hence the probability of an unsuccessful proposal (if there is no failure) is at most
	\begin{align*}
	1- \frac{1}{\sqrt{d} + 1.3d^{1/4}} \leq \exp\left \{ - \frac{1}{\sqrt{d} + 1.3d^{1/4}} \right \} \, ,
	\end{align*}
	and so the probability of all $d$ proposals being unsuccessful (if there is no failure) is at most
	\begin{align*}
	\exp\left \{ - \frac{d}{\sqrt{d} + 1.3d^{1/4}} \right \} \leq \exp \left \{ - \sqrt{d} + 1.3 d^{1/4}\right \} \, .
	\end{align*}
	Formally, what we have obtained is an upper bound on the quantity $\Ex\big [ \ind(\mu(i) = i)\ind(\Ev^c)\big ]$, namely,
	\begin{align*}
	\Ex\big [ \ind(\mu(i) = i)\ind(\Ev^c)\big ] \leq \exp \left \{ - \sqrt{d} + 1.3 d^{1/4}\right \}\, .
	\end{align*}
	
	Since the probability of failure is bounded as $\prob(\Ev) \leq O(\exp(-\sqrt{n}))$, the overall probability that of man $i$ remaining unmatched is bounded above as
	\begin{align*}
	\prob(\mu(i) = i) &\leq \Ex\big [ \ind(\mu(i) = i) \ind(\Ev^c)\big ] + \prob(\Ev)\\
	&\leq  \exp \left \{ - \sqrt{d} + 1.3 d^{1/4}\right \}  + O(\exp(-\sqrt{n})) \leq \exp \left \{ - \sqrt{d} + 1.4 d^{1/4}\right \}\,
	\end{align*}
	for large enough $n$, i.e., the bound \eqref{eq:prob-i-unmatched-ub} which we set out to show.
	%\yk{In progress.}
	%A union bound immediately tells us that with probability $1 - O(\exp(-\sqrt{n}))$, the bounds in \eqref{eq:tauprime-ub} and \eqref{eq:deltaprime-lb} \emph{both} hold. Suppose this is the case. Now consider a proposal by man $i$. The proposal is to a uniformly random woman to whom $i$ has not yet proposed. There are at most $n$ such women, and the number of previous proposals to such women is at least
  \end{proof}

%\newpage
\subsubsection{Upper bound on the number of unmatched men $\delta^m$ and unmatched women $\delta^w$ (Proposition \ref{prop:unmatched-women-upper-bound})}
\label{append:small-medium-d-proof-step2-2}

%\textbf{Proofs of Lemma \ref{lem:unmatched-women-upper-bound}, \ref{lem:total-proposal-lower-bound}, and \ref{lem:women-ranking-upper-bound}.}
\begin{proof}{Proof of Proposition \ref{prop:unmatched-women-upper-bound}.}
	Recall the results in Lemma \ref{lem:unmatched-women-expected-ub}:
	\begin{align}
	\Ex [\delta^m] \leq n \exp(-\sqrt{d} + 1.4d^{1/4}) \, , \quad
	\Ex [\delta^w] \leq n \exp(-\sqrt{d} + 1.5d^{1/4}) \, .
	\end{align}
	We use Markov's inequality for each $\delta^m$ and $\delta^w$:
	\begin{align*}
	\mathbb{P}\left(\delta^m > n \exp(-\sqrt{d} + 2.4d^{1/4})\right)
	\leq\
	\frac{\mathbb{E}[\delta^m]}{n \exp(-\sqrt{d} + 2.4d^{1/4})}	
	\leq\
	\exp(-d^{1/4})\, ,
	\\
	\mathbb{P}\left(\delta^w> n \exp(-\sqrt{d} + 2.5d^{1/4})\right)
	\leq\
	\frac{\mathbb{E}[\delta^w]}{n \exp(-\sqrt{d} + 2.5d^{1/4})}	
	\leq\
	\exp(-d^{1/4})\, .
	\end{align*}
  \end{proof}

\subsubsection{Lower bound on the number of total proposals $\tau$ (Proposition \ref{prop:total-proposal-lower-bound})} \label{append:small-medium-d-proof-step2-3}

\begin{proof}{Proof of Proposition \ref{prop:total-proposal-lower-bound}.}
	Consider the extended process defined in Appendix~\ref{append:notation}, and let $\delta^w[t]$ be the number of unmatched woman at time $t$ of the extended process.
	Let $\tau$ be the time when the men-optimal stable matching is found, i.e., $\delta^w = \delta^w[\tau]$.
	Let $\epsilon \triangleq d^{-1/4}$. We have
	\begin{align}
		\mathbb{P}\left(\tau<(1-5\epsilon)n\sqrt{d}\right)
		\leq &\
		\mathbb{P}\left(\tau<(1-5\epsilon)n\sqrt{d},\  \delta^w[\tau]< n e^{-(1-2.5\epsilon)\sqrt{d}}\right)
		+
		\mathbb{P}\left(\delta^w[\tau] \geq n e^{-(1-2.5\epsilon)\sqrt{d}}\right) \nonumber\\
		\leq &\
		\mathbb{P}\left(\delta^w[(1-5\epsilon)n\sqrt{d}]<ne^{-(1-2.5\epsilon)\sqrt{d}}\right)
		+
		\mathbb{P}\left(\delta^w[\tau] \geq ne^{-(1-2.5\epsilon)\sqrt{d}}\right)\, . \label{eq:total-proposal-lb-aux-1}
	\end{align}
	Here the last inequality holds because $\delta^w[t]$ is non-increasing over $t$ on each sample path.
	It follows from Proposition \ref{prop:unmatched-women-upper-bound} that the second term on the RHS of \eqref{eq:total-proposal-lb-aux-1} is $O(e^{-d^{1/4}})$.
	
	It remains to bound the first term on the RHS of \eqref{eq:total-proposal-lb-aux-1}.
	By Lemma \ref{lem:unmatched-women-lower-bound}, we have
	\begin{align*}
	 \mathbb{P}\left(\delta^w[(1-5\epsilon)n\sqrt{d}] < n e^{-(1-2.5\epsilon)\sqrt{d}} \right) \leq \exp( - \sqrt{n} ),
	\end{align*}
	for large enough $n$.
	By plugging this in the RHS of \eqref{eq:total-proposal-lb-aux-1}, we obtain
	\begin{align}
		\mathbb{P}\left(\tau<(1-5\epsilon)n\sqrt{d}\right)
		=\
		O\left( \exp(-d^\frac{1}{4}) \right)\, .
\label{eq:tau-lb}
	\end{align}
	
Note that by the definition of $\Mrank$, we have
	\begin{align*}
	\Mrank \geq \frac{\tau}{n+k}\, .
	\end{align*}
	Since $|k| = O(ne^{-\sqrt{d}})$, using an argument similar to the one at the end of the proof of Proposition \ref{prop:total-proposal-upper-bound}, we can %establish
deduce from \eqref{eq:tau-lb}
that
	\begin{align*}
	\mathbb{P}\left(\Mrank<(1-6\epsilon)\sqrt{d}\right)
	=\
	O\left( \exp\big(-d^\frac{1}{4} \big ) \right)\, .
	\end{align*}
	This concludes the proof.
  \end{proof}

\subsection{Step 3: Upper and Lower Bounds on Women's Average Rank $\Rwomen$} \label{append:small-medium-d-proof-step3}
In this section, we prove the following two propositions.

\begin{prop}[Lower bound on women's average rank]\label{prop:women-ranking-lower-bound}
	Consider the setting of Theorem \ref{thm:main-result}.
	With probability $1 - \frac{3}{n}$, we have the following lower bound on women's average rank:
	\begin{align*}
	\Wrank
	\geq\
	\sqrt{d} - 3 d^{\frac{1}{4}}
	\, ,
	\end{align*}
\end{prop}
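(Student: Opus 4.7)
My plan is to lower bound $\Rwomen$ by first controlling its conditional mean given the history of MPDA, and then applying a concentration inequality to transfer the bound to $\Rwomen$ itself. Under the principle of deferred decisions, at termination of MPDA each matched woman $j$ has had her ordering of her $w_j$ proposers fully revealed (with her husband being the most preferred), but the positions of her $|\cM_j|-w_j$ non-proposers relative to the proposers remain uniformly random. Hence, conditional on the MPDA history, the rank of her husband among all $|\cM_j|$ neighbors is distributed as the minimum of $w_j$ values drawn uniformly without replacement from $\{1,\dots,|\cM_j|\}$, with mean $(|\cM_j|+1)/(w_j+1)$---a formula that also matches the paper's convention for unmatched women (where $w_j=0$ gives $|\cM_j|+1$). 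Therefore
\begin{align*}
  \mathbb{E}\bigl[\Rwomen \mid \textup{MPDA history}\bigr] \ =\ \frac{1}{n}\sum_{j\in \cW} \frac{|\cM_j|+1}{w_j+1}\, .
\end{align*}

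The bulk of the work is lower bounding this conditional mean. By Proposition~\ref{prop:total-proposal-upper-bound}, $\tau \le T \defeq n(\sqrt{d}+d^{1/4})$ with probability $\geq 1-1/n$ for large $n$. Applying Jensen's inequality to the convex function $x\mapsto 1/(x+1)$ gives $\sum_{j} 1/(w_j+1) \geq n^2/(\tau+n) \geq n/(\sqrt{d}+d^{1/4}+1)$. To retain the $|\cM_j|+1 \approx d$ factor in the weighted sum, I would introduce the ``bad'' set $B \defeq \{j : |\cM_j| < d-d^{3/4}\}$. Since $|\cM_j| \sim \textup{Binomial}(n+k,d/n)$, a Chernoff bound yields $\mathbb{P}(j \in B) \leq \exp(-\sqrt{d}/2)$; and because $(|\cM_j|)_{j\in\cW}$ is negatively associated (Lemma~\ref{lem:NA} applied to each man's independently sampled uniform $d$-subset of women), Lemma~\ref{lem:Hoeffding-NA} then gives $|B| \leq 2n\exp(-\sqrt{d}/2)$ with probability $\geq 1-1/n$. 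Restricting the sum to $j \notin B$ yields
\begin{align*}
  \frac{1}{n}\sum_{j\in \cW}\frac{|\cM_j|+1}{w_j+1}\ \geq\ \bigl(d-d^{3/4}\bigr)\!\left(\frac{1}{\sqrt{d}+d^{1/4}+1} - 2\exp\!\bigl(-\sqrt{d}/2\bigr)\right)\ \geq\ \sqrt{d} - 2.5\, d^{1/4}\, ,
\end{align*}
for large enough $n$, since $\sqrt{d}\exp(-\sqrt{d}/2) = o(1)$ under $d=\omega(1)$.

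Finally, conditional on the MPDA history, the ranks $\{\rank_j(\mu(j))\}_{j\in\cW}$ are independent across $j$ (the unrevealed portions of women's preferences being independent), each with range at most $|\cM_j|$. A standard binomial tail bound combined with a union bound gives $\max_j |\cM_j| \leq O(d+\log n) = O(\log^2 n)$ w.h.p. under $d=o(\log^2 n)$, so $\sum_j|\cM_j|^2 = O(n\log^4 n)$. Hoeffding's inequality then delivers $|\Rwomen - \mathbb{E}[\Rwomen \mid \textup{MPDA history}]| \leq d^{1/4}/2$ with probability $\geq 1-1/n$, and a union bound over the three probabilistic events completes the argument: $\Rwomen \geq \sqrt{d}-3d^{1/4}$ with probability at least $1-3/n$, as claimed. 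The main subtlety is the tight error budget, since the Jensen slack ($\approx d^{1/4}$), the degree-truncation slack ($\approx d^{1/4}$), and the Hoeffding fluctuation ($\leq d^{1/4}/2$) must collectively fit within the stated $3 d^{1/4}$ margin; the choice of threshold $d-d^{3/4}$ for the bad set is calibrated to balance these contributions.
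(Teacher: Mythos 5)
Your proof is correct, and it reaches the same conditional-mean identity $\Ex[\Rwomen\mid\cdot]=\frac{1}{n}\sum_{j}\frac{|\cM_j|+1}{W_{j,\tau}+1}$ that the paper establishes in Lemma~\ref{lem:women-ranking-preliminary3} (there written as $1+\bar R[\tau]$ with $|\cM_j|=W_{j,(n+k)d}$), but you lower-bound that quantity by a genuinely different and more elementary route. The paper treats $\bar R[\tau]=\frac1n\sum_j\frac{W_{j,(n+k)d}-W_{j,\tau}}{W_{j,\tau}+1}$ as a random weighted count of \emph{future} proposals and proves its concentration conditional on $\vec W_\tau$ via a coupling with a modified balls-into-bins process and negative-association Chernoff bounds (Lemmas~\ref{lem:women-ranking-preliminary} and \ref{lem:women-ranking-preliminary2}); you instead exploit that each degree $|\cM_j|$ is individually concentrated at $d$, factor $d-d^{3/4}$ out of the sum after discarding a bad set $B$ of size $O(ne^{-\sqrt d/2})$, and apply Jensen to the unweighted sum $\sum_j 1/(W_{j,\tau}+1)$ together with Proposition~\ref{prop:total-proposal-upper-bound}. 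This bypasses the paper's coupling machinery entirely for this direction (though note that machinery is still needed elsewhere, e.g.\ for the matching upper bound in Proposition~\ref{prop:women-ranking-upper-bound}, where Jensen points the wrong way). Your final concentration step also differs: you control the Hoeffding range via $\max_j|\cM_j|=O(\log^2 n)$ from a union bound over binomial tails, rather than the paper's Chebyshev bound on $\sum_j W_{j,(n+k)d}^2$, which is what produces the paper's $3/n$ failure probability; your version actually yields a stronger probability guarantee. Two small points of care: (i) under deferred decisions the degrees $|\cM_j|$ are not measurable with respect to the MPDA history at time $\tau$ (men who never proposed to $j$ have not revealed whether $j\in\cW_i$), so your conditioning should explicitly include the realization of the sets $\cW_i$, as the paper does via its continue-proposing process --- this changes nothing in the distributional claim but should be stated; (ii) the husband's conditional rank is the minimum of the (uniform) $w_j$-subset of positions occupied by proposers, whose entries are exchangeable but not independent, so the per-woman Hoeffding step must be applied to the bounded variables $R_j$ themselves (independent across $j$) rather than to the individual indicator summands --- which is what you do.
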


\begin{prop}[Upper bound on women's average] \label{prop:women-ranking-upper-bound}
	Consider the setting of Theorem \ref{thm:main-result}. With probability $1 - O(\exp(-d^\frac{1}{4}))$, we have the following upper bound on the women's average rank:
	\begin{align*}
	\Wrank
	\leq\
	\sqrt{d}
	+
	8d^{\frac{1}{4}}
	\, .
	\end{align*}
\end{prop}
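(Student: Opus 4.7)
The strategy parallels the complementary lower bound (Proposition~\ref{prop:women-ranking-lower-bound}) with inequalities reversed, and leverages the lower bound on $\tau$ from Proposition~\ref{prop:total-proposal-lower-bound}. Define $t^* \defeq n(\sqrt{d} - 5 d^{1/4})$. Because each woman's rank of her current partner is monotonically non-increasing in $t$ under MPDA (and equals $|\mathcal{M}_j|+1$ before she receives any proposal), on the high-probability event $\{\tau \geq t^*\}$ we have $\Rwomen \leq \bar{r}[t^*]$, where $\bar{r}[t]$ denotes the women's average rank of their current partner at time $t$ in the extended process. It therefore suffices to show $\bar{r}[t^*] \leq \sqrt{d} + 8 d^{1/4}$ with probability $1-O(\exp(-d^{1/4}))$.

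Next I compute the conditional expectation of $\bar{r}[t^*]$ given the connectivity graph and the proposal counts $\{W_{j,t^*}\}$. By the principle of deferred decisions applied to women's preferences, the ranks of the $W_{j,t^*}$ proposers of $j$ form a uniformly random subset of $\{1,\ldots,|\mathcal{M}_j|\}$ of size $W_{j,t^*}$, so their minimum has conditional mean $(|\mathcal{M}_j|+1)/(W_{j,t^*}+1)$. Writing $|\mathcal{M}_j|+1 = 1 + \sum_{i}\ind(j \in \mathcal{W}_i)$ and swapping the order of summation yields
\[
\mathbb{E}\bigl[\bar{r}[t^*] \,\bigm|\, \text{graph},\{W_{j,t^*}\}\bigr] \ = \ \frac{1}{n}\sum_{j \in \cW}\frac{1}{W_{j,t^*}+1} \ + \ \frac{1}{n}\sum_{i \in \cM}\sum_{j \in \mathcal{W}_i}\frac{1}{W_{j,t^*}+1}\,.
\]
The first sum is handled directly by Lemma~\ref{lem:acceptance-probability-upper-bound} at $t = t^*$ with $\Delta$ of order $d^{-3/4}$: since $n/t^* = (1+O(d^{-1/4}))/\sqrt{d}$ and $d^2/n = o(1)$, it is at most $(1+O(d^{-1/4}))/\sqrt{d}$ with probability $1-O(\exp(-\sqrt{n}))$. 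For the double sum, for each man $i$ the inner sum has conditional mean $\tfrac{d}{n}\sum_j 1/(W_{j,t^*}+1) \leq \sqrt{d}(1+O(d^{-1/4}))$ over the uniform choice of $\mathcal{W}_i$; summing over the $n+k$ independent men and applying NA/Hoeffding concentration (Lemma~\ref{lem:Hoeffding-NA}) yields an upper bound of $\sqrt{d}(1+O(d^{-1/4}))$ on this piece as well, hence $\mathbb{E}[\bar{r}[t^*]\mid\cdot] \leq \sqrt{d} + 7 d^{1/4}$ with high probability.

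Finally I pass from the conditional expectation to $\bar{r}[t^*]$ itself via concentration over women's preferences. Conditional on the graph and on $\{W_{j,t^*}\}$, the ranks $r_j[t^*]$ are independent across $j$ (each depends only on woman $j$'s preference ordering), so a Hoeffding/Bernstein-type inequality applies after truncating each $r_j[t^*]$ at some level $L = O(d^{3/4})$; the occasional untruncated contributions (from unmatched women, or from women with atypically large $|\mathcal{M}_j|$) are absorbed into the error term using the bound $\delta^w \leq n\exp(-\sqrt{d} + 2.5d^{1/4})$ from Proposition~\ref{prop:unmatched-women-upper-bound} together with Chernoff tails for $|\mathcal{M}_j| \sim \textup{Binomial}(n+k,d/n)$.

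The main obstacle is the second piece in the display. Since $d = o(\log^2 n)$ admits $d \lesssim \log n$, one cannot uniformly control $\max_j |\mathcal{M}_j|$ to within $(1+o(1))d$ via Chernoff plus a union bound, so the simple strategy of pulling a factor of $d$ out of the sum fails. The man-by-man decomposition sidesteps this by exploiting that each $\mathcal{W}_i$ has deterministic size $d$ and the $\mathcal{W}_i$ are independent across $i$, reducing the task to concentrating a sum of $\Theta(n)$ roughly i.i.d.\ quantities of order $\sqrt{d}$; the deterministic identity $\sum_j |\mathcal{M}_j| = (n+k)d$ pins down the correct scale $n\sqrt{d}$. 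Some additional care is needed to handle the (mild) dependence between $\mathcal{W}_i$ and $\{W_{j,t^*}\}$, which can be addressed either by a leave-one-out coupling or by replacing the DA counts with those of an auxiliary balls-into-bins process at negligible cost, in the spirit of Lemma~\ref{lem:acceptance-probability-upper-bound}.
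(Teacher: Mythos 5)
Your overall skeleton coincides with the paper's: invoke the high-probability lower bound $\tau \geq t^* = n(\sqrt d - 5d^{1/4})$ from Proposition~\ref{prop:total-proposal-lower-bound}, use monotonicity to reduce to time $t^*$, apply the conditional-mean identity $\Ex[r_j \mid \cdot] = (|\cM_j|+1)/(W_{j,t^*}+1)$, and then concentrate twice. Your quantity $\frac{1}{n}\sum_{j}(|\cM_j|+1)/(W_{j,t^*}+1)$ is exactly $1+\bar R[t^*]$ in the paper's notation (since $|\cM_j| = W_{j,(n+k)d}$), and Lemmas~\ref{lem:women-ranking-preliminary}--\ref{lem:women-ranking-preliminary3} carry out precisely these steps.

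The genuine gap is in your treatment of the double sum $\frac1n\sum_i\sum_{j\in\cW_i} 1/(W_{j,t^*}+1)$. You compute its conditional mean ``over the uniform choice of $\cW_i$'' as $\frac dn\sum_j 1/(W_{j,t^*}+1)$, but $\cW_i$ is not independent of the counts $\{W_{j,t^*}\}$: by time $t^*$ man $i$ has already proposed to the women in $\cH_i\subset\cW_i$, every such woman has $W_{j,t^*}\ge 1$, and the total mass of these revealed terms, $\sum_i\sum_{j\in\cH_i}1/(W_{j,t^*}+1)=\sum_j W_{j,t^*}/(W_{j,t^*}+1)$, is of order $n$ --- which is not obviously negligible relative to the target scale $n\sqrt d$ until one writes down that identity. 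The repair (which is what the paper's continue-proposing process implements, organized per proposal rather than per man) is to split $\cW_i$ into the revealed part $\cH_i$, whose aggregate contribution $\frac1n\sum_j W_{j,t^*}/(W_{j,t^*}+1)\le 1$ is absorbed into the $8d^{1/4}$ slack, and the unrevealed part, which conditional on the time-$t^*$ filtration is a uniform $(d-M_{i,t^*})$-subset of $\cW\setminus\cH_i$, independent across men; Lemma~\ref{lem:women-ranking-preliminary} then obtains concentration via a re-routing coupling to negatively associated binomials. As written, your central display's second term is asserted rather than proved, and ``leave-one-out coupling or an auxiliary balls-into-bins process'' does not by itself identify where the bias sits or why it is $O(1)$ per woman. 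Your final truncation-at-$O(d^{3/4})$ step is a workable alternative to the paper's Hoeffding bound with ranges $[0,|\cM_j|]$ combined with the Chebyshev control $\sum_j|\cM_j|^2\le 4nd^2$, though you would still need to verify that the $O(ne^{-\sqrt d})$ women with $W_{j,t^*}=0$ (each contributing $|\cM_j|+1\approx d$, totalling $O(de^{-\sqrt d})=o(1)$ in the average) and the women of atypically large degree are absorbed, as you indicate. In short: right strategy, essentially the paper's, but the key middle inequality needs the revealed/unrevealed decomposition (or the paper's coupling) to be justified.
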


In order to characterize the women side, we introduce a different extended process which we call the \emph{continue-proposing process} that is slightly different from one introduced in Section \ref{append:notation}.
Until the MOSM is found (i.e., $t \leq \tau$), the continue-proposing process is identical to the original DA procedure.
After the MOSM is found (i.e., $t > \tau$), the proposing man $I_t$ is chosen arbitrarily among the men who have not yet exhausted their preference list (i.e., $\{i \in \mathcal{M}: M_{i,t-1} < d\}$), and we let him propose to his next candidate.
We do not care about the matching nor the acceptance/rejection after $\tau$, since we only keep track of the number of proposals that each man has made, $M_{i,t}$, and each woman has received, $W_{j,t}$.
The continue-proposing process terminates at time $t = (n+k)d$, when all men exhaust their preference lists.

To analyze the concentration of $\Rwomen$, we first construct upper and lower bounds on its conditional expectation.
More formally, we define
\begin{equation} \label{eq:women-ranking-mean}
	\bar{R}[t] \triangleq \frac{1}{n} \sum_{j \in \mathcal{W}} \frac{W_{j,(n+k)d} - W_{j,t}}{W_{j,t}+1},
\end{equation}
where $W_{j,(n+k)d}$ represents the degree of woman $j$ in a random matching market so that $W_{j,(n+k)d} - W_{j,t}$ represents the number of remaining proposals that woman $j$ will receive after time $t$.
In Lemma \ref{lem:women-ranking-preliminary2}, we prove that $\bar{R}[\tau]$ is concentrated around $\sqrt{d}$ given $\tau \approx n \sqrt{d}$.
In Lemma \ref{lem:women-ranking-preliminary3}, we show that $\bar{R}[\tau]$ (plus 1) is indeed the conditional expectation of $\Rwomen$ given $W_{j,\tau}$'s and $W_{j,(n+k)d}$'s, and further characterize the conditional distribution of $\Rwomen$ given $\bar{R}[\tau]$, which leads to the concentration bounds on $\Rwomen$.
Within the proofs, we also utilize the fact that $\bar{R}[t]$ is decreasing over time on each sample path.
	
%\newpage
\subsubsection{Concentration of expected women's average rank $\bar{R}_t$}
We first state a preliminary lemma that will be used to show the concentration of $\bar{R}_t$.

\begin{lem} \label{lem:women-ranking-preliminary}
	Fix any $t$ and $T$ such that $t < T$ and positive numbers $c_1,\ldots,c_n$ such that $c_j \in [0,1]$ for all $j$, and define
	\begin{equation*}
		Y_{t,T} \defeq \sum_{j \in \mathcal{W}} c_j (W_{j,T}-W_{j,t}).
	\end{equation*}
	With $S \defeq \sum_{j=1}^n c_j$, we have
	\begin{align}
		\mathbb{P}\left( \left. Y_{t,T} \geq (1+\epsilon) \frac{(T-t)S}{n-d} \right| \vec{W}_t \right) &\leq \exp\left( - \frac{1}{4} \epsilon^2 \times \frac{(T-t)S}{n} \right)
\label{eq:Yub}
		\\
		\mathbb{P}\left( \left. Y_{t,T} \leq (1-\epsilon) \frac{(T-t)(S-d)}{n-d} \right| \vec{W}_t \right) &\leq \exp\left( - \frac{1}{4} \epsilon^2 \times \frac{(T-t)(S-d)}{n-d} \right)
\label{eq:Ylb}
	\end{align}
	for any $\epsilon \in [0,1]$.
\end{lem}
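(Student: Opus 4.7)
The plan is to express $Y_{t,T}$ as the sum $\sum_{s=t+1}^{T} c_{J_s}$ of $T-t$ terms, each of which is conditionally bounded in $[0,1]$, and then apply a standard Bennett-type Chernoff argument. Letting $\mathcal{F}_s$ denote the history up to time $s$ (including the set $\mathcal{H}_s$ of women already proposed to by $I_s$), the principle of deferred decisions gives that $J_s \mid \mathcal{F}_{s-1}$ is uniform on $\mathcal{W}\setminus\mathcal{H}_s$, so
\begin{equation*}
\mathbb{E}[c_{J_s}\mid\mathcal{F}_{s-1}] \;=\; \frac{1}{n-|\mathcal{H}_s|}\sum_{j\in\mathcal{W}\setminus\mathcal{H}_s} c_j.
\end{equation*}
Because $c_j\in[0,1]$ and $|\mathcal{H}_s|\leq d$, the numerator lies in $[S-d,\,S]$; using that $h\mapsto(S-h)/(n-h)$ is nonincreasing and $h\mapsto S/(n-h)$ is nondecreasing on $[0,d]$ (the former has derivative $(S-n)/(n-h)^2\leq 0$), I would sandwich the conditional mean as $\frac{S-d}{n-d}\leq \mathbb{E}[c_{J_s}\mid\mathcal{F}_{s-1}]\leq \frac{S}{n-d}$. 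This sandwich is the key observation that lets us avoid any further tracking of $\mathcal{H}_s$.

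For the upper tail \eqref{eq:Yub}, I would combine the convexity inequality $e^{\theta x}\leq 1+(e^\theta-1)x$ (valid for $x\in[0,1]$ and $\theta\geq 0$) with $1+u\leq e^u$ and iterated conditioning, starting from $\vec{W}_t$, to obtain
\begin{equation*}
\mathbb{E}\!\left[e^{\theta Y_{t,T}}\bigm|\vec{W}_t\right] \;\leq\; \exp\!\left((e^\theta-1)\,\tfrac{(T-t)S}{n-d}\right).
\end{equation*}
Markov's inequality at the threshold $(1+\epsilon)\mu_+$, with $\mu_+:=(T-t)S/(n-d)$, optimized over $\theta>0$ at $\theta^\star=\ln(1+\epsilon)$, yields the standard Bennett bound $\exp(-\mu_+\,\phi(\epsilon))$ where $\phi(\epsilon)=(1+\epsilon)\ln(1+\epsilon)-\epsilon$. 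A routine calculus check ($\phi(0)=\phi'(0)=0$ and $\phi''(\epsilon)=1/(1+\epsilon)\geq 1/2$ on $[0,1]$) shows $\phi(\epsilon)\geq\epsilon^2/4$ on $[0,1]$, and combined with $\mu_+\geq(T-t)S/n$ this delivers the bound stated in \eqref{eq:Yub}.

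The lower tail \eqref{eq:Ylb} is entirely symmetric: using $e^{-\theta x}\leq 1-(1-e^{-\theta})x$ for $x\in[0,1]$, $\theta\geq 0$, together with the lower sandwich bound, I obtain $\mathbb{E}[e^{-\theta Y_{t,T}}\mid\vec{W}_t]\leq\exp\!\left(-(1-e^{-\theta})(T-t)(S-d)/(n-d)\right)$. Markov's inequality at $(1-\epsilon)\mu_-$, with $\mu_-:=(T-t)(S-d)/(n-d)$, optimized at $\theta^\star=\ln(1/(1-\epsilon))$, gives $\exp(-\mu_-\,\psi(\epsilon))$ with $\psi(\epsilon)=(1-\epsilon)\ln(1/(1-\epsilon))-\epsilon$; the same calculus pattern shows $\psi(\epsilon)\geq\epsilon^2/4$ on $[0,1]$, yielding \eqref{eq:Ylb}. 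There is no real obstacle here: the computation is essentially textbook Chernoff, and the only care needed is the mild dependence of the conditional mean on $\mathcal{H}_s$, handled by the sandwich, plus bookkeeping to match the stated constants across the $1/(n-d)$ versus $1/n$ distinction.
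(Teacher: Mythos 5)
Your proof is correct, and it takes a genuinely different route from the paper's. The paper handles the dependence on $\mathcal{H}_s$ by a coupling: whenever a proposal lands on one of the $d$ women with the smallest (resp.\ largest) values of $c_j$, it is re-routed uniformly to one of the remaining $n-d$ women, so that the re-routed recipient is uniform on those $n-d$ women \emph{independently of} $\mathcal{H}_s$; this reduces $Y_{t,T}$ to a monotone comparison with $\sum_j c_j X_j$ for negatively associated $X_j \sim \textup{Binomial}(T-t,\tfrac{1}{n-d})$, after which the MGF is bounded factor-by-factor using negative association and $e^x \le 1+x+x^2$ with $\lambda = \pm\epsilon/2$. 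You instead exploit the sequential structure directly: writing $Y_{t,T}=\sum_{s=t+1}^{T}c_{J_s}$, sandwiching the conditional mean of each increment between $\tfrac{S-d}{n-d}$ and $\tfrac{S}{n-d}$ (your monotonicity check of $h\mapsto (S-h)/(n-h)$ is exactly what is needed, using $S\le n$), and running a Bennett--Chernoff bound by iterated conditioning. Both arguments land on the same constants, including the asymmetry between $1/n$ in \eqref{eq:Yub} (via $\mu_+\ge (T-t)S/n$) and $1/(n-d)$ in \eqref{eq:Ylb}. Your route is more elementary in that it needs neither the coupling construction nor any negative-association machinery; the paper's route fits more naturally with the balls-into-bins framework it reuses elsewhere. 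One trivial slip: your lower-tail rate function should be $\psi(\epsilon)=\epsilon+(1-\epsilon)\ln(1-\epsilon)$ (equivalently $\epsilon-(1-\epsilon)\ln\tfrac{1}{1-\epsilon}$), the negative of what you wrote; as written your $\psi$ is nonpositive and the bound $\exp(-\mu_-\psi(\epsilon))$ would exceed $1$. With the sign fixed, $\psi''(\epsilon)=1/(1-\epsilon)\ge 1$ gives $\psi(\epsilon)\ge\epsilon^2/2\ge\epsilon^2/4$ and the argument goes through as you describe.
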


\begin{proof}
	Throughout this proof, we assume that $W_{1,t}, \ldots, W_{n,t}$ are revealed, i.e. we consider the conditional probabilities/expectations given $W_{1,t}, \ldots, W_{n,t}$.
	In addition, we assume that $c_1 \leq c_2 \leq \ldots \leq c_n$ without loss of generality.
	
\medskip
	\noindent
	{\bf Proof of \eqref{eq:Yub}:} We first establish an upper bound using a coupling argument.
	Recall that $W_{j,s}$ counts the number of proposals that a woman $j$ had received up to time $s$, which is governed by the recipient process $J_s$.
	We construct a coupled process $\left( \overline{W}_{j,s} \right)_{j \in \mathcal{W}, s \geq t}$ that counts based on $\overline{J}_s$ as follows:
	\begin{enumerate}
		\item[(i)] Initialize $\overline{W}_{j,t} \gets W_{j,t}$ for all $j$.
		\item[(ii)] At each time $s = t+1, t+2, \ldots, T$, after the recipient $J_s$ is revealed (which is uniformly sampled among $\mathcal{W} \setminus \mathcal{H}_s$), determine $\overline{J}_s \in \{d+1,\ldots,n\}$:
		\begin{itemize}
			\item If $J_s \in \{d+1,\ldots,n\}$, set $\overline{J}_s \gets J_s$.
			\item If $J_s \in \{1,\ldots,d\}$, sample $\overline{J}_s$ according to the probability distribution $p_s(\cdot)$ defined as (the motivation for this definition is provided below)
			\begin{equation*}
				p_s(j) = \left\{ \begin{array}{ll}
						0 & \text{if } j \in \{1,\ldots,d\},
						\\
						\frac{1}{n-d} \big/ \frac{ |\{1,\ldots,d\} \setminus \mathcal{H}_s| }{ |\mathcal{W} \setminus \mathcal{H}_s| } & \text{if } j \in \{d+1, \ldots, n\} \cap \mathcal{H}_s,
						\\
						\left( \frac{1}{n-d} - \frac{1}{ |\mathcal{W} \setminus \mathcal{H}_s|} \right) \big/ \frac{ |\{1,\ldots,d\} \setminus \mathcal{H}_s| }{  |\mathcal{W} \setminus \mathcal{H}_s| } & \text{if } j \in \{d+1, \ldots, n\} \setminus \mathcal{H}_s.
					\end{array} \right.
			\end{equation*}
		\end{itemize}
		\item[(iii)] Increase the counter of $\overline{J}_s$ instead of $J_s$: i.e., $\overline{W}_{j,s} \gets \overline{W}_{j,s-1} + \ind\{ \overline{J}_s = j \}$ for all $j$.
	\end{enumerate}
	In words, whenever a proposal goes to one of $d$ women who have smallest $c_j$ values (i.e., when $J_s \in \{1,\ldots,d\}$), we randomly pick one among the other $n-d$ women (i.e., $\overline{J}_s \in \{d+1,\ldots,n\}$) and increase that woman's counter $\overline{W}_{\overline{J}_s}$.
	Otherwise (i.e., when $J_s \in \{d+1,\ldots,n\}$), we count the proposal as in the original process.
	In any case, we have $c_{\overline{J}_s} \geq c_{J_s}$.
	
	Note that we do not alter the proposal mechanism in this coupled process, but just count the proposals in a different way.
	Therefore, we have
	\begin{equation} \label{eq:women-ranking-intermediary-upper-bound}
		\sum_{j \in \mathcal{W}} c_j (W_{j,T} - W_{j,t}) \leq \sum_{j \in \mathcal{W}} c_j(\overline{W}_{j,T} - \overline{W}_{j,t}),
	\end{equation}
	Also note that the (re-)sampling distribution $p_s(\cdot)$ was constructed in a way that $\overline{J}_s$ is chosen uniformly at random among $\{d+1,\ldots,n\}$, unconditioned on $J_s$, independently of $\mathcal{H}_s$.
	More formally, we have for any $j \in \{d+1,\ldots,n\} \setminus \mathcal{H}_s$,
	\begin{align*}
		\mathbb{P}( \overline{J}_s = j | \mathcal{H}_s )
			&= \mathbb{P}( J_s = j | \mathcal{H}_s )
				+\mathbb{P}( \overline{J}_s = j | \mathcal{H}_s, J_s \in \{1,\ldots,d\} ) \cdot \mathbb{P}( J_s \in \{1,\ldots,d\} | \mathcal{H}_s )
			\\&= \frac{1}{|\mathcal{W} \setminus \mathcal{H}_s|} + \left( \frac{1}{n-d} - \frac{1}{ |\mathcal{W} \setminus \mathcal{H}_s|}  \right) = \frac{1}{n-d}.
	\end{align*}
	Similarly it can be verified that $\mathbb{P}( \overline{J}_s = j | \mathcal{H}_s ) = \frac{1}{n-d}$ also for any $j \in \{d+1,\ldots,n\} \cap \mathcal{H}_s$.
	The fact that $|\mathcal{H}_s| < d$ guarantees that $p_s(\cdot)$ is a well-defined probability mass function.
	Therefore,
	\begin{equation*}
		\sum_{j \in \mathcal{W}} c_j(\overline{W}_{j,T} - \overline{W}_{j,t}) \stackrel{\text{d}}{=} \sum_{j=d+1}^n c_j X_j,
	\end{equation*}
	where
	$X_j \sim \text{Binomial}\left( T - t, \frac{1}{n-d} \right)$ for $j \in \{d+1,\ldots,n\}$.
	Although $X_j$'s are not independent, they are negatively associated as in the balls-into-bins process (see Section \ref{subsec:balls-into-bins}).
	For any $\lambda \in \mathbb{R}$, $\exp( \lambda c_j X_j)$'s are also NA due to Lemma \ref{lem:NA}--(\ref{lem:NA-monotone}), and therefore,
	\begin{align*}
		\mathbb{E}\left[ \exp\left( \lambda \sum_{j=d+1}^n c_j X_j \right) \right]
			\leq\ &\prod_{j=d+1}^n \mathbb{E}\left[ e^{\lambda c_j X_j} \right] \\
			=\ &\prod_{j=d+1}^n \left( 1 - \frac{1}{n-d} + \frac{1}{n-d} e^{\lambda c_j} \right)^{T-t}\\
			\leq\ &\prod_{j=d+1}^n \exp\left( - \frac{1}{n-d} + \frac{1}{n-d} e^{\lambda c_j} \right)^{T-t}\\
			=\ &\exp\left\{ (T-t)\left( -1 + \frac{1}{n-d} \sum_{j=d+1}^n e^{\lambda c_j} \right) \right\}\, .
	\end{align*}
	Since $c_j \in [0,1]$ and $e^x \leq 1+x+x^2$ for any $x \in (-\infty, 1]$, we have for any $\lambda \in [0,1]$,
	\begin{equation*}
		-1 + \frac{1}{n-d} \sum_{j=d+1}^n e^{\lambda c_j} \leq -1 + \frac{1}{n-d} \sum_{j=d+1}^n (1 + \lambda c_j + \lambda^2 c_j^2 )
			\leq \frac{\lambda + \lambda^2}{n-d} \sum_{j=d+1}^n c_j.
	\end{equation*}
	By Markov's inequality, for any $\lambda \in [0,1]$,
	\begin{align*}
		\mathbb{P}\left( \sum_{j=d+1}^n c_j X_j \geq (1+\epsilon) \frac{T-t}{n-d} \sum_{j=d+1}^n c_j \right)
			&\leq \frac{ \mathbb{E}\left[ \exp\left( \lambda \sum_{j=d+1}^n c_j X_j \right) \right] }{ \exp\left( \lambda(1+\epsilon) \frac{T-t}{n-d} \sum_{j=d+1}^n c_j \right) }
			\\&\leq \exp\left\{ (T-t) \cdot \frac{\lambda + \lambda^2}{n-d} \sum_{j=d+1}^n c_j - \lambda(1+\epsilon) \frac{T-t}{n-d} \sum_{j=d+1}^n c_j \right\}
			\\&\leq \exp\left\{ (\lambda^2 - \lambda \epsilon) \cdot \frac{T-t}{n-d} \sum_{j=d+1}^n c_j \right\}.
	\end{align*}
	By taking $\lambda \defeq \frac{\epsilon}{2}$, we obtain
	\begin{equation*}
		\mathbb{P}\left( \sum_{j=d+1}^n c_j X_j \geq (1+\epsilon) \frac{T-t}{n-d} \sum_{j=d+1}^n c_j \right) \leq \exp\left( - \frac{1}{4} \epsilon^2 \times \frac{T-t}{n-d} \sum_{j=d+1}^n c_j \right).
	\end{equation*}
	Also note that
	\begin{equation*}
		\frac{S}{n} = \frac{1}{n} \sum_{j=1}^n c_j \leq \frac{1}{n-d} \sum_{j=d+1}^n c_j .
	\end{equation*}
%\yk{Do you ever use the last inequality above? If not, please delete it.} \sm{You're right. Deleted.}
	Therefore, together with \eqref{eq:women-ranking-intermediary-upper-bound},
	\begin{align*}
		\mathbb{P}\left( \left. Y_{t,T} \geq (1+\epsilon) \frac{(T-t)S}{n-d} \right| W_{1,t}, \ldots, W_{n,t} \right)
			&\leq \mathbb{P}\left( \left. Y_{t,T} \geq (1+\epsilon) \frac{T-t}{n-d} \sum_{j=d+1}^n c_j \right| W_{1,t}, \ldots, W_{n,t} \right)
			\\&\leq \exp\left( - \frac{1}{4} \epsilon^2 \times \frac{T-t}{n-d} \sum_{j=d+1}^n c_j \right)
			\\&\leq \exp\left( - \frac{1}{4} \epsilon^2 \times \frac{(T-t)S}{n} \right).
	\end{align*}
	
\medskip
	\noindent
	{\bf Proof of \eqref{eq:Ylb}:} Similarly to above, we can construct a coupled process $\left( \underline{W}_{j,s} \right)_{s \geq t}$ under which $\underline{J}_s$ is resampled among $\{1,\ldots,n-d\}$ whenever a proposal goes to one of $d$ women who have largest $c_j$ values (i.e., when $J_s \in \{n-d+1,\ldots,n\}$) while $\mathbb{P}\left( \left. \underline{J}_s = j \right| \mathcal{H}_s \right) = \frac{1}{n-d}$ for any $j \in \{1,\cdots,n-d\}$ and any $\mathcal{H}_s$.
	With this process, we have
	\begin{equation*}
		\sum_{j \in \mathcal{W}} c_j (W_{j,T} - W_{j,t}) \geq \sum_{j \in \mathcal{W}} c_j(\underline{W}_{j,T} - \underline{W}_{j,t}) \stackrel{\text{d}}{=} \sum_{j=1}^{n-d} c_j X_j,
	\end{equation*}
	where $X_j \sim \text{Binomial}\left( T-t, \frac{1}{n-d} \right)$ for $j \in \{1,\ldots,n-d\}$ and $X_j$'s are NA.
	
	For any $\lambda \in [-1,0]$,
	\begin{align*}
	\mathbb{E}\left[ \exp\left( \lambda \sum_{j=1}^{n-d} c_j X_j \right) \right]
	\leq\ &\prod_{j=1}^{n-d} \mathbb{E}\left[ e^{\lambda c_j X_j} \right] \\
	=\ &\prod_{j=1}^{n-d} \left( 1 - \frac{1}{n-d} + \frac{1}{n-d} e^{\lambda c_j} \right)^{T-t}\\
	\leq\ &\prod_{j=1}^{n-d} \exp\left( - \frac{1}{n-d} + \frac{1}{n-d} e^{\lambda c_j} \right)^{T-t}\\
	=\ &\exp\left\{ (T-t)\left( -1 + \frac{1}{n-d} \sum_{j=1}^{n-d} e^{\lambda c_j} \right) \right\}\, .
	\end{align*}
	Since $c_j \in [0,1]$ and $e^x \leq 1+x+x^2$ for any $x \in (-\infty, 1]$, we have for any $\lambda \in [-1,0]$,
	\begin{equation*}
	-1 + \frac{1}{n-d} \sum_{j=1}^{n-d} e^{\lambda c_j} \leq -1 + \frac{1}{n-d} \sum_{j=1}^{n-d} (1 + \lambda c_j + \lambda^2 c_j^2 )
	\leq \frac{\lambda + \lambda^2}{n-d} \sum_{j=1}^{n-d} c_j\, .
	\end{equation*}
	
	Using Markov's inequality, we have
	\begin{align*}
	\mathbb{P}\left( \sum_{j=1}^{n-d} c_j X_j \leq (1-\epsilon) \frac{T-t}{n-d} \sum_{j=1}^{n-d} c_j \right)
	=\ &\mathbb{P}\left( \exp\left(\lambda\sum_{j=1}^{n-d} c_j X_j\right) \geq \exp\left(\lambda(1-\epsilon) \frac{T-t}{n-d} \sum_{j=1}^{n-d} c_j \right) \right)\\
	\leq\ &\frac{ \mathbb{E}\left[ \exp\left( \lambda \sum_{j=1}^{n-d} c_j X_j \right) \right] }{ \exp\left( \lambda(1-\epsilon) \frac{T-t}{n-d} \sum_{j=1}^{n-d} c_j \right) }\\
	\leq\ &\exp\left\{ (T-t) \cdot \frac{\lambda + \lambda^2}{n-d} \sum_{j=1}^{n-d} c_j - \lambda(1-\epsilon) \frac{T-t}{n-d} \sum_{j=1}^{n-d} c_j \right\}
	\\
	\leq\ &\exp\left\{ (\lambda^2 + \lambda \epsilon) \cdot \frac{T-t}{n-d} \sum_{j=1}^{n-d} c_j \right\}\, .
	\end{align*}
	With $\lambda \defeq -\frac{\epsilon}{2}$, we obtain
	\begin{equation*}
		\mathbb{P}\left( \sum_{j=1}^{n-d} c_j X_j \leq (1-\epsilon) \frac{T-t}{n-d} \sum_{j=1}^{n-d} c_j \right) \leq \exp\left( - \frac{1}{4} \epsilon^2 \times \frac{T-t}{n-d} \sum_{j=1}^{n-d} c_j \right).
	\end{equation*}
	Consequently, since $S-d = \sum_{j=1}^n c_j - d \leq \sum_{j=1}^{n-d}c_j$,
	\begin{align*}
		\mathbb{P}\left( \left. Y_{t,T} \leq (1-\epsilon) \frac{(T-t)(S-d)}{n-d} \right| W_{1,t}, \ldots, W_{n,t} \right)
			\leq\ &\mathbb{P}\left( \left. Y_{t,T} \leq (1-\epsilon) \frac{T-t}{n-d} \sum_{j=1}^{n-d} c_j \right| W_{1,t}, \ldots, W_{n,t} \right)
			\\
			\leq\ &\exp\left( - \frac{1}{4} \epsilon^2 \times \frac{T-t}{n-d} \sum_{j=1}^{n-d} c_j \right)
			\\
			\leq\ &\exp\left( - \frac{1}{4} \epsilon^2 \times \frac{(T-t)(S-d)}{n-d} \right)\, .
	\end{align*}
  \end{proof}

%\newpage
\begin{lem} \label{lem:women-ranking-preliminary2}
	Consider the setting of Theorem \ref{thm:main-result} and $\bar{R}[t]$ defined in \eqref{eq:women-ranking-mean}.
	There exists $n_0 < \infty$ such that for all $n > n_0$, we have
	\begin{align}
		\label{eq:women-ranking-mean-upper-bound}
		\mathbb{P}\left( \bar{R}\left[ n (\sqrt{d} + d^{\frac{1}{4}} ) \right] \leq \sqrt{d} - 2.3d^{\frac{1}{4}} \right) &\leq \exp\left( - \frac{n}{8} \right)\, .
		\\
		\label{eq:women-ranking-mean-lower-bound}
		\mathbb{P}\left( \bar{R}\left[ n (\sqrt{d} - 5 d^{\frac{1}{4}} ) \right] \geq \sqrt{d} + 7.5d^{\frac{1}{4}} \right) &\leq \exp\left( - \frac{n}{d^4} \right)\, .
	\end{align}
%\yk{In the second inequality above, why not omit the $\exp\left( - \frac{n}{8} \right)$ term?}
%\sm{I agree.}
	
\end{lem}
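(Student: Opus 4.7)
The plan is to apply Lemma~\ref{lem:women-ranking-preliminary} with $T = (n+k)d$ and weights $c_j = 1/(W_{j,t}+1) \in [0,1]$, so that $Y_{t,T} = n \bar{R}[t]$ and $S = \sum_{j \in \mathcal{W}} 1/(W_{j,t}+1)$. The bounds \eqref{eq:Yub} and \eqref{eq:Ylb} then sandwich $\bar{R}[t]$ around $(T-t) S / [n(n-d)]$, which is $\sqrt{d}$ to leading order once $S$ is controlled. Throughout the arithmetic, corrections from $k$, from $(1-d/n)^{-1}$, and from $d^2/n$ are all $o(d^{1/4})$ because $|k| = O(n e^{-\sqrt{d}})$ and $d = o(\log^2 n)$; similarly $d \ll S = \Theta(n/\sqrt{d})$.

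For the lower bound \eqref{eq:women-ranking-mean-lower-bound} with $t = n(\sqrt{d}+d^{1/4})$, I obtain a \emph{deterministic} lower bound $S \geq n^2/(n+t) = n/(1+\sqrt{d}+d^{1/4})$ from Jensen's inequality applied to the convex map $x \mapsto 1/(x+1)$ (using $\sum_{j}(W_{j,t}+1) = n+t$), whence $S - d = S\,(1 - o(1))$. Applying \eqref{eq:Ylb} conditional on $\vec{W}_t$ with $\epsilon = d^{-1/4}$, a routine computation using $(T-t)/(n-d) = d - \sqrt{d} - d^{1/4} + o(d^{1/4})$ gives $(T-t)(S-d)/[n(n-d)] \geq \sqrt{d} - d^{1/4} + O(1)$, so the $(1-d^{-1/4})$ multiplier costs another $d^{1/4}$, delivering $\bar{R}[t] \geq \sqrt{d} - 2 d^{1/4} + O(1) \geq \sqrt{d} - 2.3 d^{1/4}$ for large $d$. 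The failure-probability exponent is $\epsilon^2 (T-t)(S-d)/[4(n-d)] = \Theta(n)$, easily beating $n/8$; since the bound on $S$ is deterministic, no further union bound is needed.

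For the upper bound \eqref{eq:women-ranking-mean-upper-bound} with $t = n(\sqrt{d}-5d^{1/4})$, I invoke Lemma~\ref{lem:acceptance-probability-upper-bound} with $\Delta = 3\, d^{-3/2}$ to establish the event $\mathcal{E}_S \triangleq \{S/n \leq n/t + d^2/n + \Delta\}$ on which $S/n \leq (1/\sqrt{d})(1 + 5 d^{-1/4} + o(d^{-1/4}))$, with $\mathbb{P}(\mathcal{E}_S^c) \leq 2 \exp(-9 n / (8 d^4))$. Applying \eqref{eq:Yub} with $\epsilon = d^{-7/4}$ conditional on $\vec{W}_t$, and using the same deterministic Jensen bound $S \geq n^2/(n+t)$ to lower-bound the exponent, gives a failure probability at most $\exp(-\Theta(n/d^3))$ uniformly in $\vec{W}_t$. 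On $\mathcal{E}_S$, the threshold $(1+\epsilon)(T-t) S / [n(n-d)]$ is at most $\sqrt{d} + 5 d^{1/4} + O(1) \leq \sqrt{d} + 7 d^{1/4}$ for large $d$ (the $(1+\epsilon)$ factor contributes only $o(1)$). Union-bounding over $\mathcal{E}_S^c$ and the conditional failure event yields $\mathbb{P}(\bar{R}[t] \geq \sqrt{d} + 7.5 d^{1/4}) \leq \exp(-n/d^4)$.

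The main obstacle will be the careful bookkeeping of all lower-order error terms through the multiplicative expansions of $(T-t)S/[n(n-d)]$ and $(T-t)(S-d)/[n(n-d)]$; the constants $2.3$ and $7.5$ are slack enough to absorb all $O(1)$ and $o(d^{1/4})$ residuals for sufficiently large $d$. A minor subtlety is that Lemmas~\ref{lem:women-ranking-preliminary} and~\ref{lem:acceptance-probability-upper-bound} should be applied in the continue-proposing process (defined above) rather than the extended process of Section~\ref{append:notation}; however, both proofs use only the fact that each recipient $J_s$ is uniform on $\mathcal{W} \setminus \mathcal{H}_s$, which is identical in the two continuations.
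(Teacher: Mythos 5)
Your proposal is correct and follows essentially the same route as the paper's proof: apply Lemma~\ref{lem:women-ranking-preliminary} with $c_j = 1/(W_{j,t}+1)$ and $T=(n+k)d$, control $S=\sum_j 1/(W_{j,t}+1)$ from below deterministically via Jensen and from above via Lemma~\ref{lem:acceptance-probability-upper-bound}, and let the slack in the constants $2.3$ and $7.5$ absorb the lower-order corrections; your parameter choices in the second part ($\Delta = 3d^{-3/2}$, $\epsilon = d^{-7/4}$ versus the paper's $d^{-3/4}$ and $d^{-1/4}$) differ but lead to the same conclusion. The only quibbles are that your prose swaps the two equation labels --- the bound at $t=n(\sqrt{d}+d^{1/4})$ is \eqref{eq:women-ranking-mean-upper-bound} and the one at $t=n(\sqrt{d}-5d^{1/4})$ is \eqref{eq:women-ranking-mean-lower-bound} --- and your closing remark about applying the auxiliary lemmas in the continue-proposing process correctly makes explicit a point the paper leaves implicit.
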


\begin{proof}
	{\bf Proof of \eqref{eq:women-ranking-mean-upper-bound}:} Fix $t = n\left( \sqrt{d} + d^\frac{1}{4} \right)$ and let $S \triangleq \sum_{j \in \mathcal{W}} \frac{1}{W_{j,t}+1}$.
	Due to the convexity of $f(x) \triangleq \frac{1}{x+1}$, we have for large enough $d$ (i.e. large enough $n$ since $d=\omega(1)$),
	\begin{equation}\label{eq:S-jensen-ineq}
		\frac{S}{n}
			= \frac{1}{n} \sum_{j \in \mathcal{W}} f(W_{j,t})
			\geq f\left( \frac{1}{n} \sum_{j \in \mathcal{W}} W_{j,t} \right)
			= f\left( \frac{t}{n} \right)
			= \frac{1}{t/n+1}
			\geq \frac{1}{\sqrt{d} + 1.05 d^\frac{1}{4}}\, .
	\end{equation}
	Given the asymptotic condition, for large enough $n$,
	\begin{align*}
		\frac{|k|}{n} =&\ O(e^{-\sqrt{d}}) \leq 0.1 d^{-\frac{1}{4}}
		, &
		\frac{t}{nd} \leq&\ \frac{\sqrt{d} + d^{\frac{1}{4}}}{d} = d^{-\frac{1}{2}} + d^{-\frac{3}{4}} \leq 0.1d^{-\frac{1}{4}}
		, \\
		\frac{ S - d }{ n/ \sqrt{d} } \geq&\ \frac{1}{1 + 1.05 d^{-\frac{1}{4}}} - \frac{d^\frac{3}{2}}{n} \geq 1 - 1.1 d^{-\frac{1}{4}}\, . &&
	\end{align*}
	Therefore,
	\begin{align}
		\frac{((n+k)d - t)(S-d)}{n-d}
			\geq\ &\frac{((n+k)d - t)(S-d)}{n} \nonumber\\
			\geq\ & d \left( 1 - \frac{|k|}{n} - \frac{t}{nd} \right) \times \frac{n}{\sqrt{d}} \cdot \frac{S - d}{n/\sqrt{d}}
			\nonumber
			\\
			\geq\ & n\sqrt{d} \times \left( 1 - 0.1 d^{-\frac{1}{4}} - 0.1 d^{-\frac{1}{4}} \right) \times \left( 1 - 1.1 d^{-\frac{1}{4}} \right)
			\nonumber
			\\
			\geq\ & n \sqrt{d} \left( 1 - 1.3 d^{-\frac{1}{4}} \right).
			\label{eq:women-ranking-ratio-lower-bound}
	\end{align}
	Utilizing Lemma \ref{lem:women-ranking-preliminary}, with\footnote{
		In Lemma \ref{lem:women-ranking-preliminary}, we assume that $c_j$'s are some deterministic constants whereas we set $c_ j \defeq \frac{1}{W_{j,t}+1}$ here.
		This is fine because the results of Lemma \ref{lem:women-ranking-preliminary} are stated in terms of conditional probability given $\vec{W}_t$.
	}  $c_j \defeq \frac{1}{W_{j,t}+1}$, $T \triangleq (n+k)d$ and $\epsilon \triangleq d^{-\frac{1}{4}}$, we obtain
	\begin{align*}
		\mathbb{P}\left( \left. \bar{R}[t] \leq \sqrt{d} - 2.3d^\frac{1}{4} \right| \vec{W}_t \right)
			=\ &\mathbb{P}\left( \left. n \bar{R}[t] \leq n\sqrt{d}\left( 1-  2.3d^{-\frac{1}{4}} \right) \right| \vec{W}_t \right)
			\\
			\leq\ &\mathbb{P}\left( \left. n \bar{R}[t] \leq (1-\epsilon) \times n\sqrt{d}\left( 1-  1.3d^{-\frac{1}{4}} \right) \right| \vec{W}_t \right)
			\\
			\leq\ &\mathbb{P}\left( \left. n \bar{R}[t] \leq (1-\epsilon) \times \frac{((n+k)d-t)(S-d)}{n-d} \right| \vec{W}_t \right)
			\\
			\leq\ &\exp\left( - \frac{1}{4} \epsilon^2 \times \frac{((n+k)d-t)(S-d)}{n-d}  \right)
			\\
			\leq\ &\exp\left( - \frac{1}{4} d^{-\frac{1}{2}} \times n\sqrt{d} \left( 1-1.3d^{-\frac{1}{4}} \right)  \right)
			\\
			\leq\ &\exp( - n / 8 )\, ,
	\end{align*}
	where the last inequality follows from the fact that $1-1.3d^{-\frac{1}{4}} \geq \frac{1}{2}$ for large enough $d$.
	Since the above result holds for any realization of $\vec{W}_t$, the claim follows.

	\noindent {\bf Proof of \eqref{eq:women-ranking-mean-lower-bound}:} Fix $t = n(\sqrt{d} - 5 d^{\frac{1}{4}})$.
	Define $\mathcal{A} \triangleq \left\{ \vec{W}_t \left| \frac{1}{n} \sum_{j \in \mathcal{W}} \frac{1}{W_{j,t}+1} \leq \frac{n}{t} + \frac{d^2}{n} + d^{-\frac{3}{4}} \right. \right\} \subset \mathbb{N}^{|\mathcal{W}|}$.
	Applying Lemma \ref{lem:acceptance-probability-upper-bound} with $\Delta \defeq  d^{-\frac{3}{4}}$, we obtain
	\begin{equation*}
		\mathbb{P}\left( \vec{W}_t \notin \mathcal{A} \right)
			= \mathbb{P}\left( \frac{S}{n} > \frac{n}{t} + \frac{d^2}{n} + d^{-\frac{3}{4}} \right)
			\leq 2 \exp\left( - \frac{n \Delta^2}{8d} \right) = 2 \exp\left( - \frac{1}{8} n d^{-\frac{5}{2}} \right)\, .
	\end{equation*}
	Regarding the last term, observe that for large enough $d$, we have
	\begin{equation*}
		2 \exp\left( - \frac{1}{8} n d^{-\frac{5}{2}} \right) \leq  \frac{1}{2} \exp\left( - n d^{-4} \right) \, .
	\end{equation*}
	For any $\vec{W}_t \in \mathcal{A}$, we have for large enough $n$,
	\begin{equation*}
		\frac{S}{n}
			\leq \frac{n}{t} + \frac{d^2}{n} + d^{-\frac{3}{4}}
			= \frac{1}{ \sqrt{d} - 5 d^\frac{1}{4} } + \frac{d^2}{n} + d^{-\frac{3}{4}}
			= \frac{1}{\sqrt{d}} \left( \frac{1}{1-5d^{-\frac{1}{4}}} + \frac{ d^{\frac{5}{2}} }{n} + d^{-\frac{1}{4}} \right)
			\leq \frac{1}{\sqrt{d}} \left( 1 + 6.1 d^{-\frac{1}{4}} \right),
	\end{equation*}
	Furthermore, given the asymptotic conditions,
	\begin{align*}
		\frac{((n+k)d-t)S}{n-d}
			\leq\ &\frac{(n+k)dS}{n-d}\\
			\leq\ &\frac{n+|k|}{n} \cdot \frac{n}{n-d} \cdot dS
			\\
			\leq\ &\frac{n+|k|}{n} \cdot \frac{n}{n-d} \cdot n \sqrt{d} \left( 1 + 6.1 d^{-\frac{1}{4}} \right)
			\\
			=\ &\left( 1 + \frac{|k|}{n} \right) \cdot \frac{1}{1-d/n} \cdot n \sqrt{d} \left( 1 + 6.1 d^{-\frac{1}{4}} \right)
			\\
			\leq\ &\left( 1 + 0.1 d^{-\frac{1}{4}} \right) \cdot \left( 1 + 0.1 d^{-\frac{1}{4}} \right) \cdot n \sqrt{d} \left( 1 + 6.1 d^{-\frac{1}{4}} \right)
			\\
			\leq\ & n \sqrt{d} \left( 1 + 6.4 d^{-\frac{1}{4}} \right)\, ,
	\end{align*}
	where we used the fact that $\frac{|k|}{n} = O(e^{-\sqrt{d}}) \leq 0.1 d^{-\frac{1}{4}}$ and $\frac{d}{n} \leq 0.1 d^{-\frac{1}{4}}$ for large enough $n$.
	We further utilize Lemma \ref{lem:women-ranking-preliminary}: By taking $c_j \triangleq \frac{1}{W_{j,t}+1}$, $T \triangleq (n+k)d$ and $\epsilon \triangleq d^{-\frac{1}{4}}$, we obtain
	\begin{align*}
		\mathbb{P}\left( \left. \bar{R}[t] \geq \sqrt{d} + 7.5 d^\frac{1}{4} \right| \vec{W}_t \right)
			&= \mathbb{P}\left( \left. n \bar{R}[t] \geq n\sqrt{d}\left( 1+ 7.5d^{-\frac{1}{4}} \right) \right| \vec{W}_t \right)
			\\&\leq \mathbb{P}\left( \left. n \bar{R}[t] \geq (1+\epsilon) \times n\sqrt{d}\left( 1+ 6.4d^{-\frac{1}{4}} \right) \right| \vec{W}_t \right)
			\\&\leq \mathbb{P}\left( \left. n \bar{R}[t] \geq (1+\epsilon) \times \frac{((n+k)d-t)S}{n-d} \right| \vec{W}_t \right)
			\\&\leq \exp\left( - \frac{1}{4} \epsilon^2 \times \frac{((n+k)d-t)S}{n}  \right).
		%	\\&\leq \exp\left( - \frac{1}{4} d^{-\frac{1}{2}} \times n\sqrt{d} \left( 1-1.1d^{-\frac{1}{2}} \right)  \right)
		%	\\&\leq \exp( - n / 8 ),
	\end{align*}
	for any $\vec{W}_t \in \mathcal{A}$.
	Also note that, from \eqref{eq:women-ranking-ratio-lower-bound}, for large enough $n$, %it satisfies: since $t+6d^{\frac{1}{4}} = n (\sqrt{d} + d^{\frac{1}{4}})$,
	\begin{align*}
		\frac{((n+k)d-t)S}{n}
		\geq\
		\left(\left(1 - \frac{|k|}{n}\right)d - \frac{t}{n}\right)S
		\geq\
		\left(\left(1 - 0.1 d^{-\frac{1}{4}}\right)d - \sqrt{d}+5 d^{\frac{1}{4}}\right)S\, ,
	\end{align*}
	where we used the fact that	$\frac{|k|}{n} = O(e^{-\sqrt{d}}) \leq 0.1 d^{-\frac{1}{4}}$. Because $\left(1 - 0.1 d^{-\frac{1}{4}}\right)d - \sqrt{d}+5 d^{\frac{1}{4}} \geq d-1.3d^{\frac{3}{4}}$ for large enough $n$, and that $S\geq \frac{n}{t/n+1}$ as derived in \eqref{eq:S-jensen-ineq}, we have
		\begin{align*}
		\frac{((n+k)d-t)S}{n}
		\geq\
		\left( d-1.3d^{\frac{3}{4}} \right)\frac{n}{t/n+1}
		\geq\
		\left( d-1.3d^{\frac{3}{4}} \right)\frac{n}{\sqrt{d}}
		\geq\ n \sqrt{d} \left( 1 - 1.3 d^{-\frac{1}{4}} \right)\, ,
	\end{align*}
	
	and therefore,
	\begin{equation*}
		\exp\left( - \frac{1}{4} \epsilon^2 \times \frac{((n+k)d-t)S}{n}  \right)
			\leq \exp\left( - \frac{1}{4} d^{-\frac{1}{2}} \times n \sqrt{d} \left( 1 - 1.3 d^{-\frac{1}{4}} \right) \right)
			\leq \exp\left( - n/8 \right).
	\end{equation*}
	Combining all results, we obtain the desired result: for large enough $n$,
	\begin{align*}
		\mathbb{P}\left( \bar{R}[t] \geq \sqrt{d} + 7.5 d^\frac{1}{4} \right)
			&\leq \mathbb{P}\left( \left. \bar{R}[t] \geq \sqrt{d} + 7.5 d^\frac{1}{4} \right| \vec{W}_t \in \mathcal{A} \right) \cdot \mathbb{P}\left( \vec{W}_t \in \mathcal{A} \right)
				+ \mathbb{P}\left( \vec{W}_t \notin \mathcal{A} \right)
			\\&\leq \exp\left( - n/8 \right) + \frac{1}{2} \exp\left( - n d^{-4} \right)
			\\&  \leq \exp\left( - n d^{-4} \right)  \, ,
	\end{align*}
	where the last inequality follows from that $\frac{n}{8} \geq \frac{n}{d^4} + \log 2$ for large enough $n$ and $d$.
  \end{proof}

% \newpage
\subsubsection{Concentration of women's average rank $\Rwomen$}
The following lemma states that conditioned on $(\vec{W}_\tau, \vec{W}_{(n+k)d})$, $\Wrank$ is concentrated around $\bar{R}[\tau]$.

\begin{lem} \label{lem:women-ranking-preliminary3}
	%Consider $\bar{R}_t$ defined in \eqref{eq:women-ranking-mean} and the time $\tau$ at which MOSM was found.
	For any given $n$, $k$ and $d$ and $(\vec{W}_\tau, \vec{W}_{(n+k)d})$ which arises with positive probability, we have $\Ex[\Wrank|\vec{W}_\tau, \vec{W}_{(n+k)d}]  = 1 + \bar{R}[\tau]$. Furthermore,
	for any $\epsilon > 0$ we have
	\begin{align}
		\mathbb{P}\left( \left. \Wrank \geq 1 + (1+\epsilon) \bar{R}[\tau] \,\right| \vec{W}_\tau, \vec{W}_{(n+k)d}\right)
			&\leq
			\exp \left ( -\frac{2\eps^2  n^2 \bar{R}[\tau]^2 }{\sum_{j \in \mathcal{W}}W_{j,(n+k)d}^2 } \right )
		\, , \label{eq:women-rank-concentration-1}
		\\
		\mathbb{P}\left( \left. \Wrank \leq 1 + (1-\epsilon) \bar{R}[\tau] \, \right| \vec{W}_\tau, \vec{W}_{(n+k)d}\right)
			&\leq
			\exp \left ( -\frac{2\eps^2  n^2  \bar{R}[\tau]^2 }{\sum_{j \in \mathcal{W}}W_{j,(n+k)d}^2 } \right )
		\, . \label{eq:women-rank-concentration-2}
	\end{align}
%\yk{Do you mean that the above inequalities hold for any history up to time $\tau$ or are you really conditioning only on $\bar{R}[\tau]$? Btw, if you condition on the history (or equivalently on all the $w_j$s and $w_j'$s), then there might be weird histories where the claim is false. For instance, the history where all the men proposed to the same $d$ women. In that case $\bar{R}[\tau] = $ is close to zero I think. Whereas $\R$ }
\end{lem}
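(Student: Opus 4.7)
My plan is to invoke the principle of deferred decisions for the women's preferences. Associate to each pair $(i,j)$ with $i \in \cM_j$ an independent label $U_{i,j} \sim \text{Unif}[0,1]$, and take woman $j$'s preference over $\cM_j$ to be the ordering of these labels (smaller is better). The DA algorithm consults $U_{\cdot, j}$ only through pairwise comparisons during $j$'s acceptance/rejection decisions: consequently, given the entire DA history up to time $\tau$, for each $i \in \cM_j$ who has not proposed to $j$ the label $U_{i,j}$ is still a fresh $\text{Unif}[0,1]$ random variable, while for the set $S_j \subseteq \cM_j$ of proposers to $j$ (with $|S_j| = W_{j,\tau}$) only the relative ordering of $\{U_{i,j}\}_{i \in S_j}$ has been revealed.

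Let $i^* = \arg\min_{i \in S_j} U_{i,j}$ denote $j$'s husband (and set $r_j \defeq W_{j,(n+k)d}+1$ in the unmatched case $W_{j,\tau} = 0$, consistent with Definition~\ref{def:avg-ranks-and-unmatched-counts}). For the matched case, write
\[
r_j = 1 + \bigl|\{i \in \cM_j \setminus \{i^*\}: U_{i,j} < U_{i^*,j}\}\bigr|.
\]
Each $i \in S_j \setminus \{i^*\}$ contributes $0$ (since $U_{i,j} > U_{i^*,j}$), while each $i \in \cM_j \setminus S_j$ contributes $\mathbf{1}\{U_{i,j} < U_{i^*,j}\}$ with $U_{i,j}$ fresh uniform, independent of $U_{i^*,j}$. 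Since the DA history only certifies that $U_{i^*,j}$ is the minimum among $W_{j,\tau}$ iid uniform labels, $U_{i^*,j} \sim \text{Beta}(1, W_{j,\tau})$ with mean $1/(W_{j,\tau}+1)$. This yields
\[
\Ex[r_j \mid \vec{W}_\tau, \vec{W}_{(n+k)d}] = 1 + \frac{W_{j,(n+k)d} - W_{j,\tau}}{W_{j,\tau}+1} = \frac{W_{j,(n+k)d}+1}{W_{j,\tau}+1},
\]
(the unmatched case fits the same formula), and averaging over $j$ gives $\Ex[\Wrank \mid \vec{W}_\tau, \vec{W}_{(n+k)d}] = 1 + \bar{R}[\tau]$, which establishes the first claim.

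For the concentration bounds, observe that the labels $\{U_{i,j}\}_i$ are independent across women $j$, and (from the argument above) the conditional distribution of $r_j$ given the DA history depends only on $(W_{j,\tau}, W_{j,(n+k)d})$. Hence $r_1,\ldots,r_n$ are conditionally independent given $(\vec{W}_\tau, \vec{W}_{(n+k)d})$, with $r_j \in [1, W_{j,(n+k)d}+1]$. Applying the one-sided Hoeffding bound for sums of independent bounded variables (a special case of Lemma~\ref{lem:Hoeffding-NA}) to $\sum_j (r_j - \Ex[r_j \mid \cdot])$ with ranges $b_j - a_j = W_{j,(n+k)d}$ and deviation $\epsilon\, n\, \bar{R}[\tau]$ delivers exactly the stated bounds \eqref{eq:women-rank-concentration-1} and \eqref{eq:women-rank-concentration-2}.

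The main subtlety is justifying the $\text{Beta}(1, W_{j,\tau})$ distribution of $U_{i^*,j}$ conditional on $(\vec{W}_\tau, \vec{W}_{(n+k)d})$: because DA uses history-dependent pairwise comparisons, one must argue carefully that conditioning on the entire DA history (a finer $\sigma$-algebra than $(\vec{W}_\tau, \vec{W}_{(n+k)d})$) only reveals that $U_{i^*,j}$ is the minimum of $W_{j,\tau}$ iid uniforms---no further information about its value---and that the resulting joint conditional distribution of $(r_1,\ldots,r_n)$ factorizes as a product depending only on the $(W_{j,\tau}, W_{j,(n+k)d})$. This is the step that requires the most care in turning the plan into a formal proof.
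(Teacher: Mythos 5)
Your proposal is correct and follows essentially the same route as the paper's proof: you decompose the rank as $1$ plus a sum of indicators over the neighbors who have not proposed by time $\tau$, represent the husband's label as the extreme order statistic of $W_{j,\tau}$ i.i.d.\ uniforms (your $\mathrm{Beta}(1,W_{j,\tau})$ minimum is the mirror image of the paper's maximum $V_j$), obtain $\Ex[r_j\mid\cdot]=1+(W_{j,(n+k)d}-W_{j,\tau})/(W_{j,\tau}+1)$, and then apply Hoeffding to the conditionally independent $r_j\in[1,W_{j,(n+k)d}+1]$ with ranges $W_{j,(n+k)d}$. The subtlety you flag about the deferred-decisions conditioning is handled at the same level of (in)formality in the paper itself, so nothing is missing relative to the published argument.
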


\begin{proof}
	%\pq{Seungki, the following proof is modified according Yash's suggestion to use Hoeffding. Your original proof can be uncommented below (see the tex file). The lemma statement was also changed, I am going to modify the proof of Proposition 5 next.}
		Within this proof, we assume that $\tau$, $\vec{W}_\tau = \left( W_{j,\tau} \right)_{j \in \mathcal{W}}$, and $\vec{W}_{(n+k)d} = \left( W_{j,(n+k)d} \right)_{j \in \mathcal{W}}$ are revealed (and hence so is $\bar{R}[\tau]$). 
In what follows, $\mathbb{P}( \cdot )$ and $\mathbb{E}[ \cdot ]$ denote the associated conditional probability and the conditional expectation, respectively.

For brevity, let $w_j \triangleq W_{j,\tau}$, $w_j' \triangleq W_{j,(n+k)d} - W_{j,\tau}$, and $R_j \triangleq \text{Rank}_j( \text{MOSM})|\vec{W}_\tau, \vec{W}_{(n+k)d}$. %\yk{What is the definition of $R_j$ if woman $j$ is unmatched? It should be degree + 1 if you want the average of $R_j$s to equal $\Rwomen$, right?}
Note that a woman $j$ receives $w_j$ proposals until time $\tau$ and receives $w_j'$ proposals after time $\tau$ (the total number of proposals $w_j + w_j' = W_{j,(n+k)d}$ equals to her degree).
Under MOSM, each woman $j$ is matched to her most preferred one among the first $w_j$ proposals, and the rank of her matched partner under MOSM, $R_j$, can be determined by the number of men among the remaining (at time $\tau$) $w_j'$ men on her list that she prefers to her matched partner.

More specifically, fix $j$ and let $Z^j_{t}$ be the indicator that the woman $j$ prefers her $t^\text{th}$ proposal to all of her first $w_j$ proposals for $t \in \{w_j+1, \ldots, w_j+w_j'\}$.
	Then, the rank $R_j$ can be represented as
	\begin{align*}
	R_j &= 1 + \sum_{t=w_j+1}^{w_j+w_j'} \ind\left( \text{woman $j$ prefers her $t^\text{th}$ proposal to all of her first $w_j$ proposals} \right) 
	\\&= 1 + \sum_{t=w_j+1}^{w_j+w_j'}Z_t^j\, .
	\end{align*}
Note that $(Z_t^j)_{t=w_j+1}^{w_j+w_j'}$ has the same distribution as $(\ind\{U_t^j > V_j\})_{t=w_j+1}^{w_j+w_j'}$, where $(U_t^j)_{t=w_j+1}^{w_j+w_j'}$ are i.i.d. Uniform$[0,1]$ random variables, $V_j$ is the largest order statistic of $w_j$ i.i.d. Uniform$[0,1]$ random variables, and $V_j$ is independent of $U_t^j$'s. %(Observe that $Z_t^j$'s are \emph{not} independent.)
Therefore,
\begin{equation*}
\mathbb{E}\left[ R_j \right] 
=\ 1 + w_j' \cdot \mathbb{E}[Z_{w_j+1}^j]
=\ 1 + w_j' \cdot \mathbb{P}(U_{w_j+1}^j > V_j)
=\ 1 + \frac{w_j'}{w_j+1}\, ,
\end{equation*}
and
\begin{align*}
	\Ex[\Wrank|\vec{W}_\tau, \vec{W}_{(n+k)d}]
	=\ \frac{1}{n}\sum_{j\in \cW}\mathbb{E}[R_j] 
	=\ 1 + \bar{R}[\tau]\, ,
\end{align*}
which proves the first claim in Lemma \ref{lem:women-ranking-preliminary3}.

Note that $(R_j)_{j\in \cW}$ are i.i.d. and that $R_j\in [0,w_j']$. Applying Hoeffding's inequality, we have
\begin{align*}
	\mathbb{P}\left( \left. \Wrank \geq 1 + (1+\epsilon) \bar{R}[\tau] \,\right| \vec{W}_\tau, \vec{W}_{(n+k)d}\right)
	=\ &
	\mathbb{P}\left( \frac{1}{n}R_j
	\geq
	1 + (1+\epsilon)\mathbb{E}[\bar{R}[\tau]]
	\right)\\
	\leq\ &\exp \left ( -\frac{2\eps^2 n^2 \bar{R}[\tau]^2  }{\sum_{j \in \mathcal{W}}W_{j,(n+k)d}^2 } \right )\, .
\end{align*}
%This should suffice since (i) you already showed that wvhp $\bar{R}[\tau] \geq \sqrt{d}/2$ and (ii) you can also show, using negative association that $\sum_{j \in \mathcal{W}}W_{j,(n+k)d}^2 / n$ is no more than $2d^2$ wvhp. Overall, the RHS should be less than $\exp(-n^{1-\epsilon})$ wvhp, which should suffice for your purposes, no?

Similarly, we can show that
\begin{align*}
\mathbb{P}\left( \left. \Wrank \leq 1 + (1-\epsilon) \bar{R}[\tau] \,\right| \vec{W}_\tau, \vec{W}_{(n+k)d}\right)
\leq\ \exp \left ( -\frac{2\eps^2 n^2 \bar{R}[\tau]^2  }{\sum_{j \in \mathcal{W}}W_{j,(n+k)d}^2 } \right )\, .
\end{align*}
This concludes the proof.

\begin{comment}
We now proceed to derive the tail bounds of $R_j$.
%
For any $\lambda \in \mathbb{R}$, we have
\begin{align*}
\mathbb{E}\left[ \exp\left( \lambda (R_j-1) \right) \right]
%
=\ &\mathbb{E}\left[ \exp\left( \lambda \sum_{t=w_j+1}^{w_j+w_j'} \ind\{U_t^j > V_j\} \right) \right]\\
%
=\ &\int_{0}^{1} \mathbb{E}\left[\left. \exp\left( \lambda \sum_{t=w_j+1}^{w_j+w_j'} \ind\{U_t^j > v\} \right)\right| V_j=v \right] f(v) dv\, ,
\end{align*}
where $f(v)= w_j v^{w_j-1}$ is the density function of $V_j$. Note that for fixed $v$, $\ind\{U_t^j > v\}$ are i.i.d. Bernoulli($1-v$) random variables.
Therefore
%
\begin{align*}
\mathbb{E}\left[ \exp\left( \lambda (R_j-1) \right) \right]
%
=\ &\int_{0}^{1} w_j v^{w_j-1}
(v + (1-v) e^{\lambda})^{w_j'} dv \\
%
=\ &\int_{0}^{1} w_j v^{w_j-1}
(1 + (1-v) (e^{\lambda}-1))^{w_j'} dv \\
%
\leq\ & \int_{0}^{1} w_j v^{w_j-1}
\exp((1-v) (e^{\lambda}-1))^{w_j'} dv \\
%
=\ & \int_{0}^{1} w_j  v^{w_j-1}
\exp(w_j'(1-v) (e^{\lambda}-1)) dv\, ,
\end{align*}
where the inequality follows from the fact that $1+x \leq e^{x}$ for any $x \in \mathbb{R}$.
\end{comment}

	%\input{rwoman-concentration-sm}
  \end{proof}

%\newpage
\begin{proof}[Proof of Proposition \ref{prop:women-ranking-lower-bound}]
	We obtain a high probability lower bound on $\Rwomen$ by combining the results of Proposition \ref{prop:total-proposal-upper-bound}, and Lemmas \ref{lem:women-ranking-preliminary2} and \ref{lem:women-ranking-preliminary3}.
	By Proposition \ref{prop:total-proposal-upper-bound} and  Lemma \ref{lem:women-ranking-preliminary2} and by the fact that $\bar{R}[t]$ is decreasing on each sample path,
	\begin{align}
		\mathbb{P}\left( \bar{R}[\tau] \leq \sqrt{d} - 2.3d^{\frac{1}{4}} \right)
			\leq\ &\mathbb{P}\left( \bar{R}[\tau] \leq \sqrt{d} - 2.3d^{\frac{1}{4}}, \tau < n (\sqrt{d} + d^{\frac{1}{4}}) \right) + \mathbb{P}\left( \tau \geq n (\sqrt{d}+d^{\frac{1}{4}} ) \right)\nonumber
			\\
			\leq\ &\mathbb{P}\left( \bar{R}[n(\sqrt{d}+d^{\frac{1}{4}})] \leq \sqrt{d} - 2.3d^{\frac{1}{4}}, \tau < n (\sqrt{d} + d^{\frac{1}{4}}) \right) + \mathbb{P}\left( \tau \geq n (\sqrt{d}+d^{\frac{1}{4}} ) \right)\nonumber
			\\
			\leq\ &\exp\left( -\frac{n}{8} \right) + O( \exp(-\sqrt{n}) )
			= O( \exp(-\sqrt{n}) )\, .\label{eq:Rtau-ineq-4}
	\end{align}
	
	We also need a high probability upper bound on $\sum_{j\in \cW} W^2_{j,(n+k)d}$.
	Since $W_{j,(n+k)d}\allowdisplaybreaks \sim\allowdisplaybreaks \text{Binomial} ( (n+k)d, \frac{1}{n} )$, we have for large enough $n$,
	\begin{align*}
	\mathbb{E}\left[ W^2_{j,(n+k)d} \right]
	=\ \mathbb{E}^2\left[ W_{j,(n+k)d}\right] + \textup{Var}\left[ W_{j,(n+k)d}\right]
	=\ (n+k)^2 d^2 \frac{1}{n^2} \left(1 + (1-\frac{1}{n})^2\right)
	\leq\ 2 d^2\, .
	\end{align*}
Denote $\mu\triangleq \mathbb{E}[W_{1,(n+k)d}]=\frac{(n+k)d}{n}$.
Looking up the table of the central moments of Binomial distribution, we have
\begin{align*}
	\mathbb{E}[(W_{1,(n+k)d}-\mu)^4]
	=\ (n+k)d \frac{1}{n}\left(1-\frac{1}{n}\right)\left(1 + (3(n+k)d-6)\frac{1}{n}\left(1-\frac{1}{n}\right)\right)\, .
\end{align*}
Using the fact that $k=o(n)$, $d=o(n)$ and $d=\omega(1)$, we have for large enough $n$,
\begin{align*}
\mathbb{E}[(W_{1,(n+k)d}-\mu)^4]
\leq\ 2d\left(1 + 3(n+k)d\frac{1}{n}\right)
\leq\ 2d\cdot 4d
=\ 8d^2\, .
\end{align*}
Therefore, for large enough $n$,
	\begin{align*}
		\textup{Var}[W^2_{1,(n+k)d}]\leq\ &
		\mathbb{E}[W^4_{1,(n+k)d}] \\
		=\ & \mathbb{E}[(\mu + (W_{1,(n+k)d}-\mu) )^4]\\
		\leq\ & 8\mu^4 + 8 \mathbb{E}[(W_{1,(n+k)d}-\mu)^4]\\
		=\ & 8 \frac{(n+k)^4 d^4}{n^4} + 64d^2 \\
		\leq\ & 10d^4\, .
	\end{align*}
	In the proof of Lemma \ref{lem:balls-into-bins-uniform}, we have shown that $W_{1,(n+k)d}, \ldots, W_{n,(n+k)d}$ are NA.
	By Lemma \ref{lem:NA}--(\ref{lem:NA-monotone}), $W_{1,(n+k)d}^2, \ldots, W_{n,(n+k)d}^2$ are NA, hence we have for large enough $n$,
	\begin{align*}
	\textup{Var}\left[
	\sum_{j\in\cW}W_{j,(n+k)d}^2
	\right]
	\leq\
	n 	\textup{Var}\left[
	W_{1,(n+k)d}^2
	\right]
	\leq
	10 nd^4\, .
	\end{align*}
	Applying Chebyshev's inequality, we have for large enough $n$,
	\begin{align}
		\mathbb{P}\left(\sum_{j\in\cW}W_{j,(n+k)d}^2
		\geq 4 n d^2
		\right)
		\leq\ &
		\mathbb{P}\left(\sum_{j\in\cW}(W_{j,(n+k)d}^2 - \mathbb{E}[W_{j,(n+k)d}^2] )
		\geq 2 n d^2
		\right)
		\nonumber\\
		=\ &
		\mathbb{P}\left(\sum_{j\in\cW}(W_{j,(n+k)d}^2 - \mathbb{E}[W_{j,(n+k)d}^2] )
		\geq \frac{2\sqrt{n}}{\sqrt{10}}\sqrt{10 nd^4}
		\right)\nonumber\\
		\leq\ &
		\mathbb{P}\left(\sum_{j\in\cW}(W_{j,(n+k)d}^2 - \mathbb{E}[W_{j,(n+k)d}^2] )
		\geq \frac{2\sqrt{n}}{\sqrt{10}}
		\sqrt{
			\textup{Var}\left[
		\sum_{j\in\cW}W_{j,(n+k)d}^2
		\right]
	}
		\right)\nonumber\\
		\leq\ & \frac{5}{2n} \leq\ \frac{3}{n}\, .\label{eq:list-length-chebyshev}
	\end{align}

	Given that $\bar{R}[\tau] > \sqrt{d} - 2.3d^\frac{1}{4}$, by plugging $\epsilon \triangleq 0.5d^{-\frac{1}{4}}$ in \eqref{eq:women-rank-concentration-2} of Lemma \ref{lem:women-ranking-preliminary3}, we obtain for large enough $n$,
	\begin{equation}\label{eq:Rtau-ineq-3}
		1 + (1-\epsilon)\bar{R}[\tau]
			\geq 1 + (1-0.5d^{-\frac{1}{4}}) \cdot \sqrt{d}(1 - 2.3 d^{-\frac{1}{4}})
			\geq \sqrt{d} ( 1 - 3 d^{-\frac{1}{4}})
			= \sqrt{d} - 3 d^{\frac{1}{4}}\, .
	\end{equation}
	Therefore,
	\begin{align*}
		\mathbb{P}\left( \Rwomen \leq \sqrt{d} - 3d^{\frac{1}{4}} \right)
			\leq\ &\mathbb{P}\left( \left. \Rwomen \leq \sqrt{d} - 3d^{\frac{1}{4}} \right| \bar{R}[\tau] > \sqrt{d} - 2.3 d^{\frac{1}{4}},\sum_{j\in\cW}W_{j,(n+k)d}^2
			< 4 n d^2\right) \\
			&+ \mathbb{P}\left( \bar{R}[\tau] \leq \sqrt{d} - 2.3d^\frac{1}{4} \right)
			+ \mathbb{P}\left(\sum_{j\in\cW}W_{j,(n+k)d}^2
			\geq 4 n d^2\right)
			\\
			\stackrel{\textup{(a)}}{\leq}\ &\mathbb{P}\left( \left. \Rwomen \leq 1+(1-\epsilon)\bar{R}[\tau] \right| \bar{R}[\tau] > \sqrt{d} - 2.3 d^{\frac{1}{4}},\sum_{j\in\cW}W_{j,(n+k)d}^2
			< 4 n d^2\right) \\
			&+ O(\exp(-\sqrt{n}))
			+ \frac{3}{n}
			\\
			\stackrel{\textup{(b)}}{\leq}\ & \mathbb{E}\left[\left. \exp\left(
			-\frac{\frac{1}{2}d^{-\frac{1}{2}}n^2 \bar{R}[\tau]^2}{4nd^2}
			\right)
			\right|
			\bar{R}[\tau] > \sqrt{d} - 2.3 d^{\frac{1}{4}}
			\right]	+ \frac{4}{n}
			 \\
			 \leq\ &\exp\left( - \frac{1}{8} d^{-\frac{5}{2}} n \cdot d (1 - 2.3 d^{-\frac{1}{4}})^2 \right) + \frac{4}{n}
			\\
			\leq\ &\exp\left( - \frac{n d^{-\frac{3}{2}}}{16} \right)  + \frac{4}{n}\\
			\leq\ & \frac{5}{n}\, .
	\end{align*}
	Here inequality (a) follows from \eqref{eq:Rtau-ineq-3}, \eqref{eq:Rtau-ineq-4}, and \eqref{eq:list-length-chebyshev}; inequality (b) follows from Lemma \ref{lem:women-ranking-preliminary3}.
  \end{proof}

\begin{proof}[Proof of Proposition \ref{prop:women-ranking-upper-bound}]
	We obtain a high probability lower bound on $\Rwomen$ by combining the results of Proposition \ref{prop:total-proposal-lower-bound}, and Lemma \ref{lem:women-ranking-preliminary2} and \ref{lem:women-ranking-preliminary3}.
	By Proposition \ref{prop:total-proposal-lower-bound} and  Lemma \ref{lem:women-ranking-preliminary2},
	\begin{align*}
		\mathbb{P}\left( \bar{R}[\tau] \geq \sqrt{d} + 7.5d^{\frac{1}{4}} \right)
			\leq\ &\mathbb{P}\left( \bar{R}[\tau] \geq \sqrt{d} + 7.5d^{\frac{1}{4}} , \tau > n (\sqrt{d} - 5d^{\frac{1}{4}}) \right)
				 +  \mathbb{P}\left( \tau \leq n (\sqrt{d}-5 d^{\frac{1}{4}} ) \right)
			\\
			\leq\ &\mathbb{P}\left( \bar{R}[n(\sqrt{d}- 5d^{\frac{1}{4}})] \geq \sqrt{d} + 7.5d^{\frac{1}{4}} , \tau > n (\sqrt{d} - 5d^{\frac{1}{4}}) \right)
				 +  \mathbb{P}\left( \tau \leq n (\sqrt{d}-5 d^{\frac{1}{4}} ) \right)
			\\
			\leq\ &\exp\left( - \frac{n}{d^4} \right)  + O(\exp(-d^{\frac{1}{4}}))
			\\
			\leq\ & O( \exp(-d^{\frac{1}{4}}) )\, .
	\end{align*}
	Given that $\bar{R}[\tau] < \sqrt{d} + 7.5d^\frac{1}{4}$, by plugging $\epsilon \triangleq 0.1d^{-\frac{1}{4}}$ in \eqref{eq:women-rank-concentration-1} of Lemma \ref{lem:women-ranking-preliminary3}, we obtain
	\begin{equation*}
		1 + (1+\epsilon)\bar{R}[\tau]
			\leq 1 + (1+0.1d^{-\frac{1}{4}}) \cdot \sqrt{d}(1 + 7.5d^{-\frac{1}{4}})
			\leq \sqrt{d} ( 1 + 8 d^{-\frac{1}{4}})
			= \sqrt{d} + 8 d^{\frac{1}{4}},
	\end{equation*}
	for large enough $n$.
	Recall that we have shown in the proof of Proposition \ref{prop:women-ranking-lower-bound} that
	\begin{align*}
	\mathbb{P}\left(\sum_{j\in\cW}W_{j,(n+k)d}^2
	\geq 4 n d^2
	\right)
\leq\  \frac{3}{n}\, .
	\end{align*}
	Therefore, similar to the proof of Proposition \ref{prop:women-ranking-lower-bound}, we have
	\begin{align*}
		&\mathbb{P}\left( \Rwomen \geq \sqrt{d} + 8d^{\frac{1}{4}} \right)\\
	\leq\ &\mathbb{P}\left( \left. \Rwomen \geq \sqrt{d} + 8d^{\frac{1}{4}} \right| \sqrt{d} - 2.3 d^{\frac{1}{4}} < \bar{R}[\tau] < \sqrt{d} + 7.5 d^{\frac{1}{4}},
	\sum_{j\in\cW}W_{j,(n+k)d}^2
	< 4 n d^2\right)
				\\
		&+ \mathbb{P}\left( \bar{R}[\tau] \leq \sqrt{d} - 2.3d^\frac{1}{4} \right) + \mathbb{P}\left( \bar{R}[\tau] \geq \sqrt{d} + 7.5 d^{\frac{1}{4}} \right)
		+  \mathbb{P}\left( \sum_{j\in\cW}W_{j,(n+k)d}^2
		\geq 4 n d^2 \right)
			\\
		\leq\ &\mathbb{P}\left( \left. \Rwomen \geq 1 + (1+\epsilon) \bar{R}[\tau] \right| \sqrt{d} - 2.3 d^{\frac{1}{4}} < \bar{R}[\tau] < \sqrt{d} + 7.5 d^{\frac{1}{4}},
		\sum_{j\in\cW}W_{j,(n+k)d}^2
		< 4 n d^2
		\right)\\
				&+ O(\exp(-d^{\frac{1}{4}}) )
			\\
		\leq\ &\mathbb{E}\left[ \left. \exp\left( - \frac{2\epsilon^2 n^2 \bar{R}[\tau]^2}{\sum_{j\in\cW}W_{j,(n+k)d}^2} \right) \right| \sqrt{d} - 2.3 d^{\frac{1}{4}} < \bar{R}[\tau] < \sqrt{d} + 7.5 d^{\frac{1}{4}},\sum_{j\in\cW}W_{j,(n+k)d}^2
		< 4 n d^2\right]\\
				&+ O(\exp(-d^{\frac{1}{4}}) )
			\\
		\leq\ &\exp\left( - \frac{1}{200} d^{-\frac{5}{2}} n \cdot d (1 - 2.3 d^{-\frac{1}{4}})^2 \right) + O(\exp(-d^{\frac{1}{4}}) )
			\\
			\leq\ &\exp\left( - \frac{n d^{-\frac{3}{2}}}{300} \right)  + O(\exp(-d^{\frac{1}{4}}) )
			=\ O(\exp(-d^{\frac{1}{4}}) )\, .
	\end{align*}
  \end{proof}

\subsection{Proof of Theorem \ref{thm:main-result-append}}
Theorem \ref{thm:main-result-append} immediately follows from Propositions \ref{prop:total-proposal-upper-bound}, \ref{prop:unmatched-women-lower-bound}, \ref{prop:unmatched-women-upper-bound}, \ref{prop:total-proposal-lower-bound},
\ref{prop:women-ranking-lower-bound}, and \ref{prop:women-ranking-upper-bound}.

%\newpage
\section{Proof for Large Sized $d$: the Case of $d = \omega(\log^2 n)$, $d=o(n)$}\label{append:dense-d}
%\pq{This section hasn't been changed, right? Meaning that similar changes need to be made to allow the application of Azuma's inequality.}\sm{No changes at all. The previous lemmas break down as we no longer have $|\mathcal{H}_t| < d$.}
In this section, we consider the case such that $d=\omega(\log^2 n)$ and $d=o(n)$.
We will prove a quantitative version of Theorem \ref{thm:main-result-dense}.
\begin{thm}[Quantitative version of Theorem \ref{thm:main-result-dense}]\label{thm:main-result-dense-append}
	Consider a sequence of random matching markets indexed by $n$, with $n+k$ men and $n$ women ($k=k(n)$ is negative), and the men's degrees are $d=d(n)$.
	If $|k|=o(n)$, $d=\omega(\log^2 n)$ and $d=o(n)$, we have the following results.
	\begin{enumerate}
		\item \emph{Men's average rank of wives.} With probability $1 - \exp(-\sqrt{\log n})$, we have
		\begin{align*}
		\Mrank
		\leq\
		\left(1 + 2\frac{|k|}{n} + 2\frac{1}{\sqrt{\log n}}\right)\log n
		\, .
		\end{align*}
		\item \emph{Women's average rank of husbands.} With probability $1 - O(\exp(-\sqrt{\log n}))$, we have
		\begin{align*}
		\Wrank
		\geq\
		\left( 1 -
		1.1\left(
		\frac{|k|}{n}
		+\frac{3}{\sqrt{\log n}}
		+\frac{d}{n/\log n}
		\right)
		\right)
		\frac{d}{\log n}
		\, .
		\end{align*}
	\end{enumerate}
\end{thm}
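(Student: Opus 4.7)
The plan is to closely follow \cite{AKL17}'s strategy for fully connected markets, adapting each step to the dense partial connectivity regime $d = \omega(\log^2 n)$. The key structural simplification versus the moderately connected case (Theorem~\ref{thm:main-result-append}) is that, with high probability, \emph{no man exhausts his length-$d$ preference list}, so $\delta^m = 0$. Given $\delta^m = 0$, the matching constraint $\delta^m = \delta^w + k$ forces $\delta^w[\tau] = |k|$, and we obtain the clean identity $\Mrank = \tau/(n+k)$, making the task one of tightly controlling the total number of proposals $\tau$.

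To show $\delta^m = 0$ w.h.p., rather than rely on the depth-of-proposer analysis of \cite{Pittel19} for fully connected markets, I would directly upper bound the probability that a given man is rejected $d$ times in a row during MPDA. Via a counterpart of Lemma~\ref{cor:acceptance-probability-upper-bound} applied with $t \lesssim n \log n$, one shows that the ex-ante acceptance probability per proposal is bounded below by roughly $1/\log n$ throughout MPDA, so the probability of $d$ consecutive rejections is at most $(1 - c/\log n)^d = \exp(-\Omega(d/\log n)) = \exp(-\omega(\log n))$ since $d = \omega(\log^2 n)$. A union bound over all $n+k$ men then yields $\mathbb{P}(\delta^m \geq 1) = o(1)$.

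Since DA now terminates exactly when $n+k$ distinct women have received at least one proposal, $\tau$ can be bounded via the balls-into-bins analogy: $\delta^w[t]$ is stochastically dominated by the number of empty bins in a balls-into-bins process with $n$ bins. A counterpart of Lemma~\ref{lem:unmatched-woman-upper-bound} targeting $|k|$ empty bins (rather than $ne^{-\sqrt{d}}$) yields $\tau \leq n(\log n)(1 + 1/\sqrt{\log n})$ with probability at least $1 - \exp(-\sqrt{\log n})$. Combining with $\delta^m = 0$ and the elementary inequality $n/(n+k) \leq 1 + 2|k|/n$ then delivers the stated upper bound on $\Mrank$.

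For the lower bound on $\Wrank$, I would start from the identity $\mathbb{E}[\Wrank \mid \vec{W}_\tau, \vec{W}_{(n+k)d}] = 1 + \bar{R}[\tau]$ proved in Lemma~\ref{lem:women-ranking-preliminary3}, where $\bar{R}[\tau] = \frac{1}{n}\sum_{j} (|\cM_j| - W_{j,\tau})/(W_{j,\tau} + 1)$. The average number of proposals per woman satisfies $\tau/n \leq (\log n)(1 + 1/\sqrt{\log n})$ by the previous step, and each $|\cM_j| \sim \textup{Binomial}(n+k, d/n)$ concentrates around $d$ to within a multiplicative slack of order $d\log n/n$ since $d = \omega(\log^2 n)$. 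Jensen's inequality combined with concentration (analogous to Lemma~\ref{lem:women-ranking-preliminary2}) then delivers $\bar{R}[\tau] \geq (d/\log n)(1 - O(|k|/n + 1/\sqrt{\log n} + d\log n/n))$, and the Hoeffding-type bound in Lemma~\ref{lem:women-ranking-preliminary3} transfers this to $\Wrank$ itself with probability $1 - O(\exp(-\sqrt{\log n}))$. The hardest part will be achieving this confidence level simultaneously across all three error sources on the women's side, while staying faithful to the quantitative constants $2$ and $1.1$ appearing in the theorem statement; this likely forces carefully chosen deviation parameters in each application of Lemmas~\ref{lem:women-ranking-preliminary} and~\ref{lem:women-ranking-preliminary3}.
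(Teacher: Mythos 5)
Your plan for Part 2 is essentially the paper's proof: the paper reuses Lemma~\ref{lem:women-ranking-preliminary} (whose proof makes no assumption on $d$) to obtain a counterpart of Lemma~\ref{lem:women-ranking-preliminary2} at $t = n(\log n + \sqrt{\log n})$, showing $\bar{R}[t] \geq \frac{d}{\log n}\big(1 - 1.1(|k|/n + 2/\sqrt{\log n} + d/n)\big)$ with probability $1-\exp(-n/8)$, and then transfers this to $\Wrank$ via the conditional-expectation identity and Hoeffding bound of Lemma~\ref{lem:women-ranking-preliminary3}, together with the Chebyshev bound on $\sum_j W_{j,(n+k)d}^2$ --- exactly the decomposition you describe.

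For Part 1 you take a genuinely different route, and your sketch has a gap. The paper never proves $\delta^m=0$: it writes $\Mrank \leq \tau/(n+k)+1$ (the $+1$ trivially absorbs $\delta^m[\tau]/(n+k)\leq 1$) and bounds $\tau$ by coupling with an extended market in which each man's list is completed to all $n$ women; the proposal count $\tau'$ of that fully connected market dominates $\tau$ and is itself dominated by the classical $n$-coupon collector, giving $\tau\leq n(\log n+\sqrt{\log n})$ with probability $1-e^{-\sqrt{\log n}}$. Your route would additionally establish the ``all men are matched'' claim of Theorem~\ref{thm:main-result-dense}, but the step where you bound $\mathbb{P}(\text{man $i$ unmatched})$ by $(1-c/\log n)^d$ conflates two events: a man is unmatched iff all $d$ of his proposals are \emph{eventually} rejected, which strictly contains the event of $d$ \emph{immediate} rejections in a row, since he may be accepted and later displaced by a rejection chain. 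Bounding the probability of the smaller event does not bound the probability of the larger one; the fix is the argument of Lemma~\ref{lem:unmatched-women-expected-ub}, which bounds the probability that a proposal is accepted \emph{and} its proposer survives the chain it triggers. Two further (repairable) points: your lower bound $p_t\gtrsim 1/\log n$ is only available once $t=O(n\log n)$ is known, so the bound on $\tau$ must come first --- which is fine, because $\tau\leq\min\{t:\delta^w[t]\leq|k|\}$ (as $\delta^m[t]\geq 0$) lets you bound $\tau$ via balls-into-bins without knowing $\delta^m=0$, but the steps must be reordered; and since $d$ may be as large as, say, $n/\log\log n$, you should get the lower bound on $p_t$ by applying Jensen directly over $\cW\setminus\mathcal{H}_t$ (yielding $p_t\geq 1/(t/(n-d)+1)$) rather than by subtracting $d$ from the full sum $\sum_j 1/(W_{j,t}+1)$, which would destroy the $1/\log n$ bound.
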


\begin{proof}[Proof of Theorem \ref{thm:main-result-dense-append}]

	\noindent\textbf{Proof of Theorem \ref{thm:main-result-dense-append} part 1.}
	Recall that $\tau$ is the \emph{the total number of proposals} that are made until the end of MPDA, i.e., the time at which the men-optimal stable matching (MOSM) is found.
	We introduce an extended process (which is \emph{different} from the one defined in Appendix \ref{append:notation}) as a natural continuation of the MPDA procedure that continues to evolve even after the MOSM is found (i.e., the extended process continues for $t > \tau$).
	To define the extended process, we start by defining an extended market, which has the same $n$ women and $n+k$ men, but each man has a complete preference list, i.e. each man ranks all $n$ women.
	We call the first $d$ women of a man's preference list his ``real'' preferences and the last $n-d$ women his ``fake'' preferences.
	The distribution of preferences in the extended market is again as described in Section~\ref{sec:model}.
	We then define the \emph{extended process} as tracking the progress of Algorithm~\ref{alg:2MPDA} on the extended market: the $n+k$ men enter first in Algorithm~\ref{alg:2MPDA} with only their real preferences, as before.
	After time $\tau$, we let the men see their fake preferences and continue Algorithm~\ref{alg:2MPDA} until the MOSM with full preferences is found.
	We denote by $\tau'$ the total number of proposals to find the MOSM with full preferences.
	It is easy to see that $\tau$ is stochastically dominated by $\tau'$.
	
	Note that $\tau'$ is the total number of proposals needed to find the MOSM in a completely-connected market, which has been studied in previous works including \cite{ashlagi2015unbalanced,pittel2019likely}.
	It is well-known that $\tau'$ is stochastically dominated by the number of draws in a coupon collector's problem, in which one coupon is chosen out of $n$ coupons uniformly at random at a time and it runs until $n$ distinct coupons are collected.
	Let $X$ be the number of draws in the coupon collector's problem.
	A widely used tail bound of $X$ is the following: for $\beta>1$, $\mathbb{P}(X\geq \beta n\log n)\leq n^{-\beta+1}$.
	By taking $\beta = 1 + \frac{1}{\sqrt{\log n}}$, we have
	\begin{align*}
		\mathbb{P}\left(
		X \geq n\log n + n\sqrt{\log n}
		\right)
		\leq
		n^{-\frac{1}{\sqrt{\log n}}}
		=
		e^{-\sqrt{\log n}}
		=
		o(1)\, .
	\end{align*}
	Hence with probability $1 - e^{-\sqrt{\log n}}$, we have $\tau \leq n(\log n + \sqrt{\log n})$.
	Because $X$ stochastically dominates $\tau$, we have, with probability $1 - e^{-\sqrt{\log n}}$,
	\begin{align*}
		\Mrank \leq \frac{n}{n+k}(\log n + \sqrt{\log n}) + 1\, .
	\end{align*}
	Because $k=o(n)$ and $k<0$, for large enough $n$ we have $\frac{n}{n+k} \leq 1+\frac{2|k|}{n}$, $\frac{|k|}{n}<\frac{1}{3}$, $1\leq \frac{1}{3}\sqrt{\log n}$, hence
	\begin{align*}
		&\frac{n}{n+k}(\log n + \sqrt{\log n}) + 1\\
		\leq\ & \left(1 + 2\frac{|k|}{n}\right)\log n
		+
		(1 + \frac{2}{3})\sqrt{\log n}
		+
		\frac{1}{3}\sqrt{\log n} \\
		=\ &
		 \left(1+2\frac{|k|}{n}+ 2\frac{1}{\sqrt{\log n}}\right)\log n\, .
	\end{align*}
	This concludes the proof.

\medskip

\noindent \textbf{Proof of Theorem \ref{thm:main-result-dense-append} part 2.}

	The proof is similar to that of Proposition \ref{prop:women-ranking-lower-bound}. Recall that the proof of Proposition \ref{prop:women-ranking-lower-bound} relies on Proposition \ref{prop:total-proposal-upper-bound}, Lemma \ref{lem:women-ranking-preliminary2}, and Lemma \ref{lem:women-ranking-preliminary3}.
	In the following, we first establish the counterparts of these results in dense markets.
	
	\emph{Counterpart of Proposition \ref{prop:total-proposal-upper-bound} in dense markets.} We have shown in the proof of Theorem \ref{thm:main-result-dense-append}(1) that with probability $1 - \exp(-\sqrt{\log n})$, we have
	\begin{align}\label{eq:tau-upper-bound-dense}
		\tau \leq n\left(\log n + \sqrt{\log n}\right)\, .
	\end{align}
	
	\emph{Counterpart of Lemma \ref{lem:women-ranking-preliminary2} in dense markets.}		
	Fix $t = n\left( \log n + \sqrt{\log n} \right)$.
	Given the asymptotic condition, we have for large enough $n$,
		\begin{align*}
		\frac{t}{nd} =&\ \frac{\log n+\sqrt{\log n}}{d} \leq  0.1(\log n)^{-1}\, .
		\end{align*}
		By examining the proof of Lemma \ref{lem:women-ranking-preliminary}, we can see that we have proved the following result (see the statement of Lemma \ref{lem:women-ranking-preliminary} for the definition of the notations), which is stronger than than \eqref{eq:Ylb}:
		\begin{align}\label{eq:Ylb-stronger}
			\mathbb{P}\left( \left. Y_{t,T} \leq (1-\epsilon) \frac{T-t}{n-d}\sum_{j=1}^{n-d}c_j \right| \vec{W}_t \right) &\leq \exp\left( - \frac{1}{4} \epsilon^2 \times \frac{T-t}{n-d}\sum_{j=1}^{n-d}c_j \right)
		\end{align}
		Let $c_j \defeq \frac{1}{W_{j,t}+1}$ where $W_{1,t}\geq W_{2,t} \geq\cdots\geq W_{n,T}$, and $T \triangleq (n+k)d$.
		Due to the convexity of $f(x) \triangleq \frac{1}{x+1}$, we have for large enough $n$,
		\begin{align}
		\frac{1}{n-d}\sum_{j=1}^{n-d}c_j
		=\ &\frac{1}{n-d} \sum_{j =1}^{n-d} f(W_{j,t})
		\geq\ f\left( \frac{1}{n-d} \sum_{j=1}^{n-d} W_{j,t} \right)
		\geq\ f\left( \frac{t}{n-d} \right) \nonumber\\
		=\ &\frac{1}{t/(n-d)+1}
		\geq \frac{1}{\log n\left(1 + 1.05 \frac{1}{\sqrt{\log n}} + 1.05\frac{d}{n}\right)}\, . \label{eq:S-jensen-ineq-dense}
		\end{align}
	Therefore, for large enough $n$,
		\begin{align}
		\frac{T - t}{n-d}\sum_{j=1}^{n-d}c_j
		\geq\ &n\frac{(n+k)d - t}{n}\frac{1}{\log n\left(1 + 1.05 \frac{1}{\sqrt{\log n}} + 1.05\frac{d}{n}\right)} \nonumber\\
		\geq\ & nd \left( 1 - \frac{|k|}{n} - \frac{t}{nd} \right)  \frac{1}{\log n\left(1 + 1.05 \frac{1}{\sqrt{\log n}} + 1.05\frac{d}{n}\right)}
		\nonumber
		\\
		\geq\ & nd \left( 1 - \frac{|k|}{n} - \frac{0.1}{\log n} \right)
		\frac{1}{\log n}
		\left(1 - 1.05\frac{1}{\sqrt{\log n}}  -1.05\frac{d}{n} \right)
		\nonumber
		\\
		\geq\ &
		\frac{n d}{\log n}\left( 1 - 1.1\frac{|k|}{n}
		-
		1.1\frac{1}{\sqrt{\log n}}
		-
		1.1\frac{d}{n}
		\right)\, .
		\label{eq:women-ranking-ratio-lower-bound-dense}
		\end{align}
		Utilizing Lemma \ref{lem:women-ranking-preliminary} (which does not use assumptions on $d$) with $\epsilon \triangleq \frac{1}{\sqrt{\log n}}$, we obtain
		\begin{align*}
		&\mathbb{P}\left( \left. \bar{R}[t] \leq
		\frac{d}{\log n}
		\left( 1 -
		\left(
		1.1\frac{|k|}{n}
		+\frac{2.1}{\sqrt{\log n}}
		+1.1\frac{d}{n}
		\right)
		\right)
		 \right| \vec{W}_t \right)\\
		=\ &\mathbb{P}\left( \left. n \bar{R}[t] \leq \frac{n d}{\log n}
		\left( 1 -
		\left(
		1.1\frac{|k|}{n}
		+\frac{2.1}{\sqrt{\log n}}
		+1.1\frac{d}{n}
		\right)
		\right)
		 \right| \vec{W}_t \right)
		\\
		\leq\ &\mathbb{P}\left( \left. n \bar{R}[t] \leq (1-\epsilon) \times
		\frac{n d}{\log n}\left( 1 - 1.1
		\left(
		\frac{|k|}{n}
		+\frac{1}{\sqrt{\log n}}
		+\frac{d}{n}
		\right)
		\right)
		 \right| \vec{W}_t \right)
		\\
		\stackrel{\textup{(a)}}{\leq}\ &\mathbb{P}\left( \left. n \bar{R}[t] \leq (1-\epsilon) \times \frac{(n+k)d-t}{n-d}\sum_{j=1}^{n-d}c_j  \right| \vec{W}_t \right)
		\\
		\stackrel{\textup{(b)}}{\leq}\ &\exp\left( - \frac{1}{4} \epsilon^2 \times \frac{(n+k)d-t}{n-d}\sum_{j=1}^{n-d}c_j  \right)
		\\
		\stackrel{\textup{(c)}}{\leq}\ &\exp\left( - \frac{1}{4} \frac{1}{\log n} \times
		\frac{n d}{\log n}\left( 1 - 1.1
		\left(
		\frac{|k|}{n}
		+\frac{1}{\sqrt{\log n}}
		+\frac{d}{n}
		\right)
		  \right)
		  \right)
		\\
		\leq\ &\exp( - n / 8 )\, .
		\end{align*}
		Here inequalities (a) and (c) follow from \eqref{eq:women-ranking-ratio-lower-bound-dense}, inequality (b) follows from Lemma \ref{lem:women-ranking-preliminary}, and the last inequality follows from the fact that $d=\omega(\log^2 n)$.
		Since the above result holds for any realization of $\vec{W}_t$, we have
\begin{align}
\label{eq:women-ranking-mean-upper-bound-dense}
\mathbb{P}\left( \bar{R}\left[ n (\log n + \sqrt{\log n} ) \right] \leq
\frac{d}{\log n}
\left( 1 -
1.1\left(
\frac{|k|}{n}
+\frac{2}{\sqrt{\log n}}
+\frac{d}{n}
\right)
\right)
 \right) &\leq \exp\left( - \frac{n}{8} \right)\, .
\end{align}

\emph{Counterpart of Lemma \ref{lem:women-ranking-preliminary3} in dense markets.} Note that the proof of Lemma \ref{lem:women-ranking-preliminary3} does not make any assumption on $d$, hence \eqref{eq:women-rank-concentration-2} still holds.

\textit{Proof of Theorem \ref{thm:main-result-dense-append} part 2}.
Using \eqref{eq:tau-upper-bound-dense} and \eqref{eq:women-ranking-mean-upper-bound-dense}, and the fact that $\bar{R}[t]$ is decreasing on each sample path,
\begin{align}
&\mathbb{P}\left( \bar{R}[\tau] \leq
\frac{d}{\log n}
\left( 1 -
1.1\left(
\frac{|k|}{n}
+\frac{2}{\sqrt{\log n}}
+\frac{d}{n}
\right)
\right)
 \right) \nonumber\\
\leq\ &\mathbb{P}\left( \bar{R}[\tau] \leq
\frac{d}{\log n}
\left( 1 -
1.1\left(
\frac{|k|}{n}
+\frac{2}{\sqrt{\log n}}
+\frac{d}{n}
\right)
\right),
\tau <  n (\log n + \sqrt{\log n} ) \right) \nonumber\\
&
+ \mathbb{P}\left( \tau \geq  n (\log n + \sqrt{\log n} ) \right)
\nonumber\\
\leq\ &
\mathbb{P}\left( \bar{R}\left[ n (\log n + \sqrt{\log n} ) \right] \leq
\frac{d}{\log n}
\left( 1 -
1.1\left(
\frac{|k|}{n}
+\frac{2}{\sqrt{\log n}}
+\frac{d}{n}
\right)
\right)
\right)\nonumber\\
& +
 \mathbb{P}\left( \tau \geq  n (\log n + \sqrt{\log n} ) \right)
\nonumber\\
\stackrel{\textup{(a)}}{\leq}\ &\exp\left( -\frac{n}{8} \right) + O( \exp(-\sqrt{\log n}) )
= O( \exp(-\sqrt{\log n}) )\, .\label{eq:Rtau-ineq-2}
\end{align}
Here inequality (a) follows from \eqref{eq:women-ranking-mean-upper-bound-dense}.
Recall inequality \eqref{eq:list-length-chebyshev}: for large enough $n$
\begin{align*}
\mathbb{P}\left(\sum_{j\in\cW}W_{j,(n+k)d}^2
\geq 4 n d^2
\right)
\leq\  \frac{3}{n}\, .
\end{align*}
In the derivation of the above inequality, we only used the fact that $d=\omega(1),d=o(n)$ and $k=o(n)$, which also holds in dense markets.

Given that $\bar{R}[\tau] > \frac{d}{\log n}
\left( 1 -
1.1\left(
\frac{|k|}{n}
+\frac{2}{\sqrt{\log n}}
+\frac{d}{n}
\right)
\right)$,
by plugging $\epsilon \triangleq 0.5 \frac{1}{\sqrt{\log n}}$ in \eqref{eq:women-rank-concentration-2} of Lemma \ref{lem:women-ranking-preliminary3}, we obtain for large enough $n$,
\begin{align}
1 + (1-\epsilon)\bar{R}[\tau]
\geq\ &1 + \left(1-0.5\frac{1}{\sqrt{\log n}}\right)
\frac{d}{\log n}
\left( 1 -
1.1\left(
\frac{|k|}{n}
+\frac{2}{\sqrt{\log n}}
+\frac{d}{n}
\right)
\right)\nonumber\\
\geq\ &\frac{d}{\log n}
\left( 1 -
1.1\left(
\frac{|k|}{n}
+\frac{3}{\sqrt{\log n}}
+\frac{d}{n}
\right)
\right)\, .\label{eq:Rtau-ineq-1}
\end{align}
Therefore,
\begin{align*}
&\mathbb{P}\left( \Rwomen \leq \frac{d}{\log n}
    \left( 1 -
    1.1\left(
    \frac{|k|}{n}
    +\frac{3}{\sqrt{\log n}}
    +\frac{d}{n}
    \right)
    \right) \right)\\
\leq\ &\mathbb{P}\left( \left. \Rwomen \leq
    \frac{d}{\log n}
    \left( 1 -
    1.1\left(
    \frac{|k|}{n}
    +\frac{3}{\sqrt{\log n}}
    +\frac{d}{n}
    \right)
    \right)
     \right| \right.\\
     &\left.
     \bar{R}[\tau] >
     \frac{d}{\log n}
     \left( 1 -
     1.1\left(
     \frac{|k|}{n}
     +\frac{2}{\sqrt{\log n}}
     +\frac{d}{n}
     \right)
     \right),
     \sum_{j\in\cW}W_{j,(n+k)d}^2
    < 4 n d^2\right) \\
    &+ \mathbb{P}\left( \bar{R}[\tau] \leq \frac{d}{\log n}
    \left( 1 -
    1.1\left(
    \frac{|k|}{n}
    +\frac{2}{\sqrt{\log n}}
    +\frac{d}{n}
    \right)
    \right) \right)
    + \mathbb{P}\left(\sum_{j\in\cW}W_{j,(n+k)d}^2
    \geq 4 n d^2\right)
\\
\stackrel{\textup{(a)}}{\leq}\ &\mathbb{P}\left(  \Rwomen \leq 1+(1-\epsilon)\bar{R}[\tau] \bigg | 
 \bar{R}[\tau] >
 \frac{d}{\log n}
 \left( 1 -
 1.1\left(
 \frac{|k|}{n}
 +\frac{2}{\sqrt{\log n}}
 +\frac{d}{n}
 \right)
 \right),\right.\\
 &\left.
 \sum_{j\in\cW}W_{j,(n+k)d}^2
< 4 n d^2\right) + O(\exp(-\sqrt{\log n}))
+ \frac{3}{n}
\\
\stackrel{\textup{(b)}}{\leq}\ & \mathbb{E}\left[\left. \exp\left(
-\frac{\frac{1}{2\log n}n^2 \bar{R}[\tau]^2}{4nd^2}
\right)
\right|
\bar{R}[\tau] >
\frac{d}{\log n}
\left( 1 -
1.1\left(
\frac{|k|}{n}
+\frac{2}{\sqrt{\log n}}
+\frac{d}{n}
\right)
\right)
\right]\\
&	+ O(\exp(-\sqrt{\log n}))
\\
\leq\ &\exp\left( - \frac{1}{8} \frac{n}{d^2\log n} \cdot
\frac{d^2}{2\log^2 n} \right) + O(\exp(-\sqrt{\log n}))
\\
\leq\ &\exp\left( - \frac{n (\log n)^{-3}}{16} \right)  + O(\exp(-\sqrt{\log n}))\\
\leq\ & O(\exp(-\sqrt{\log n}))\, .
\end{align*}
Here inequality (a) follows from \eqref{eq:Rtau-ineq-1}, \eqref{eq:Rtau-ineq-2}, and \eqref{eq:list-length-chebyshev}; inequality (b) follows from Lemma \ref{lem:women-ranking-preliminary3}.
This concludes the proof.
\end{proof}

\section{Additional Proofs}\label{append:additional-proof}
\subsection{Proof of Lemma \ref{lem:random-order-statistics}}

\begin{proof}
We first prove the first part. Let $M_k$ be the $k$-th largest order statistics out of $d(n)$ samples from Pareto distribution with scale parameter $1$ and shape parameter $\alpha>1$. Here $1\leq k \leq d(n)$ is a constant. Using equation (3.7) in \cite{malik1966exact} and our asymptotic notation, we have as $d(n),k\to\infty$:
\begin{align*}
\mathbb{E}[M_k]=\ & \frac{\Gamma(d+1)\Gamma(k-1/\alpha)}{\Gamma(k)\Gamma(d+1-1/\alpha)}\\
\doteq\ &
\frac{
\sqrt{d}\left(\frac{d}{e}\right)^d
\sqrt{k-1/\alpha-1}
\left(\frac{k-1/\alpha-1}{e}\right)^{k-1/\alpha-1}
}
{
\sqrt{k-1}\left(\frac{k-1}{e}\right)^{k-1}
\sqrt{d-1/\alpha}\left(\frac{d-1/\alpha}{e}\right)^{d-1/\alpha}
}\\
\doteq\ &
\frac{
(d)^d
(k-1/\alpha-1)^{k-1/\alpha-1}
}
{
(d-1/\alpha)^{d-\alpha^{-1}}
(k-1)^{k-1}
}\\
\doteq\ &
\left(\frac{d-1/\alpha}{k-1/\alpha-1}\right)^{1/\alpha}
\left(1 + \frac{1/\alpha}{d-1/\alpha}\right)^{d}
\left(1-\frac{1/\alpha}{k-1}\right)^{k-1}\\
\doteq\ & \left(\frac{d}{k}\right)^{1/\alpha}\, .
\end{align*}
Here the second row follows from Stirling's formula.

With the above, we have:
\begin{align*}
\mathbb{E}[M(d(n),\alpha,r(n))]
=\ \mathbb{E}_{\zeta}[\mathbb{E}[M_{\zeta}]|{\zeta}=k]
\doteq\ \mathbb{E}_{\zeta}\left[\left(\frac{d}{{\zeta}}\right)^{1/\alpha}\right]
=\ & d^{1/\alpha} \mathbb{E}[{\zeta}^{-1/\alpha}]\, ,
\end{align*}
where $\mathbb{E}_{\zeta}[\cdot]$ is expectation taken w.r.t. random variable
${\zeta}$; and ${\zeta}$ is the distribution of men's ranking of wives.
Note that the desired quantity is the inverse moment of a positive discrete distribution. Using Theorem 3 in \cite{znidaric2005asymptotic}, we have $  \mathbb{E}[{\zeta}^{-1/\alpha}]\doteq\ r^{-1/\alpha}$. Therefore we have the desired result, i.e.,
\begin{align*}
\mathbb{E}[M(d(n),\alpha,r(n))] \doteq \left(\frac{d(n)}{r(n)}\right)^{1/\alpha}\, .
\end{align*}
    
Now we proceed to prove the second part. Let $L_k$ be the largest order statistics out of $k$ samples from Pareto distribution with scale parameter 1 and shape parameter $\alpha>1$. Here $k$ is a constant. Similar as in the proof of the first part, we have as $k\to\infty$,
\begin{align*}
\mathbb{E}[L_k]=\ & \frac{\Gamma(k+1)\Gamma(1-1/\alpha)}{\Gamma(1)\Gamma(k+1-1/\alpha)}\\
\doteq\ &
\frac{
\sqrt{k}\left(\frac{k}{e}\right)^k
\Gamma(1-1/\alpha)
}
{
\sqrt{k-1/\alpha}\left(\frac{k-1/\alpha}{e}\right)^{k-1/\alpha}
}\\
\doteq\ &
\frac{
k^k
}
{
(k-1/\alpha)^{k-1/\alpha}
}\\
\doteq\ &
\left(1-\frac{1/\alpha}{k}\right)^{1/\alpha}
\left(1 + \frac{1/\alpha}{k-1/\alpha}\right)^{k}
k^{1/\alpha}\\
\doteq\ & k^{1/\alpha}\, .
\end{align*}

Therefore, we have
\begin{align*}
    \mathbb{E}[W(d(n),\alpha,r(n))]\doteq \mathbb{E}[\eta^{1/\alpha}]\, ,
\end{align*}
where $\eta$ is the number of proposals a woman receives. 
Since $\eta$ satisfies the tail condition in \cite{ahle2022sharp}, it follows from \cite{ahle2022sharp} that
\begin{align*}
    \mathbb{E}[W(d(n),\alpha,r(n))]\doteq
    \left(\mathbb{E}[\eta]\right)^{1/\alpha}
    =
    (r(n))^{1/\alpha}\, .
\end{align*}
This concludes the proof.   
\end{proof}

\section{Principle~\ref{prin:condition-for-weak-competition} for random markets and numerical procedure for refined estimates}
\label{app:heuristic-numerical}

In this section, we flesh out the detailed heuristic picture of equilibrium introduced in Section~\ref{subsec:detailed-heuristic-picture}, and deduce (i) a procedure to compute refined estimates of average ranks and the number of unmatched agents, whose predictions are shown in Section~\ref{sec:numeric}, (ii) Principle~\ref{prin:condition-for-weak-competition} for random markets.

Given the number of men $|\mathcal{M}|$, the number of women $|\mathcal{W}|$, and the men's average degree $d_m$ (and the women's average degree $d_w = d_m \times |\mathcal{M}|/|\mathcal{W}|$), we can estimate men's average rank $R_\text{MEN}$, women's average rank $R_\text{WOMEN}$, the number of unmatched men $\delta^m$, and the number of unmatched women $\delta^w$, by solving the following four equations:
\begin{align}
	R_\text{MEN} &\approx \frac{ |\mathcal{W}| }{ |\mathcal{M}| } \log\left( \frac{ |\mathcal{W}| }{ \delta^w } \right),
	\label{eq:heuristic-mrank-id}
	\\
	R_\text{WOMEN} &\approx \frac{ |\mathcal{M}| }{ |\mathcal{W}| } \log\left( \frac{ |\mathcal{M}| }{ \delta^m } \right),
	\label{eq:heuristic-wrank-id}
	\\
	|\mathcal{M}| - \delta^m &= |\mathcal{W}| - \delta^w,
	\label{eq:heuristic-unmatched-id}
	\\
	R_\text{MEN} \times R_\text{WOMEN} &\approx \min\left( d_m, d_w \right).
	\label{eq:heuristic-product-id}
\end{align}
Equations \eqref{eq:heuristic-mrank-id} and \eqref{eq:heuristic-wrank-id}  follow from an informal analogy with the coupon collector problem,\footnote{In order to collect $|\mathcal{W}| - \delta^w$ distinct coupons out of a universe of $|\mathcal{W}|$ coupons, $|\mathcal{W}| \times \log( |\mathcal{W}|/\delta^w)$ uniformly random coupons must be drawn, which corresponds to the total number of proposals made by men, $R_\text{MEN} \times |\mathcal{M}|$.} equation \eqref{eq:heuristic-unmatched-id} is the identity requiring that the number of matched men should equal the number matched women, and equation \eqref{eq:heuristic-product-id} captures the fact that a woman's expected rank is inversely proportional to the number of proposals that she receives.

The system of equations \eqref{eq:heuristic-mrank-id}-\eqref{eq:heuristic-product-id} admits a unique solution which can be easily computed (to a high degree of accuracy), e.g., via bisection method. % See Figure \ref{fig:heuristic-average-rank}.
Once the four market statistics $R_\text{MEN}$, $R_\text{WOMEN}$, $\delta^m$, $\delta^w$, are estimated, we can further approximate each side's rank distribution using the distributions conjectured in Section~\ref{subsec:detailed-heuristic-picture}.
The men's rank distribution can be well approximated by a truncated Geometric distribution, $\min\{ \text{Geo}( R_\text{WOMEN}/d_w), d_m \},$
%\footnote{The men's rank distribution corresponds to the distribution of number of proposals made by the last man, i.e., when the acceptance probability is $R_\text{WOMEN}/d_w$ and he cannot make proposals more than $d_m$ times.} 
and similarly, the women's rank distribution can be approximated by a truncated Geometric distribution, $\min\{ \text{Geo}( R_\text{MEN}/d_m), d_w \}$. 
%The agents' average welfare under a Pareto random utility model, for example, can be accurately predicted by utilizing these approximated rank distributions together with known results about the order statistics of Pareto distribution.

\paragraph{Principle~\ref{prin:condition-for-weak-competition} for random markets.}
The above detailed heuristic picture leads to the following quantitative version of Principle~\ref{prin:condition-for-weak-competition} for random markets: 
\begin{conj}[Principle~\ref{prin:condition-for-weak-competition} for random markets]
Fix any $\epsilon> 0$ and any $\nu \in (\epsilon, \infty)$. Consider a sequence of markets indexed by $n$ with $k<0$ and $|k| \leq n^{1-\epsilon}$. 
Then there exists a threshold $d^*(n) = \log^2 (n/k) (1+o(1))$ and $g(n) = \Theta(\log n)$ such that the following holds:
\begin{itemize}
    \item If $d \leq d^*$, then, with high probability, we have an insignificant impact of imbalance $\Rmen \geq \Rwomen (1- (1+\epsilon)/g(n))$ and many unmatched agents on the short side $\delta^m > (\nu - \epsilon) |k|$.
    \item If $d > d^*$, then, with high probability, we have an significant impact of imbalance $\Rmen \leq \Rwomen (1- (1-\epsilon)/g(n))$ and few unmatched agents on the short side $\delta^m < (\nu + \epsilon) |k|$.
\end{itemize}
\label{conj:prin1-for-random-markets}
\end{conj}

We now informally deduce this conjecture from the detailed heuristic picture.
Since $k \leq n^{1-\epsilon}$, we make the approximation $d_w \approx d_m = d$, since the degree of women concentrates around the average, which is $d (1+ \Theta(|k|/n))$, and the $\Theta(|k|/n)$ error term will not play a role in our calculation. Similarly, we use the approximation $|\mathcal{M}| \approx |\mathcal{W}| = n$ in \eqref{eq:heuristic-mrank-id} and \eqref{eq:heuristic-wrank-id}, and the justification is similar. With these approximations, we obtain
\begin{align*}
    \delta^w &= n \exp(-\Rmen)\\
    \delta^m &= n \exp(-d/\Rmen)\, ,
\end{align*}
where the second identity is obtained by plugging $\Rwomen \times \Rmen \approx d$ into \eqref{eq:heuristic-wrank-id}. Plugging in $\delta^w = \delta^m + |k| = n \exp(-d/\Rmen) + |k|$ into the first of our equations, we obtain
\begin{align}
    |k| = n \exp(-\Rmen) - n \exp(-d/\Rmen) \, .
    \label{eq:solve-for-Rmen}
\end{align}
Note that the right-hand side is monotone decreasing in $\Rmen$, and hence \eqref{eq:solve-for-Rmen} uniquely identifies $\Rmen$ for any given $d$. Moreover, we see that the right-hand side is increasing in $d$, and hence the solution $\Rmen(d,k)$ is a decreasing function of $d$ (we can similarly see that it is a decreasing function of $k$). Recall $k \leq n^{1-\epsilon}$. Fixing any $\zeta_1 \in (-\infty, \infty)$, for $d = \log^2(n/k) (1 + \zeta_1/\log(n/k))$ one can verify that the unique solution satisfies $\Rmen = \sqrt{d} (1- \zeta_2 (1+o(1))/\log (n/k))$ for some $\zeta_2 \in (0, \infty)$ which is increasing in $\zeta_1$, leading to $\Rwomen = \sqrt{d} (1+ \zeta_2 (1+o(1))/\log (n/k))$ (since $\Rwomen \times \Rmen \approx d$) and hence $\delta^m = n \exp(-\Rwomen) = k \exp(-\zeta_2-\zeta_1/2)(1+o(1))$. Let $\zeta_1^*$ be the unique value of $\zeta_1$ such that $\exp(-\zeta_2 - \zeta_1/2) = \nu$, and set $d^* \triangleq \log^2(n/k) (1 + \zeta_1^*/\log(n/k))$ (so that $\delta^m = \nu |k| (1+o(1))$ for $d= d^*$). Define $g(n) \triangleq \log (n/k)/(2 \zeta_2)$, and note that $\log(n/k) = \Theta(\log n)$ since $k \in [1, n^{1-\epsilon}]$. Assuming concentration of average ranks and $\delta^m$ about their estimated values, and using the established monotonicities of our estimates of $\delta^m$ and $\Rmen$ in $d$, we then obtain  Conjecture~\ref{conj:prin1-for-random-markets}.

\section{Additional Numerical Simulations}
\label{app:numerical-additional}

{In this section, we provide additional numerical simulation results that are omitted from Section~\ref{sec:numeric} due to space constraints.
}

\paragraph{Heuristic refined approximation for larger imbalances.}
{We here verify the quality of our refined estimates for the average ranks and the number of unmatched agents in unbalanced markets obtained through the procedure described in Appendix~\ref{app:heuristic-numerical}.
We adopt the same experimental setup considered for verification of Theorem~\ref{thm:main-result} and \ref{thm:main-result-dense}, except the imbalance is no longer negligible compared the market size: the market now has 1,000 men, 1,050 women ($n=1050, k=-50$), and a varying men's degree $d$.
The results are reported in Figure~\ref{fig:fixed-size-heuristic}, visualized analogously to Figure~\ref{fig:fixed-size}.
}

\begin{figure}[htb!]
    \centering
    \begin{minipage}{0.48\textwidth}
        \centering
        \includegraphics[width=1.0\textwidth]{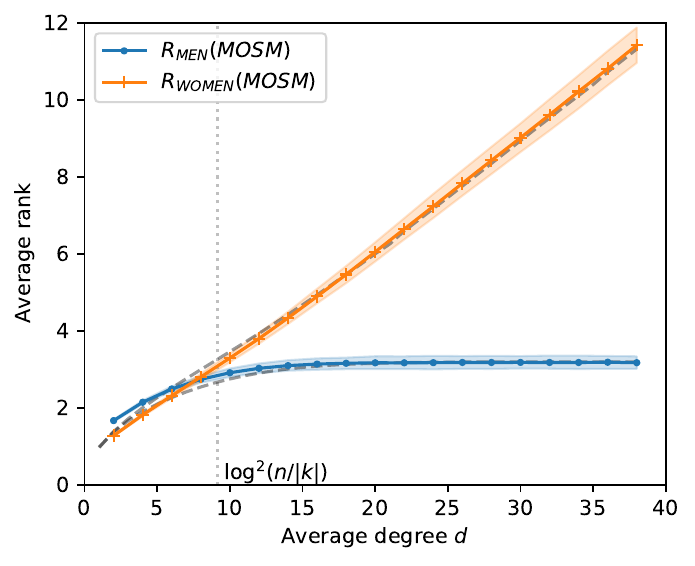}
    \end{minipage}\hfill
    \begin{minipage}{0.48\textwidth}
        \centering
        \includegraphics[width=1.0\textwidth]{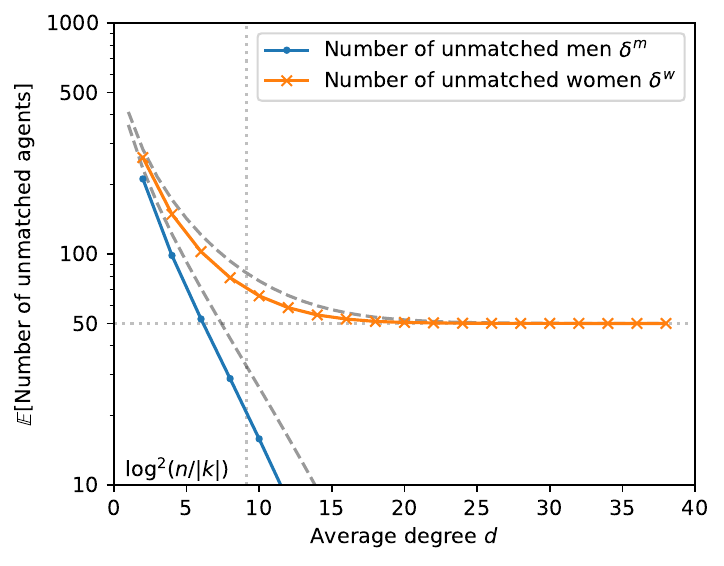}
    \end{minipage}
	\caption{
	    {Verification of heuristic refined estimates for unbalanced random matching markets.
	    }
        	Men's average rank of wives $\Rmen$ and women's average rank of husbands $\Rwomen$ (left) and the number of unmatched men $\delta^m$ and the number of unmatched women $\delta^w$ (right) under the MOSM in random matching markets with 1,000 men and 1,050 women ($n=1050$, $k=-50$), and varying length of men's preference list $d$.
        	The gray dashed lines represent the refined estimates that are obtained by solving the system of equations \eqref{eq:heuristic-mrank-id} -- \eqref{eq:heuristic-product-id} in Appendix~\ref{app:heuristic-numerical}. %\yk{In the figure on the right, the dashed line for the number of unmatched women is missing?}\sm{Updated}
        }
	\label{fig:fixed-size-heuristic}
\end{figure}
In addition to the accuracy of the refined estimates, we also confirm that the effect of connectivity $d$ on the market equilibrium is consistent with our theoretical understanding: (i) $R_\text{MEN} \approx R_\text{WOMEN}$ when $d \leq \log^2 (n/|k|)$, (ii) $R_\text{MEN}$ and $R_\text{WOMEN}$ deviate from each other when $d \geq \log^2 (n/|k|)$, and (iii) the threshold degree $\log^2 (n/|k|)$ is close to the connectivity at which the number of unmatched agents is $0.5$ times the market imbalance.

\paragraph{Robustness to  heterogeneity in men's degree.}
{We now consider an Erdos-Renyi consideration graph as an alternative to the assumption in our main model that men all have degree $d$.
Instead of having the same and deterministic degree $d$ across all men, we randomly connect each pair $(i,j) \in \mathcal{M} \times \mathcal{W}$ with probability $d/n$, independently across all pairs.
The degree of a man is now distributed as $\text{Binomial}(n,d/n)$ (hence, there is heterogeneity in men's degree, but the average degree is again $d$), and the degree of a woman is distributed as $\text{Binomial}(n+k,d/n)$.

We repeat the same experiments above.
As reported in Figure~\ref{fig:erdos-fixed-size}, we find our quantitative predictions to be quite accurate, suggesting that our theoretical findings are also valid for an Erdos-Renyi consideration graph, and are robust to some heterogeneity in men's degree.
}

\begin{figure}[htb!]
    \centering
    \begin{minipage}{0.48\textwidth}
        \centering
        \includegraphics[width=1.0\textwidth]{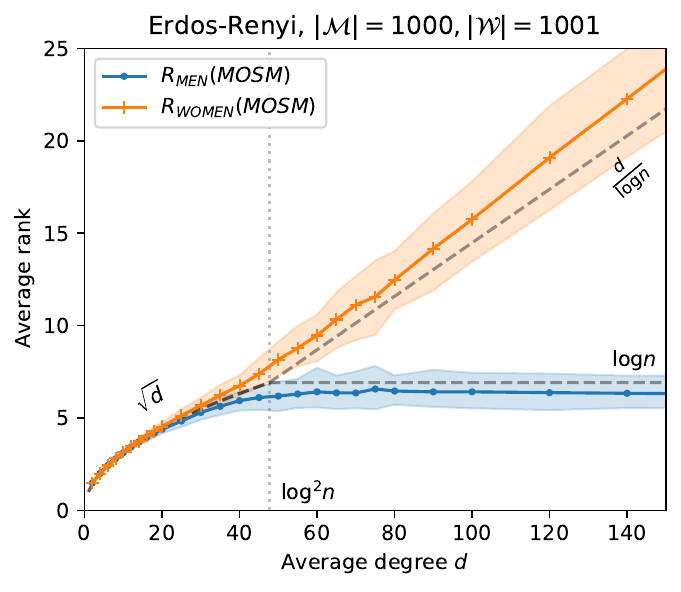}
        \includegraphics[width=1.0\textwidth]{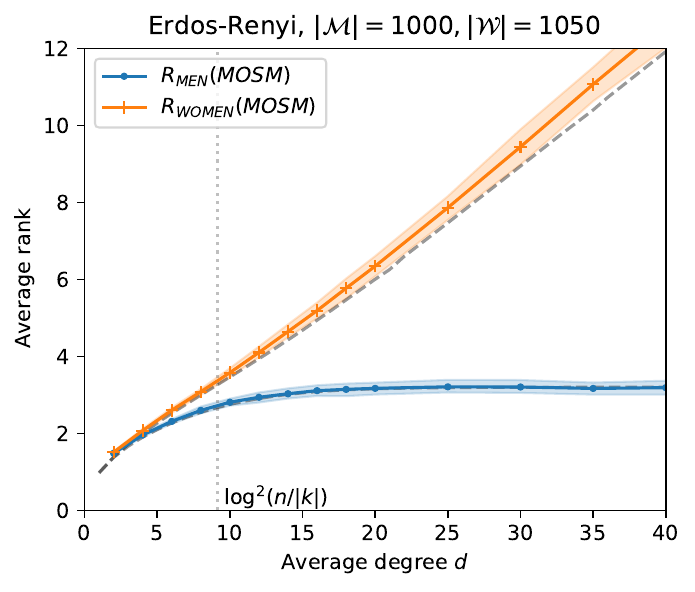}
    \end{minipage}\hfill
    \begin{minipage}{0.48\textwidth}
        \centering
        \includegraphics[width=1.0\textwidth]{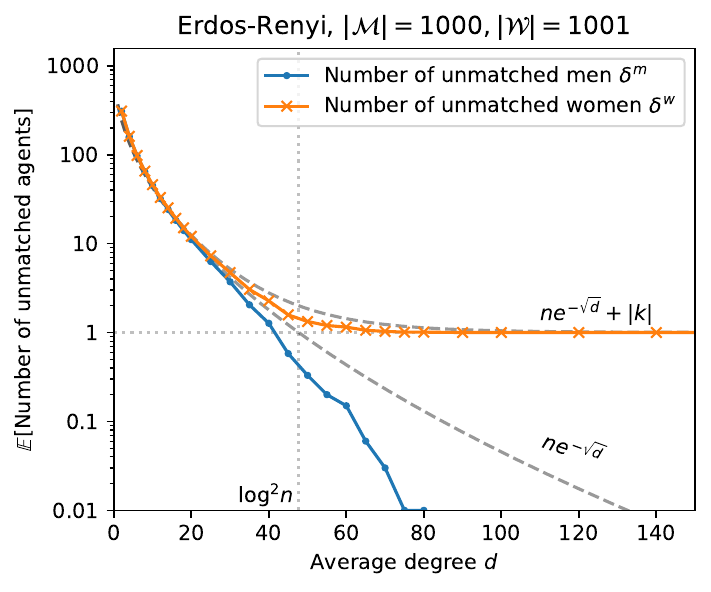}
        \includegraphics[width=1.0\textwidth]{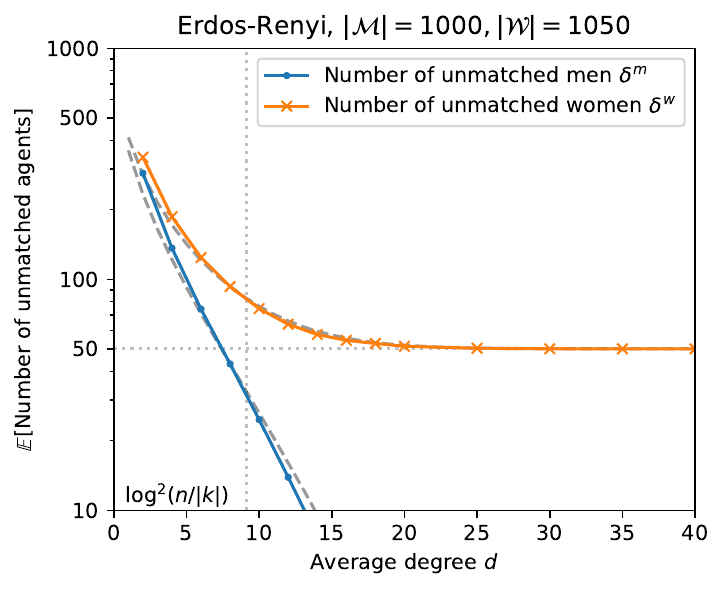}
    \end{minipage}
	\caption{
        	{Robustness check for random matching markets generated with Erdos-Renyi connectivity model ($|\mathcal{M}|=1000, |\mathcal{W}| \in \{1001, 1050\}$).
        	The solid lines represent simulation results obtained from 100 runs of simulation, and the dashed gray lines represents our theoretical predictions.
        	}
        }
	\label{fig:erdos-fixed-size}
\end{figure}

\paragraph{Behavior under correlated preferences.}
{In Section~\ref{sec:numeric}, we provided simulation results for random matching markets with correlated preferences, with the choice of $\beta=5$ in the adopted generative model (see Figure~\ref{fig:robust-check-correlated-unbalanced}).
We here report the results for different choices of market primitives and $\beta \in \{1, 5, 10 \}$ (i.e., moderate to strong correlations in preferences, since the quality term in the utility is $\beta$ times larger than the idiosyncratic term), summarized in Figure~\ref{fig:robust-check-correlated-1} and Figure~\ref{fig:robust-check-correlated-2}.
}

\begin{figure}[htb!]
    \centering
    \begin{minipage}{0.48\textwidth}
        \centering
        \includegraphics[width=1.0\textwidth]{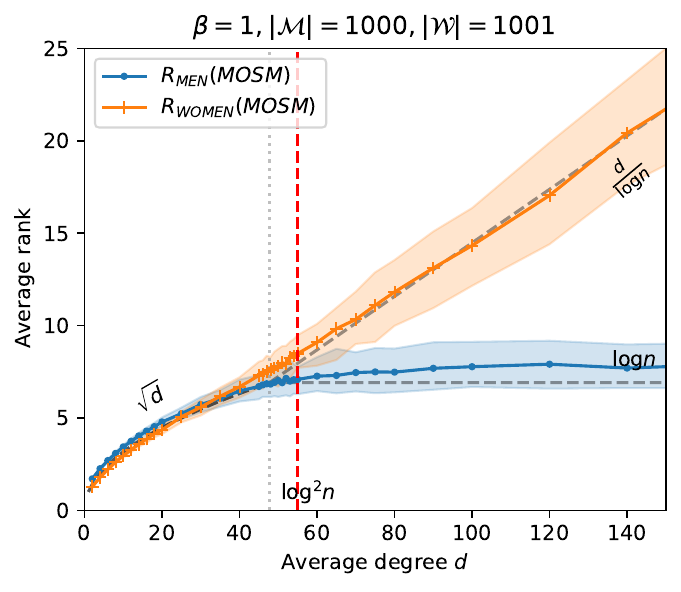}
        \includegraphics[width=1.0\textwidth]{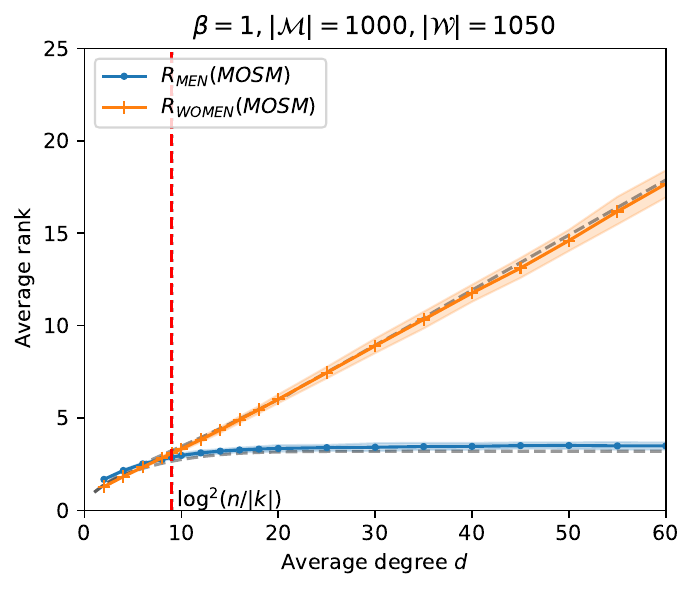}
    \end{minipage}\hfill
    \begin{minipage}{0.48\textwidth}
    \centering
    \includegraphics[width=1.0\textwidth]{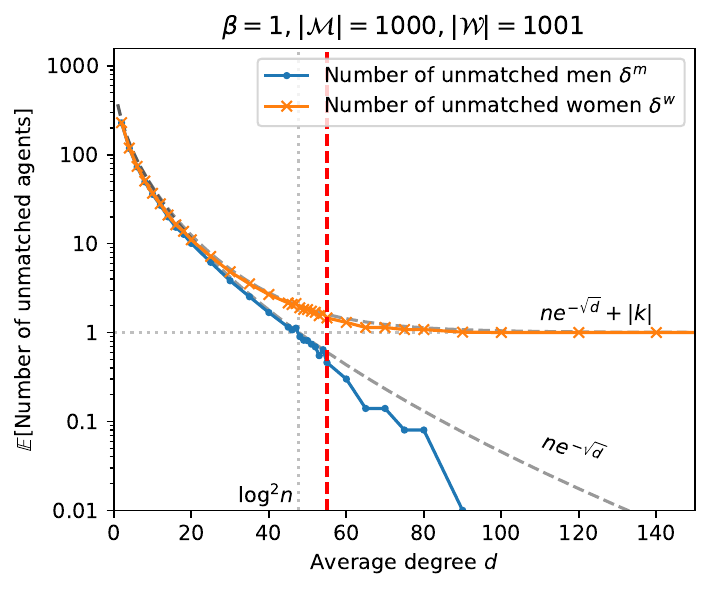}
    \includegraphics[width=1.0\textwidth]{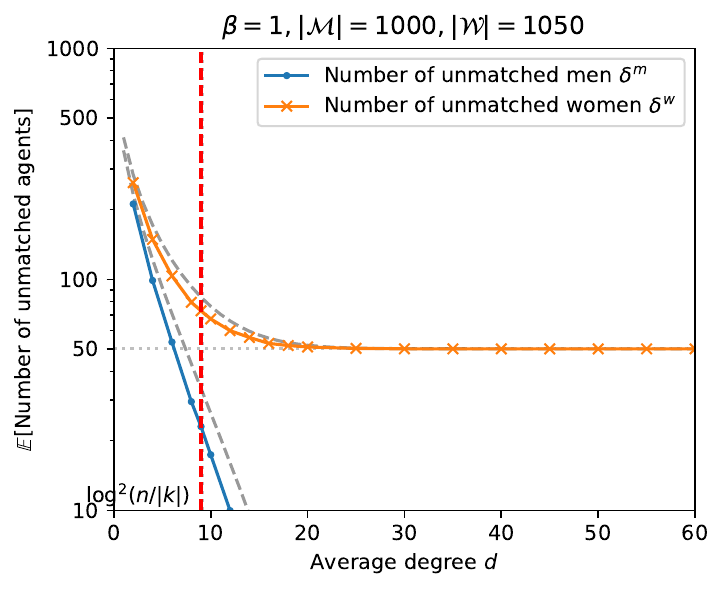}
    \end{minipage}
    \caption{{
    Stable matching in  random matching markets with correlated preferences ($\beta =1, (|\mathcal{M}|,|\mathcal{W}|)\in \{(1000,1001), (1000,1050)\}$). (The threshold predicted by Principle~\ref{prin:condition-for-weak-competition} is almost identical to $\log^2 (n/|k|)$ in the bottom two charts, hence the grey dotted line is not visible.)
    }}
    \label{fig:robust-check-correlated-1}
\end{figure}

\begin{figure}[htbp!]
    \centering
    \begin{minipage}{0.48\textwidth}
        \centering
        \includegraphics[width=1.0\textwidth]{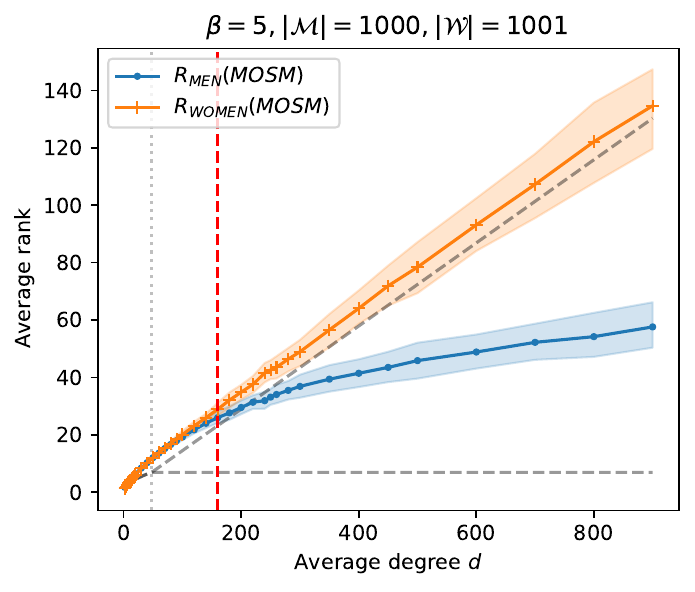}
        \includegraphics[width=1.0\textwidth]{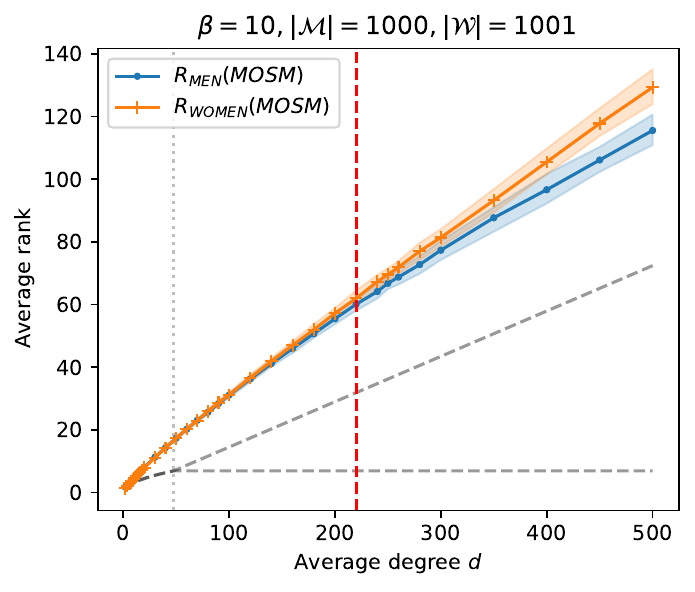}
        \includegraphics[width=1.0\textwidth]{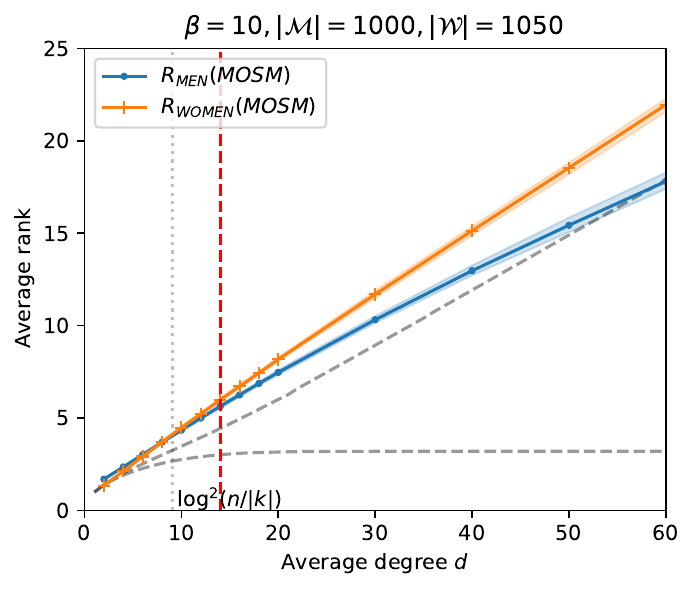}
    \end{minipage}\hfill
    \begin{minipage}{0.48\textwidth}
    \centering
    \includegraphics[width=1.0\textwidth]{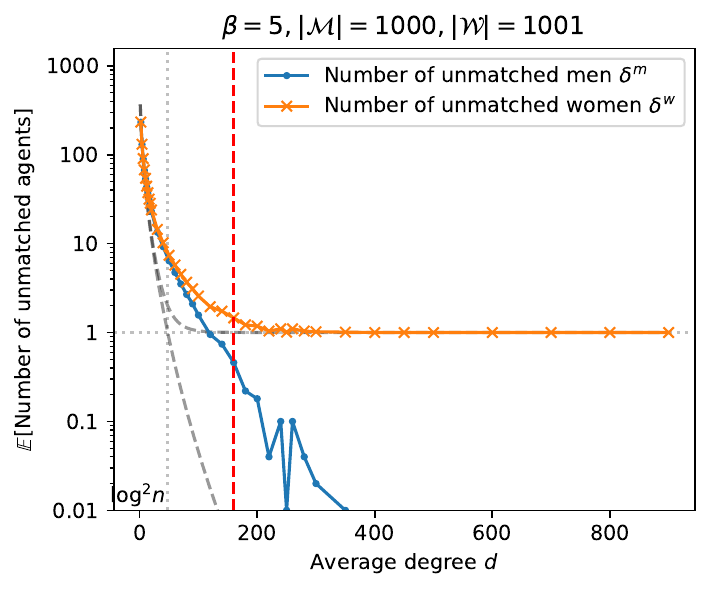}
    \includegraphics[width=1.0\textwidth]{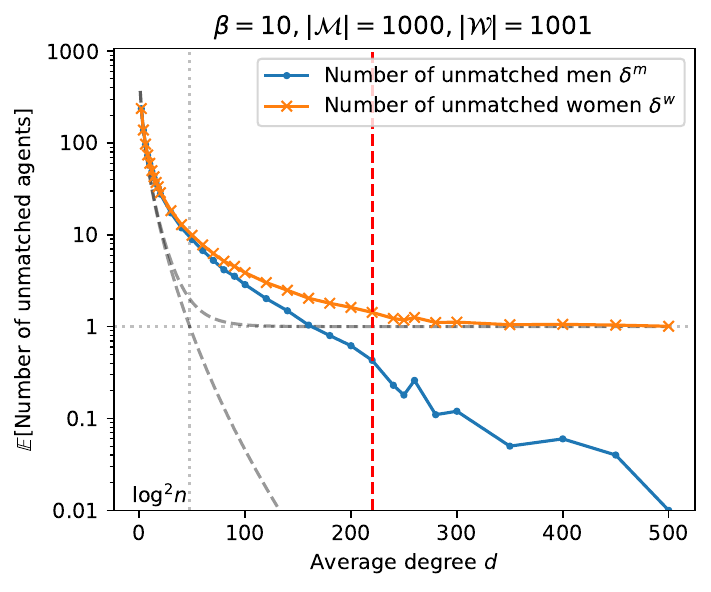}
    \includegraphics[width=1.0\textwidth]{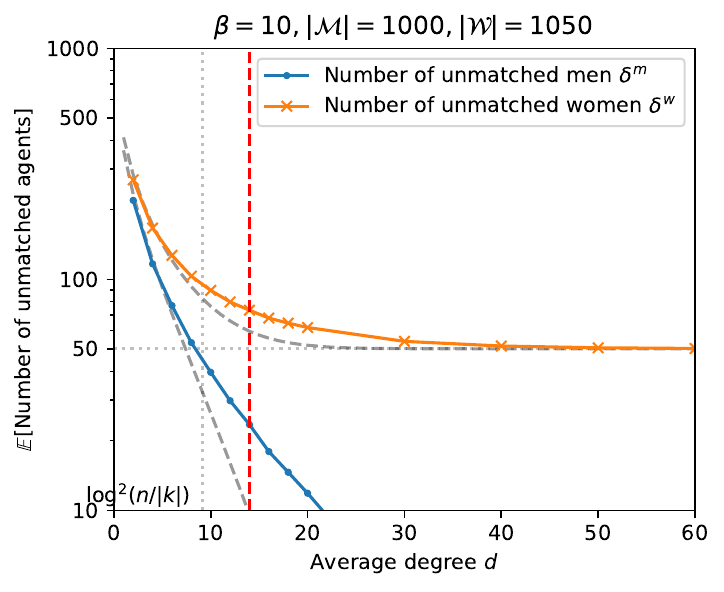}
    \end{minipage}
    \caption{{
    Stable matching in  random matching markets with correlated preferences ($\beta \in \{5,10\}, (|\mathcal{M}|,|\mathcal{W}|)\in \{(1000,1001), (1000,1050)\}$). 
  %  The red dashed vertical lines represent $d_\delta^*(n,k)$, the numerically found threshold connectivity level beyond which the number of unmatched short-side agents becomes smaller than the half of the imbalance.
    }}
    \label{fig:robust-check-correlated-2}
\end{figure}
%\yk{Please could you add a caption for the figure, and precisely define where you choose to draw the red line? To remain consistent with the main paper, I would suggest to draw the red line at the average degree such that the ratio of the number unmatched on the short side to the imbalance in the market is 1/2 (?), the same definition as you used in the paper. Did you test larger imbalances at all? I would be encouraged if our principle works for larger imbalances, even for smaller $\beta$s like $1$ or $2$.}
In each figure, the red vertical line represents the threshold degree $d_\delta^*(n,k)$, defined in \eqref{eq:threshold-unmatched-men}, beyond which the number of unmatched short-side agents is smaller than the half of the market imbalance.
Consistently across all settings, we observe that $R_\text{MEN}$ and $R_\text{WOMEN}$ deviate from each other starting from the threshold $d_\delta^*(n,k)$, suggesting that Principle~\ref{prin:condition-for-weak-competition} is robust even to strong correlations in preferences. Our quantitative predictions (shown as gray dashed lines) are found to be surprisingly accurate  under moderate correlations $\beta = 1$. As expected, our quantitative predictions  are inaccurate under strong correlations $\beta \in \{5, 10\}$.

\end{document}